\newcites{appendix}{References}
\theoremstyle{plain} 
\newtheorem{theorem}{Theorem}
\newtheorem{corollary}[theorem]{Corollary}
\newtheorem{assumption}[theorem]{Assumption}
\newtheorem{lemma}[theorem]{Lemma}
\newtheorem{proposition}[theorem]{Proposition}
\newcommand{\R}{\mathbb{R}} 
\begin{document}

\title{Optimal Policy Choices Under Uncertainty\thanks{I am especially grateful to Isaiah Andrews and Nathan Hendren for their guidance and advice. I thank Amy Finkelstein, Anna Mikusheva, and participants in the MIT econometrics lunch seminar and second-year paper course for their helpful feedback. This material is based upon work supported by the National Science Foundation Graduate Research Fellowship Program under Grant No. 2141064. Any opinions, findings, and conclusions or recommendations expressed in this material are those of the author and do not necessarily reflect the views of the National Science Foundation. All errors are my own.}}
\author{Sarah Moon\thanks{Department of Economics, Massachusetts Institute of Technology, Cambridge, MA 02139, USA. Email: \href{mailto:sarahmn@mit.edu}{sarahmn@mit.edu}} }
\date{\today}

\maketitle

\begin{abstract}
Policymakers often face the decision of how to allocate resources across many different policies using noisy estimates of policy impacts. This paper develops a framework for locally optimal policy choices under statistical uncertainty. I show that posterior mean benefits and net costs are sufficient statistics for an oracle planner who knows the distribution of policy impacts. Since this distribution is unknown, I propose an empirical Bayes approach to estimate posterior means and approximate the oracle. I derive rates of convergence to the oracle's decision and show that, unlike empirical Bayes, plug-in methods can fail to converge. In an application to 127 policies, empirical Bayes rules have positive estimated local welfare effects, while the plug-in rule has negative estimated effects.
\end{abstract}


\newpage

\section{Introduction}

Policymakers often rely on empirical evidence to decide how to allocate resources across many different policies, including food stamp expansions, job training programs, tax rate changes, and more. However, empirical estimates of policy benefits and costs are often noisy, and it's unclear how resources should be allocated when faced with this statistical uncertainty. For example, one policy might have a larger estimated welfare gain but also be estimated with substantial noise, while another might have a smaller estimated welfare gain but be estimated more precisely. Should policymakers favor the policy with the larger estimated welfare gain or the policy with less statistical uncertainty? Without a principled approach to managing this tradeoff, policymakers risk directing resources toward policies that only appear attractive due to estimation error. In this paper I address the question: How should policymakers make simultaneous changes to many different policies while accounting for statistical uncertainty about policy impacts? 

This paper develops a framework for optimal policy choices that explicitly accounts for statistical uncertainty. I begin with a social planner who must choose upfront spending on a given menu of policies to maximize social welfare---defined as the weighted sum of individual money-metric utilities---subject to a future budget constraint. I incorporate statistical uncertainty by treating both the true policy benefits and costs and their empirical estimates as random. To establish a welfare benchmark, I consider an oracle social planner who knows the distribution of true policy benefits and costs and maximizes expected social welfare. Standard arguments show that this problem is equivalent to one in which, for (almost) every realization of the empirical estimates, the oracle planner maximizes posterior expected social welfare using the distribution of true benefits and costs as the prior.

I then characterize the information required for optimal policy choices. Since empirical estimates are only informative about the welfare and budget impacts of changes close to the existing policy regime, I focus on a local analysis of the planner's problem, restricting attention to small changes in spending. I show that optimal local policy choices depend on the posterior expected gradient of the net welfare impact. This result delivers a simple sufficient statistic, in the spirit of the public economics literature, which a researcher can report to the planner: the posterior mean benefit and net cost of each policy. Given these posterior means, the planner can solve for optimal local policy without additional information about the underlying distribution of policy impacts.

In practice, the distribution of true policy benefits and costs is unknown, making the oracle planner's decision infeasible because posterior means are unknown. I propose an empirical Bayes approach to approximate the oracle planner by estimating the most likely prior from the data and using it to construct estimates of posterior mean benefits and costs. The key idea is that the planner observes noisy sample estimates of benefits and costs for many policies. Taken together, these sample estimates are informative about the underlying distribution of true policy benefits and costs. I show that policy choices based on empirical Bayes estimates of posterior means approximate the oracle planner arbitrarily well as the number of policies grows.

Formally, I assume the observed estimates for policy benefits and costs are conditionally Gaussian and centered at their true values, consistent with widely-used hypothesis tests and standard error calculations. I consider two different approaches to modeling the distribution of true policy benefits and costs. In the first, I assume the distribution of the true benefits and costs lies within a correctly specified parametric class. I estimate the parameters of this distribution and use them to construct estimates of posterior mean benefits and net costs. In the second, I assume the true benefits and costs follow a multivariate location-scale model that depends on policy type, with residuals distributed according to a flexible prior unknown to the planner. I estimate this unknown prior nonparametrically and combine it with location-scale parameter estimates to yield estimates of posterior mean benefits and net costs. I define the empirical Bayes local spending rule as the local spending rule that solves the planner's local problem using estimated posterior means.

Existing results in the empirical Bayes literature suggest that the estimated posterior mean benefits and costs converge in mean squared error to the true posterior mean benefits and costs. But for the planner, what matters is not mean squared error but the performance of the empirical Bayes local spending rule in the planner's local problem. My main theoretical contribution is to derive rates of convergence for two measures of the gap between the empirical Bayes approximation of the local problem and the oracle planner's local problem that are uniformly valid over a large class of data generating processes and policy environments. I also show that a sample plug-in approach, which solves the local problem with raw point estimates of benefit and cost in place of posterior means, fails to converge in important cases. Together, these results demonstrate that simply plugging in raw sample estimates can lead to suboptimal decisions, while the empirical Bayes approach, which adjusts for sampling uncertainty through posterior mean shrinkage, asymptotically matches the oracle planner's optimal performance.

Finally, I illustrate the proposed method in an empirical application to 127 policies studied by \cite{hendren2020unified} and \cite{hahn2024welfare}. The application demonstrates the implications of empirical Bayes shrinkage of benefit and cost estimates for optimal policy choice. The central finding is that across several different local problem specifications, I find that the empirical Bayes approach is estimated to improve welfare, while the sample plug-in approach is estimated to reduce welfare. These empirical results are consistent with the theoretical performance results established earlier in the paper and illustrate that carefully accounting for statistical uncertainty can qualitatively change the welfare implications of policy choices.

\paragraph{Related Literature}
The decision problem in this paper builds upon the well-studied question in public economics of how to allocate spending across government policies to maximize welfare. I follow the large literature in public economics that models a planner who chooses local changes in spending to maximize welfare; recent examples include \cite{hendren2020unified}, \cite{finkelstein2020welfare}, and \cite{bergstrom2024optimal}. 
I adopt a sufficient statistics approach to solving the local welfare maximization problem, in the spirit of an extensive literature in public economics that derives low-dimensional statistics that are sufficient to draw welfare conclusions \citep[and numerous others]{gruber1997consumption, feldstein1999tax, saez2001using, chetty2008moral, chetty2009sufficient, schmieder2016effects, kleven2021sufficient}.
I depart from this literature by allowing for statistical uncertainty about policy impacts, then carefully setting up a tractable decision problem under statistical uncertainty that can be solved with sample estimates.

This paper also builds on a broad literature in statistical decision theory, which dates back to \cite{wald1949statistical} and more recently the seminal paper of \cite{manski2004statistical}, and the more recent literature on Empirical Welfare Maximization (EWM) \citep[and more]{kitagawa2018should, athey2021policy, mbakop2021model, sun2024empirical}. The EWM literature typically studies how to optimally target a given policy using sample data (e.g., selecting the optimal group of individuals by their observable characteristics to receive a treatment). 
I instead study how to optimally make changes to many policies using noisy empirical estimates. 

One paper in the EWM literature that considers statistical uncertainty when making many policy changes is \cite{chern2025policy}. 
They propose a policy rule that explicitly trades off between the size of estimated welfare and the estimation uncertainty of welfare, which, as also discussed in \cite{andrews2025certified}, is equivalent to maximizing worst-case performance over a confidence set for welfare. 
While I consider the same question of how to make changes to a set of policies based on noisy sample estimates of policy impacts, my approach is different and developed independently. 
In particular, I consider a planner who maximizes expected welfare, knowing policy impacts are drawn from some unknown distribution. I show that estimation uncertainty matters because it informs Bayesian updating even though the planner has no direct preference over estimation error (i.e., the planner is risk neutral in welfare space). Moreover, the empirical Bayes approach I propose allows for shrinkage by pooling together information across policies and attains bounds on expected regret as the number of policies grows, even if there is non-vanishing uncertainty for each policy estimate. In contrast, \cite{chern2025policy} provide high probability bounds on regret that are attained as estimation error goes to zero. 

The relative changes to upfront spending from the optimal local change to spending induce an implicit ranking of policies. Rankings with statistical uncertainty have been studied in the econometrics literature; \cite{mogstad2024inference} propose a frequentist approach to inference on the ranks themselves, while \cite{andrews2024inference} perform inference on the highest ranked outcome. Several papers have proposed empirical Bayes approaches to ranking under a decision-theoretic framework, including \cite{gu2023invidious} and \cite{kline2024discrimination}. Instead of focusing on discrete rankings, in my paper I allow for decisions to vary continuously in magnitude and direction, capturing the idea that policymakers choose not only whether to change a policy but by how much.

While empirical Bayes methods are typically used in economics for denoising or ranking, in this paper I apply them to a policymaking decision problem. Another paper that proposes empirical Bayes methods in a policymaking setting is \cite{yamin2025poverty}, who studies the problem of how to allocate cash transfers to minimize poverty using noisy measures of income. That paper shows in theory and in simulations that a nonparametric empirical Bayes approach to allocating transfers outperforms a sample plug-in approach. 
Similarly, in this paper I prove that an empirical Bayes approach to decision-making can perform better than a sample plug-in approach in a related but distinct policymaking setting.

The literature in Bayesian statistical decision theory is deeply related to the large literature on empirical Bayes methods, which dates back to the seminal work of \cite{robbins1956empirical} and has since been expanded by many researchers in various fields \citep[and numerous others]{jiang2009general, efron2012large, koenker2014convex, jiang2020general, gu2023invidious, soloff2024multivariate, chen2022empirical}. The nonparametric empirical Bayes approach in this paper builds upon the nonparametric maximum likelihood approach, pioneered by \cite{kiefer1956consistency}, for multivariate, heteroscedastic empirical Bayes, as developed by \cite{soloff2024multivariate}. I extend results on mean squared error risk bounds in this literature to allow for a multivariate location-scale family of distributions and derive rates of convergence directly for the social welfare measure of interest. 

\paragraph{Outline} The rest of the paper proceeds as follows. In Section \ref{sec2}, I characterize the optimal local change to spending for the oracle planner under statistical uncertainty, showing that posterior mean benefits and net costs are sufficient statistics. In Section \ref{sec3}, I show that replacing these posterior means with raw sample estimates can lead to suboptimal decisions. I then propose an empirical Bayes approach to estimate posterior means that recovers the oracle planner's performance asymptotically. Section \ref{sec4} presents simulations and Section \ref{sec5} applies the method to 127 policies studied by \cite{hendren2020unified} and \cite{hahn2024welfare}. Section \ref{sec6} concludes.

\section{Social Welfare Optimization} \label{sec2}

\subsection{Setup and Local Approximation} \label{sec_2.1}

Consider a social planner who can adjust a finite number of policies $j = 1, \dots, J$ through changes to upfront spending on each policy. Let $s_j$ denote the change in upfront spending on policy $j$ and let $s = (s_1, \dots, s_J)$ collect upfront spending changes on all policies. The planner wants to maximize social welfare subject to a budget constraint. Let $W(s)$ denote social welfare after upfront spending changes $s$ and $G(s)$ denote the present discounted value of the planner's long-run budget after upfront spending changes $s$. Then $W(0)$ and $G(0)$ are the welfare and budget, respectively, of the current policy regime, where throughout the paper I use $0$ to denote the zero vector. Note that $G(\cdot) > 0$ means that the planner is spending money in the long run, and $G(\cdot) < 0$ means the planner is bringing in money in the long run.

Without statistical uncertainty, the planner knows the exact welfare and the long-run budget after any given spending change and can therefore exactly enforce the budget constraint. In practice, only estimates of the welfare and budget impact of policy changes are available at the time of decision-making. I instead assume that the planner chooses $s$ knowing that, in some future period after the true welfare and budget impacts are realized, the budget constraint will be closed with certainty through a budget-closing policy. Let $\mu$, assumed to be known, denote the welfare impact per unit increase in budget from closing the budget. Then closing a budget shortfall of size $G$ yields a welfare impact of $-\mu G$. Note that by assuming $\mu$ to be known, I rule out the ability to close the budget constraint with any of the policies that are being learned about from sample estimates at the time of decision-making.

I first analyze the planner's decision problem without statistical uncertainty and defer the case with uncertainty to Section \ref{sec_2.3}. 
The planner chooses changes to spending $s$ to maximize the \textit{net welfare impact} of $s$, that is, the sum of the direct welfare impact $W(s)$ and the indirect welfare impact from closing the budget $-\mu G(s)$:
\begin{align*}
    \max_s W(s) - \mu G(s).
\end{align*}
Under this formulation of the planner's objective, an alternative interpretation for $\mu$ is the marginal value of relaxing the budget constraint from the Lagrangian formulation of maximizing welfare subject to closing the budget constraint.

Global maximization of welfare is straightforward when the net welfare impact $W(s) - \mu G(s)$ of every possible spending change $s$ is known. In this case, the global problem can be solved using standard optimization methods. If instead the planner observes sample estimates of $W(s) - \mu G(s)$ across possible spending changes, the planner can use standard stochastic optimization methods that account for sampling uncertainty. Alternatively, the planner could take a Bayesian approach, placing a prior on the net welfare impact function, updating it using the sample estimates, and choosing the spending change that maximizes posterior expected welfare, as in \cite{kasy2018optimal}.

In practice, however, available empirical estimates are often estimates of local changes to observed policies, rather than the net welfare impact of every possible spending change. Global maximization therefore entails extrapolating empirical estimates beyond the observed policy regime, which in turn requires strong assumptions about how net welfare impact varies with $s$. To avoid such extrapolation, I consider the optimal \textit{local} change to policy spending: that is, starting from the current policy regime, what is the best ``small'' change to policy spending? This local approach to welfare analysis is often taken in the sufficient statistics literature in public economics to address a similar tension between credibility of assumptions and ability to make welfare statements.\footnote{See \cite{chetty2009sufficient} and \cite{kleven2021sufficient} for recent reviews of the sufficient statistics approach in public economics.}

To formalize this idea, let $w(s) \equiv W(s) - \mu G(s)$ denote the net welfare impact of spending change $s$ and suppose $w(s)$ is continuously differentiable in a neighborhood of $s = 0$.
The meaning of a ``local'' change is not unique in a multidimensional policy space: it depends on the geometry of the neighborhood around the current policy regime. For example, one could define local changes as those for which the total absolute change in spending across policies is small, or as those for which the absolute change in spending on each policy is small. I represent this geometry by a set $V \subseteq \R^J$ with $0 \in V$, a compact \textit{consideration set} of feasible directions of change in spending. The set $V$ may also encode real-world constraints faced by the planner, such as political constraints on changing spending on certain policies. 

For each scale $t>0$, define $tV = \{tv:v \in V\}$, the corresponding neighborhood around the current policy regime at scale $t$. 
Restricting the global problem to spending changes $s \in tV$ gives
\begin{align*}
    \max_{s \in tV} \frac{w(s) - w(0)}{t},
\end{align*}
where dividing by $t$ makes the objective comparable across different spending scales $t$. 

As $t \to 0$, this sequence of restricted global problems converges to
\begin{align*}
    \lim_{t \to 0} \max_{s \in tV} \frac{w(s) - w(0)}{t} = \max_{v \in V} \langle \nabla w, v \rangle,
\end{align*}
where $\langle \cdot, \cdot \rangle$ denotes the dot product on $\mathbb{R}^J$ and $\nabla w$ denotes the gradient of $w$ at $0$.
I define the planner's \textit{local problem} to be this $t \to 0$ limit of the restricted global problem. 

Note that a first-order Taylor expansion of $w(s)$ around zero gives $w(s) \approx w(0) + \langle \nabla w, s \rangle$ for small $s$. This formalizes the idea that the local problem approximates the global problem for small deviations from the status quo. 
The objective of the local problem is the instantaneous \textit{rate of increase} in $w(s)$ along direction $v$,
\begin{align*}
    \langle \nabla w, v \rangle = \frac{\partial}{\partial t} w(0 + tv) \big\vert_{t = 0}.
\end{align*}
An \textit{optimal local spending rule} is any direction $v \in V$ that maximizes this rate of increase. The optimal local spending rule may not be unique, but all optimal local spending rules result in the same maximal rate of increase in net welfare impact. Going forward, I use a fixed deterministic tie-breaking rule to select a unique maximizer, which I refer to as the optimal local spending rule.

Given any consideration set $V$, knowing the gradient $\nabla w$ is enough to characterize both the optimal local spending rule and the maximal rate of increase in net welfare impact. The sufficiency of the gradient to determine the optimal local spending rule is in the spirit of the sufficient statistics approach to welfare analysis, which provides low-dimensional statistics that are sufficient to make certain statements about welfare effects in various economic models.

Certain forms of the consideration set $V$ yield simple closed-form solutions. For example, if $V$ is equal to an $L^p$ unit ball for $p \in [1,\infty]$, $V = \mathcal B_p \equiv \{v \in \mathbb{R}^J: \Vert v \Vert_p \leq 1\}$, the dual norm gives 
\begin{align*}
    \max_{v \in \mathcal B_p} \langle \nabla w, v \rangle &= \left\Vert \nabla w \right\Vert_{\frac{p}{p-1}},
\end{align*}
where I use the usual conventions that $\frac{p}{p-1} = \infty$ when $p=1$ and $\frac{p}{p-1} = 1$ when $p = \infty$. The choice of $p$ corresponds to different notions of a local change in spending. When $p=1$, local changes are those with a small total absolute change in spending across policies. When $p=\infty$, local changes are those for which the absolute change in spending on each policy is small. Intermediate values of $p$, like $p=2$, interpolate between these two definitions.
In Section \ref{sec3} I provide theoretical results for consideration sets $V \subseteq \mathcal B_p$.

\subsection{Notation} \label{sec_2.2}

I use the notation of \cite{hendren2020unified} to re-express the gradient $\nabla w$ in terms of more familiar economic objects.
Taking social welfare to be the weighted sum of individuals' money-metric utilities, the change in social welfare due to a marginal change in upfront spending on policy $j$ can be written as
\begin{align*}
    \frac{\partial W}{\partial s_j}\bigg\vert_{s = 0} &= \eta_j WTP_j.
\end{align*}
Here $\eta_j$ is the average social marginal utility of income for policy $j$, which is the increase in social welfare from giving \$1 on average to the individuals impacted by policy $j$. I will also refer to $\eta_j$ as the welfare weight for policy $j$. The term $WTP_j$ is the sum of individuals' marginal willingness to pay for policy $j$, which I will call the \textit{benefit} of policy $j$.

The change in long-run budget due to a marginal change in upfront spending on policy $j$ is 
\begin{align*}
    \frac{\partial G}{\partial s_j} \bigg \vert_{s=0} &= G_j,
\end{align*}
which I will call the \textit{net cost} of policy $j$. 
With this notation, the gradient of the net welfare impact at zero spending is
\begin{align*}
    \nabla w &= \begin{pmatrix} \eta_1 WTP_1 - \mu G_1 \\ \vdots \\ \eta_J WTP_J - \mu G_J \end{pmatrix}.
\end{align*}
Note that each component of $\nabla w$ is a partial derivative, evaluated holding spending on all other policies fixed at the current policy regime. Interaction effects generated by simultaneously changing multiple policies enter only through higher-order terms and therefore drop out of the local problem. This does not rule out interactions at the status quo, which can affect the partial derivatives.

As discussed in \cite{finkelstein2020welfare}, estimates of $WTP_j$ and $G_j$ are available for many different policy changes. The Policy Impacts Library provides such estimates for over 149 policies in the United States. To ensure estimates of benefit and net cost are comparable across different policies, in this paper I normalize the size of a marginal change in upfront spending on policy $j$ to be one monetary unit of program cost. In practice this means that I divide estimates of the benefit and net cost of policies by the program cost. 

In this paper I assume that $\eta_j$ is known ex ante by the planner for each policy $j = 1,\dots,J$. This means that the uncertainty to be introduced in Section \ref{sec_2.3} about the direct welfare impact of each policy comes from uncertainty about the benefit of each policy to its recipients, rather than uncertainty about welfare weights $\eta_j$. I make this assumption because empirical estimates in the literature primarily quantify the benefit and net cost of different policy changes, while $\eta_j$ captures in part the preferences of the planner, which are not as easily estimated.\footnote{There do exist methods to back out the social marginal utility of income across the income distribution, albeit without taking into account statistical uncertainty \citep{bourguignon2012tax, hendren2020measuring}. However, for those approaches to be valid the planner must find the current tax schedule optimal, which is at odds with the premise of this paper that the planner wants to make changes to the current policy regime.}

\subsection{Adding Statistical Uncertainty} \label{sec_2.3}

In practice the true welfare and budget impacts from changes to spending are unknown to the planner. Instead, the planner observes sample estimates of the benefits and net costs, denote $\{(\widehat{WTP}_j, \widehat{G}_j)\}_{j=1}^J$, together with their covariance matrices $\{\Sigma_j\}_{j=1}^J$ from empirical studies of $J$ different policy changes. In this paper I model both the true welfare and budget impacts and the sample estimates as random. In particular, I first assume that the true impacts $\{(WTP_j, G_j)\}_{j=1}^J$ are jointly drawn from some distribution $F$. In this section I assume $F$ is known by the planner; in Section \ref{sec3} I present an empirical Bayes approach to proceed when $F$ is not known.

Motivated by the central limit theorem, I model the sample estimates as independent across policies and conditionally Gaussian with known covariance matrices:\footnote{As discussed in Section \ref{sec_2.2}, I normalize the sample estimates of benefit and net cost by program cost to ensure they are comparable across different policies. If program costs are observed without statistical uncertainty, the Gaussian distribution approximation motivated by the central limit theorem is reasonable. Most of the statements in Section \ref{sec2} are still valid without exact Gaussianity. See \cite{chen2026normal} for a discussion of how the empirical Bayes results derived in Section \ref{sec3} would change when the data is not exactly Gaussian.} 
\begin{equation}\label{eq:likelihood1}
    \begin{pmatrix} \widehat{WTP}_j \\ \widehat{G}_j \end{pmatrix} \Bigg\vert \begin{pmatrix} {WTP}_j \\ {G}_j \end{pmatrix}, \Sigma_j \overset{\text{ind.}}{\sim} N \left(\begin{pmatrix} {WTP}_j \\ {G}_j \end{pmatrix}, \Sigma_j \right).
\end{equation}
While unbiasedness and exact Gaussianity are not without loss, approximate normality and unbiasedness are already implicit in the reported standard errors and hypothesis tests in the literature. By conditioning on $\Sigma_j$ I take it to be fixed. In practice, following the empirical Bayes literature \citep[and others]{walters2024empirical, soloff2024multivariate, chen2022empirical}, I will use consistent covariance matrix estimates from the empirical studies for $\Sigma_j$. I leave the problem of dealing with estimated covariance matrices and error in the normal approximation to future work.

I again analyze the planner's local problem. Consider the oracle social planner, who knows both the distribution of true policy impacts $F$ and the sample estimate model of \eqref{eq:likelihood1}. In line with the aggregation theorem of \cite{harsanyi1955cardinal}, the planner evaluates any candidate local direction of change in spending by the expectation of the planner's local objective.
Here the expectation averages over both the randomness of the true policy impacts and the randomness of sample estimate noise.
In this sense, the planner is risk neutral in welfare space, where here risk comes from uncertainty about the welfare and budget impacts. 

For notational simplicity let $Y_j \equiv (\widehat{WTP}_j, \widehat{G}_j)'$. The planner solves for the optimal local spending rule with respect to consideration set $V$ by solving
\begin{align*}
    \max_{v: \mathcal{Y} \to V} E \left[ \langle \nabla w, v(Y_{1:J}) \rangle \right],
\end{align*}
where $\mathcal{Y} = \R^{2J}$ is the space of sample estimates and $Y_{1:J}$ collects sample estimates $Y_j$ for all $J$ policies as a vector in $\mathcal{Y}$. I write the spending rule $v$ as a map from $\mathcal{Y}$ to $V$ to emphasize that it can depend on the realization of the sample estimates because the planner observes sample estimates before choosing spending.

Note that by the law of iterated expectations,
\begin{align*}
    E \left[ \langle \nabla w, v(Y_{1:J}) \rangle \right] = E \left[ E \left[ \left\langle \nabla w, v(Y_{1:J}) \right\rangle \vert Y_{1:J} \right] \right].
\end{align*}
Then the optimal local spending rule is the Bayes solution, which is the rule that maps each realization of the sample estimates $Y_{1:J} \in \mathcal{Y}$ to the vector $v \in V$ that (locally) maximizes the posterior expected net welfare impact given the realization of the sample estimates. The optimality of the Bayes solution for a decision-maker who maximizes such an expected objective is a standard result in Bayesian inference \citep[see, e.g., Theorem 2.3.2 of][]{robert2007bayesian}. 
Thus, using $\pi$ as shorthand for the posterior distribution of the true impacts $\{({WTP}_j, {G}_j)\}_{j=1}^J$ given sample estimates $Y_{1:J}$, the optimal local spending rule maps each $Y_{1:J} \in \mathcal{Y}$ to the vector $v \in V$ that solves the following problem: 
\begin{align}
   \max_{v \in V} E \left[ \left\langle \nabla w, v \right\rangle \vert Y_{1:J} \right] = \max_{v \in V} E_\pi \left[\langle \nabla w, v \rangle \right] = \max_{v \in V} \langle E_\pi \left[\nabla w \right], v \rangle, \label{eq:objective}
\end{align}
where the final equality follows from linearity of the expectation operator.

\paragraph{Planner sufficient statistic}
The sufficient statistic for the planner to determine the optimal local spending rule is now the posterior expected gradient,
\begin{align}
    E_\pi \left[ \nabla w \right] &= \begin{pmatrix}
        \eta_1 E_\pi[WTP_1] - \mu E_\pi[G_1] \\ \vdots \\ \eta_J E_\pi[WTP_J] - \mu E_\pi[G_J] 
    \end{pmatrix}, \label{eq:post_grad}
\end{align}
in the sense that given $E_\pi[\nabla w]$, the planner who knows consideration set $V$ can solve for the optimal local spending rule.
Because $\mu$ and the $\eta_j$'s are assumed known by the planner, the planner only additionally needs to know the posterior mean benefit and net cost of each policy to construct the posterior expected gradient. 
As discussed in Section \ref{sec_2.2}, the local problem depends only on the impacts of individual marginal policy changes rather than the joint impacts of changing all policies simultaneously. Consequently, the local approach is directly compatible with the many empirical estimates of individual policy changes. 

\paragraph{Researcher sufficient statistic}
A researcher who has compiled benefit and net cost estimates for a set of policies and would like to assist the planner in making optimal policy choices may not know the parameters of the oracle planner's local problem like $\mu$, welfare weights $\eta_j$, or the consideration set $V$. 
Crucially, the sufficiency of posterior mean benefits and net costs for the planner to construct the posterior expected gradient means that the researcher only needs to report to the planner the set of posterior mean benefits and net costs,
\begin{align*}
    \left\{ (E_\pi[WTP_j], E_\pi[G_j]) \right\}_{j=1}^J.
\end{align*}
The planner---who knows $\mu$, the $\eta_j$'s, and $V$---can then use the posterior mean benefits and net costs to solve for the optimal local spending rule as described above.

I emphasize that posterior means depend on both sample estimates and uncertainty, where uncertainty arises from sample estimate noise and from the randomness of true policy impacts. Intuitively, posterior means shrink sample estimates towards the distribution of the true policy impacts, with more shrinkage for sample estimates that are noisier relative to the dispersion of the true policy impacts. This can be seen in, for example, Tweedie's formula for the posterior mean \citep{robbins1956empirical}. Thus for the researcher to be able to calculate posterior means, at a minimum they must observe sample estimates together with some measure of sampling noise, like standard errors. In fact, even if the distribution of the true policy impacts were known, without observing standard errors, it is not possible to do posterior mean shrinkage and thus optimal policy.

In practice, the researcher does not know the distribution of the true policy impacts and so cannot calculate posterior mean benefits and net costs. In Section \ref{sec3} I propose that the researcher instead report feasible empirical Bayes estimates of the posterior means to the planner, and show that the planner can approximate the optimal decision well using the empirical Bayes estimates in place of posterior mean benefits and net costs.

\paragraph{Sufficient statistic for direction}
For a wide range of policy-relevant consideration sets $V$, the researcher who does not know the planner's parameters $\mu$, $\{\eta_j\}$, or $V$ can report to the planner a simpler statistic, namely the ratios of posterior mean benefit to posterior mean net cost and the signs of the posterior mean net costs (or benefits) for all policies. 
For the planner, who knows $\mu$, $\{\eta_j\}$, and $V$, those ratios and signs are sufficient to know whether the optimal local spending rule increases or decreases upfront spending on each policy for a subset of possible consideration sets $V$.
In particular, suppose $V = c\mathcal{B}_p$ for any $p \geq 1$ and scalar $c > 0$, where recall $\mathcal{B}_p$ denotes the $L^p$ unit ball. Let $v^*$ denote the optimal local spending rule, where for $L^p$ ball consideration sets I adopt the convention that $v_j^*=0$ whenever the $j$th component of the posterior expected gradient is zero. Then if either $E_\pi[G_j] > 0$ and $\eta_j > 0$ or $E_\pi[G_j] < 0$ and $\eta_j < 0$,
\begin{equation*}
    \frac{E_\pi[WTP_j]}{E_\pi[G_j]} \geq \frac{\mu}{\eta_j} ~\Rightarrow~ v^*_j \geq 0, \qquad \frac{E_\pi[WTP_j]}{E_\pi[G_j]} \leq \frac{\mu}{\eta_j} ~\Rightarrow~ v^*_j \leq 0.
\end{equation*}
If instead either $E_\pi[G_j] > 0$ and $\eta_j < 0$ or $E_\pi[G_j] < 0$ and $\eta_j > 0$, then
\begin{equation*}
    \frac{E_\pi[WTP_j]}{E_\pi[G_j]} \leq \frac{\mu}{\eta_j} ~\Rightarrow~ v^*_j \geq 0, \qquad \frac{E_\pi[WTP_j]}{E_\pi[G_j]} \geq \frac{\mu}{\eta_j} ~\Rightarrow~ v^*_j \leq 0.
\end{equation*}
Because $\mu$ and the $\eta_j$s are known by the planner, given a consideration set $V$ that is an $L^p$ ball, the ratio $E_\pi[WTP_j]/E_\pi[G_j]$ together with the sign of $E_\pi[G_j]$ (or the sign of $E_\pi[WTP_j]$) is a sufficient statistic that a researcher can report to the planner for whether the optimal spending rule increases or decreases spending on policy $j$.

\subsection{Upfront Versus Net Spending} \label{sec_2.4}

Throughout this paper I assume that the planner chooses changes to upfront spending $s_j$ on each policy $j$. One could instead imagine that the planner chooses \textit{net} spending on policies, equivalently the change in budget due to policy changes, which takes into account fiscal externalities in addition to upfront spending. 
To understand how the problem with net spending as the choice variable is different, let $p_j$ denote the change in net spending on each policy $j$, which I collect into a vector $p = (p_1, \dots, p_J)$. For policy changes that are local to zero, I can approximate $p_j$ by $s_jG_j$ for each policy $j$. 

The optimal local change to net spending can be summarized by the gradient of the net welfare impact with respect to the choice variable $p$ at zero. In the absence of statistical uncertainty, the problem with net spending as the choice variable is locally a reparameterization of the problem with upfront spending as the choice variable, using $p_j = s_jG_j$. So by the chain rule, restricting to $G_j \neq 0$, the gradient of net welfare impact with respect to $p$ at zero is
\begin{align*}
    \begin{pmatrix}
        \eta_1 \frac{WTP_1}{G_1} - \mu \\ \vdots \\ \eta_J \frac{WTP_J}{G_J} - \mu
    \end{pmatrix}.
\end{align*}
This formulation could be appealing because the gradient depends on the ratio of $WTP_j$ and $G_j$ for each policy $j$, which is exactly the marginal value of public funds (MVPF) for each policy $j$, as discussed in \cite{hendren2020unified}. 

With statistical uncertainty, there are several issues with taking net spending to be the choice variable. First, when net costs are noisily measured, choosing a target level of net spending is infeasible for the planner. This is because net spending $p_j$ depends in part on net cost $G_j$, which is unknown due to statistical uncertainty at the time of decision-making. 
Additionally, a small change in net spending may not correspond to a local policy change when the policy has net cost $G_j$ close to zero. Crucially, \cite{hendren2020unified} estimate that some policies ``pay for themselves'' and thus have net costs that are close to or equal to zero. For such policies, a small change in net spending means a large change in upfront spending, that is, a large policy change. For such large changes, existing empirical estimates seem unlikely to provide useful guidance.

A further difficulty under uncertainty arises due to the irregular behavior of expectations of ratios. Locally the planner's optimal choice of net spending is summarized by the posterior expected gradient, which involves terms that are a posterior expectation of a ratio of noisy parameters, $E_\pi \left[ \frac{WTP_j}{G_j} \right]$. These expected ratios can be statistically ill-behaved, that is, the posterior expectation may be infinite or undefined. To provide intuition for why, note that if a random variable $X$ has positive and right-continuous density at 0, $E \left[\frac{1}{X} \right]$ is either infinite or does not exist. So when net spending is the choice variable for a set of policies, some of which pay for themselves, the expected gradient is likely to not be well-defined. In contrast, when the planner chooses upfront spending, the gradient is a linear combination of noisy parameters and so the expected gradient is well-defined.

These issues highlight that the formulation of the planner's problem becomes delicate in settings with statistical uncertainty and that working with upfront spending as the planner's choice variable yields better behavior than using net spending as the choice variable.

\section{Empirical Bayes} \label{sec3}

In the previous section I considered an oracle planner who forms a posterior over policy impacts $\{(WTP_j, G_j)\}_{j=1}^{J}$, taking as their prior the correctly specified distribution of true policy impacts. In practice, however, it may be the case that the planner does not know the true distribution of those policy impacts, and so is not able to construct posterior means and derive the optimal local spending rule as before. In this section I propose an empirical Bayes approach to approximate the optimal but infeasible local spending rule of the oracle planner. The empirical Bayes approach uses an estimate of the distribution of policy impacts, together with sample estimates and their standard errors, to produce shrunk posterior mean estimates.

To estimate the distribution of policy impacts, I assume $(WTP_j, G_j)$ are drawn independently from an unknown prior that varies with policy type, and maintain the assumption that observed sample estimates $\{(\widehat{WTP}_j, \widehat{G}_j)\}_{j=1}^{J}$ are conditionally Gaussian and unbiased. Despite the unbiasedness of the sample estimates for the true policy impacts, I show in Section \ref{sec_3.2} that a sample plug-in approach---which solves for the optimal local spending rule by replacing posterior means with sample estimates---can perform poorly. This result is valid for any number of policies $J$, fixing the sampling noise of the sample estimates, and so is not resolved as the number of policies for which I observe sample estimates grows, unless the sample estimates grow arbitrarily precise. Thus, rather than simply plugging sample estimates into the planner's local problem, in this paper I develop an empirical Bayes approach to obtain asymptotically optimal decisions in the planner's local problem. 

I propose two different approaches to estimate posterior means. In the first approach (Section \ref{sec_3.3.1}), I assume the true distribution of policy impacts lies in a parametric class and estimate this distribution parametrically. 
In the second approach (Section \ref{sec_3.3.2}), I do not impose a parametric form for the true distribution of policy impacts and use an adaptation of modern empirical Bayes methods to nonparametrically estimate the distribution. These estimated distributions yield posterior mean estimates of benefit and net cost, which can be plugged into \eqref{eq:post_grad} to obtain estimates of the posterior expected gradient. The empirical Bayes local spending rule, either parametric or nonparametric, then solves the local problem of \eqref{eq:objective} with the respective estimated posterior expected gradient. In Section \ref{sec_3.5} I provide theoretical convergence results showing that the empirical Bayes approaches approximate the oracle planner's local problem arbitrarily well as the number of policies grows, with the nonparametric approach coming at the expense of slower rates for certain policy environments.

\subsection{Model}\label{sec_3.1}

I assume the true values of benefit and cost $(WTP_j, G_j)$ are independent random vectors across policies $j$. 
Benefits and net costs may systematically differ based on type of policy; for example, \cite{hendren2020unified} find that policies targeting children have systematically higher returns than policies targeting adults.
To account for this I assume that $(WTP_j, G_j)$ is drawn from a distribution that depends on characteristics of the type of policy $j$, which I denote $X_j$. 

The planner observes estimates $(\widehat{WTP}_j, \widehat{G}_j)$ of the benefit and net cost of each policy change $j$ from empirical studies, together with their covariance matrix $\Sigma_j$. Recall from \eqref{eq:likelihood1} that I assume estimates are independent and conditionally Gaussian. Going forward, I extend this assumption to also hold conditional on policy-type characteristics $X_j$.

I assume the following model for $(WTP_j, G_j)$ given $X_j = t$:
\begin{equation}\label{eq:likelihood_general}
    \begin{pmatrix} {WTP}_j \\ {G}_j \end{pmatrix} \Bigg\vert \, (X_j = t) \overset{\text{ind.}}{\sim} F_0\left(\cdot \mid t \right).
\end{equation}

By conditioning on $X_j$ and $\Sigma_j$ I take them to be known and fixed. In this paper I maintain that policies are independent conditional on policy-type characteristics and that policy-type characteristics are correctly specified. The independence assumption may be plausible in the local problem that is considered in this paper, where $WTP_j$ and $G_j$ describe the impact of a marginal change in policy $j$, holding spending on all other policies fixed. Furthermore, conditioning on policy-type characteristics allows for systematic similarities among related policies.
I leave the problem of dealing with dependent policies and with misspecified policy-type characteristics to future work.

For the theoretical results in this paper I impose assumptions on the data-generating process that I argue are economically reasonable. 
I first impose the following assumption, which assumes the tails of the distribution of true policy benefits and net costs are sufficiently thin. This assumption is reasonable if one believes that, for example, no single policy change has an impact on welfare or budget per unit of upfront spending as large as GDP.
\begin{assumption}\label{ass:compact}
    There exists a constant $S_0 > 0$ such that for all $p \geq 1$, $(E_{\theta \sim F_0}\left[\Vert \theta \Vert_2^p \mid X_j \right])^{1/p} \leq S_0 \sqrt{p}$ uniformly over the support of $X_j$.
\end{assumption}
I additionally impose an assumption on the social planner's preferences, which uniformly bounds the marginal welfare impact of closing the budget constraint $\mu$ and the welfare weight $\eta_j$ for each policy $j$ away from infinity. This assumption is reasonable if one thinks the planner's preferences are represented by finite Pareto weights for each individual in society.
\begin{assumption}\label{ass:eta}
    For all $j$, $\eta_j$ is uniformly bounded away from infinity, $\left\vert \eta_j \right\vert \leq M < \infty$, and $\mu < \infty$.
\end{assumption}

Finally, I impose the following regularity assumption:
\begin{assumption}\label{ass:bdd_sig}
    For all $j$ there exist constants $\underline{b}, \overline{b} > 0$ such that $\underline{b}I_2 \preceq \Sigma_j \preceq \overline{b}I_2$. \footnote{Recall that for square matrices $A$ and $B$, $A \preceq B$ means $B-A$ is positive semi-definite. Throughout this paper I use the notation $I_k$ to denote the $k \times k$ identity matrix.}
\end{assumption}
Assumption \ref{ass:bdd_sig} requires that the eigenvalues of the sample covariance matrices $\Sigma_j$ are uniformly bounded away from zero and infinity.

\subsection{Performance of Sample Plug-In Rule}\label{sec_3.2}

Under the model \eqref{eq:likelihood1} and \eqref{eq:likelihood_general}, the sample estimates $\widehat{WTP}_j$ and $\widehat{G}_j$ are unbiased for the true benefit and net cost $WTP_j$ and $G_j$ for each $j$. One might think that a sample plug-in approach---which solves the planner's local problem using a gradient constructed from using raw sample estimates in place of posterior means in \eqref{eq:post_grad}---would perform well because of this unbiasedness. In fact, for a fixed number of policies $J$, the sample plug-in local objective is consistent if the empirical estimates $(\widehat{WTP}_j,\widehat{G}_j)$ are consistent for all policies $j$. Usually consistency of the empirical estimates holds when the sample size of each empirical study that produces the estimates grows to infinity.

In this paper I consider a different thought experiment where the sample size of each empirical study is fixed and the number of policies $J$ grows. This scenario better approximates the problem of a policymaker who has access to noisy empirical estimates for many policies, but with limited precision for each policy. In Proposition \ref{prop:plug_in} below I derive lower bounds on two different measures of the gap between the oracle planner's local problem and the sample plug-in local problem. The lower bounds do not converge to zero as the number of policies $J$ increases, implying that the sample plug-in local spending rule can perform poorly relative to the oracle planner's optimal local spending rule. 

To analyze the local spending rule across different numbers of policies $J$, I need a sequence of consideration sets $\{V_J\}_{J \in \mathbb{N}}$. I also require suitable normalization of the planner's local objective so that the local problem is comparable across different $J$. Because the bounds I provide will be for sequences such that for all $J$, $V_J \subseteq \mathcal{B}_p \subseteq \mathbb{R}^J$ for some given $p \geq 1$, I normalize by the order of the largest possible expected local problem objective among directions in $\mathcal{B}_p \subseteq \mathbb{R}^J$. The following lemma characterizes the normalization factor.

\begin{lemma}\label{lem:order}
Suppose Assumptions \ref{ass:compact} and \ref{ass:eta} hold. Then
\begin{equation*}
    E\left[ \max_{v \in \mathcal{B}_p} \langle \nabla w, v \rangle \right] =\begin{cases}
        O\left(\sqrt{\log J}\right) & p = 1 \\
        O \left(J^{\frac{p-1}{p}} \min\left\{\sqrt{\frac{p}{p-1}}, \sqrt{\log J}\right\} \right) & 1 < p < \infty \\
        O(J) & p = \infty
    \end{cases}.
\end{equation*}
\end{lemma}
The proof of this and all subsequent results is available in Supplemental Appendix \ref{app:proofs}.

Let $\widehat{\nabla w}$ denote the sample plug-in gradient, constructed by plugging the sample estimates into the posterior expected gradient of \eqref{eq:post_grad}. Let $\hat v_J : \mathcal{Y} \to V_J$ denote the sample plug-in local spending rule, which, for every value of the sample estimates $Y_{1:J} \in \mathcal{Y}$, solves $\max_{v \in V} \langle \widehat{\nabla w}, v \rangle$.
Let $N_p$ denote the normalization factor given $p$ derived in Lemma \ref{lem:order}, that is, $N_p = \sqrt{\log J}$ for $p=1$, $N_p = J^{\frac{p-1}{p}} \min\left\{\sqrt{\frac{p}{p-1}}, \sqrt{\log J}\right\}$ for $p \in (1,\infty)$, and $N_p = J$ for $p = \infty$. 
In what follows all expectation and probability statements are conditional on $\Sigma_{1:J}$ and $X_{1:J}$, which I omit when unambiguous. 

In Proposition \ref{prop:plug_in} below I derive lower bounds on two different expressions. The first object,
\begin{align*}
    \frac{1}{N_p} \max_{v: \mathcal{Y} \to V_J} E \left[ \left\vert E [ \langle \nabla w, v(Y_{1:J}) \rangle - \langle \widehat{\nabla w}, v(Y_{1:J}) \rangle \vert Y_{1:J}] \right\vert \right],
\end{align*}
measures whether the objective of the sample plug-in local problem is uniformly close to the true local objective.
The second object,
\begin{align*}
    \frac{1}{N_p} \max_{v: \mathcal{Y} \to V_J} E \left[ E [ \langle \nabla w, v(Y_{1:J}) \rangle - \langle \nabla w, \hat{v}_J(Y_{1:J}) \rangle \vert Y_{1:J} ] \right],
\end{align*} 
measures how close to optimal the sample plug-in local spending rule is, as given by the difference between the local welfare improvement from the sample plug-in local spending rule and the maximal local welfare improvement.

\begin{proposition}
\label{prop:plug_in}
    Suppose Assumptions \ref{ass:compact}, \ref{ass:eta}, and \ref{ass:bdd_sig} hold. Let $K_p$ denote a positive constant that may depend on $p$ but does not depend on $J$. For any $p \in [1,\infty]$ let $V_J = \mathcal B_p \subseteq \R^J$ for each $J$. 
    \begin{enumerate}
        \item Objective of local problem:
    
    For each $p \in [1,\infty]$ there exist $F_0$, $\{\eta_j\}_{j=1}^J$, $\{\Sigma_j\}_{j=1}^J$, and $\mu$ such that
    \begin{align*}
        \frac{1}{N_p} \max_{v: \mathcal{Y} \to V_J} E \left[ \left\vert E [ \langle \nabla w, v(Y_{1:J}) \rangle - \langle \widehat{\nabla w}, v(Y_{1:J}) \rangle \vert Y_{1:J}] \right\vert \right] &\geq K_p ~ \text{for all } J \geq 4.
    \end{align*}
        \item Local spending rule:
        
    For each $p \in [1,\infty]$ there exist $F_0$, $\{\eta_j\}_{j=1}^J$, $\{\Sigma_j\}_{j=1}^J$, and $\mu$ such that
    \begin{align*}
        \frac{1}{N_p} \max_{v: \mathcal{Y} \to V_J} E \left[ E [ \langle \nabla w, v(Y_{1:J}) \rangle - \langle \nabla w, \hat v_J(Y_{1:J}) \rangle \vert Y_{1:J} ] \right] &\geq K_p ~ \text{for all } J \geq 4.
    \end{align*}
    \end{enumerate}
\end{proposition}
The first result holds because it is possible to construct a data-generating process such that each component of the sample plug-in gradient falls outside of the support of each component of the posterior expected gradient with sufficiently high probability. The proof intuition of the second result depends on $p$. For $p > 1$, note that if the $j$th component of the posterior expected gradient is strictly positive but the $j$th component of the sample plug-in gradient is strictly negative (or vice versa), the sample plug-in local spending rule $\hat v_J$ changes spending in the opposite direction as the optimal local spending rule on policy $j$. The result then follows because it is possible to construct a data-generating process such that these sign mistakes happen with sufficiently high probability. For $p=1$, the sample plug-in rule spends on the policy with the largest sample plug-in gradient component in magnitude, while the optimal rule spends on the policy with the largest posterior expected gradient component in magnitude. The result then follows because it is possible to construct a data-generating process such that mistakes in the chosen policy happen with high enough probability.

This result shows that in settings with limited information about the effect of each policy, using the sample plug-in rule can lead to suboptimal decisions. In contrast, I next show that an empirical Bayes approach can asymptotically match the performance of the oracle planner as the number of policies grows.

\subsection{Empirical Bayes Posterior Mean Estimation}\label{sec_3.3}

Motivated by the lack of convergence of the sample plug-in approach as the number of policies $J$ grows large, I instead propose an empirical Bayes approach. The empirical Bayes approach first estimates the unknown distribution of the true policy impacts and then approximates the oracle planner's local problem with estimates of the posterior means, obtained using the estimated distribution. I propose two different empirical Bayes approaches to estimate posterior means. In the parametric approach, I assume the unknown distribution lies in a parametric class and estimate the distribution parametrically. In the nonparametric approach, I assume that the location and scale of the distribution depends parametrically on policy-type characteristics $X_j$ but do not impose any additional parametric restrictions, similar to the conditional location-scale model proposed in \cite{chen2022empirical}.

\subsubsection{Parametric Empirical Bayes} \label{sec_3.3.1}

I assume there exists a parametric family $\{f_\beta : \beta \in B\}$ such that each true prior $F_0$ lies in the parametric family. In particular, I assume
\begin{equation}\label{eq:likelihood_param}
    F_0\left( \cdot \mid X_j \right) \text{ has density } f_{\beta_0} \left(\cdot \mid X_j \right) \in \{f_\beta\left(\cdot \mid X_j \right) : \beta \in B\}.
\end{equation}
Assumption \ref{ass:compact} requires that the true prior is sub-Gaussian.

To estimate posterior means under the parametric model of \eqref{eq:likelihood1}, \eqref{eq:likelihood_general}, and \eqref{eq:likelihood_param}, I must first estimate the unknown parameters in the model, $\beta_0$. For notational simplicity recall $Y_j \equiv (\widehat{WTP}_j, \widehat{G}_j)'$ and let $\theta_j \equiv (WTP_j, G_j)'$. I can estimate $\beta_0$ using standard conditional maximum likelihood estimation (MLE). The MLE $\widehat\beta_0$ is the estimate of $\beta_0$ that maximizes the log-marginal likelihood of the data $Y_j$, conditional on $X_j$. Specifically,
\begin{align*}
    \widehat\beta &\in \underset{\beta \in B}{\text{arg}\max} \frac{1}{J} \sum_{j=1}^J \log \int \varphi_{\Sigma_j}(Y_j - \theta) f_\beta \left(\theta \mid X_j \right) d\theta.
\end{align*}

The oracle posterior means of benefit and net cost, which I denote $WTP_j^*$ and $G_j^*$, are
\begin{align*}
    \begin{pmatrix} WTP_j^* \\ G_j^* \end{pmatrix} &= E_{\beta_0}\left[ \begin{pmatrix} WTP_j \\ G_j \end{pmatrix} \middle\vert \widehat{WTP}_j, \widehat{G}_j, X_j, \Sigma_j \right].
\end{align*}
The expectation with respect to $\beta_0$ emphasizes that these are the posterior means under the true parameter $\beta_0$. The parametric empirical Bayes posterior mean estimates, which I denote $\widehat{WTP}_j^*$ and $\widehat{G}_j^*$, are plug-in versions of the oracle posterior means, using estimate $\widehat{\beta}$:
\begin{align*}
    \begin{pmatrix} \widehat{WTP}_j^* \\ \widehat{G}_j^* \end{pmatrix} &= E_{\widehat \beta_0}\left[ \begin{pmatrix} WTP_j \\ G_j \end{pmatrix} \middle\vert \widehat{WTP}_j, \widehat{G}_j, X_j, \Sigma_j \right].
\end{align*}
Here the subscript emphasizes that these are posterior means computed as if the true parameter was the estimated parameter $\widehat \beta$. Posterior mean estimates for parametric distributions can be computed from Tweedie's formula.

\subsubsection{Nonparametric Empirical Bayes} \label{sec_3.3.2}

For a distribution $F_0$ normalized to have zero mean and identity covariance matrix, I assume the following model for $(WTP_j, G_j)$:
\begin{equation}\label{eq:likelihood2}
    \begin{pmatrix} {WTP}_j \\ {G}_j \end{pmatrix} = a(X_j; \alpha_0) + B(X_j; \Omega_0)^{1/2} \tau_j, \quad \tau_j \big\vert X_j, \Sigma_j \overset{\text{i.i.d.}}{\sim} F_0.
\end{equation}
Here the residuals $\tau_j \in \R^2$ are distributed according to common prior distribution $F_0$ that does not vary with $X_j$. Assumption \ref{ass:compact} requires that $F_0$ is sub-Gaussian. The function $a(X_j; \alpha_0)$, which is a parametric function of $X_j$ with parameter $\alpha_0$ that returns a vector in $\R^2$, shifts this distribution. The function $B(X_j; \Omega_0)$, which is a parametric function of $X_j$ with parameter $\Omega_0$ that returns a symmetric and positive definite scale matrix in $\R^{2 \times 2}$, scales this distribution. This model is similar to the conditional location-scale model proposed in \cite{chen2022empirical}, albeit for parametric location and scale functions.

To estimate posterior means under the model of \eqref{eq:likelihood1} and \eqref{eq:likelihood2}, I must first estimate the unknown parameters in the model, which are the prior $F_0$, the location parameter $\alpha_0$, and the scale parameter $\Omega_0$. For notational simplicity recall $Y_j \equiv (\widehat{WTP}_j, \widehat{G}_j)'$ and let $\theta_j \equiv (WTP_j, G_j)'$. Let $E_J[ \cdot | X_j]$ and $Var_J[\cdot | X_j]$ denote the sample mean and sample variance, respectively, conditional on $X_j$.
To estimate the location and scale parameters, notice that for each policy type $t$,
\begin{equation*}
    \begin{gathered}
        a(X_j; \alpha_0) = E[\theta_j \vert X_j ] = E[Y_j \vert X_j ], \\
        B(X_j; \Omega_0) = Var ( \theta_j \vert X_j ) = Var ( Y_j \vert X_j ) - E [ \Sigma_j \vert X_j ].
    \end{gathered}
\end{equation*}
Thus location estimate $\widehat\alpha$ and scale estimate $\widehat\Omega$ can be constructed using, for example, minimum-distance estimation:
\begin{equation}\label{eq:location_scale_ests}
    \begin{gathered}
        (\widehat\alpha,\widehat\Omega) \in \underset{\alpha,\Omega}{\text{arg}\min} ~ \left( \frac{1}{J} \sum_{j=1}^J g_j(\alpha,\Omega)\right)^\prime W \left( \frac{1}{J} \sum_{j=1}^J g_j(\alpha,\Omega)\right) , \\
        g_j(\alpha,\Omega) \equiv q(X_j) \otimes \begin{pmatrix} Y_j - a(X_j;\alpha) \\ vec \left( \left( Y_j - a(X_j; \alpha)\right) \left( Y_j - a(X_j; \alpha)\right)' - \Sigma_j - B(X_j; \Omega)\right) \end{pmatrix}
    \end{gathered}
\end{equation}
for some positive-definite weighting matrix $W$ and instrument vector $q(X_j)$, where $\otimes$ denotes the Kronecker product.
In practice $B(X_j;\widehat\Omega)$ may not be positive definite due to estimation error, although it will always be symmetric. In the empirical illustration I truncate the negative eigenvalues of each $B(X_j;\widehat\Omega)$ away from zero to produce positive definite scale matrix estimates.

To estimate the unknown prior $F_0$ and obtain posterior mean estimates, I will first transform the model to remove the location-scale transformation. The model of \eqref{eq:likelihood1} and \eqref{eq:likelihood2} is equivalent to
\begin{equation}\label{eq:likelihood_transf}
\begin{gathered}
    Z_j \big\vert \tau_j, X_j, \Sigma_j \overset{\text{ind.}}{\sim} N \left(\tau_j, \Psi_j \right), \quad \tau_j \big\vert X_j,\Sigma_j \overset{\text{i.i.d.}}{\sim} F_0, \qquad j = 1,\dots,J, \\
    Z_j \equiv B(X_j;\Omega_0)^{-1/2} \left(Y_j - a(X_j;\alpha_0) \right), \quad \Psi_j \equiv B(X_j;\Omega_0)^{-1/2}\Sigma_jB(X_j;\Omega_0)^{-1/2},
\end{gathered}
\end{equation}
which is an example of the multivariate heteroscedastic empirical Bayes model studied by \cite{soloff2024multivariate}. 
I therefore implement the nonparametric maximum likelihood estimation (NPMLE) method from \cite{soloff2024multivariate} to estimate $F_0$, but replacing the unknown $\alpha_0$ and $\Omega_0$ with estimates $\widehat\alpha$ and $\widehat\Omega$.
The NPMLE $\widehat{F}_J$ is the estimate of $F_0$ that maximizes the log-likelihood of the transformed data $\widehat{Z}_j$ under the following model:
\begin{equation}\label{eq:likelihood_transf_est}
\begin{gathered}
    \widehat{Z}_j \big\vert \tau_j,X_j,\Sigma_j \overset{\text{ind.}}{\sim} N(\tau_j, \widehat\Psi_j), \quad \tau_j|X_j,\Sigma_j \overset{\text{i.i.d.}}{\sim} F_0,  \\
    \widehat{Z}_j \equiv B(X_j;\widehat{\Omega})^{-1/2} \left(Y_j - a(X_j;\widehat\alpha) \right), \quad \widehat{\Psi}_j \equiv B(X_j;\widehat{\Omega})^{-1/2}\Sigma_j B(X_j;\widehat{\Omega})^{-1/2}.
\end{gathered}
\end{equation}
Specifically, for $\mathcal{P}(\mathbb{R}^2)$ the set of all probability distributions supported on $\mathbb{R}^2$ and $\varphi_{A}(\cdot)$ the density of a Gaussian random vector with mean 0 and covariance matrix $A$,
\begin{align*}
    \widehat{F}_J \in \underset{F \in \mathcal{P}(\R^2)}{\text{arg}\max} ~ \frac{1}{J} \sum_{j=1}^J \log \int \varphi_{\widehat\Psi_j}\left( \widehat{Z}_j - \theta \right) dF(\theta).
\end{align*}
In practice I approximate the above maximization problem by replacing $\mathcal{P}(\mathbb{R}^2)$ with the collection of distributions supported on a finite grid \citep{koenker2014convex}.
\cite{soloff2024multivariate} provide a Python package \texttt{npeb} to implement the NPMLE estimation procedure.

The oracle posterior means of benefit and net cost for policy $j$, which I denote $WTP_j^*$ and $G_j^*$, are
\begin{align*}
    \begin{pmatrix} WTP_j^* \\ G_j^* \end{pmatrix} &= a(X_j;\alpha_0) + B(X_j;\Omega_0)^{1/2} E_{F_0, \alpha_0, \Omega_0}\left[ \tau_j \middle\vert \widehat{WTP}_j, \widehat{G}_j, X_j, \Sigma_j \right].
\end{align*}
The expectation with respect to $F_0, \alpha_0$, and $\Omega_0$ emphasizes that these are the posterior means under the true common prior $F_0$ and true location-scale parameters $\alpha_0, \Omega_0$. 
The nonparametric empirical Bayes posterior mean estimates, which I denote $\widehat{WTP}_j^*$ and $\widehat{G}_j^*$, are plug-in versions of the oracle posterior means, using estimates $\widehat{F}_J$, $\widehat\alpha$, and $\widehat\Omega$:
\begin{align*}
    \begin{pmatrix} \widehat{WTP}_j^* \\ \widehat{G}_j^* \end{pmatrix} &= a(X_j;\widehat\alpha) + B(X_j;\widehat\Omega)^{1/2} E_{\widehat F_J, \widehat\alpha, \widehat\Omega}\left[ \tau_j \middle\vert \widehat{WTP}_j, \widehat{G}_j, X_j, \Sigma_j \right].
\end{align*}
Here the subscript emphasizes that these are posterior means computed as if the true parameters were the estimated prior $\widehat F_J$ and estimated location-scale parameters $\widehat{\alpha}, \widehat\Omega$.
Posterior mean estimates $E_{\widehat F_J, \widehat\alpha, \widehat\Omega}\left[ \tau_j \middle\vert \widehat{WTP}_j, \widehat{G}_j, X_j, \Sigma_j \right]$ can easily be calculated with the Python package \texttt{npeb}.

\subsection{Oracle and Empirical Bayes Rules}\label{sec_3.4}
Using either the parametrically or the nonparametrically estimated posterior means from the previous subsection, I construct the empirical Bayes local spending rule to approximate the oracle planner's local spending rule.
The oracle planner solves their local problem given a consideration set $V$:
\begin{align*}
    \max_{v \in V} \langle \nabla w^*, v \rangle, \qquad \nabla w^* \equiv \begin{pmatrix}
        \eta_1 WTP_1^* - \mu G_1^* \\ \vdots \\ \eta_J WTP_J^* - \mu G_J^*
    \end{pmatrix}.
\end{align*}
The \textit{empirical Bayes local spending rule} solves the oracle planner's local problem but plugs in empirical Bayes posterior mean estimates for the oracle posterior means. The \textit{parametric empirical Bayes local spending rule} plugs in parametric empirical Bayes posterior mean estimates, while the \textit{nonparametric empirical Bayes local spending rule} plugs in nonparametric empirical Bayes posterior mean estimates. Specifically, the empirical Bayes local spending rule $\hat v^*: \mathcal{Y} \to V$ solves, for every possible value of the sample estimates $Y_{1:J} \in \mathcal{Y}$,
\begin{align}
    \max_{v \in V} \langle \widehat{\nabla w}^*, v \rangle, \qquad 
    \widehat{\nabla w}^* = \begin{pmatrix}
        \eta_1 \widehat{WTP}_1^* - \mu \widehat{G}_1^* \\ \vdots \\ \eta_J \widehat{WTP}_J^* - \mu \widehat{G}_J^*
    \end{pmatrix}. \label{eq:eb_est_grad}
\end{align}

As discussed at the end of Section \ref{sec_2.3}, the empirical Bayes posterior mean estimates are a sufficient statistic for the oracle planner who knows $\mu$, welfare weights $\eta_j$, and consideration set $V$ to solve for the empirical Bayes local spending rule. Moreover, by the reasoning at the end of Section \ref{sec_2.3}, if the planner's consideration set is an $L^p$ ball, reporting the ratios of shrunk benefits to shrunk net costs together with the signs of the shrunk net costs suffices for the planner to know whether the empirical Bayes local spending rule increases or decreases spending on each policy.

\subsection{Performance of Empirical Bayes Local Spending Rule}\label{sec_3.5}

How well does the empirical Bayes local spending rule perform relative to the optimal but infeasible local spending rule of an oracle planner? Results in existing studies, like \cite{soloff2024multivariate}, suggest that the posterior mean estimates $\widehat{WTP}_j^*$ and $\widehat{G}_j^*$ will approximate the oracle posterior means $WTP_j^*$ and $G_j^*$ well on average over all policies $j$ under the mean squared error criterion as the number of policies $J$ grows. However, the criterion for a well-performing local spending rule is not the minimization of mean squared error but the maximization of net welfare impact achieved by solving the planner's local problem.

In this section I derive rates of convergence for the gap between the oracle planner's local problem and the empirical Bayes approximation of the local problem. These rates characterize the asymptotic performance of the empirical Bayes local spending rule relative to the oracle local spending rule as the number of policies grows large. They are valid over a large class of data-generating processes and consideration sets. I find that the parametric approach converges to the oracle planner's local problem at a $1/\sqrt{J}$ rate, while the nonparametric approach avoids some of these parametric restrictions at the cost of a slower rate of convergence for certain consideration sets.

\subsubsection{Parametric Empirical Bayes Local Spending Rule}

I first derive rates of convergence for the parametric empirical Bayes local spending rule. To do so, I need to impose additional assumptions on the estimators used in Section \ref{sec_3.3.1} to obtain the parametric empirical Bayes local spending rule.

\begin{assumption}\label{ass:param}
    \begin{enumerate}
        \item Parameter space $B$ is compact and convex, and both $\beta_0 \in B$ and $\widehat\beta \in B$. Furthermore, there exists a constant $C_1$ such that $E[\Vert \widehat\beta-\beta_0 \Vert_2^2] \leq C_1/J$.
        \item Prior $f_\beta$ is continuously differentiable in $\beta$, and 
        \begin{align*}
            \int \left(1 + \Vert \theta \Vert \right) \sup_{\beta \in B} \left\Vert \nabla_\beta f_\beta(\theta \vert X_j) \right\Vert d\theta < \infty
        \end{align*}
        uniformly over the support of $X_j$.
        \item Prior $f_\beta$ is sub-Gaussian uniformly over the support of $X_j$ and over $\beta \in B$. That is, there exists a constant $C_2$ such that for all $\beta \in B$, $(E_\beta [\Vert \theta_j \Vert_2^p \mid X_j])^{1/p} \leq C_2 \sqrt{p}$ uniformly over the support of $X_j$.
        \item There exists a constant $C_3$ such that
            \begin{align*}
                \sup_{\beta \in B} E_\beta \left[ \left( \frac{p_{\beta_0,j}(Y_j|X_j)}{p_{\beta,j}(Y_j|X_j)} \right)^2 \mid X_j \right] &\leq C_3
            \end{align*}
            uniformly over the support of $X_j$, where $p_{\beta,j}(y|X_j) = \int \varphi_{\Sigma_j}(y-\theta) f_\beta(\theta \vert X_j) d\theta$ denotes the marginal density of $Y_j$ given $X_j$.
        \item There exists some $k_s \in (0,2]$ and constant $C_4$ such that for all $p \geq 1$ and for all $\beta \in B$, $(E_\beta [\Vert \nabla_\beta \log f_{\beta}(\theta_j \vert X_j) \Vert_2^p] \vert X_j])^{1/p} \leq C_4 p^{1/k_s}$ uniformly over the support of $X_j$.
        \item There exist constants $C_5$ and $k_m \in (0,2]$ such that for all $q \geq 1$ and $\beta,\beta' \in B$,
        \begin{align*}
            \left( E_{\beta_0} \left[ \Vert \nabla_\beta m_j(Y_j,\beta) - \nabla_\beta m_j(Y_j,\beta') \Vert_{op}^q \vert X_j \right] \right)^{1/q} &\leq C_5 q^{1/k_m} \Vert \beta - \beta' \Vert_2
        \end{align*}
        uniformly over $j$ and the support of $X_j$, where $m_j(y,\beta) = E_\beta[\theta_j|Y_j = y,X_j]$.
    \end{enumerate}
\end{assumption}

Assumption \ref{ass:param} imposes regularity assumptions on the parametric family and the parameter estimate. Assumption \ref{ass:param}(1) requires that both the true and estimated parameter live in compact parameter space $B$ and that the estimator has mean squared error of order $1/J$, which is satisfied by standard conditional MLE under regularity conditions. Assumption \ref{ass:param}(2) imposes regularity conditions on the prior density $f_\beta$ to allow passing the derivative through the integrals defining the posterior mean. Assumption \ref{ass:param}(3) requires that $f_\beta$ is uniformly sub-Gaussian over the parameter space $B$ and the support of $X_j$. Assumption \ref{ass:param}(4) requires the marginal distribution of $Y_j$ under any candidate parameter $\beta$ be sufficiently close to the distribution under the true parameter $\beta_0$. Assumption \ref{ass:param}(5) imposes that the prior score has sufficiently thin tails, uniformly over the parameter space $B$ and the support of $X_j$. While this assumption is non-standard, in Supplemental Appendix \ref{app:param} I show that it holds for many commonly used parametric classes, including many exponential family distributions. Finally, Assumption \ref{ass:param}(6) requires that the posterior mean be sufficiently smooth in the parameter $\beta$.

The above assumptions, in addition to Assumptions \ref{ass:compact}, \ref{ass:eta}, and \ref{ass:bdd_sig} from earlier, specify a class of prior distributions, location and scale estimators, and planner preference parameters that are governed by a set of hyperparameters, $\mathcal H = (S_0, M, \mu, \underline{b}, \overline b, B, C_1, C_2, C_3, k_s, C_4, k_m, C_5)$. The following rates are uniform over data-generating processes for a given $\mathcal H$. In what follows, I use the notation $x \lesssim_{\mathcal H} y$ to mean there exists some positive constant $C_{\mathcal H}$ that depends only on $\mathcal H$ such that $x \leq C_{\mathcal H} y$. I also define constant $k$ by $\frac{1}{k} = \max \left\{ \frac{1}{2} + \frac{1}{k_s}, \frac{1}{k_m} \right\}$.

Given a sequence $\{V_J\}_{J \in \mathbb{N}}$ I define for each $J$ the empirical Bayes local spending rule $\hat{v}_J^*: \mathcal{Y} \to V_J$ as the solution to, for every possible value of the sample estimates $Y_{1:J} \in \mathcal{Y}$, $\max_{v \in V_J} \langle \widehat{\nabla w}^*, v \rangle$, breaking ties with a fixed deterministic tie-breaking rule.

\begin{theorem}\label{thm:regret_rates_param}
    Suppose that Assumptions \ref{ass:compact}, \ref{ass:eta}, \ref{ass:bdd_sig}, and \ref{ass:param} hold under the parametric model of \eqref{eq:likelihood1}, \eqref{eq:likelihood_general}, and \eqref{eq:likelihood_param}; that for some $p \geq 1$ it holds that $V_J \subseteq \mathcal B_p \subseteq \R^J$ for each $J$; and that $J \geq 3$. Further suppose posterior means are estimated parametrically, following Section \ref{sec_3.3.1}.
    
    \begin{enumerate}
        \item Objective of local problem:
    \begin{multline*}
    \frac{1}{N_p} \max_{v: \mathcal{Y} \to V_J} E \left[ \left\vert E_{\beta_0} [ \langle \nabla w, v(Y_{1:J}) \rangle - \langle \widehat{\nabla w}^*, v(Y_{1:J}) \rangle \vert Y_{1:J}] \right\vert \right] \\
    \lesssim_{\mathcal H} 
    \begin{cases} 
        J^{-\frac{1}{2}}(\log J)^{\frac{1}{k}-\frac{1}{2}} & p = 1 \\
        J^{-\frac{1}{2}}\left(\frac{p}{p-1}\right)^{\frac{1}{k}} \frac{1}{\sqrt{\min\{\frac{p}{p-1}, \log J\}}} & 1 < p < \infty \\
        J^{-\frac{1}{2}} & p = \infty
    \end{cases}.
    \end{multline*}
        \item Local spending rule:
    \begin{multline*}
    \frac{1}{N_p} \max_{v: \mathcal{Y} \to V_J} E \left[ E_{\beta_0} [ \langle \nabla w, v(Y_{1:J}) \rangle - \langle \nabla w, \hat{v}_J^*(Y_{1:J}) \rangle \vert Y_{1:J} ] \right] \\
    \lesssim_{\mathcal H} 
    \begin{cases} 
        J^{-\frac{1}{2}}(\log J)^{\frac{1}{k}-\frac{1}{2}} & p = 1 \\
        J^{-\frac{1}{2}}\left(\frac{p}{p-1}\right)^{\frac{1}{k}} \frac{1}{\sqrt{\min\{\frac{p}{p-1}, \log J\}}} & 1 < p < \infty \\
        J^{-\frac{1}{2}} & p = \infty
    \end{cases}.
    \end{multline*}
    \end{enumerate}
\end{theorem}

To prove this theorem, I bound the left-hand side of results 1 and 2 above by a function of the average $p/(p-1)$ norm error of the empirical Bayes posterior mean estimates $\widehat{WTP}_j^*$ and $\widehat{G}_j^*$. I then derive an upper bound rate for this average $p/(p-1)$ norm error under the assumptions made earlier on the distribution of the data.

The theorem shows that under the (correctly specified) parametric model, the parametric empirical Bayes approach approximates the oracle planner well for any $p \geq 1$. The parametric empirical Bayes approach approximates the oracle planner well in two different ways: the objective of the empirical Bayes local problem uniformly approaches the true local objective in expectation at a $1/\sqrt{J}$ rate up to log factors (result 1), and the local welfare improvement from the empirical Bayes local spending rule approaches the maximal local welfare improvement in expectation at a $1/\sqrt{J}$ rate up to log factors (result 2).

\subsubsection{Nonparametric Empirical Bayes Local Spending Rule}

Next I derive rates of convergence for the nonparametric empirical Bayes local spending rule. To do so, I need to impose additional assumptions on the estimators used in Section \ref{sec_3.3.2} to obtain the nonparametric empirical Bayes local spending rule.

\begin{assumption}\label{ass:est_var}
    \begin{enumerate}
        \item There exists a constant $\overline a > 0$ such that $\left\Vert a(X_j;\alpha) \right\Vert_\infty \leq \overline{a}$ uniformly over $\alpha$ and the support of $X_j$. There also exist constants $\underline{c}, \overline{c} > 0$ such that $\underline{c}I_2 \preceq B(X_j;\Omega) \preceq \overline{c}I_2$ uniformly over $\Omega$ and the support of $X_j$.
        \item The estimated location and scale functions satisfy $Pr\left((a(\cdot; \widehat\alpha), B(\cdot;\widehat\Omega)^{1/2}) \in \mathcal C \right) = 1$, where $\mathcal{C}$ is a class of function pairs with metric entropy bound $\log N(\varepsilon, \mathcal C, \Vert \cdot \Vert_\infty) \leq C_{\mathcal C} \log \left( \frac{C_{\mathcal C}}{\varepsilon} \right)$ for some constant $C_{\mathcal C}$. I define $\left\Vert (a,B^{1/2}) \right\Vert_\infty \equiv \sup_x \max \left\{ \left\Vert a(x) \right\Vert_\infty, \left\Vert B(x)^{1/2} \right\Vert_{op} \right\}$.
        \item There exist constants $\underline{k},\overline{k} > 0$ such that for all $j = 1,\dots, J$, $\underline{k}I_2 \preceq \Psi_j \preceq \overline{k}I_2$. 
        \item There exist constants $C_1, C_2 > 0$ such that for all $J$,
    \begin{align*}
        P \left( \Vert \widehat{\chi} - \chi_0 \Vert_J > C_1 \sqrt{\frac{\log J}{J}} \right) &\leq \frac{C_2}{J^2},
    \end{align*}
    where for $\chi = (\alpha, \Omega)$ I define $$\Vert \chi - \tilde\chi \Vert_J \equiv \max_{1\leq j \leq J} \max \left\{\Vert a(X_j;\alpha) - a(X_j;\tilde\alpha) \Vert_\infty, \Vert B(X_j;\Omega)^{1/2} - B(X_j;\tilde\Omega)^{1/2} \Vert_{op} \right\}.\footnote{In Supplemental Appendix \ref{app:est_rates} I show that the estimators of \eqref{eq:location_scale_ests} satisfy this estimation rate under additional regularity assumptions.}$$
    \end{enumerate} 
\end{assumption}

\begin{assumption}\label{ass:good_approx}
    Estimated prior $\widehat{F}_J$ satisfies 
    \begin{align*}
        \frac{1}{J} \sum_{j=1}^J \psi_j(Z_j, \widehat\alpha, \widehat\Omega, \widehat{F}_J) \geq \sup_{F} \frac{1}{J} \sum_{j=1}^J \psi_j(Z_j, \widehat\alpha, \widehat\Omega, F) - \kappa_J
    \end{align*}
    for tolerance $\kappa_J = \frac{3}{J} \log \left( \frac{J}{(2\pi e)^{1/3}} \right)$, where 
    \begin{align*}
        \psi_j(Z_j, \widehat{\alpha}, \widehat\Omega, F) &\equiv \log \left(\int \varphi_{\widehat\Psi_j}\left(\widehat{Z}_j-\tau\right)dF(\tau) \right),\\
        \varphi_{\widehat\Psi_j}(x) &= \exp \left(-\frac{1}{2}x^T \widehat\Psi_j^{-1} x \right).
    \end{align*}
    Moreover, for some $C_3 <\infty$, $\widehat F_J$ is supported on $\left\{ \tau \in \R^2: \Vert \tau \Vert_2 \leq C_3 \max_{1 \leq j \leq J}\{ \Vert \widehat Z_j \Vert_2, 1\} \right\}$.
\end{assumption}

Assumption \ref{ass:est_var} requires that the location and scale are uniformly bounded and live in a class with controlled metric entropy, 
that the eigenvalues of the normalized sample estimate variances are uniformly bounded, and that the location and scale estimators perform well. Assumption \ref{ass:good_approx} requires that the prior estimate is an approximate maximizer of the log-likelihood of the residualized data $\widehat{Z}_j$. These are regularity assumptions that are similar to those used in the literature, with Assumption \ref{ass:est_var} similar to assumptions in \cite{chen2022empirical} and Assumption \ref{ass:good_approx} satisfied by the NPMLE estimator proposed by \cite{soloff2024multivariate} with appropriate choice of discretization rate\footnote{See Proposition 6 and Section 4.1.1 of \cite{soloff2024multivariate} for a discussion of how to choose the discretization rate for the support of an approximate NPMLE so that statistical requirements like Assumption \ref{ass:good_approx} are satisfied.}. In the statement of the theorem I will assume, among other things, that $J \geq 7$, which is sufficient for $\kappa_J$ to be positive. 

The above assumptions, in addition to Assumptions \ref{ass:compact}, \ref{ass:eta}, and \ref{ass:bdd_sig} from earlier, specify a class of prior distributions, location and scale estimators, and planner preference parameters that are governed by a set of hyperparameters, $\mathcal H = (S_0, M, \mu, \overline a, \underline{b}, \overline b, \underline{c}, \overline c, \underline{k}, \overline{k}, C_{\mathcal C}, C_1, C_2, C_3)$. The following rates are uniform over data-generating processes for a given $\mathcal H$. In what follows, I use the notation $x \lesssim_{\mathcal H} y$ to mean there exists some positive constant $C_{\mathcal H}$ that depends only on $\mathcal H$ such that $x \leq C_{\mathcal H} y$.

Given a sequence $\{V_J\}_{J \in \mathbb{N}}$ I define for each $J$ the empirical Bayes local spending rule $\hat{v}_J^*: \mathcal{Y} \to V_J$ as the solution to, for every possible value of the sample estimates $Y_{1:J} \in \mathcal{Y}$, $\max_{v \in V_J} \langle \widehat{\nabla w}^*, v \rangle$, breaking ties with a fixed deterministic tie-breaking rule.

\begin{theorem}\label{thm:regret_rates}
    Suppose that Assumptions \ref{ass:compact}, \ref{ass:eta}, \ref{ass:bdd_sig}, \ref{ass:est_var}, and \ref{ass:good_approx} hold under the nonparametric model of \eqref{eq:likelihood1} and \eqref{eq:likelihood2}; that for some $p \geq 1$ it holds that $V_J \subseteq \mathcal B_p \subseteq \R^J$ for each $J$; and that $J \geq \max\{\frac{5}{\underline{k}}, 7\}$. Further suppose posterior means are estimated nonparametrically, following Section \ref{sec_3.3.2}.

    \begin{enumerate}
        \item Objective of local problem: If $p \in [1,2)$, 
    \begin{align*}
    \frac{1}{N_p} \max_{v: \mathcal{Y} \to V_J} E \left[ \left\vert E_{F_0,\alpha_0,\Omega_0} [ \langle \nabla w, v(Y_{1:J}) \rangle - \langle \widehat{\nabla w}^*, v(Y_{1:J}) \rangle \vert Y_{1:J}] \right\vert \right] &\lesssim_{\mathcal H} J^{-\frac{p-1}{p}}(\log J)^3
    \end{align*}
    and if $p \in [2, \infty]$, 
    \begin{align*}
    \frac{1}{N_p} \max_{v: \mathcal{Y} \to V_J} E \left[ \left\vert E_{F_0,\alpha_0,\Omega_0} [ \langle \nabla w, v(Y_{1:J}) \rangle - \langle \widehat{\nabla w}^*, v(Y_{1:J}) \rangle \vert Y_{1:J}] \right\vert \right] &\lesssim_{\mathcal H}  J^{-\frac{1}{2}}(\log J)^3.
    \end{align*}
        \item Local spending rule: If $p \in [1,2)$, 
        \begin{align*}
    \frac{1}{N_p} \max_{v: \mathcal{Y} \to V_J} E \left[ E_{F_0,\alpha_0,\Omega_0} [ \langle \nabla w, v(Y_{1:J}) \rangle - \langle \nabla w, \hat{v}_J^*(Y_{1:J}) \rangle \vert Y_{1:J} ] \right] &\lesssim_{\mathcal H}  J^{-\frac{p-1}{p}}(\log J)^3
    \end{align*}
    and if $p \in [2, \infty]$, 
    \begin{align*}
    \frac{1}{N_p} \max_{v: \mathcal{Y} \to V_J} E \left[ E_{F_0,\alpha_0,\Omega_0} [ \langle \nabla w, v(Y_{1:J}) \rangle - \langle \nabla w, \hat{v}_J^*(Y_{1:J}) \rangle \vert Y_{1:J} ] \right] &\lesssim_{\mathcal H} J^{-\frac{1}{2}}(\log J)^3.
    \end{align*}
    \end{enumerate}
\end{theorem}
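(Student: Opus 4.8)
The plan is to reduce both objects to a single quantity---the normalized $\ell_q$ norm of the gradient estimation error---and then to control that quantity through a mean squared error bound for the posterior means. Throughout let $q$ be the Hölder conjugate of $p$, so $1/p + 1/q = 1$, and recall from the definition of $N_p$ that $N_p = J^{1/q}$ in every case (including $p=1$, where $q=\infty$ and $N_p = 1$, and $p=\infty$, where $q=1$ and $N_p = J$). Write $e \equiv \nabla w^* - \widehat{\nabla w}^*$ for the gradient error, with $j$th component $e_j = \eta_j(WTP_j^* - \widehat{WTP}_j^*) - \mu(G_j^* - \widehat{G}_j^*)$, and abbreviate $\theta_j^* = (WTP_j^*, G_j^*)'$ and $\widehat\theta_j^* = (\widehat{WTP}_j^*, \widehat{G}_j^*)'$.

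First I would simplify the conditional expectations in both objects. Because $v(Y_{1:J})$, $\hat v_J^*(Y_{1:J})$, and $\widehat{\nabla w}^*$ are all measurable functions of $Y_{1:J}$, and $E_{F_0,\alpha_0,\Omega_0}[\nabla w \mid Y_{1:J}] = \nabla w^*$ by definition of the oracle posterior means, the inner expectation in Object 1 collapses to $\langle e, v(Y_{1:J})\rangle$ and that in Object 2 to $\langle \nabla w^*, v(Y_{1:J}) - \hat v_J^*(Y_{1:J})\rangle$. For Object 1, maximizing over rules $v:\mathcal Y \to V_J$ can be carried out pointwise in $Y_{1:J}$ (a standard measurable-selection step), and since $V_J \subseteq \mathcal B_p$ the dual-norm identity $\max_{v \in \mathcal B_p}|\langle e, v\rangle| = \Vert e \Vert_q$ gives Object 1 $\le \frac{1}{N_p}E[\Vert e\Vert_q]$. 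For Object 2 I would exploit the optimality of $\hat v_J^*$ for $\widehat{\nabla w}^*$: writing $\langle \nabla w^*, v - \hat v_J^*\rangle = \langle e, v\rangle + \langle \widehat{\nabla w}^*, v - \hat v_J^*\rangle - \langle e, \hat v_J^*\rangle$ and discarding the middle term, which is nonpositive since $\hat v_J^*$ maximizes $\langle \widehat{\nabla w}^*, \cdot\rangle$ over $V_J \ni v$, yields the pointwise regret bound $\langle \nabla w^*, v - \hat v_J^*\rangle \le 2\Vert e\Vert_q$ and hence Object 2 $\le \frac{2}{N_p}E[\Vert e\Vert_q]$. Both objects are thus governed by $\frac{1}{N_p}E[\Vert e\Vert_q]$.

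The next step bounds this quantity via the posterior-mean risk. Assumption \ref{ass:eta} makes $\eta_j$ and $\mu$ uniformly bounded, so $|e_j|^2 \lesssim_{\mathcal H} \Vert \theta_j^* - \widehat\theta_j^*\Vert^2$; Assumptions \ref{ass:compact}, \ref{ass:bdd_var}, and \ref{ass:est_var} confine all posterior means and their estimates to a fixed compact set, so $|e_j| \le B$ for some $B = B(\mathcal H)$. The engine of the proof is a mean squared error bound for the posterior means,
\[
\frac{1}{J}\sum_{j=1}^J E\big[\Vert \theta_j^* - \widehat\theta_j^*\Vert^2\big] \;\lesssim_{\mathcal H}\; \frac{(\log J)^6}{J},
\]
which I would establish as a separate lemma by extending the proof of Theorem 1 in \cite{chen2022empirical} to the bivariate location-scale model \eqref{eq:likelihood2}. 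Given this bound, the two regimes follow by elementary norm comparisons. For $p \in [2,\infty]$, so $q \le 2$, the normalized $\ell_q$ norm is dominated by the normalized $\ell_2$ norm, and Jensen's inequality gives $\frac{1}{N_p}E[\Vert e\Vert_q] \le \big(\frac1J\sum_j E[e_j^2]\big)^{1/2} \lesssim_{\mathcal H} J^{-1/2}(\log J)^3$. For $p \in (1,2)$, so $q \in (2,\infty)$, I would interpolate with the uniform bound $|e_j|^q \le B^{q-2}|e_j|^2$ and then apply Jensen (concavity of $x \mapsto x^{1/q}$) to obtain $\frac{1}{N_p}E[\Vert e\Vert_q] \lesssim_{\mathcal H} \big(\frac1J\sum_j E[e_j^2]\big)^{1/q} \lesssim_{\mathcal H} J^{-1/q}(\log J)^{6/q} \le J^{-(p-1)/p}(\log J)^3$, using $6/q < 3$ and $J^{-1/q} = J^{-(p-1)/p}$; the endpoint $p=1$ (where $q=\infty$) follows directly from $\Vert e\Vert_\infty \le B \lesssim_{\mathcal H} (\log J)^3$.

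The \emph{main obstacle} is the mean squared error lemma itself, not the reduction above. Its proof must split the error into the part from using the NPMLE $\widehat F_J$ in place of $F_0$ and the part from substituting $\widehat\alpha, \widehat\Omega$ for $\alpha_0, \Omega_0$ in the whitening transformation \eqref{eq:likelihood_transf} and in the Tweedie posterior-mean formula. For the prior-estimation part I would follow \cite{chen2022empirical} and \cite{soloff2024multivariate}, controlling the MSE regret of the NPMLE posterior mean through the log-likelihood optimality of $\widehat F_J$ (Assumption \ref{ass:good_approx}) together with metric-entropy bounds for bivariate Gaussian mixtures whose covariances are pinned down by Assumption \ref{ass:est_var}. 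For the location-scale part I would use a perturbation argument: the posterior mean is smooth in $(\alpha,\Omega)$, and the rate $\Vert \widehat\chi - \chi_0\Vert_\infty \lesssim \sqrt{\log J/J}$ from Assumption \ref{ass:est_var} (holding off a $J^{-2}$-probability event whose contribution is negligible by the compactness of Assumption \ref{ass:compact}) transfers to the posterior means. The delicate points are the bivariate rather than univariate extension, the interaction between prior and location-scale estimation error, and securing constants uniform over the hyperparameter class $\mathcal H$; these govern the precise polylogarithmic power that propagates to the final $(\log J)^3$ rate.
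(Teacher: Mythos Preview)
Your proposal is essentially correct and follows the same route as the paper: reduce both objects to $\frac{1}{N_p}E[\Vert e\Vert_q]$ via the dual norm and the add--subtract argument for regret, then invoke a separate MSE lemma (the paper's Theorem~\ref{thm:mse}) giving $\frac{1}{J}\sum_j E\Vert \theta_j^* - \widehat\theta_j^*\Vert_2^2 \lesssim_{\mathcal H} (\log J)^6/J$. The description of how that MSE lemma should be proved is also accurate.

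There is one gap in your norm comparison for $p\in[1,2)$. You rely on a uniform bound $|e_j|\le B$ with $B=B(\mathcal H)$, but the stated assumptions do \emph{not} guarantee this: $\widehat\alpha_t$ is a sample mean of sub-Gaussians and the NPMLE $\widehat F_J$ is supported on the (random) $\widehat Z_j$'s, so the empirical Bayes posterior means are not confined to a fixed compact set (the paper's Lemma~\ref{lem:oa3.6} only gives $\Vert\widehat\theta_j^*-\theta_j^*\Vert_2\lesssim_{\mathcal H}\bar Z_J$, which is random and of order $\sqrt{\log J}$). The paper sidesteps this entirely: since $q=\tfrac{p}{p-1}>2$ when $p\in[1,2)$, the elementary monotonicity $\Vert e\Vert_q\le\Vert e\Vert_2$ holds without any boundedness, and then $\frac{1}{N_p}E[\Vert e\Vert_q]\le J^{-(p-1)/p}E[\Vert e\Vert_2]\lesssim_{\mathcal H}J^{-(p-1)/p}(\log J)^3$ follows from Jensen and the MSE lemma. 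Replacing your interpolation step with this one-line inequality fixes the gap and matches the paper exactly.
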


The theorem shows that with a large number of policies, the empirical Bayes approach approximates the oracle planner well as long as the sequence of consideration sets $V_J$ can be written as a subset of the unit $L^p$ ball for some $p$ strictly greater than 1. The empirical Bayes approach approximates the oracle planner well in two different ways: the objective of the empirical Bayes local problem uniformly approaches the true local objective in expectation (result 1), and the local welfare improvement from the empirical Bayes local spending rule approaches the maximal local welfare improvement in expectation (result 2).

To prove this theorem, I bound the left-hand side of results 1 and 2 above by a function of the mean squared error of the empirical Bayes posterior mean estimates $\widehat{WTP}_j^*$ and $\widehat{G}_j^*$. I then derive an upper bound rate for the mean squared error, extending the proof of Theorem 1 in \cite{chen2022empirical} to the multivariate setting for the location-scale model with parametric estimation of location and scale. 

Note that for $p=1$ the upper bound rates go to infinity with $J$, so the theorem does not speak to how well the nonparametric empirical Bayes approach performs when $p=1$. The intuition for why nonparametric empirical Bayes can perform poorly when $p=1$ is that when $V_J = \mathcal{B}_1$, the optimal local spending rule only spends on the single policy with the largest posterior expected rate of increase in net welfare impact, while the empirical Bayes local spending rule spends on the single policy with the largest empirical Bayes estimated rate of increase in net welfare impact. However, nonparametric empirical Bayes ensures performance guarantees on average across all policies but not necessarily for any individual policy \citep[see, for example, Chapter 1.3 of][]{efron2012large}.

\section{Simulations} \label{sec4}

In this section I present simulations calibrated to the empirical illustration in Section \ref{sec5} to demonstrate the performance of the empirical Bayes method proposed in the previous section. These simulations are designed to study how the parametric and the nonparametric empirical Bayes methods perform relative to the sample plug-in approach at various sample sizes (numbers of policies). 

\subsection{Simulation Design} \label{sec_4.1}
As in the empirical illustration, I take the policy type characteristic $X_j$ to be a discrete random variable taking on three values: climate, child, and adult. For each sample size I randomly assign each policy to one of these three types such that the relative frequencies of policy types match those in the empirical illustration. Conditional on policy type $X_j = t$ I draw the true benefit and net cost according to 
\begin{align*}
    \begin{pmatrix} WTP_j \\ G_j \end{pmatrix} &= \alpha_t + \Omega_t^{1/2} \tau_j, \qquad \tau_j \overset{\text{i.i.d.}}{\sim} N(0,I_2),
\end{align*}
where I set $\alpha_t$ and $\Omega_t$ equal to the estimated type-specific mean and covariance matrix from the parametric empirical Bayes approach in the empirical illustration. 

To reproduce the heteroskedasticity in the empirical estimates, for each simulated policy I randomly draw a sampling covariance matrix $\Sigma_j$, with replacement, from the covariance matrices observed for policies of the same type. I then draw the sample estimate
\begin{align*}
    \begin{pmatrix} \widehat{WTP}_j \\ \widehat{G}_j \end{pmatrix} \bigg\vert \begin{pmatrix} WTP_j \\ G_j \end{pmatrix}, \Sigma_j \overset{\text{ind.}}{\sim} N \left( \begin{pmatrix} WTP_j \\ G_j \end{pmatrix}, \Sigma_j \right).
\end{align*}
In the simulation I set all $\eta_j = 1$ and $\mu = 1$. I show results for consideration sets $V = \mathcal{B}_1, \mathcal{B}_2$, and $\mathcal{B}_\infty$  (that is, $p \in \{1,2,\infty\}$), and for sample sizes $J \in \{30, 60, 100, 150, 250, 400, 600\}$. I use 500 Monte Carlo replications at each sample size. 

For the parametric empirical Bayes procedure, I proceed under the correctly specified parametric model and estimate posterior means as in the empirical illustration, described in Supplemental Appendix \ref{app:data}. For the nonparametric empirical Bayes procedure, I estimate posterior means as in the empirical illustration, described in Supplemental Appendix \ref{app:data}. The sample plug-in procedure uses the observed sample estimates in place of the posterior means.

I evaluate the procedures using the two standardized criteria described in Section \ref{sec3}. The first criterion measures how well the estimated objective of the local problem uniformly approximates the oracle local objective. The second criterion measures how close to optimal the estimated local spending rule is relative to the oracle.

\subsection{Results}\label{sec_4.2}

\begin{figure}[!thbp]
\caption{Simulation results}
\label{fig:sims}
\centering
\begin{subfigure}[t]{0.49\textwidth}
  \centering
  \caption{Criterion 1, $p=1$}
  \includegraphics[width=\linewidth]{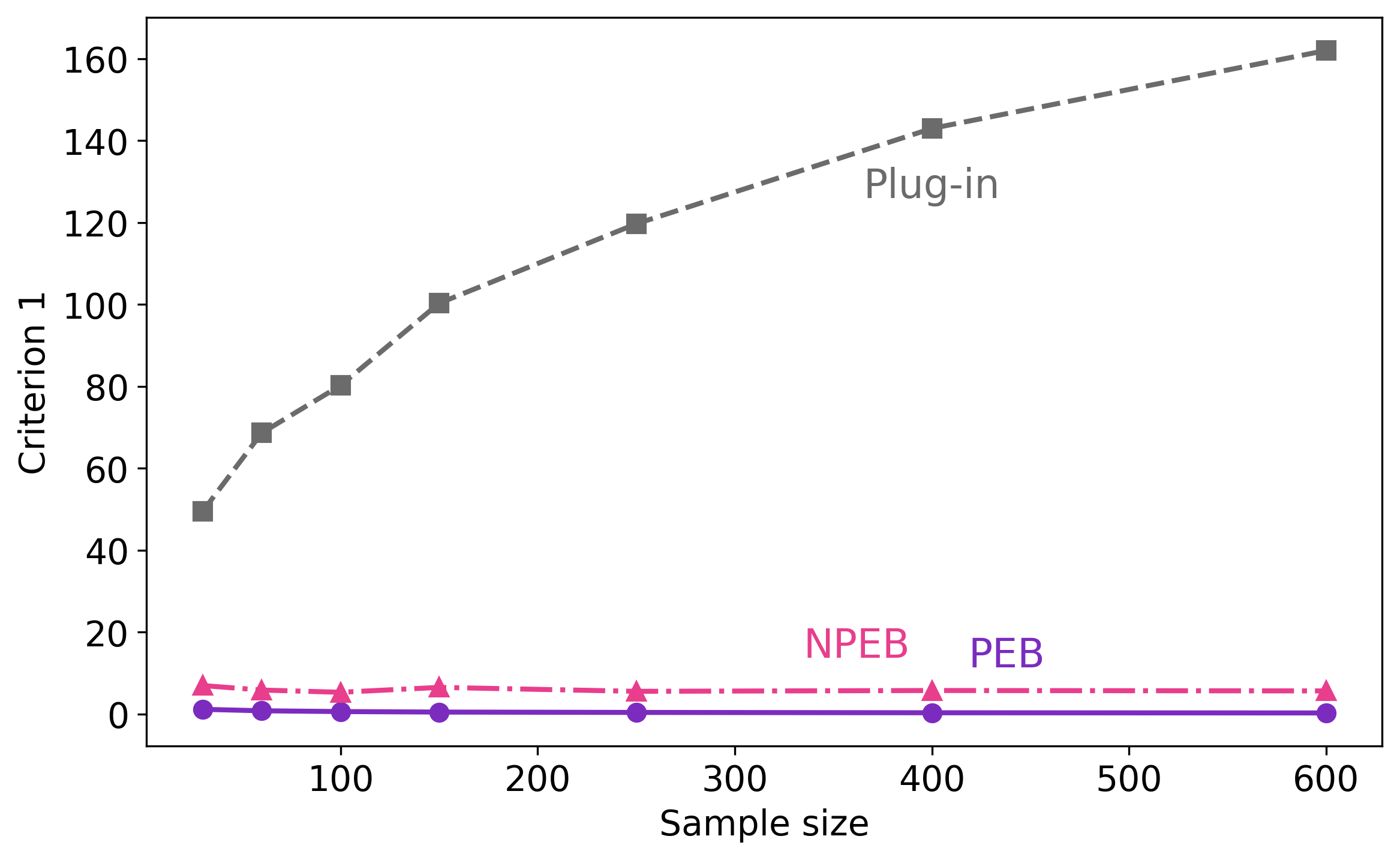}
\end{subfigure}
\hfill
\begin{subfigure}[t]{0.49\textwidth}
  \centering
  \caption{Criterion 2, $p=1$}
  \includegraphics[width=\linewidth]{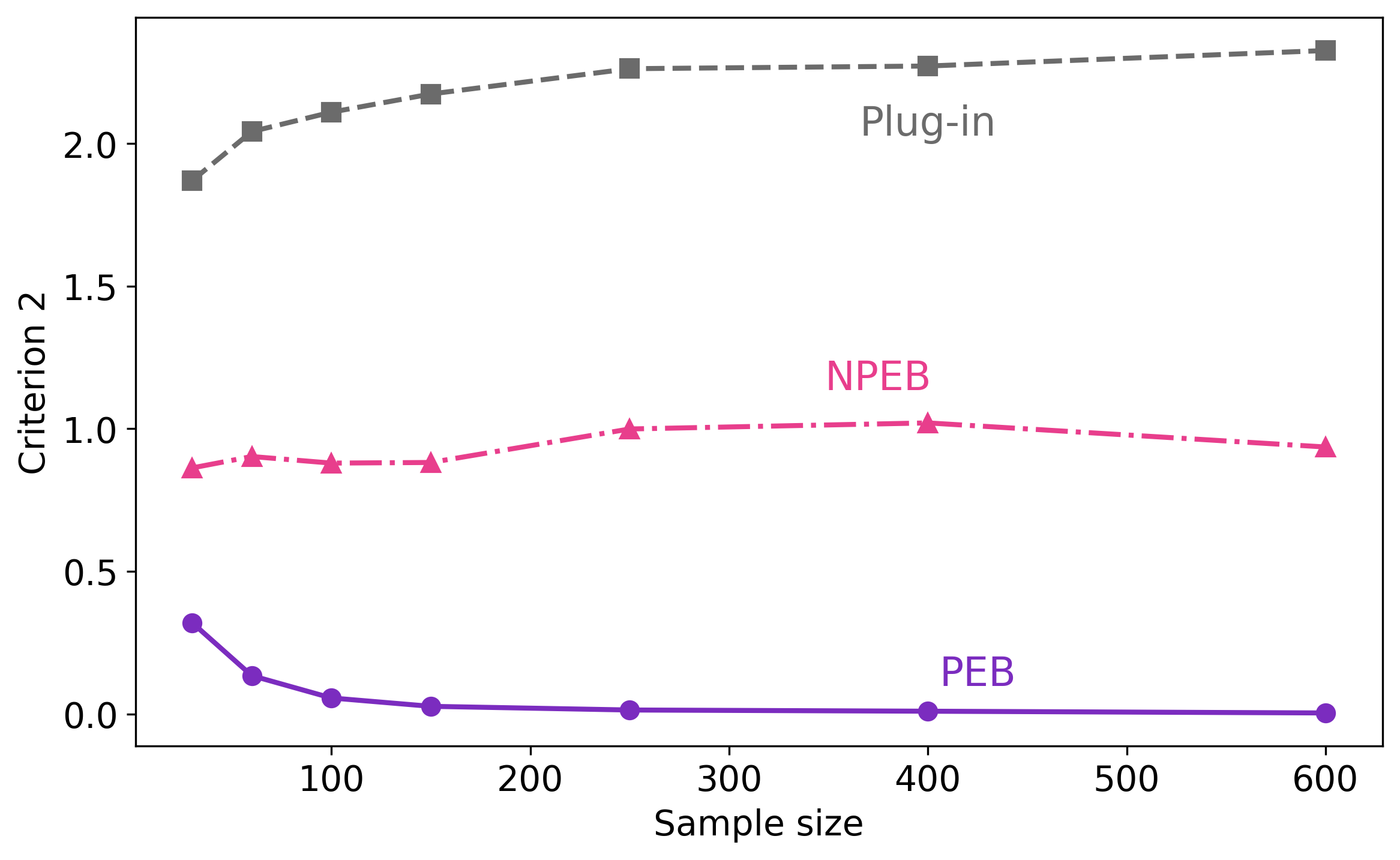}
\end{subfigure}
\begin{subfigure}[t]{0.49\textwidth}
  \centering
  \caption{Criterion 1, $p=2$}
  \includegraphics[width=\linewidth]{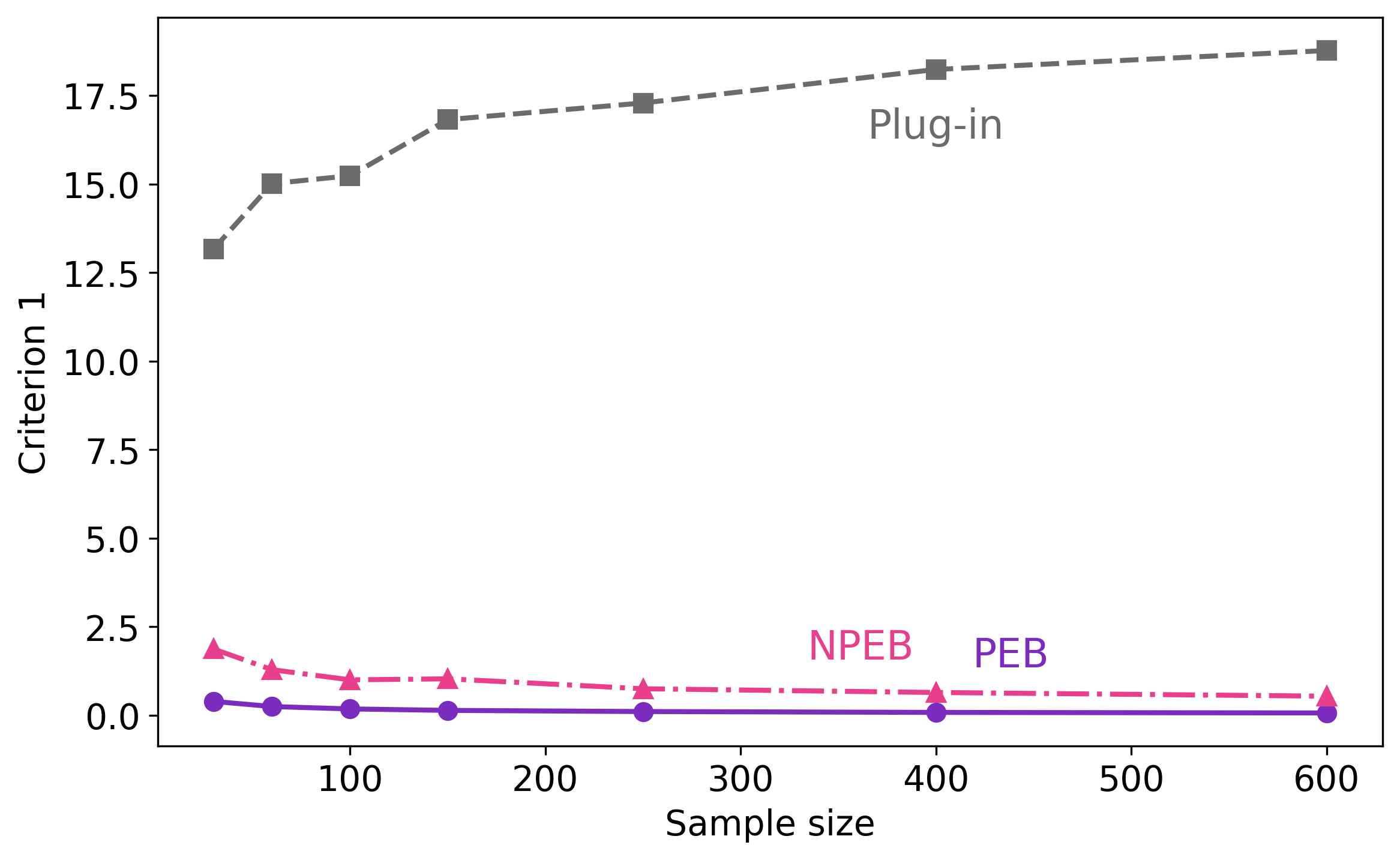}
\end{subfigure}
\hfill
\begin{subfigure}[t]{0.49\textwidth}
  \centering
  \caption{Criterion 2, $p=2$}
  \includegraphics[width=\linewidth]{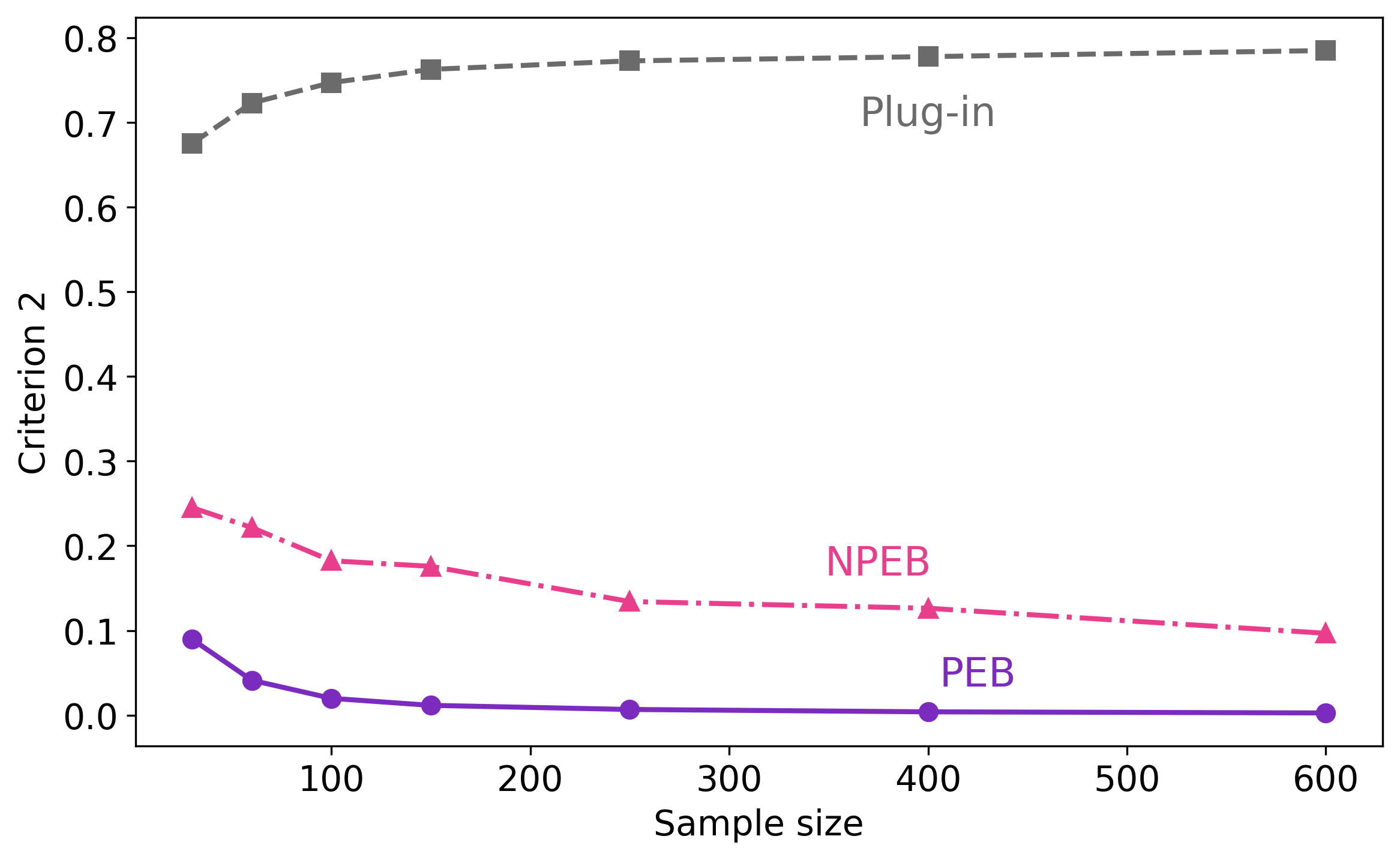}
\end{subfigure}
\begin{subfigure}[t]{0.49\textwidth}
  \centering
  \caption{Criterion 1, $p=\infty$}
  \includegraphics[width=\linewidth]{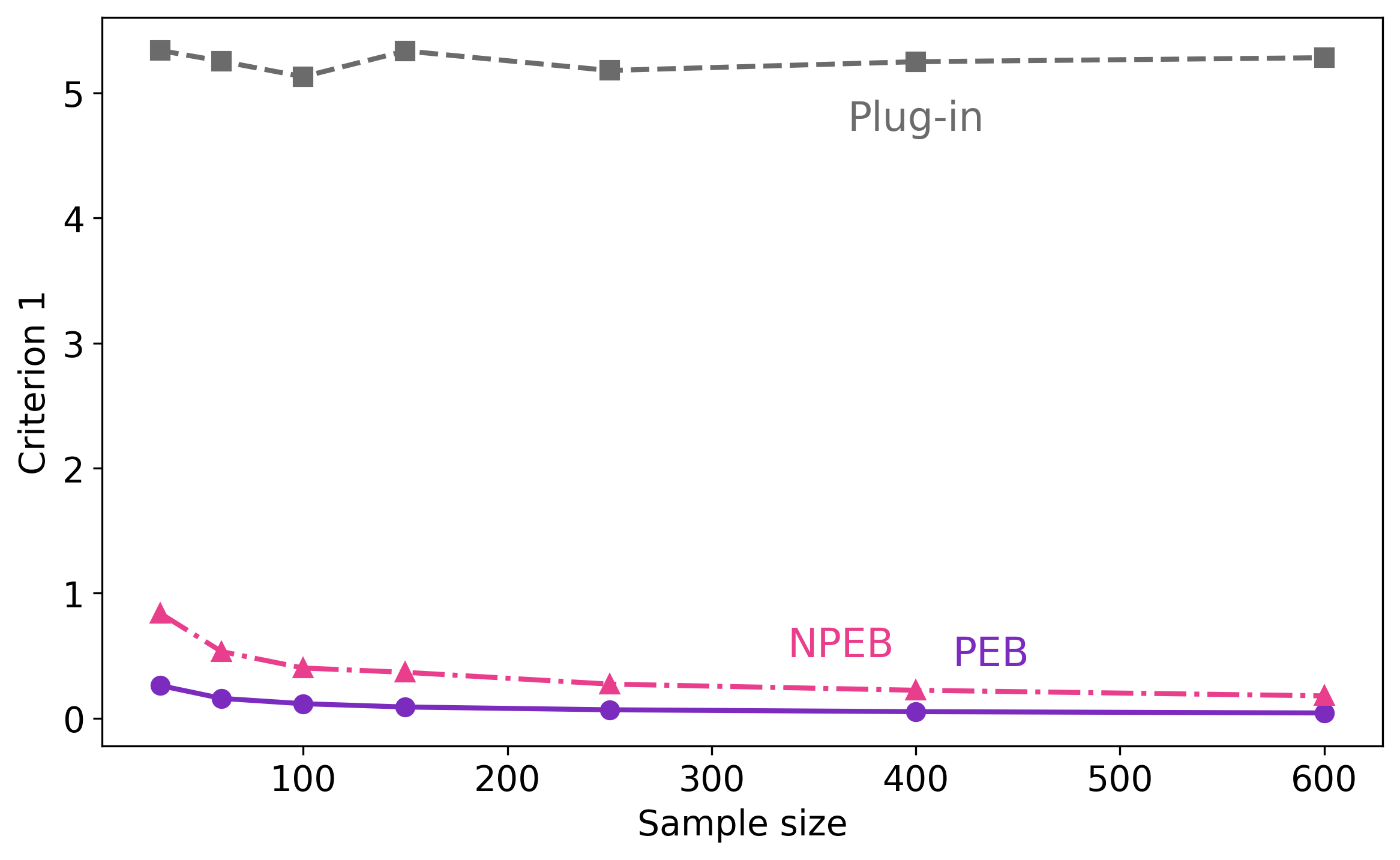}
\end{subfigure}
\hfill
\begin{subfigure}[t]{0.49\textwidth}
  \centering
  \caption{Criterion 2, $p=\infty$}
  \includegraphics[width=\linewidth]{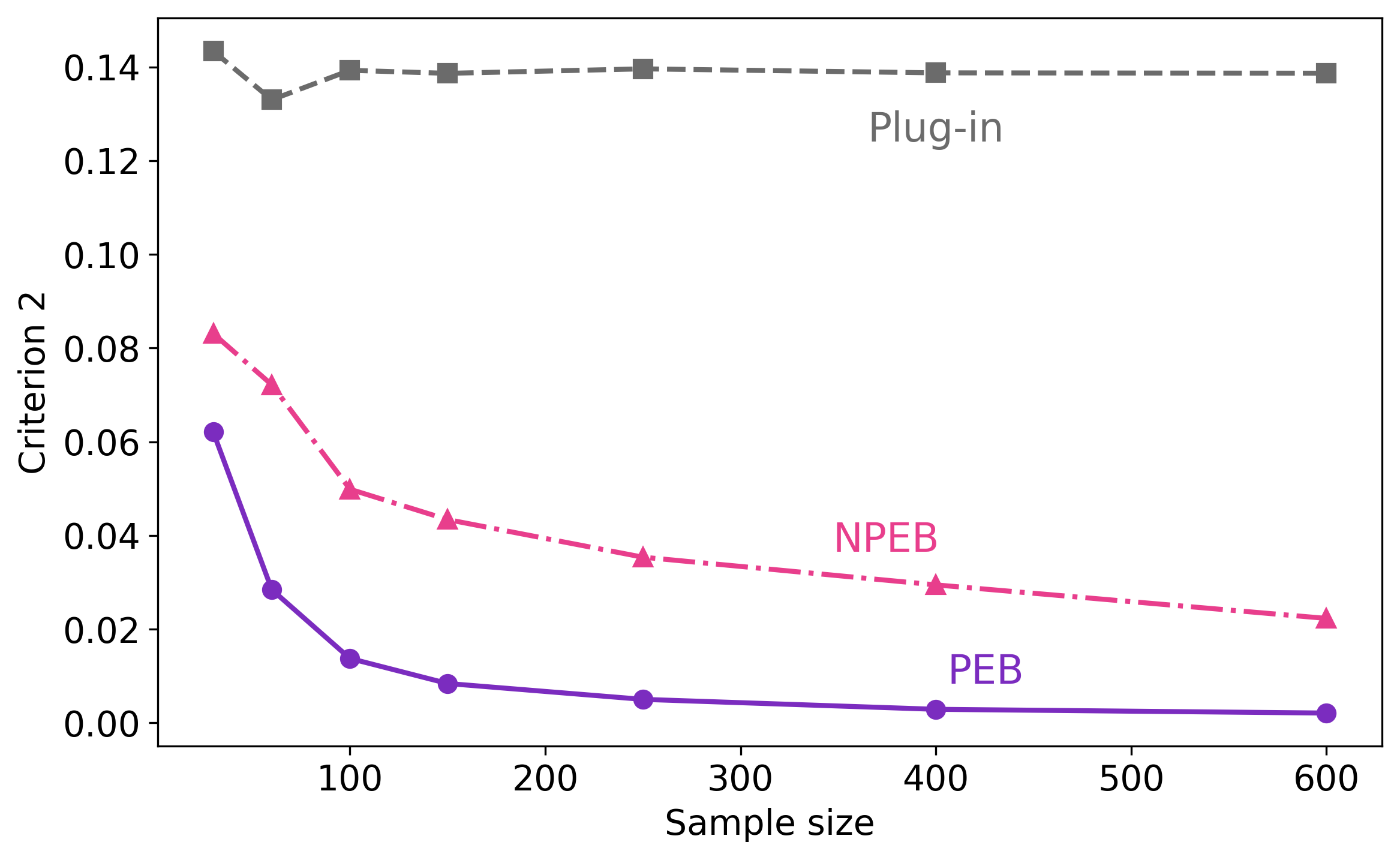}
\end{subfigure}
\begin{minipage}{\textwidth}
    \fontsize{9}{2}\linespread{1}\selectfont
    \footnotesize{
    \textit{Notes}: Panels (a), (c), and (e) plot simulation results for criterion 1 and panels (b), (d), and (f) plot simulation results for criterion 2, where the criteria are described in Section \ref{sec_4.1}. Panels (a) and (b) show results under consideration set $V = \mathcal{B}_1$, panels (c) and (d) show results under consideration set $V = \mathcal{B}_2$, and panels (e) and (f) show results under consideration set $V = \mathcal{B}_\infty$. The sample plug-in procedure is plotted as a gray dashed line with square markers, the parametric empirical Bayes procedure is plotted as a purple solid line with circular markers, and the nonparametric empirical Bayes procedure is plotted as a pink dash-dot line with triangular markers. I take $\eta_j = 1$ for all $j$ and $\mu = 1$ for all results in this figure.
    }
    \end{minipage}
\end{figure}

Figure \ref{fig:sims} plots the average of the two criteria over the Monte Carlo replications against the sample size. Three main patterns emerge. First, the sample plug-in procedure does not perform well for any of the consideration sets or criteria, consistent with the theoretical results of Proposition \ref{prop:plug_in}. Second, the parametric empirical Bayes procedure performs well for all consideration sets and both criteria. These results are consistent with Theorem \ref{thm:regret_rates_param}. Finally, the nonparametric empirical Bayes procedure appears to converge as the sample size increases for $p=2$ and $p=\infty$, but not for $p=1$. This pattern is consistent with Theorem \ref{thm:regret_rates}, which provides convergence guarantees for the nonparametric procedure when $p>1$.

\section{Empirical Illustration} \label{sec5}

In this section I apply the empirical Bayes method proposed in the previous section to estimate optimal local spending rules for making many policy changes at once. I find that empirical Bayes shrinkage can have major consequences for welfare relative to a sample plug-in approach, which solves the local problem with raw point estimates of benefit and net cost instead of empirical Bayes posterior mean estimates. In particular, I estimate that the empirical Bayes approach results in increases to welfare relative to a sample plug-in approach.

I use a sample of benefit and net cost estimates for 127 different policies compiled by \cite{hendren2020unified} and \cite{hahn2024welfare}. I restrict attention to non-international policies in the papers' baseline samples, consistent with a single policymaker operating in a common policy environment. I include all such policies for which the papers report confidence intervals for both benefit and net cost. Motivated by the finding in \cite{hendren2020unified} that policies targeting children differ systematically from those targeting adults, and by the focus of \cite{hahn2024welfare} on climate policy, I define the policy type characteristic $X_j$ to take one of three values: climate, child, and adult. The resulting sample contains 52 climate policies, 32 child policies, and 43 adult policies.

As discussed at the end of Section \ref{sec_2.3}, a researcher who has compiled sample estimates of benefit and net cost can report to the planner the posterior mean benefits and net costs, which are sufficient for the planner to solve for the optimal local spending rule. In practice, the researcher must estimate the distribution of true policy impacts and the posterior means. In Section \ref{sec_5.1} I walk through how to calculate empirical Bayes posterior mean estimates following the parametric empirical Bayes method of Section \ref{sec_3.3.1} and the nonparametric empirical Bayes method of Section \ref{sec_3.3.2}. For the parametric empirical Bayes method, I proceed as if the true prior were Gaussian. In Section \ref{sec_5.2} I compare posterior mean estimates to sample estimates of benefit and net cost and discuss the consequences for optimal policy choice. I then present numerical welfare estimates showing that in this illustration, the empirical Bayes approaches yield welfare gains relative to the sample plug-in approach.

Further details about the data and the implementation are in Supplemental Appendix \ref{app:data}.

\subsection{Calculation Walk-through} \label{sec_5.1}

Before displaying results for all policies in the sample, I first walk through how to obtain empirical Bayes estimates for two different policies. The first is the Kalamazoo Promise Scholarship, a Michigan college scholarship program, for which \cite{hendren2020unified} estimate a program cost-normalized WTP of 1.929 with estimated variance 0.378, and a program-cost normalized net cost of 0.998 with estimated variance 0.014. The estimated covariance between normalized WTP and normalized net cost is $-0.070$. The second policy is the Hope and Lifetime Learners Tax Credits, for which \cite{hendren2020unified} estimate a program cost-normalized WTP of $-42.82$ with estimated variance 1242.88, and a program cost-normalized net cost of $4.86$ with estimated variance 31.31. The estimated covariance between normalized WTP and normalized net cost is $-166.63$.

I obtain empirical Bayes estimates of benefits and net costs using both the parametric approach and the nonparametric approach, following the procedures proposed in Sections \ref{sec_3.3.1} and \ref{sec_3.3.2}, respectively. For the parametric approach I assume the true benefits and costs are drawn from a Gaussian prior, with location and scale that varies with policy type. I then obtain parametric empirical Bayes (PEB) posterior mean estimates using standard Normal-Normal shrinkage formulas under the Gaussian prior. For the nonparametric approach, I obtain location and scale estimates for each policy type and residualize the sample estimates against the location and scale estimates. I then perform nonparametric empirical Bayes (NPEB) shrinkage on the residuals, using the Python package \texttt{npeb} from \cite{soloff2024multivariate}. Finally I scale and shift the shrunk residuals back to obtain NPEB posterior mean estimates.

Intuitively, empirical Bayes posterior mean estimates shrink sample estimates towards the estimated distribution of the true policy impacts. The amount of shrinkage is decreasing in the precision of the sample estimate and increasing in the extremeness of the sample estimate relative to the distribution of the true policy impacts. This pattern is clear for these two policies. For the Kalamazoo Promise Scholarship, the sample estimates are relatively precise and the empirical Bayes posterior means (PEB: WTP of 1.968, net cost of 0.990; NPEB: WTP of 1.918, net cost of 1.005) are close to the sample estimates. In contrast, for the noisily estimated Hope and Lifetime Learners Tax Credits, the empirical Bayes posterior means (PEB: WTP of 0.94, net cost of 0.86; NPEB: WTP of 1.12, net cost of 1.14) are quite different from the sample estimates.

Posterior mean benefits and net costs are sufficient for the planner to solve their local problem, as discussed at the end of Section \ref{sec_2.3}. A researcher who knows the set of $J$ policies but is unsure about the planner's parameters (consideration set $V$, average social welfare weights $\eta_j$, or marginal welfare impact of closing the budget constraint $\mu$) can report empirical Bayes posterior mean estimates, either parametric or nonparametric, to the planner. Given those estimates, the planner can solve for either the parametric or the nonparametric empirical Bayes local spending rule.

Furthermore, as discussed at the end of Section \ref{sec_2.3}, reporting the ratios of posterior mean benefit to posterior mean net cost together with the signs of posterior mean net cost is sufficient for the planner to know whether to locally increase or decrease spending on each policy for a range of consideration sets.
For the sake of illustration, assume that the consideration set $V$ is an $L^p$ ball and that $\eta_j = 1$ for all policies $j$. Then the planner should increase spending on a policy if that policy's ratio of posterior mean benefit to posterior mean net cost is greater than $\mu$ and the posterior mean net cost is positive, or if the ratio is less than $\mu$ and the posterior mean net cost is negative. 

To understand what this means in practice, recall that $\mu$ is the marginal welfare impact of the budget-closing policy, meaning that $-\mu$ is the marginal welfare impact of closing a budget deficit. It may be reasonable to assume $\mu$ is positive, since closing a budget deficit, whether through tax increases or spending cuts, generally reduces welfare. Furthermore, $\mu = 1$ means the welfare impact of closing the budget is one to one with the size of the budget, while $\mu > 1$ indicates that closing the budget is especially distortionary or costly for welfare, which may be the case for many budget-closing policies.

For the Kalamazoo Promise Scholarship, the set of $\mu$ for which the planner increases spending is similar whether the researcher reports the ratio of sample benefit to sample net cost (any $\mu$ less than 1.932) or the ratio of empirical Bayes posterior mean benefit to empirical Bayes posterior mean net cost (PEB: any $\mu$ less than 1.988, NPEB: any $\mu$ less than 1.909). In contrast, for the noisily estimated Hope and Lifetime Learners Tax Credits, the set of $\mu$ for which the planner increases spending differs drastically: for the ratio of sample estimates the range of $\mu$ may be unreasonable in many settings (any $\mu$ less than $-8.81$), while for the ratio of empirical Bayes estimates, the range of $\mu$ may be reasonable for many settings (PEB: any $\mu$ less than 1.09, NPEB: any $\mu$ less than 0.98). This contrast highlights how accounting for statistical uncertainty through empirical Bayes shrinkage can overturn the spending recommendations implied by raw sample estimates.

\subsection{Results} \label{sec_5.2}

I first obtain PEB and NPEB posterior mean estimates of benefit and net cost as described above for all policies in the sample. 
In Figure \ref{fig:shrunk_vs_sample} I plot shrunk empirical Bayes estimates versus sample estimates for both WTP and net cost. As previously discussed, a researcher who does not know the parameters of the planner's decision problem, namely $\mu, \eta_1, \dots, \eta_J,$ or $V$, only needs to report the shrunk posterior mean estimates of Figure \ref{fig:shrunk_vs_sample} to the planner. The planner, who knows the parameter values, can solve for the empirical Bayes local spending rule using the posterior mean estimates. In each panel I label the six policies with the greatest amount of empirical Bayes shrinkage. Generally, these are the policies with the largest estimated sampling uncertainty, though this also includes policies with extreme estimates relative to the estimated prior, which mechanically shrink more. 

\begin{figure}[!thbp]
\caption{Shrunk versus sample estimates}
\label{fig:shrunk_vs_sample}
\centering
\begin{subfigure}[t]{0.49\textwidth}
  \centering
  \caption{NPEB WTP}
  \includegraphics[width=\linewidth]{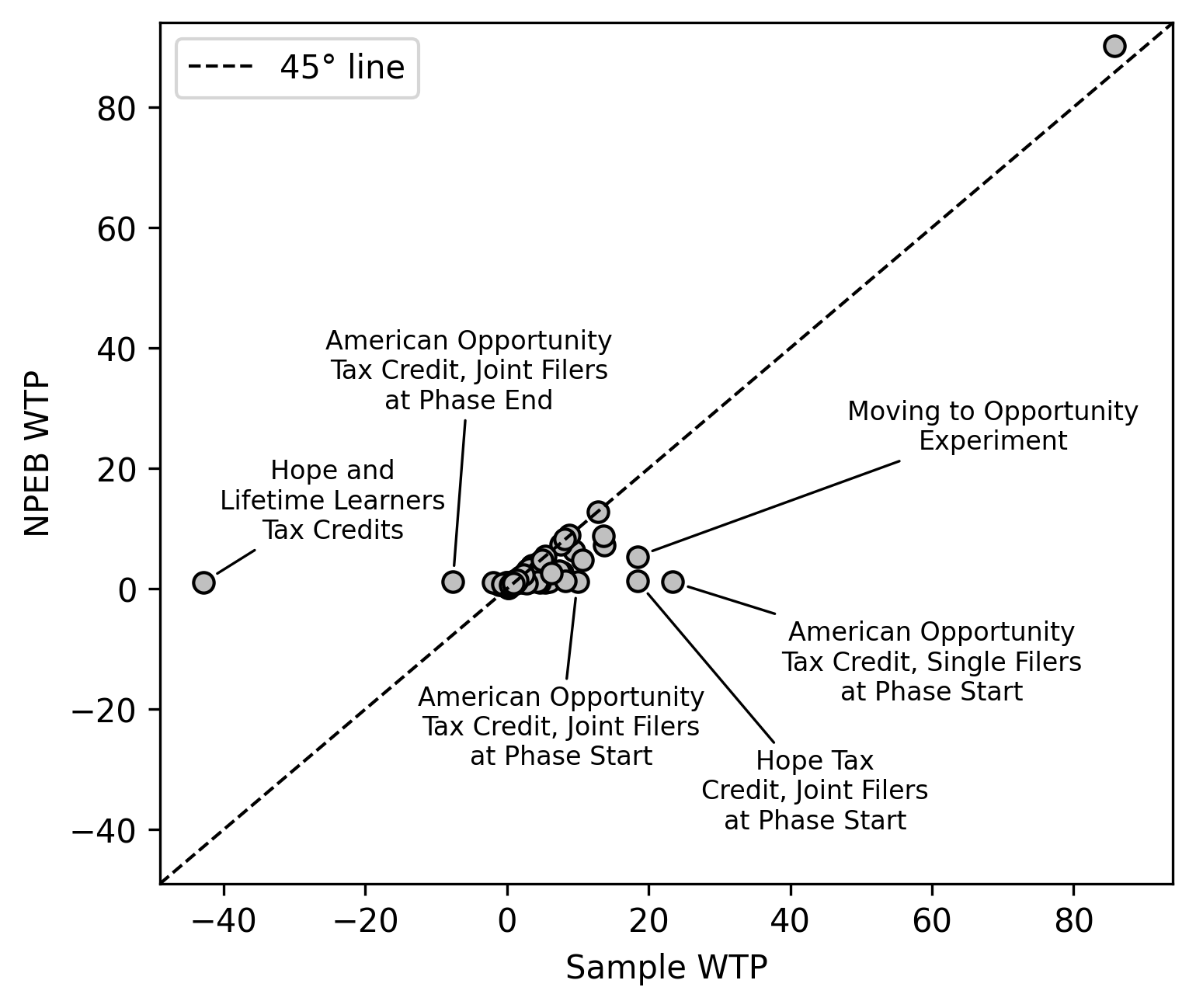}
\end{subfigure}
\begin{subfigure}[t]{0.49\textwidth}
  \centering
  \caption{NPEB net cost}
  \includegraphics[width=\linewidth]{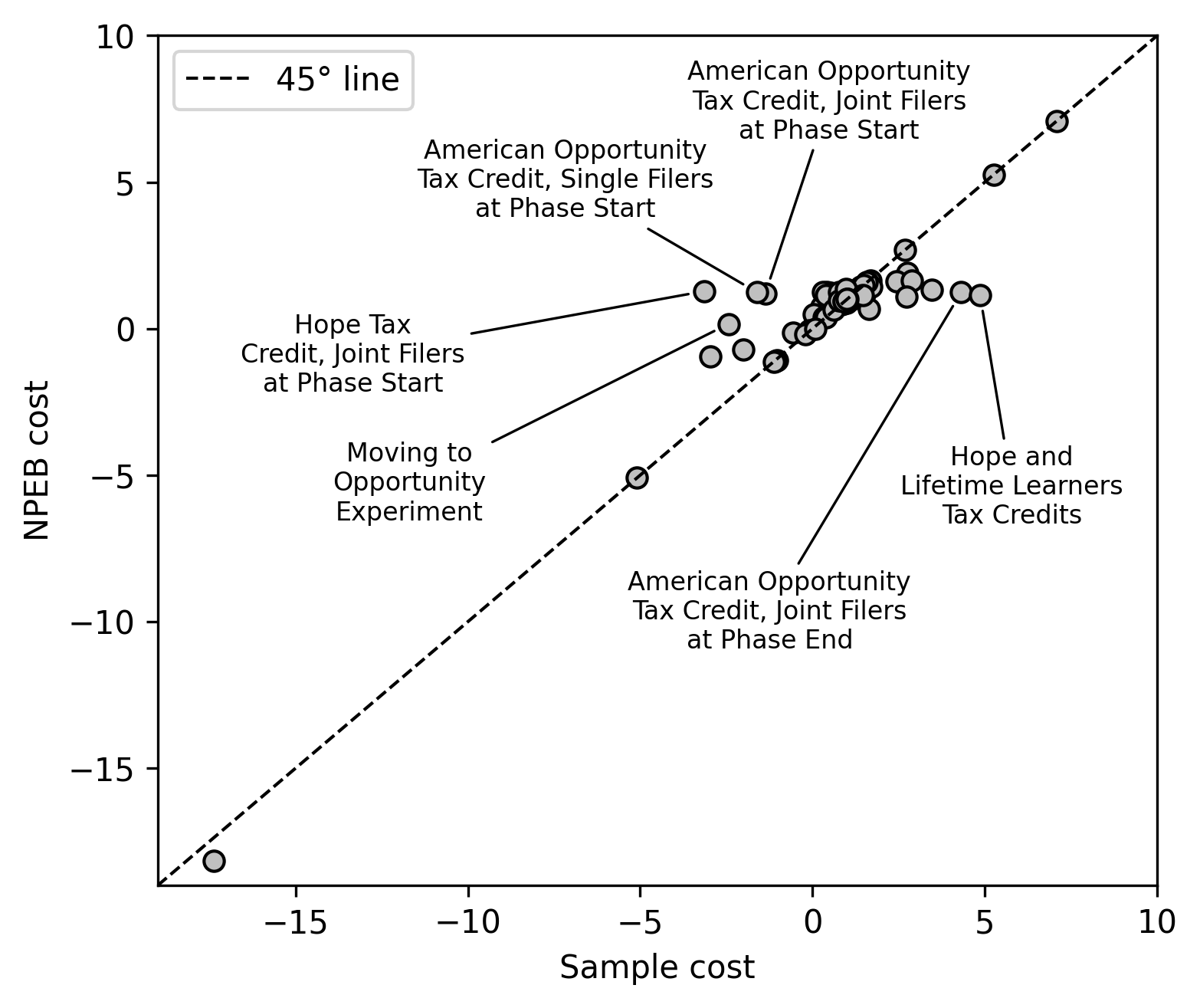}
\end{subfigure}
\hfill
\begin{subfigure}[t]{0.49\textwidth}
  \centering
  \caption{PEB WTP}
  \includegraphics[width=\linewidth]{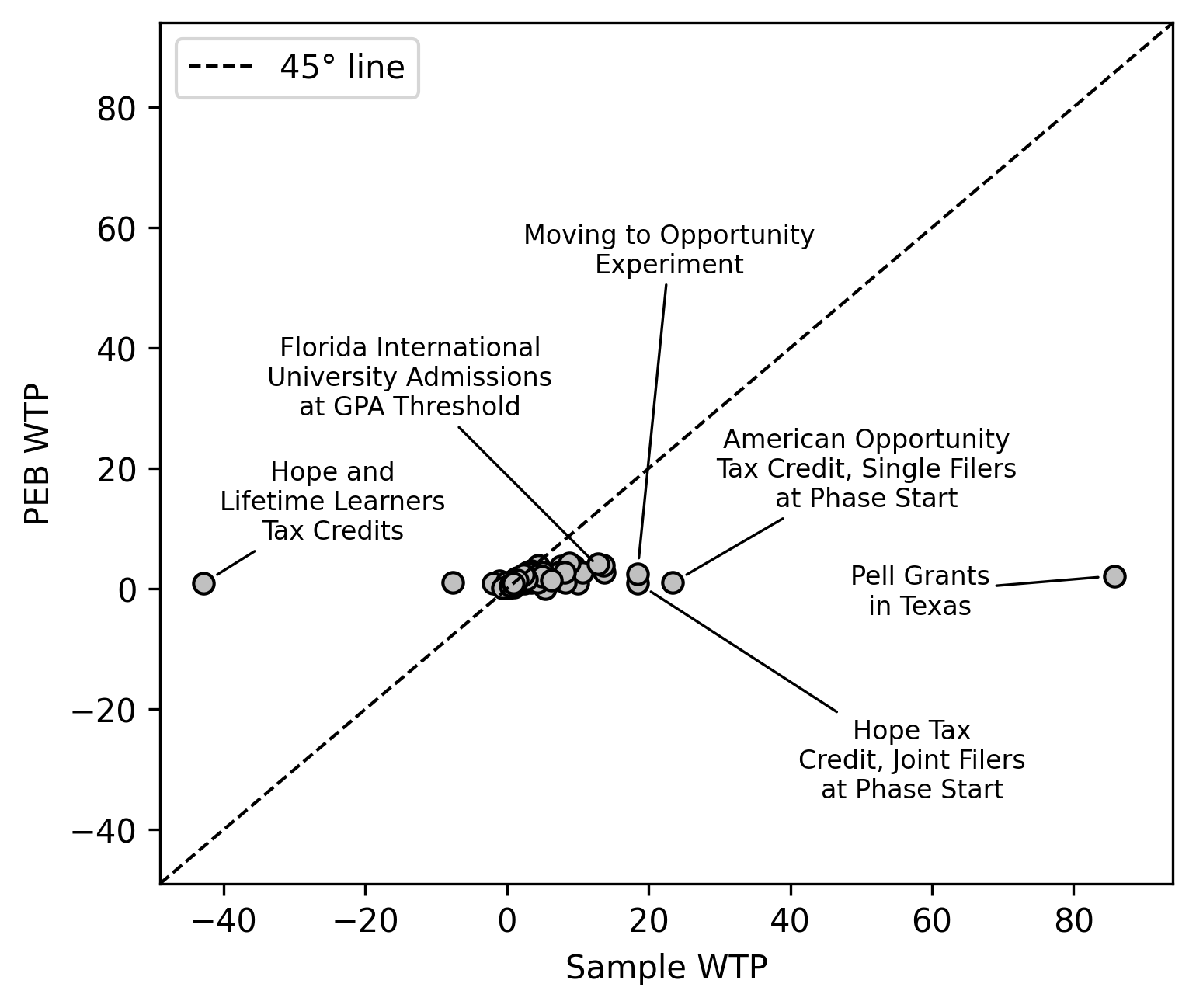}
\end{subfigure}
\begin{subfigure}[t]{0.49\textwidth}
  \centering
  \caption{PEB net cost}
  \includegraphics[width=\linewidth]{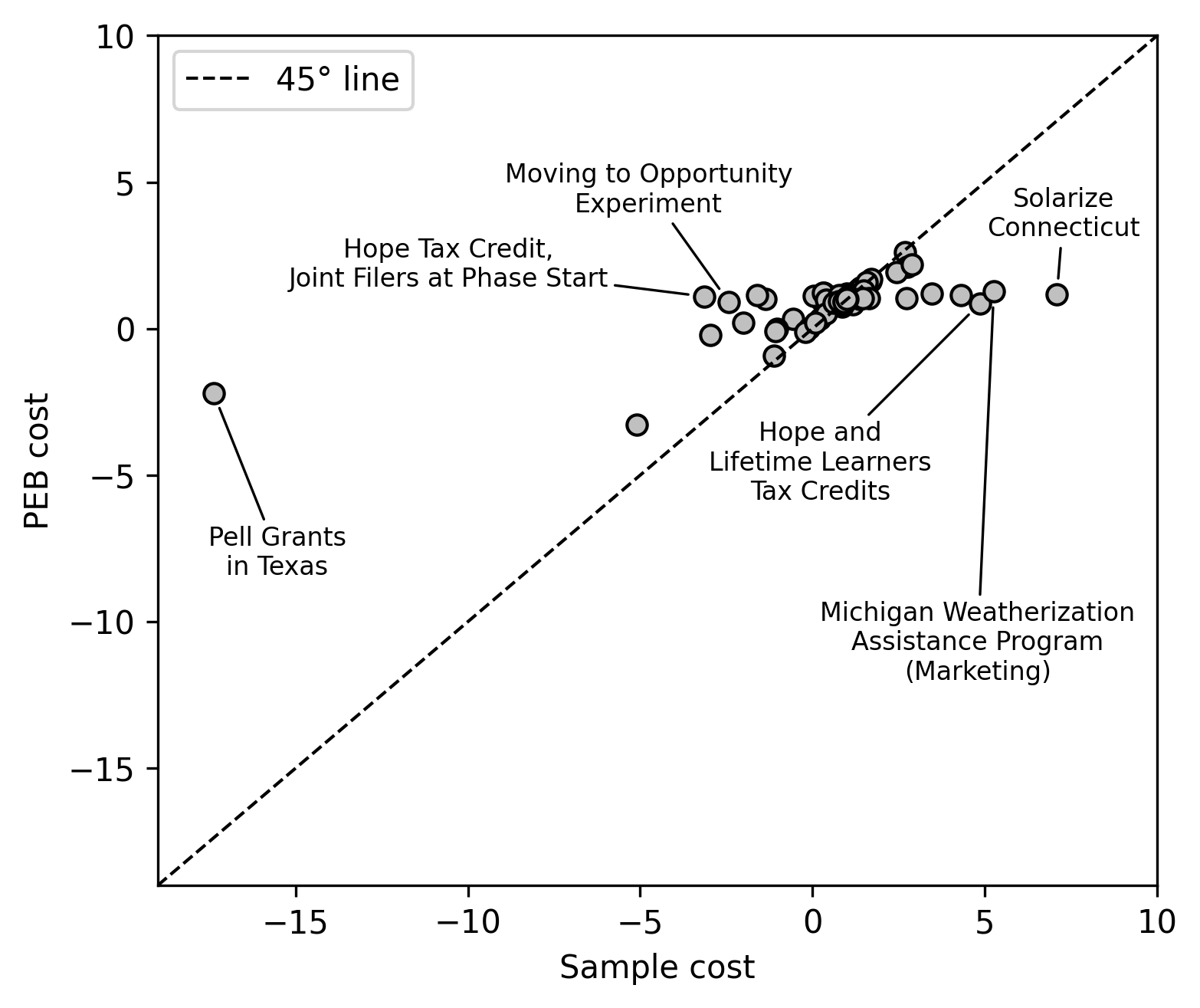}
\end{subfigure}
\begin{minipage}{\textwidth}
    \fontsize{9}{2}\linespread{1}\selectfont
    \footnotesize{
    \textit{Notes}: Each point represents a single policy. WTP and net cost are normalized with respect to program cost. Panels (a) and (b) plot nonparametric empirical Bayes posterior means versus sample estimates, with WTP in panel (a) and net cost in panel (b). Panels (c) and (d) plot parametric empirical Bayes posterior means versus sample estimates under a Gaussian prior, with WTP in panel (c) and net cost in panel (d). Policies that are labeled are the six policies farthest away from the 45-degree line in each panel.
    }
    \end{minipage}
\end{figure}

To highlight the consequences of these differences, I next show that the empirical Bayes and sample plug-in rules can result in welfare effects that are locally of the opposite sign. 
In Table \ref{tab:coupled_welfare_ests} I present estimates of the planner's local objective along the sample plug-in local spending rule, along the PEB local spending rule, and along the NPEB local spending rule, averaged over 1000 coupled bootstrap draws, following the coupled bootstrap method discussed in Supplemental Appendix \ref{app:coupled}. 
I derive estimates of the local objective for $\mu \in \{0.5, 3\}$ and set welfare weights $\eta_j = 1$ for all policies. I focus on consideration sets $V = \mathcal{B}_2$ and $\mathcal{B}_\infty$ (that is, $p \in \{2,\infty\}$), for which Section \ref{sec3} provides convergence guarantees for NPEB. I emphasize that the validity of these estimates relies only on the unbiased Gaussian sampling model in \eqref{eq:likelihood1}, and not on the Gaussian prior imposed for PEB or the location-scale prior structure imposed for NPEB.

\begin{table}[!htbp] 
    \caption{Local welfare estimates from coupled bootstrap}
    \label{tab:coupled_welfare_ests}
    \centering
    \resizebox{0.65\textwidth}{!}{
    \begin{tabular}{ccccc}
    \hline\hline 
    & & & & \\[-12pt]
    $V$ & $\mu$ & Sample plug-in rule & PEB rule & NPEB rule \\
    & & (1) & (2) & (3) \\
    \hline
    $\mathcal{B}_2$ & 0.5 & $-295.48$ & $46.19$ & $85.67$ \\[5pt]
    $\mathcal{B}_2$ & 3 & $-417.61$ & $55.84$ & $122.90$ \\[5pt]
    $\mathcal{B}_\infty$ & 0.5 & $-481.24$ & $361.61$ & $343.63$ \\[5pt]
    $\mathcal{B}_\infty$ & 3 & $-674.18$ & $337.72$ & $409.66$ \\
    \hline\hline\\[-10pt]
    \end{tabular}}
    \begin{minipage}{\textwidth}
    \fontsize{9}{2}\linespread{1}\selectfont
    \footnotesize{
    \textit{Notes}: Local welfare estimates are the average of 1000 coupled bootstrap estimates of the objective of the planner's local problem along the sample plug-in local spending rule (column 1), the parametric empirical Bayes local spending rule (column 2) and the nonparametric empirical Bayes local spending rule (column 3). $V$ denotes the consideration set of the planner's local problem. $\mu$ denotes the marginal welfare impact of closing the budget. I implement the coupled bootstrap procedure with an 80-20 train-test split ($\kappa = \frac{1}{4}$), using the training sample to construct the local spending rules and the test sample to evaluate the rules. See Supplemental Appendix \ref{app:coupled} for more details on the coupled bootstrap estimation procedure. All numbers are rounded to the nearest hundredth.
    }
    \end{minipage}
\end{table}

For all considered combinations of $V$ and $\mu$, the empirical Bayes local spending rules are estimated to produce a local \textit{increase} in net welfare, while the sample plug-in local spending rule is estimated to produce a local \textit{decrease} in net welfare. Because coupled bootstrap estimates are noisy, I interpret these results as descriptive evidence about the relative performance of the rules rather than as precise estimates of welfare gains. These differences appear to be driven in part by noisily estimated policies with relatively low estimated benefits or high estimated costs. In Figure \ref{fig:shrunk_vs_sample}, these are the policies that are left of the 45-degree line in panels (a) and (c) and right of the 45-degree line in panel (b). The sample plug-in rule treats these extreme estimates as signal and places substantial weight on cutting the corresponding policies. On the other hand, the empirical Bayes rules take into account the statistical uncertainty about those policies, reducing the influence of these noisy extremes on which policies are cut and by how much. The coupled bootstrap estimates suggest that this avoids welfare losses driven by estimation noise. This empirical finding---that the sample plug-in rule performs poorly in cases where the empirical Bayes rules perform well---is consistent with the theoretical results of Section \ref{sec3} and illustrates their practical importance.

\section{Conclusion} \label{sec6}

In this paper I study how to make optimal policy changes when policy impacts are estimated with statistical noise. I set up a statistically well-behaved decision problem where the planner makes local changes to upfront spending on a set of policies to maximize net welfare impact.
Under the assumption that the planner knows the distribution of sample estimates of policy impacts and the prior distribution of the true policy impacts, I characterize the optimal local spending rule, which locally maximizes posterior expected net welfare impact. A sufficient statistic for the optimal local spending rule is the posterior expected gradient of net welfare impact, which in turn depends on posterior mean benefits and net costs. A key implication is that a researcher who is unsure about the planner's specific parameters can report posterior mean benefits and net costs to the planner, who can use the posterior means to solve for the optimal local spending rule.

When the prior distribution of true impacts is unknown, I propose both a parametric approach and a nonparametric approach to estimating the prior. I then obtain an empirical Bayes local spending rule by solving the local problem with an estimated gradient that plugs in estimates of posterior mean benefit and net cost. I show that these empirical Bayes approaches perform well, unlike a sample plug-in approach, by deriving rates of convergence for two objects. 
Taken together, these results imply that these empirical Bayes approaches can asymptotically match the performance of the oracle planner as the number of policies grows, with the parametric approach providing faster rates than the nonparametric approach at the cost of additional parametric restrictions. Finally, in an application to a set of 127 policies studied by \cite{hendren2020unified} and \cite{hahn2024welfare}, I estimate that the plug-in approach can lead to welfare losses, while, in those same situations, careful incorporation of statistical uncertainty through empirical Bayes shrinkage yields welfare gains.

\FloatBarrier

\singlespacing
{
\bibliographystyle{chicago}
\bibliography{ref}
}

\clearpage
\newpage 
\appendix

\newcommand{\AppendixPrefix}{A}
\renewcommand{\thetheorem}{\thesection.\arabic{theorem}}
\renewcommand{\thefigure}{\thesection.\arabic{figure}}
\renewcommand{\thetable}{\thesection.\arabic{table}}
\renewcommand{\theequation}{\thesection.\arabic{equation}}
\renewcommand{\thecorollary}{\thesection.\arabic{theorem}}
\renewcommand{\thelemma}{\thesection.\arabic{theorem}}
\renewcommand{\theproposition}{\thesection.\arabic{theorem}}
\setcounter{page}{1}
\renewcommand{\thepage}{S.\arabic{page}}
\setcounter{theorem}{0}
\setcounter{equation}{0}
\setcounter{figure}{0}
\setcounter{table}{0}

\begin{center}
\onehalfspacing 
\LARGE{Supplemental Appendix to ``Optimal Policy Choices Under Uncertainty''}
\end{center}

\onehalfspacing

\section{Appendix: Satisfying Assumptions \ref{ass:param} and \ref{ass:est_var}} 

\subsection{Distributions Satisfying Assumption \ref{ass:param}(5)}
\label{app:param}

Any regular distribution in an exponential family with parameter-independent support can be written as
\begin{align*}
    f_\beta(\theta \vert X_j) &= h(\theta, X_j) \exp\{\eta(\beta, X_j)'T(\theta, X_j) - A(\eta(\beta,X_j),X_j)\}.
\end{align*}
Note that
\begin{align*}
    \log f_\beta(\theta \vert X_j) &= \log(h(\theta, X_j)) + \eta(\beta,X_j)' T(\theta,X_j) - A(\eta(\beta,X_j),X_j) \\
    \Rightarrow s_\beta(\theta \vert X_j) &= \nabla_\beta \log f_\beta(\theta \vert X_j) = \nabla_\beta \eta(\beta,X_j) \left[ T(\theta,X_j) - \nabla_\eta A(\beta,X_j) \right] \\
    &= \nabla_\beta \eta(\beta,X_j) \left[T(\theta,X_j) - E_\beta[T(\theta,X_j) \vert X_j]\right]
\end{align*}
by properties of the log-partition function $A(\eta(\beta,X_j),X_j)$. 
Thus sufficient conditions for the prior score to satisfy Assumption \ref{ass:param}(5) are that 
\begin{enumerate}
    \item $\nabla_\beta \eta(\beta,X_j)$ has bounded norm uniformly over the support of $X_j$ and $\beta \in B$,
    \item There exists $k_s \in (0,2]$ and constant $\tilde C$ such that for all $p \geq 1$ and for all $\beta \in B$, 
    \begin{align*}
        \left(E_\beta \left[ \left\Vert T(\theta, X_j) - E_\beta[T(\theta,X_j) \vert X_j]\right] \right\Vert_2^p \right)^{1/p} \leq \tilde C p^{1/k_s}
    \end{align*}
    uniformly over the support of $X_j$.
\end{enumerate}

\subsection{Estimators Satisfying Assumption \ref{ass:est_var}(4)}
\label{app:est_rates}

In this appendix I show the estimators for $\alpha_0$ and $\Omega_0$ proposed in \eqref{eq:location_scale_ests} satisfy the estimation rate of Assumption \ref{ass:est_var}(4) under Assumptions \ref{ass:compact}, \ref{ass:bdd_sig}, \ref{ass:est_var}(3), and additional regularity assumptions. In particular, there exist constants $C_1, C_2$ such that for all $J$,
    \begin{align*}
        P \left( \Vert \widehat{\chi} - \chi_0 \Vert_J > C_1 \sqrt{\frac{\log J}{J}} \right) &\leq \frac{C_2}{J^2},
    \end{align*}
defining $\Vert \chi - \chi_0 \Vert_J \equiv \max_{1\leq j \leq J} \max \left\{\Vert a(X_j;\alpha) - a(X_j;\alpha_0) \Vert_\infty, \Vert B(X_j;\Omega)^{1/2} - B(X_j;\Omega_0)^{1/2} \Vert_{op} \right\}$ for $\chi = (\alpha, \Omega)$. Recall that I take $\Sigma_{1:J}$ and $X_{1:J}$ fixed throughout the paper.

Recall the estimators of \eqref{eq:location_scale_ests} are
\begin{equation*}
    \begin{gathered}
        (\widehat\alpha,\widehat\Omega) \in \underset{\alpha,\Omega}{\text{arg}\min} ~ \left( \frac{1}{J} \sum_{j=1}^J g_j(\alpha,\Omega)\right)^\prime W \left( \frac{1}{J} \sum_{j=1}^J g_j(\alpha,\Omega)\right) , \\
        g_j(\alpha,\Omega) \equiv q(X_j) \otimes \begin{pmatrix} Y_j - a(X_j;\alpha) \\ vec \left( \left( Y_j - a(X_j; \alpha)\right) \left( Y_j - a(X_j; \alpha)\right)' - \Sigma_j - B(X_j; \Omega)\right) \end{pmatrix}.
    \end{gathered}
\end{equation*}
For simplicity I will prove the result with $W=I$, although it is possible to extend the result to any $W$ with eigenvalues that are uniformly bounded away from $\infty$ and zero.
Let $\hat g_J(\alpha,\Omega) \equiv \frac{1}{J} \sum_{j=1}^J g_j(\alpha,\Omega)$ and $\bar g_J(\alpha,\Omega) \equiv \frac{1}{J} \sum_{j=1}^J E[g_j(\alpha,\Omega)]$. Note that $\bar g_J(\alpha_0,\Omega_0) = 0$.

The main regularity assumption I will impose is that functions $a(X_j;\alpha)$ and $B(X_j;\Omega)^{1/2}$ are uniformly Lipschitz in their parameters. That is, there exist constants $C_3, C_4$ such that
\begin{align*}
    \left\Vert a(X_j;\alpha) - a(X_j;\tilde\alpha) \right\Vert_{\infty} &\leq C_3 \Vert \alpha - \tilde\alpha \Vert_2 \\
    \left\Vert B(X_j;\Omega)^{1/2} - B(X_j;\tilde\Omega)^{1/2} \right\Vert_{op} &\leq C_4 \Vert \Omega - \tilde\Omega \Vert_F
\end{align*}
uniformly over the support of $X_j$.

I will also impose that the parameter is well-identified by the criterion function in the following sense: there exists some constant $\kappa > 0$ such that for all $J$, $\Vert \bar g_J(\chi) \Vert_2 \geq \kappa \Vert \chi - \chi_0 \Vert_2$.

Finally, I impose additional regularity assumptions:
\begin{enumerate}
    \item Parameter $\chi_0 = (\alpha_0,\Omega_0) \in \Theta$, for $\Theta$ a compact parameter space with finite dimension $d_\chi$, and $\widehat\chi \in \Theta$.
    \item Instrument $q(X_j)$ is uniformly bounded over the support of $X_j$.
    \item $a(X_j;\alpha)$ is uniformly bounded over the support of $X_j$ and all $\alpha$ in the projection of $\Theta$ onto $\alpha$.
    \item $B(X_j;\Omega)$ is uniformly bounded over the support of $X_j$ and all $\Omega$ in the projection of $\Theta$ onto $\Omega$.
\end{enumerate}

\begin{proof}
Throughout the proof I will use $C,C'$ to denote positive constants that may be different every time they are used.
 
Because $\widehat\chi$ minimizes the sample criterion function $\hat Q_J$, it follows that
\begin{align*}
    \left\Vert \hat g_J(\widehat\chi) \right\Vert_2 &\leq \left\Vert \hat g_J(\chi_0) \right\Vert_2.
\end{align*}

Below I will prove that 
\begin{align*}
    Pr \left( \sup_{\chi \in \Theta} \left\Vert \hat g_J(\chi) - \bar g_J(\chi) \right\Vert_2 > C \sqrt{\frac{\log J}{J}} \right) &\leq \frac{C'}{J^2}.
\end{align*}

Note that $\hat g_J(\chi) = \frac{1}{J} \sum_{j=1}^J g_j(\chi)$. The $g_j(\chi)$ are sub-exponential uniformly over $j$ (the location coordinates are uniformly sub-Gaussian because $Y_j$ is uniformly sub-Gaussian and $a(X_j;\alpha)$ is bounded, the scale coordinates are uniformly sub-exponential because $Y_jY_j'$ is uniformly sub-exponential and other terms are uniformly sub-Gaussian or bounded, and the dimension is fixed). 
Furthermore $\hat g_J(\chi) - \bar g_J(\chi)$ has zero mean. Then by Bernstein's inequality (see, for example, \cite{vershynin2018high} Exercise 5.23),
\begin{align*}
    Pr \left( \left\Vert \hat g_J(\chi) - \bar g_J(\chi) \right\Vert_2 > C \sqrt{\frac{\log J}{J}} \right) &\leq \frac{C'}{J^2}
\end{align*}
for each $\chi \in \Theta$.

Let $\chi, \tilde \chi \in \Theta$ be arbitrary. By assumption $\left\Vert a(X_j;\tilde\alpha) - a(X_j;\alpha)\right\Vert_2 \leq C_3 \left\Vert \alpha - \tilde\alpha\right\Vert_2$ and
\begin{align*}
    &\left\Vert (Y_j - a(X_j;\alpha))(Y_j - a(X_j;\alpha))' - (Y_j - a(X_j;\tilde\alpha))(Y_j - a(X_j;\tilde\alpha))' + B(X_j;\tilde\Omega) - B(X_j;\Omega)\right\Vert_F \\
    &=\bigg\Vert (Y_j - a(X_j;\alpha))(a(X_j;\tilde\alpha) - a(X_j;\alpha))' + (a(X_j;\tilde\alpha) - a(X_j;\alpha))(Y_j - a(X_j;\tilde\alpha))' \\
    &\qquad + B(X_j;\tilde\Omega) - B(X_j;\Omega) \bigg\Vert_F \\
    &\leq C(1+\left\Vert Y_j \right\Vert_2) \left\Vert \alpha - \tilde\alpha\right\Vert_2 + C \left\Vert \Omega - \tilde\Omega \right\Vert_F
\end{align*}
Together with boundedness of $q(X_j)$ this implies that
\begin{align*}
    \left\Vert g_j(\chi) - g_j(\tilde\chi) \right\Vert_2 &\leq C (1+\left\Vert Y_j \right\Vert_2) \left\Vert \chi - \tilde\chi \right\Vert_2.
\end{align*}

Let $\{\chi^1, \dots, \chi^{\mathcal N}\}$ be an $\varepsilon$-net of $\Theta$,  where by standard covering number results (see, for example, \cite{vershynin2018high} Proposition 4.2.10), $\mathcal N \leq \frac{C}{\epsilon^{d_\chi}}$.
At each $\chi^\ell$
\begin{align*}
    Pr \left( \left\Vert \hat g_J(\chi^\ell) - \bar g_J(\chi^\ell) \right\Vert_2 > A \sqrt{\frac{\log J}{J}} \right) &\leq \frac{C}{J^{C'A}},
\end{align*}
so taking a union bound over the net, 
\begin{align*}
    Pr \left( \max_{1 \leq \ell \leq \mathcal N} \left\Vert \hat g_J(\chi^\ell) - \bar g_J(\chi^\ell) \right\Vert_2 > A \sqrt{\frac{\log J}{J}} \right) &\leq \frac{C}{\epsilon^{d_\chi}} \frac{1}{J^{C'A}}
\end{align*}
for some constant $A$.

I set $\varepsilon = c_0 \sqrt{\frac{\log J}{J}}$ for some $c_0$ to be determined later. I then choose $A$ large enough (in particular $C'A > d_\chi/2 + 2$) so that for all $J \geq 3$,
\begin{align*}
    \frac{C}{\epsilon^{d_\chi}} \frac{1}{J^{C'A}} = \frac{C}{c_0^{d_\chi}} \left(\frac{J}{\log J}\right)^{d_\chi/2} \frac{1}{J^{C'A}} \leq \frac{C}{J^2}.
\end{align*}

For each $\chi \in \Theta$ there exists $\ell$ such that $\Vert \chi - \chi^\ell\Vert_2 \leq \varepsilon$. Then
\begin{align*}
    \left\Vert \hat g_J(\chi) - \bar g_J(\chi) \right\Vert_2 &\leq \left\Vert \hat g_J(\chi^\ell) - \bar g_J(\chi^\ell) \right\Vert_2 + \left\Vert \hat g_J(\chi) - \hat g_J(\chi^\ell)\right\Vert_2 + \left\Vert \bar g_J(\chi) - \bar g_J(\chi^\ell) \right\Vert_2 \\
    &\leq \left\Vert \hat g_J(\chi^\ell) - \bar g_J(\chi^\ell) \right\Vert_2 + \frac{1}{J} \sum_{j=1}^J C (1+\left\Vert Y_j \right\Vert_2) \left\Vert \chi - \chi^\ell \right\Vert_2 \\
    &\qquad + \frac{1}{J} \sum_{j=1}^J E\left[ C (1+\left\Vert Y_j \right\Vert_2) \left\Vert \chi - \chi^\ell \right\Vert_2 \right] \\
    &\leq \left\Vert \hat g_J(\chi^\ell) - \bar g_J(\chi^\ell) \right\Vert_2 + C\varepsilon \frac{1}{J} \sum_{j=1}^J (2+\left\Vert Y_j \right\Vert_2)
\end{align*}
because expectations of $\left\Vert Y_j \right\Vert_2$ are uniformly bounded. 
Note that $2+\left\Vert Y_j \right\Vert_2$ is sub-exponential, so I can apply Bernstein's inequality and add back the (bounded) mean to obtain
\begin{align*}
    Pr \left( \frac{1}{J} \sum_{j=1}^J (2+\left\Vert Y_j \right\Vert_2) > C \right) &\leq \frac{C'}{J^2}.
\end{align*}
Thus on the event that $\frac{1}{J} \sum_{j=1}^J (2+\left\Vert Y_j \right\Vert_2) \leq C$ and $\max_{1 \leq \ell \leq \mathcal N} \left\Vert \hat g_J(\chi^\ell) - \bar g_J(\chi^\ell) \right\Vert_2 \leq C\sqrt{\frac{\log J}{J}}$, it holds that
\begin{align*}
    \sup_{\chi \in \Theta} \left\Vert \hat g_J(\chi) - \bar g_J(\chi) \right\Vert_2 &\leq C\sqrt{\frac{\log J}{J}} + C\varepsilon.
\end{align*}
Choosing $c_0$ so that $Cc_0 \leq 1$ and recalling $\varepsilon = c_0 \sqrt{\frac{\log J}{J}}$,
\begin{align*}
    \sup_{\chi \in \Theta} \left\Vert \hat g_J(\chi) - \bar g_J(\chi) \right\Vert_2 &\leq C\sqrt{\frac{\log J}{J}}.
\end{align*}
Then applying union bound,
\begin{align*}
    Pr \left( \sup_{\chi \in \Theta} \left\Vert \hat g_J(\chi) - \bar g_J(\chi) \right\Vert_2 > C \sqrt{\frac{\log J}{J}} \right) &\leq \frac{C'}{J^2}.
\end{align*}

On the event that $\sup_{\chi \in \Theta} \left\Vert \hat g_J(\chi) - \bar g_J(\chi) \right\Vert_2 \leq C \sqrt{\frac{\log J}{J}}$, by the assumptions
\begin{align*}
    \Vert \widehat\chi - \chi_0 \Vert_2 &\leq \frac{1}{\kappa} \Vert \bar g_J(\widehat\chi) \Vert_2 \\
    &\leq \frac{1}{\kappa} \Vert \hat g_J(\widehat\chi) \Vert_2 + \frac{1}{\kappa} \Vert \bar g_J(\widehat\chi) - \hat g_J(\widehat\chi) \Vert_2 \\
    &\leq \frac{1}{\kappa} \Vert \hat g_J(\chi_0) \Vert_2 + \frac{C}{\kappa}  \sqrt{\frac{\log J}{J}} \\
    &= \frac{1}{\kappa} \Vert \hat g_J(\chi_0) - \bar g_J(\chi_0) \Vert_2 + \frac{C}{\kappa}  \sqrt{\frac{\log J}{J}} \\
    &\leq \frac{C}{\kappa} \sqrt{\frac{\log J}{J}}.
\end{align*}
By assumption,
\begin{align*}
    \left\Vert \widehat\chi - \chi_0\right\Vert_J \leq C \max\{ \left\Vert \widehat\alpha - \alpha_0 \right\Vert_2, \left\Vert \widehat\Omega - \Omega_0 \right\Vert_F \} \leq C \left\Vert \widehat\chi - \chi_0 \right\Vert_2.
\end{align*}
Thus 
\begin{align*}
    Pr \left( \left\Vert \widehat \chi - \chi_0 \right\Vert_J > C \sqrt{\frac{\log J}{J}} \right) &\leq \frac{C'}{J^2},
\end{align*}
as desired.

\end{proof}

\section{Appendix: Data and Estimation Details}
\label{app:data}

I obtain data from the replication packages of \citeappendix{hendren2020unified} and \citeappendix{hahn2024welfare}.
I restrict to the set of policies with non-missing point estimates and bootstrap uncertainty estimates for both WTP and net cost and with nonzero program costs. For policies from \citeappendix{hendren2020unified}, I restrict to the baseline policies defined in that paper. For policies from \citeappendix{hahn2024welfare}, I restrict to the policies in the baseline sample, as defined in that paper. I also exclude all policies classified by \citeappendix{hahn2024welfare} as international because the model proposed in this paper considers a single policymaker choosing among policies in a common policy environment. This results in 127 policies, 52 of which are climate policies from \citeappendix{hahn2024welfare}, 32 of which are child policies from \citeappendix{hendren2020unified}, and 43 of which are adult policies from \citeappendix{hendren2020unified}. I classify policies from \citeappendix{hendren2020unified} as child policies based on their program category or if the average age of beneficiaries is below 20; the remaining policies are classified as adult policies. For policies from \citeappendix{hahn2024welfare}, I present results taking the social cost of carbon to be \$193, including profits, excluding learning-by-doing gains, and excluding energy savings. 

Using the replication code for both \citeappendix{hendren2020unified} and \citeappendix{hahn2024welfare}, I construct 1000 bootstrap draws of WTP and net cost for each policy $j$. I normalize WTP and net cost by program cost, which I take to be fixed. I calculate each sampling uncertainty matrix $\Sigma_j$ with the estimated covariance matrix of the bootstrap draws, truncating eigenvalues of $\Sigma_j$ smaller than $10^{-4}$ at $10^{-4}$.

I allow the location and scale of the distribution of true policy impacts to differ across climate, child, and adult policies. For the parametric empirical Bayes approach, I assume that WTP and net cost are jointly Gaussian conditional on policy type. For each policy type $t$ I estimate the type-specific mean $\widehat a_t$ and covariance matrix $\widehat B_t$ by maximizing the marginal likelihood of the data. I parameterize the covariance matrix $\widehat B_t$ using its Cholesky decomposition. I truncate the eigenvalues of each $\widehat B_t$ close to zero, setting all eigenvalues smaller than $0.01$ equal to $0.01$. I then calculate posterior means with the Normal-Normal shrinkage formula using the type-specific location and scale estimates.

For the nonparametric empirical Bayes approach, I estimate the type-specific location and scale parameters using a minimum-distance approach, as described in Section \ref{sec_3.3.2}. I implement a feasible two-step weighted minimum-distance estimator separately for each policy type. In the first step, I use the unweighted sample moments. In the second step, I re-estimate the location and scale parameters using policy-specific inverse variance weights constructed from the preliminary estimates and the known sampling uncertainty $\Sigma_j$. Because the resulting scale estimates may not be positive semi-definite, I truncate the eigenvalues of each $\widehat B_t$ close to zero, setting all eigenvalues smaller than $0.01$ equal to $0.01$.
I standardize policy estimates using $\widehat a_t$ and the truncated $\widehat B_t$, then estimate the common prior using the NPMLE approach described in Section \ref{sec_3.3.2}. I calculate posterior means from the estimated prior, then transform the posterior means back to their original scale.

To construct local spending rules I assume $\eta_j = 1$ for all $j$ and present results for two different values $\mu = 0.5$ and $\mu = 3$. I construct decision rules for consideration sets $V = \mathcal B_2 = \{v : \Vert v \Vert_2 \leq 1\}$ and $V = \mathcal B_\infty = \{v : \Vert v \Vert_\infty \leq 1\}$.

\section{Appendix: Unbiased Estimation of Welfare with Coupled Bootstrap} \label{app:coupled}

Given any local spending rule, one may be interested in knowing the value of the planner's local objective for that local spending rule, that is, the rate of increase in net welfare impact along that local spending rule. The local objective along any local spending rule depends on the true benefits and net costs of policies, which are unknown. However, it is possible to obtain an unbiased estimate of the planner's local objective along any local spending rule. I follow \citeappendix{chen2022empirical} and describe how to implement the coupled bootstrap procedure of \citeappendix{oliveira2024unbiased} for unbiased estimation of the rate of increase in my setting.\footnote{This coupled bootstrap procedure falls into the broader category of data fission methods, which allow for distributions that are not Gaussian \citepappendix{leiner2023data}.}

As a reminder, I suppose that the sample estimates $(\widehat{WTP}_j, \widehat{G}_j)$ of benefits and net costs of each policy change $j$ are independent across policies and conditionally Gaussian, as summarized in \eqref{eq:likelihood1}:
\begin{equation*}
\begin{gathered}
    \begin{pmatrix} \widehat{WTP}_j \\ \widehat{G}_j \end{pmatrix} \Bigg\vert \begin{pmatrix} {WTP}_j \\ {G}_j \end{pmatrix}, X_j, \Sigma_j \overset{\text{ind.}}{\sim} N \left(\begin{pmatrix} {WTP}_j \\ {G}_j \end{pmatrix}, \Sigma_j \right), \qquad j = 1,\dots,J.
\end{gathered}
\end{equation*}
Let $\Xi_j \sim N(0, I_2) \in \R^2$ be i.i.d. draws from a standard multivariate Gaussian that are independent of $(\widehat{WTP}_j, \widehat{G}_j)$. Then for some $\kappa > 0$, construct random vectors $\widehat{Y}_j^{(1)}$ and $\widehat{Y}_j^{(2)}$ by adding and subtracting a scaled version of the Gaussian noise $\Xi_j$ from the sample estimates:
\begin{equation}
\begin{gathered} \label{eq:coupled}
    \widehat{Y}_j^{(1)} = \begin{pmatrix} \widehat{WTP}_j^{(1)} \\ \widehat{G}_j^{(1)} \end{pmatrix} \equiv \begin{pmatrix} \widehat{WTP}_j \\ \widehat{G}_j \end{pmatrix} + \sqrt{\kappa} \Sigma_j^{1/2} \Xi_j,\\ 
    \widehat{Y}_j^{(2)} = \begin{pmatrix} \widehat{WTP}_j^{(2)} \\ \widehat{G}_j^{(2)} \end{pmatrix} \equiv \begin{pmatrix} \widehat{WTP}_j \\ \widehat{G}_j \end{pmatrix} - \frac{1}{\sqrt{\kappa}} \Sigma_j^{1/2} \Xi_j. 
\end{gathered}
\end{equation}
The random vectors $(\widehat{Y}_j^{(1)}, \widehat{Y}_j^{(2)})$ are coupled bootstrap draws that are conditionally unbiased for the true benefit and net cost and are conditionally uncorrelated.
In particular, they satisfy
\begin{gather}
\begin{aligned}
    \widehat{Y}_j^{(1)} \Bigg\vert \begin{pmatrix} {WTP}_j \\ {G}_j \end{pmatrix}, X_j, \Sigma_j \overset{\text{ind.}}{\sim} N \left(\begin{pmatrix} {WTP}_j \\ {G}_j \end{pmatrix}, (1+\kappa) \Sigma_j \right), \\ 
    \widehat{Y}_j^{(2)} \Bigg\vert \begin{pmatrix} {WTP}_j \\ {G}_j \end{pmatrix}, X_j, \Sigma_j \overset{\text{ind.}}{\sim} N \left(\begin{pmatrix} {WTP}_j \\ {G}_j \end{pmatrix}, \left(1+\frac{1}{\kappa} \right) \Sigma_j \right), 
\end{aligned}
\label{eq:coupled_likelihood} \\
    Cov\left(\widehat{Y}_j^{(1)}, \widehat{Y}_j^{(2)} \bigg\vert {WTP}_j, G_j, X_j, \Sigma_j \right) = 0_{2 \times 2}, \label{eq:zero_cov_coupled}
\end{gather}
where I use $0_{n \times n}$ to denote the $n \times n$ zero matrix. As discussed in \citeappendix{chen2022empirical}, the coupled bootstrap approach is similar to a sample-splitting approach where $\widehat{Y}_j^{(1)}$ are training set estimates, $\widehat{Y}_j^{(2)}$ are testing set estimates, and $\kappa$ governs the train-test split.\footnote{To see why, \citeappendix{chen2022empirical} notes that if one splits i.i.d. micro-level data $\widehat{Y}_{jk}, k = 1,\dots,n_j$ into two sets (train and test) with proportions $\frac{1}{\kappa+1}$ and $\frac{\kappa}{\kappa+1}$ and takes $\widehat{Y}_j^{(1)}$ and $\widehat{Y}_j^{(2)}$ to be the sample means on those sets respectively, the central limit theorem suggests that \eqref{eq:coupled_likelihood} will approximately hold.}
Note that for the coupled bootstrap approach I only need the Gaussianity assumption \eqref{eq:likelihood1} and not the entire location-scale model of \eqref{eq:likelihood2}. The assumption of exact conditional Gaussianity of the sample estimates is already required for validity of the empirical Bayes estimation procedure. This means that the coupled bootstrap approach provides unbiased performance estimates of any local spending rule under much weaker assumptions than those needed to estimate the optimal local spending rule. 

These properties imply that if I construct the local spending rule using draws $\widehat{Y}_j^{(1)}$ and evaluate the planner's local objective along that local spending rule as if the true parameters were $\widehat{Y}_j^{(2)}$, the resulting estimate of the rate of increase will be unbiased.
Recall that the planner's local objective along a vector $v$ is given by
\begin{align*}
   \langle E_\pi \left[\nabla w \right], v \rangle. 
\end{align*}
Consider any local spending rule $v \left( \widehat{Y}_{1:J}^{(1)} \right)$ that depends only on the coupled bootstrap draw $(\widehat{Y}_{1}^{(1)}, \dots, \widehat{Y}_{J}^{(1)})$. For example, $v \left( \widehat{Y}_{1:J}^{(1)} \right)$ could be the empirical Bayes local spending rule where empirical Bayes estimates are constructed using only $(\widehat{Y}_{1}^{(1)}, \dots, \widehat{Y}_{J}^{(1)})$ under the model \eqref{eq:coupled_likelihood}.
An estimator for the planner's local objective along local spending rule $v \left( \widehat{Y}_{1:J}^{(1)} \right)$ is
\begin{align}\label{eq:coupled_estimator}
    \sum_{j=1}^J v_j \left( \widehat{Y}_{1:J}^{(1)} \right) \widehat{Y}_{j}^{(2)\prime}\begin{pmatrix} \eta_j \\ -\mu \end{pmatrix},
\end{align}
where $v_j \left( \widehat{Y}_{1:J}^{(1)} \right)$ denotes the $j$th component of $v\left( \widehat{Y}_{1:J}^{(1)} \right)$.
The following proposition, which is essentially Proposition 1 from \citeappendix{chen2022empirical}, shows that the above estimator is unbiased for the planner's local objective along local spending rule $v \left( \widehat{Y}_{1:J}^{(1)} \right)$.

\begin{proposition}\label{prop:coupled}
    Suppose the sample estimates $(\widehat{WTP}_j, \widehat{G}_j)$ are conditionally independent and Gaussian as in \eqref{eq:likelihood1}. For some $\kappa > 0$, construct coupled bootstrap draws $\widehat{Y}_{1:J}^{(1)}, \widehat{Y}_{1:J}^{(2)}$ as in \eqref{eq:coupled}. Let $\mathcal F \equiv \left\{WTP_{1:J}, G_{1:J}, \Sigma_{1:J}, X_{1:J}, \widehat{Y}_{1:J}^{(1)} \right\}$.
    Then for any local spending rule $v \left( \widehat{Y}_{1:J}^{(1)} \right)$ that is a function of $\widehat{Y}_{1:J}^{(1)}$, the estimator given by \eqref{eq:coupled_estimator} satisfies
    \begin{align*}
        E \left[ \sum_{j=1}^J v_j \left( \widehat{Y}_{1:J}^{(1)} \right) \widehat{Y}_{j}^{(2)\prime}\begin{pmatrix} \eta_j \\ -\mu \end{pmatrix} \bigg \vert \mathcal F \right] &= \left\langle \nabla w, v \left( \widehat{Y}_{1:J}^{(1)} \right) \right\rangle.
    \end{align*}
\end{proposition}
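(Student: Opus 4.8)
The plan is to reduce the whole claim to a single conditional-mean computation for the ``test'' draw $\widehat{Y}_j^{(2)}$, exploiting the fact that the spending rule $v(\widehat{Y}_{1:J}^{(1)})$ depends on the data only through the ``training'' draw $\widehat{Y}_{1:J}^{(1)}$. I would carry out the argument conditioning on the true impacts $(WTP_{1:J}, G_{1:J})$, on $\Sigma_{1:J}, X_{1:J}$, and on the training draw $\widehat{Y}_{1:J}^{(1)}$ (the information collected in $\mathcal F$). Since each coefficient $v_j(\widehat{Y}_{1:J}^{(1)})$ and each pair $(\eta_j,-\mu)$ is then a known, $\mathcal F$-measurable constant, linearity of conditional expectation lets me pull them outside the expectation, so that the left-hand side collapses to $\sum_{j=1}^J v_j(\widehat{Y}_{1:J}^{(1)})\, E[\widehat{Y}_j^{(2)\prime}\mid \mathcal F]\,(\eta_j,-\mu)^\prime$. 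Everything then rests on identifying $E[\widehat{Y}_j^{(2)}\mid\mathcal F]$.

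First I would establish that, conditional on the true impacts, the test draw $\widehat{Y}_j^{(2)}$ is independent of the \emph{entire} training vector $\widehat{Y}_{1:J}^{(1)}$. Across policies this is immediate: the estimates $(\widehat{WTP}_j,\widehat{G}_j)$ are conditionally independent by \eqref{eq:likelihood1} and the auxiliary noise vectors $\Xi_j$ are i.i.d.\ and independent of them, so $\widehat{Y}_j^{(2)}$ can only be related to $\widehat{Y}_j^{(1)}$ and is independent of $\widehat{Y}_k^{(1)}$ for $k\neq j$. Within a policy, $(\widehat{Y}_j^{(1)},\widehat{Y}_j^{(2)})$ is a jointly Gaussian pair given $(WTP_j,G_j,\Sigma_j)$, being an affine image of the Gaussian vector $(\widehat{Y}_j,\Xi_j)$ built in \eqref{eq:coupled}; the vanishing conditional cross-covariance recorded in \eqref{eq:zero_cov_coupled} then upgrades ``uncorrelated'' to ``conditionally independent.''

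Given that independence, the key step is the computation $E[\widehat{Y}_j^{(2)}\mid\mathcal F]=E[\widehat{Y}_j^{(2)}\mid WTP_j,G_j,\Sigma_j]=(WTP_j,G_j)^\prime$, which is exactly the conditional unbiasedness of the test draw stated in \eqref{eq:coupled_likelihood}. Substituting this back gives $\sum_{j=1}^J v_j(\widehat{Y}_{1:J}^{(1)})(WTP_j,G_j)(\eta_j,-\mu)^\prime=\sum_{j=1}^J v_j(\widehat{Y}_{1:J}^{(1)})(\eta_j WTP_j-\mu G_j)$, and recognizing $\eta_j WTP_j-\mu G_j$ as the $j$th component of $\nabla w$ identifies the sum with $\langle \nabla w, v(\widehat{Y}_{1:J}^{(1)})\rangle$, which closes the proof.

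The main obstacle is conceptual rather than computational: the entire argument hinges on the conditional independence of the training and test draws, which is precisely what allows evaluating a rule constructed from $\widehat{Y}^{(1)}$ against $\widehat{Y}^{(2)}$ without introducing bias, mirroring the independence of train and test folds in sample splitting. The one point I would take care to state explicitly is that zero covariance delivers independence only because the pair is jointly Gaussian conditional on the truth, so I would invoke the Gaussian structure at that step rather than treating the vanishing covariance in \eqref{eq:zero_cov_coupled} as sufficient on its own.
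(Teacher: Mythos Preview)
Your proposal is correct and follows essentially the same approach as the paper: pull the $\mathcal F$-measurable factors $v_j(\widehat{Y}_{1:J}^{(1)})$ and $(\eta_j,-\mu)$ outside the conditional expectation, then use joint Gaussianity together with the zero cross-covariance \eqref{eq:zero_cov_coupled} to conclude $E[\widehat{Y}_j^{(2)}\mid\mathcal F]=(WTP_j,G_j)'$ and identify the result with $\langle\nabla w,v\rangle$. Your write-up is in fact more careful than the paper's three-line argument in explicitly justifying why zero covariance yields conditional independence (via joint Gaussianity) and in treating the across-policy independence separately.
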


\begin{proof}
Observe that $\widehat{Y}_{j}^{(1)}$ and $\widehat{Y}_{j}^{(2)}$ are independent conditional on $(WTP_j, G_j, X_j, \Sigma_j)$ because \eqref{eq:coupled_likelihood} and \eqref{eq:zero_cov_coupled} hold, so
    \begin{align*}
        E \left[ \sum_{j=1}^J v_j \left( \widehat{Y}_{1:J}^{(1)} \right) \widehat{Y}_{j}^{(2)\prime}\begin{pmatrix} \eta_j \\ -\mu \end{pmatrix} \bigg \vert \mathcal F \right] &= \sum_{j=1}^J v_j \left( \widehat{Y}_{1:J}^{(1)} \right) E \left[\widehat{Y}_{j}^{(2)\prime}\bigg \vert \mathcal F \right]\begin{pmatrix} \eta_j \\ -\mu \end{pmatrix} \\
        &= \sum_{j=1}^J v_j \left( \widehat{Y}_{1:J}^{(1)} \right) \begin{pmatrix} WTP_j \\ G_j \end{pmatrix}' \begin{pmatrix} \eta_j \\ -\mu \end{pmatrix} \\
        &= \left\langle \nabla w, v \left( \widehat{Y}_{1:J}^{(1)} \right) \right\rangle.
    \end{align*}
\end{proof}

\section{Appendix: Local Spending Rules} \label{app:spending_rule_results}

\begin{figure}[!htbp]
\caption{Local spending rule results for $V = \mathcal{B}_2$}
\label{fig:size_and_mag_B2}
\centering
\begin{subfigure}[t]{0.49\textwidth}
  \centering
  \caption{Sample plug-in rule, $\mu = 0.5$}
  \includegraphics[width=\linewidth]{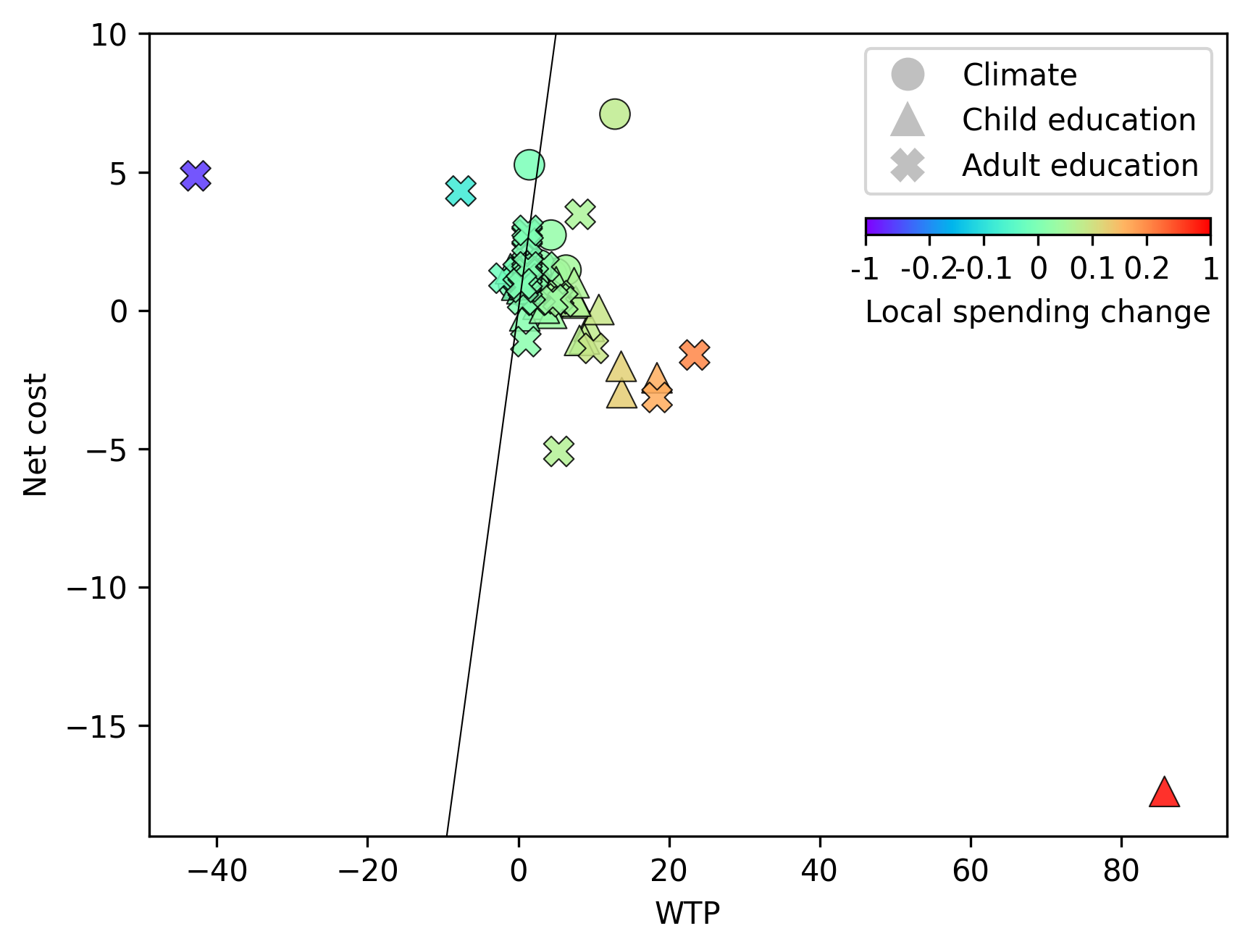}
\end{subfigure}
\begin{subfigure}[t]{0.49\textwidth}
  \centering
  \caption{Sample plug-in rule, $\mu = 3$}
  \includegraphics[width=\linewidth]{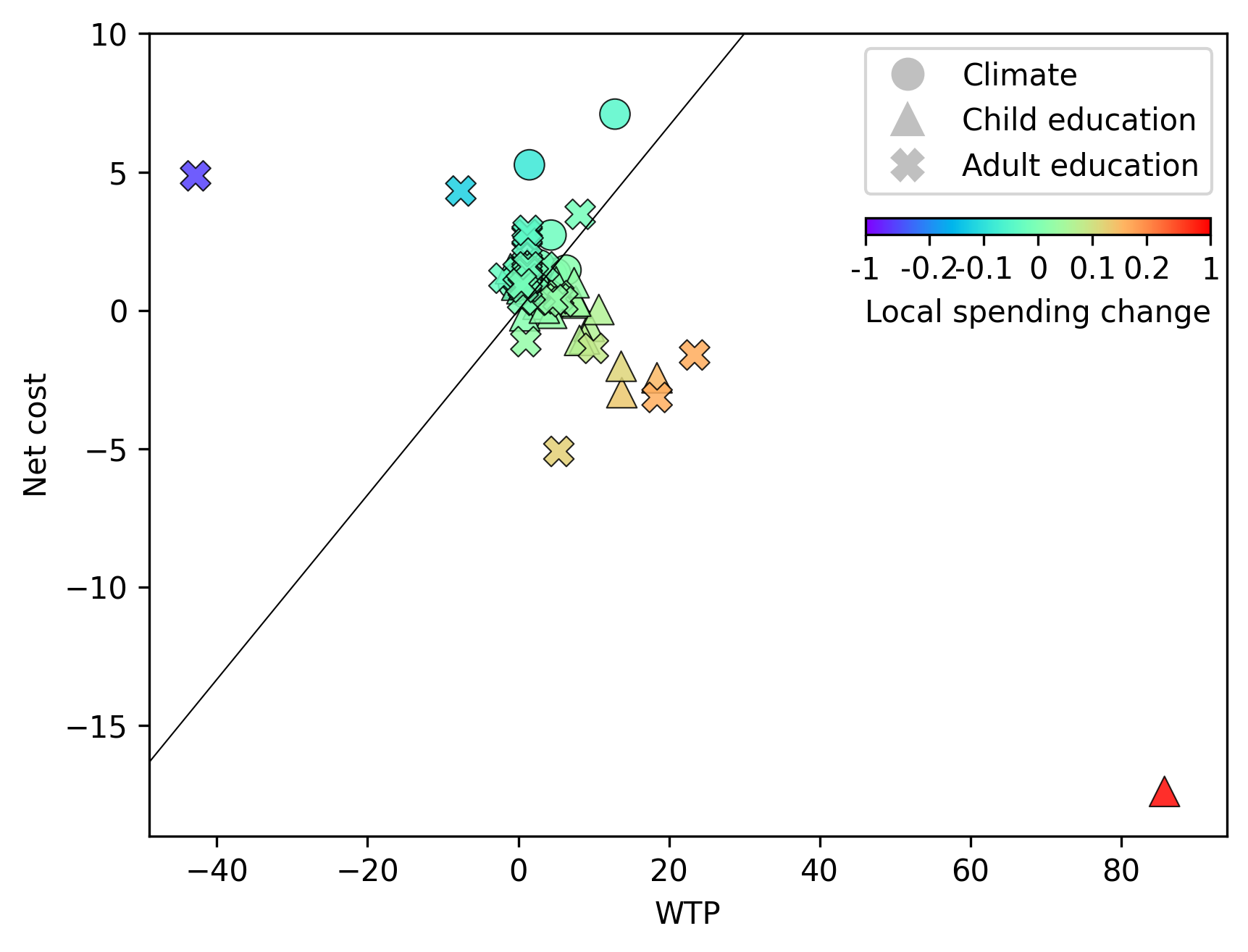}
\end{subfigure}
\begin{subfigure}[t]{0.49\textwidth}
  \centering
  \caption{NPEB rule, $\mu = 0.5$}
  \includegraphics[width=\linewidth]{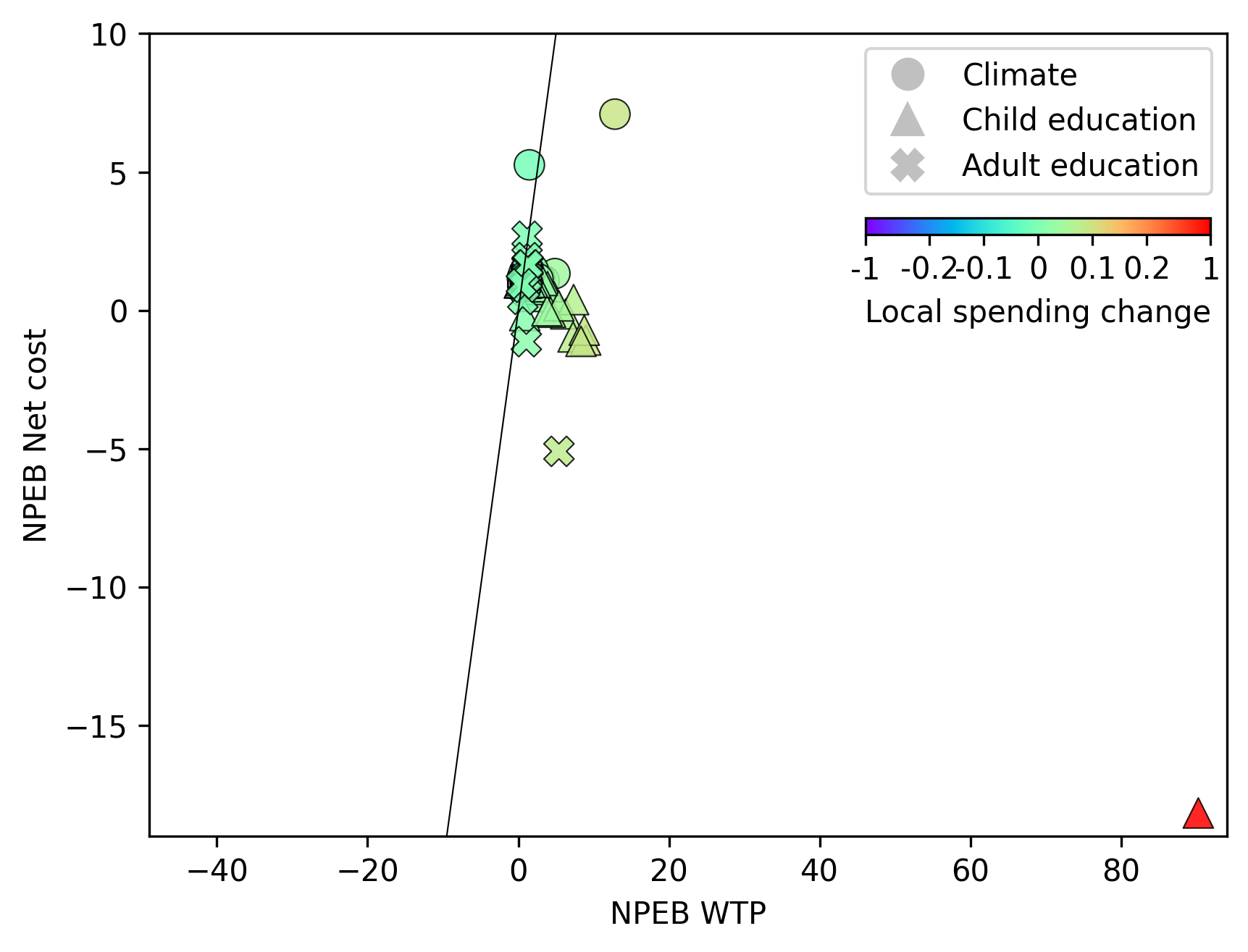}
\end{subfigure}
\begin{subfigure}[t]{0.49\textwidth}
  \centering
  \caption{NPEB rule, $\mu = 3$}
  \includegraphics[width=\linewidth]{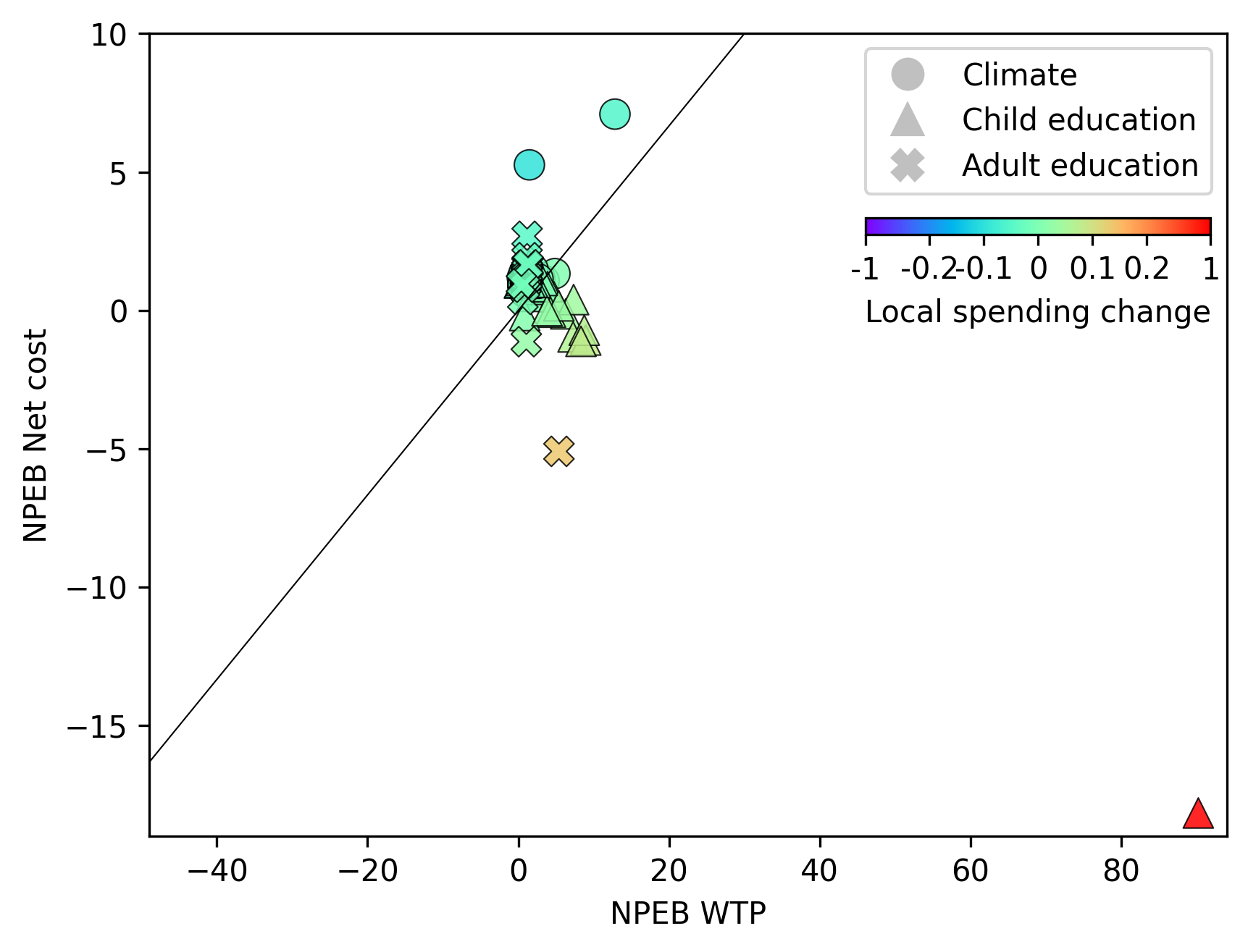}
\end{subfigure}
\begin{subfigure}[t]{0.49\textwidth}
  \centering
  \caption{PEB rule, $\mu = 0.5$}
  \includegraphics[width=\linewidth]{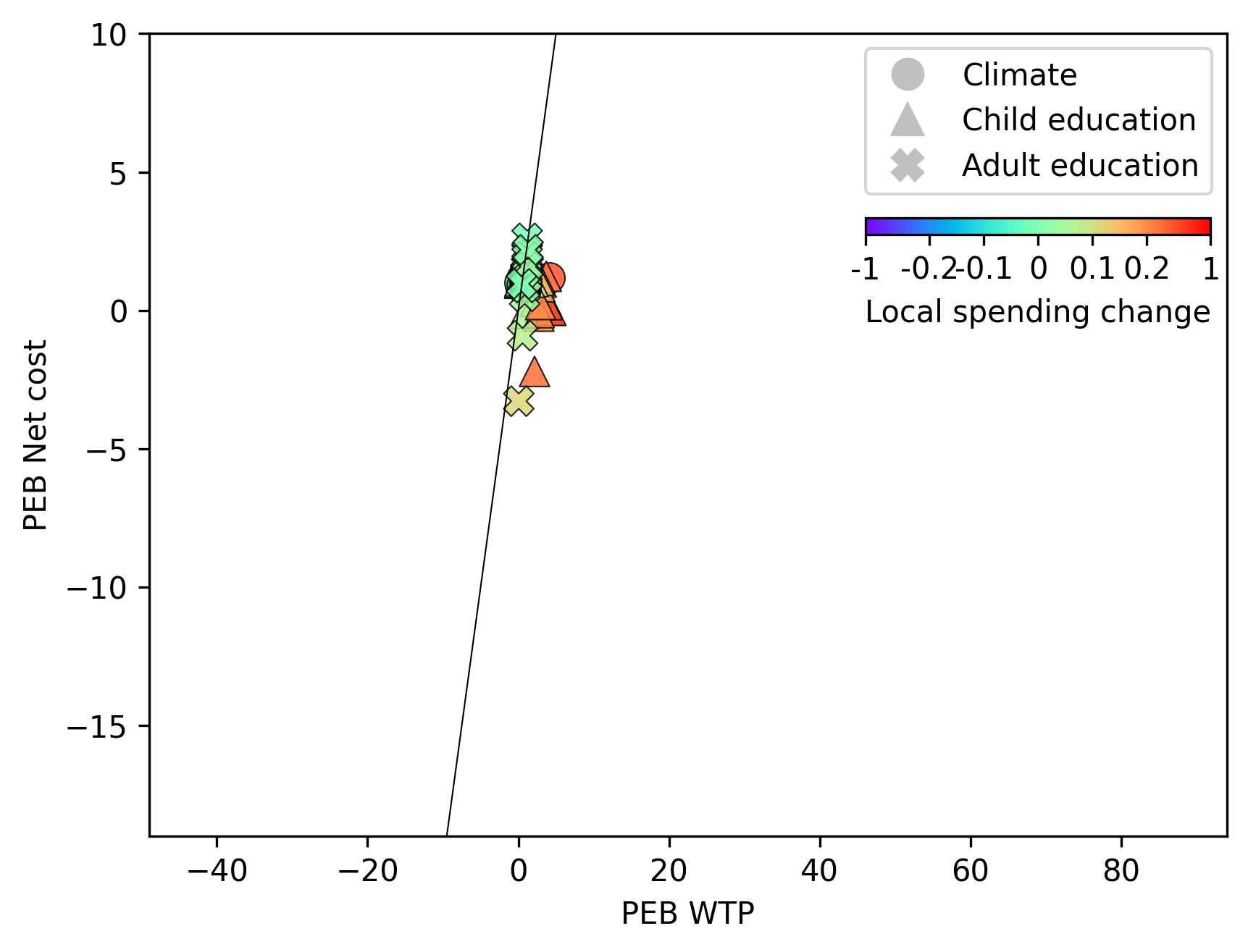}
\end{subfigure}
\begin{subfigure}[t]{0.49\textwidth}
  \centering
  \caption{PEB rule, $\mu = 3$}
  \includegraphics[width=\linewidth]{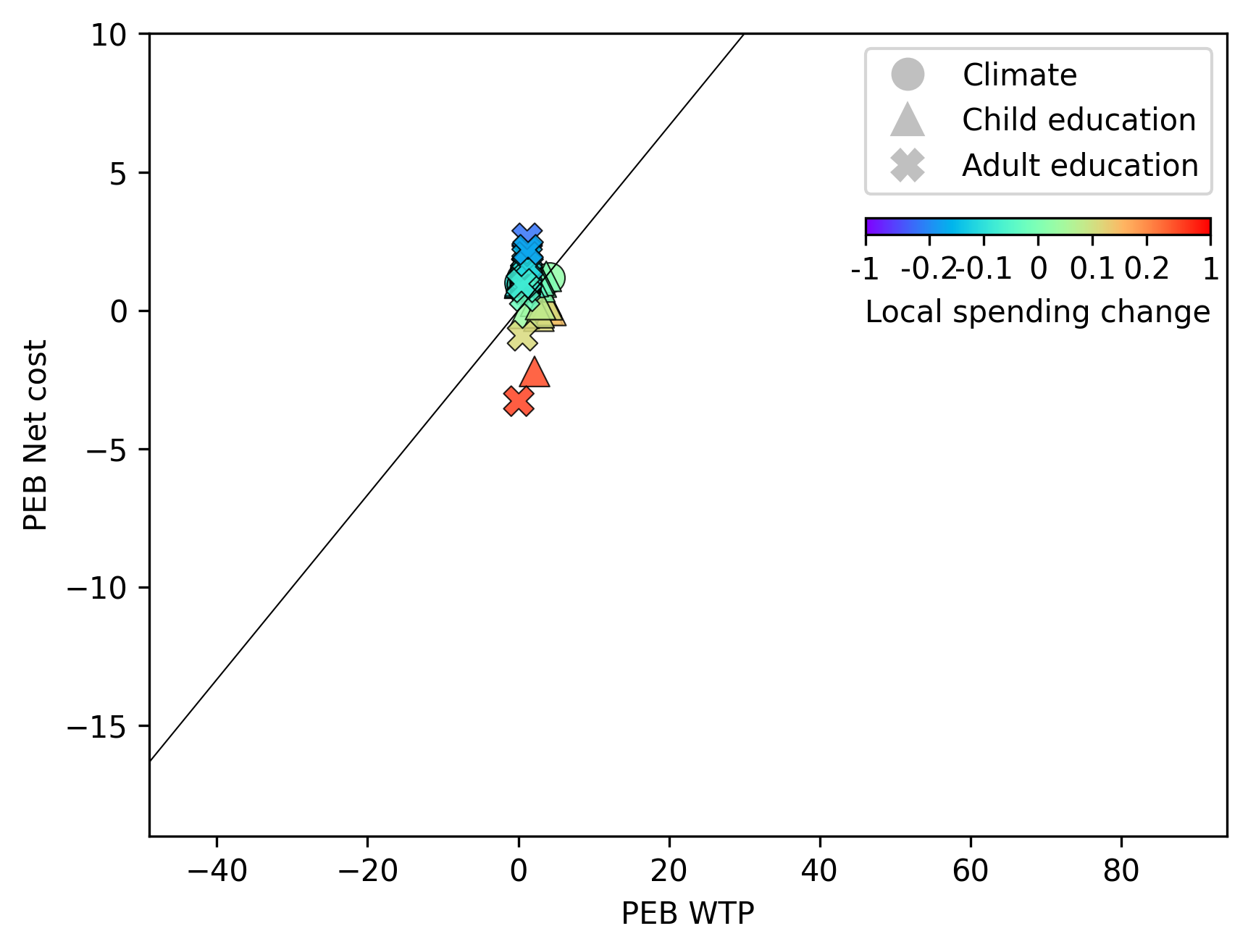}
\end{subfigure}
\begin{minipage}{\textwidth}
    \fontsize{9}{2}\linespread{1}\selectfont
    \footnotesize{
    \textit{Notes}: Each point represents the WTP and net cost for a single policy, for sample estimates in panels (a) and (b), NPEB estimates in panels (c) and (d), and PEB estimates under a Gaussian prior in panels (e) and (f). Point shape is determined by policy type; point color is determined by the local spending change on that corresponding policy. 
    The black line denotes the set of WTP and net costs for which the policy change from the local spending rule is exactly equal to zero. I take $\eta_j = 1$ for all $j$ for all results in this figure. 
    }
    \end{minipage}
\end{figure}

\begin{figure}[!htbp]
\caption{Local spending rule results for $V = \mathcal{B}_\infty$}
\label{fig:size_and_mag_Binf}
\centering
\begin{subfigure}[t]{0.49\textwidth}
  \centering
  \caption{Sample plug-in rule, $\mu = 0.5$}
  \includegraphics[width=\linewidth]{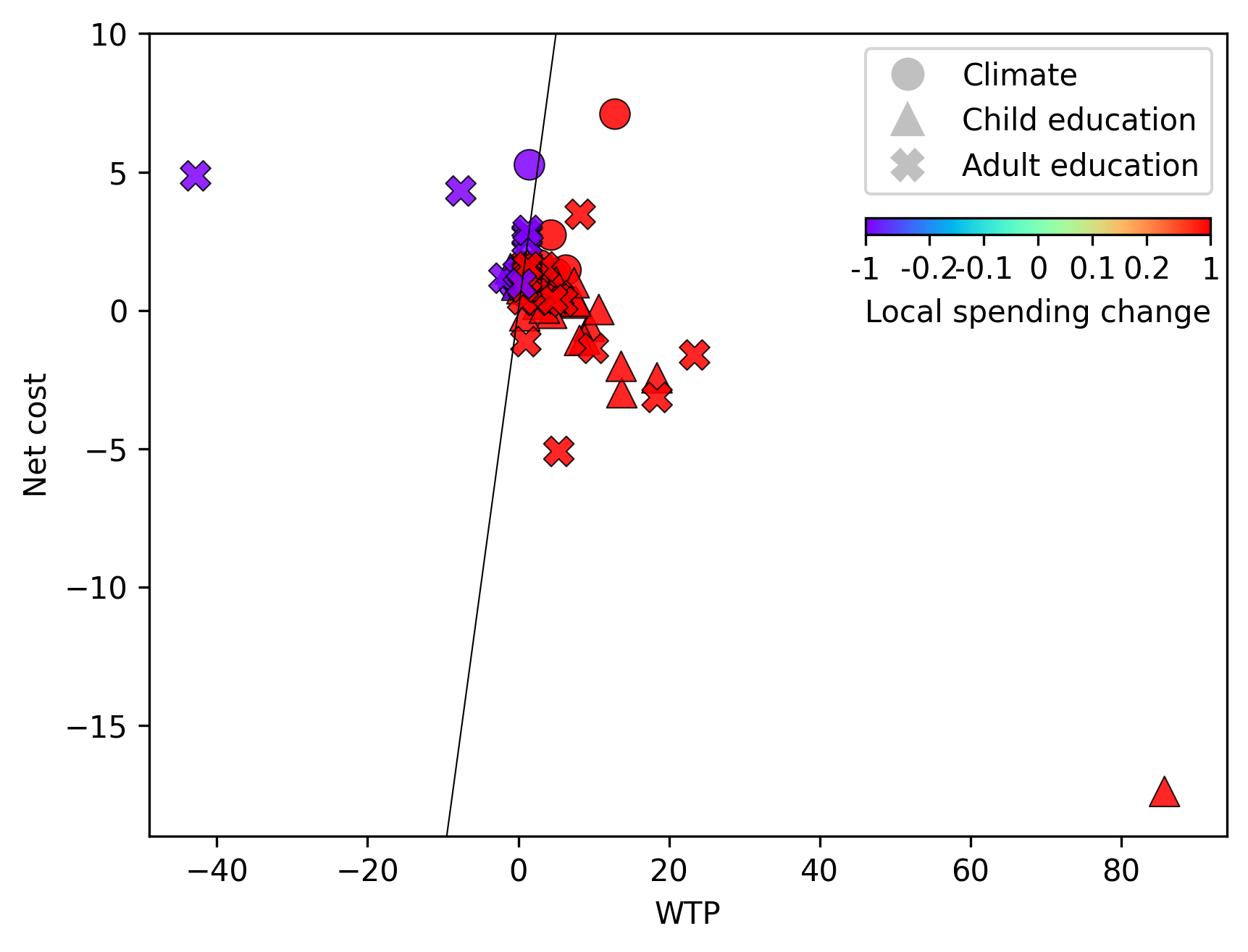}
\end{subfigure}
\begin{subfigure}[t]{0.49\textwidth}
  \centering
  \caption{Sample plug-in rule, $\mu = 3$}
  \includegraphics[width=\linewidth]{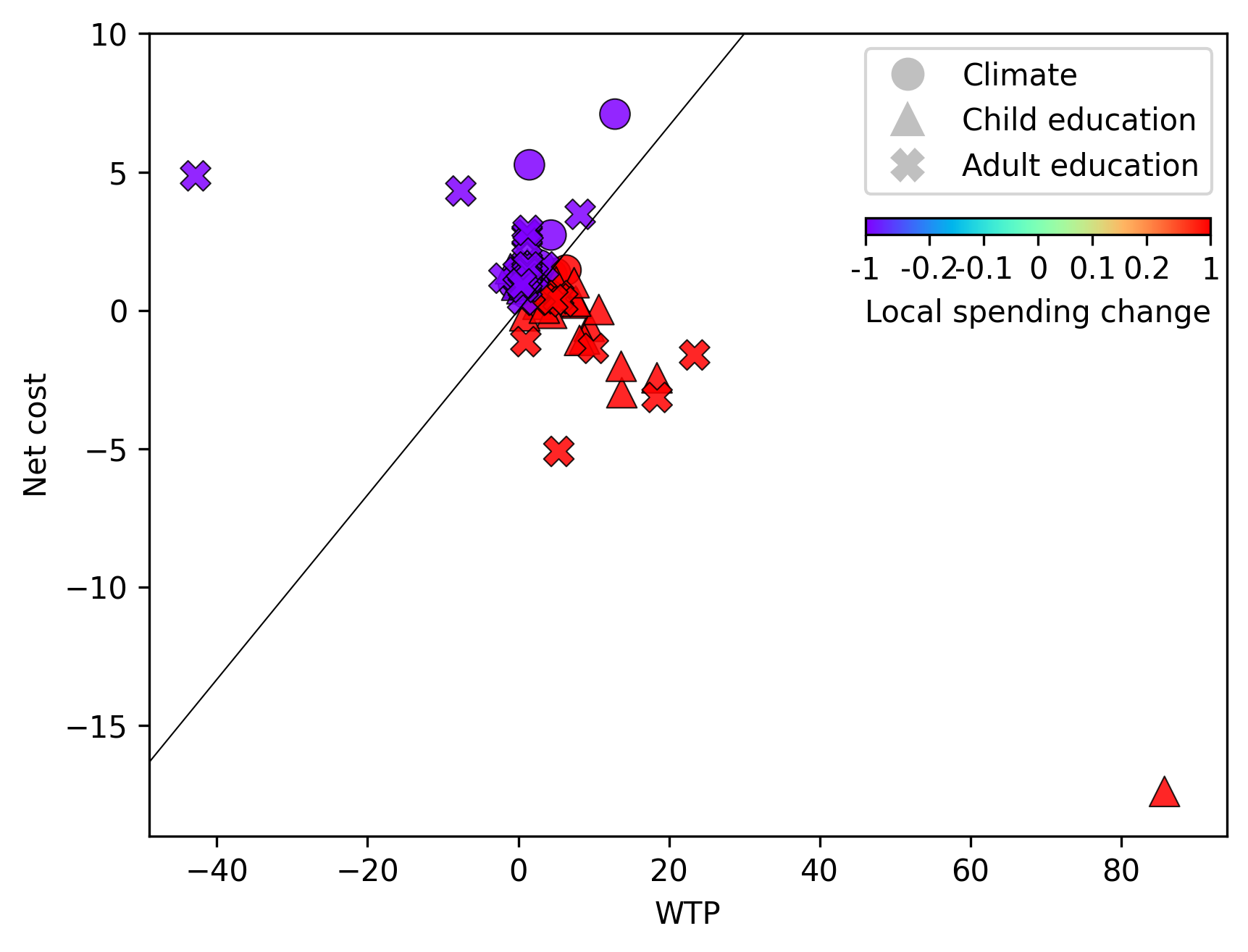}
\end{subfigure}
\begin{subfigure}[t]{0.49\textwidth}
  \centering
  \caption{NPEB rule, $\mu = 0.5$}
  \includegraphics[width=\linewidth]{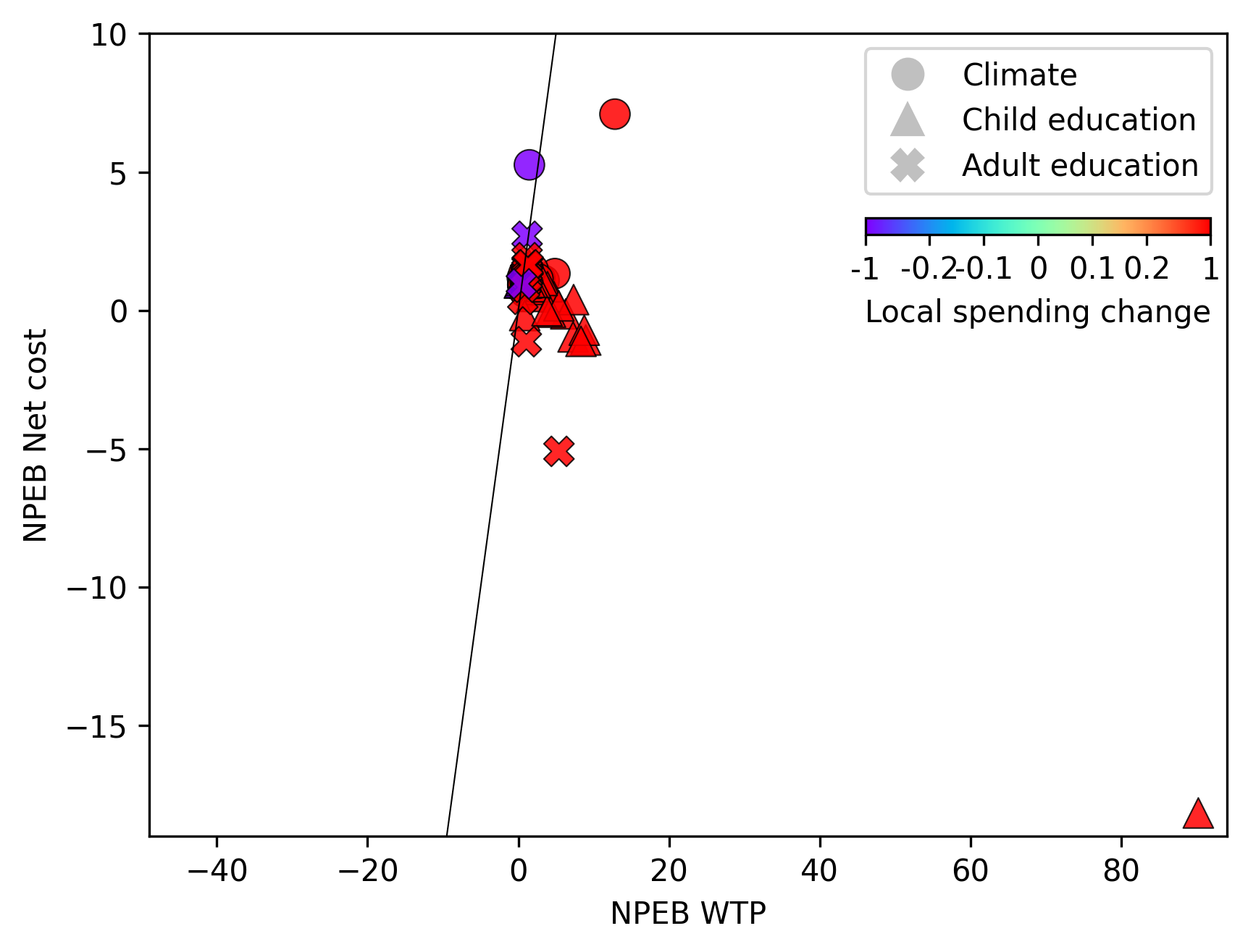}
\end{subfigure}
\begin{subfigure}[t]{0.49\textwidth}
  \centering
  \caption{NPEB rule, $\mu = 3$}
  \includegraphics[width=\linewidth]{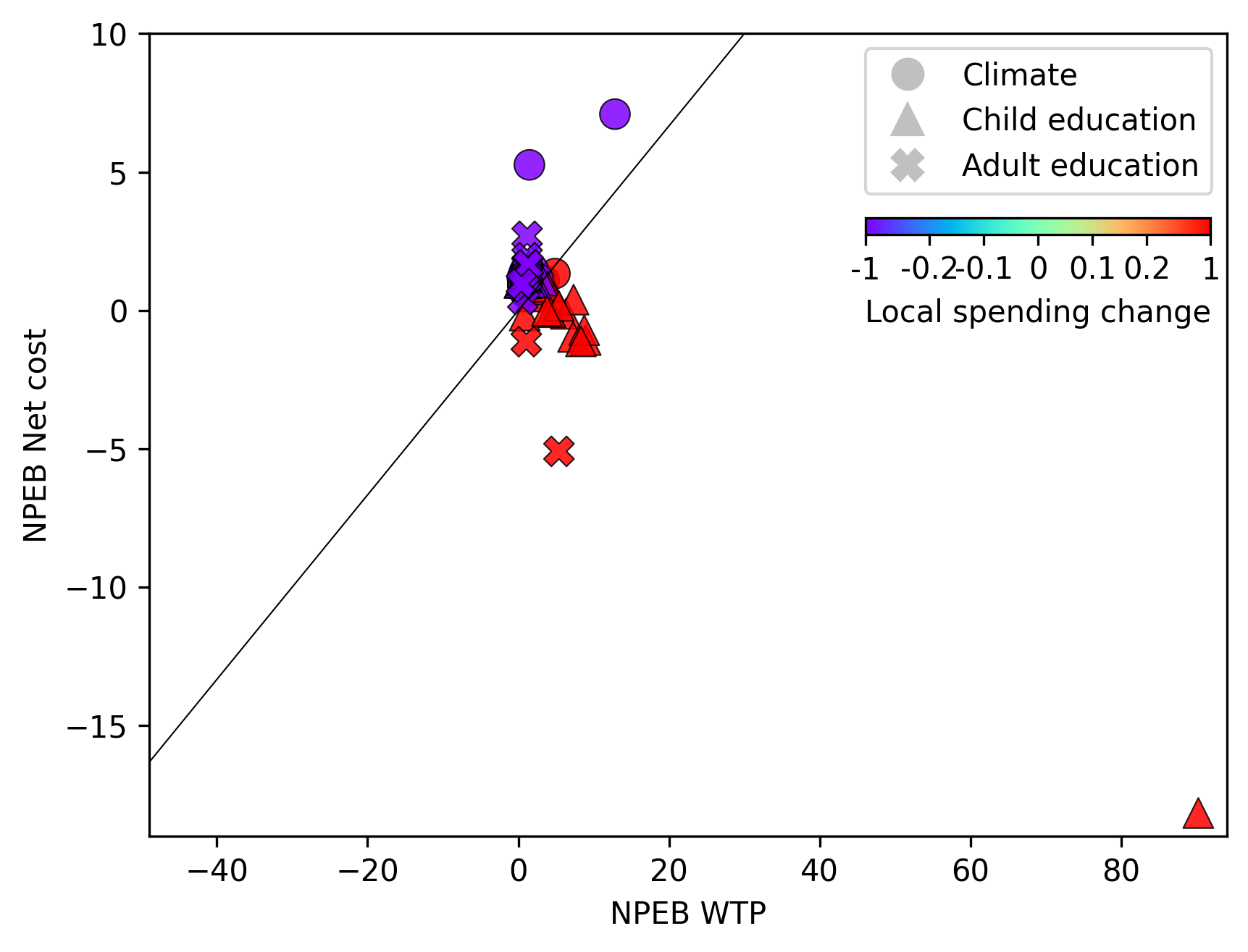}
\end{subfigure}
\begin{subfigure}[t]{0.49\textwidth}
  \centering
  \caption{PEB rule, $\mu = 0.5$}
  \includegraphics[width=\linewidth]{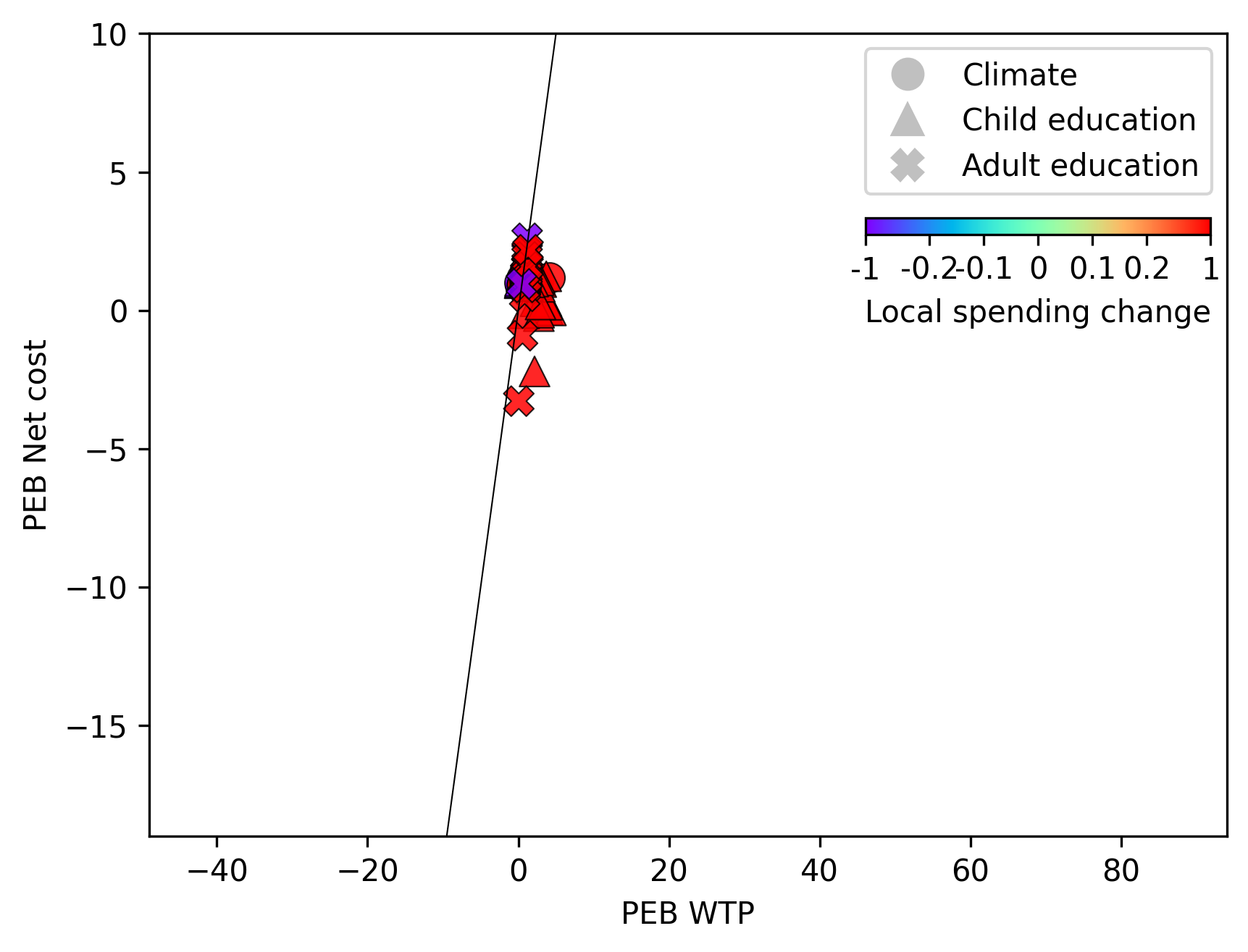}
\end{subfigure}
\begin{subfigure}[t]{0.49\textwidth}
  \centering
  \caption{PEB rule, $\mu = 3$}
  \includegraphics[width=\linewidth]{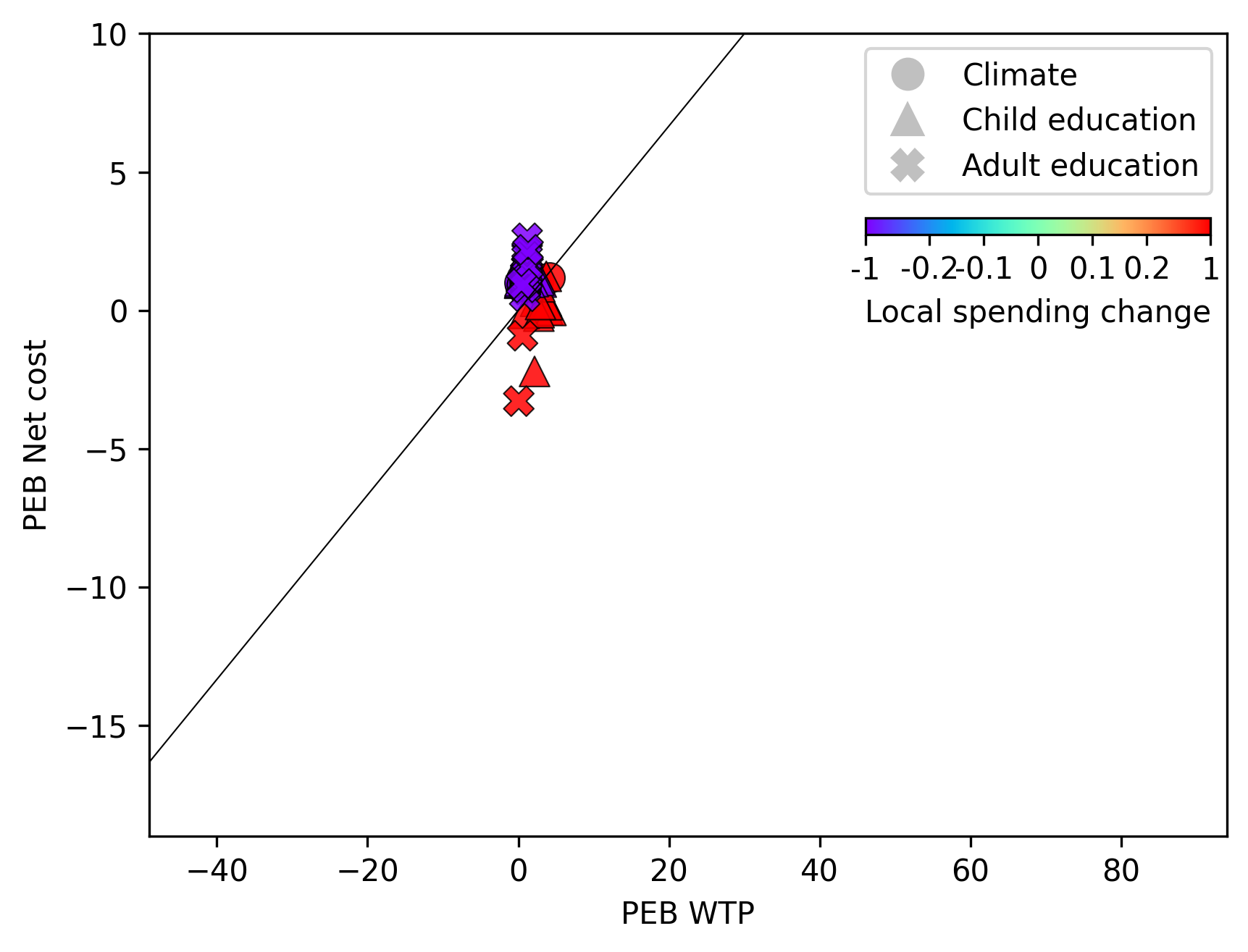}
\end{subfigure}
\begin{minipage}{\textwidth}
    \fontsize{9}{2}\linespread{1}\selectfont
    \footnotesize{
    \textit{Notes}: Each point represents the WTP and net cost for a single policy, for sample estimates in panels (a) and (b), NPEB estimates in panels (c) and (d), and PEB estimates under a Gaussian prior in panels (e) and (f). Point shape is determined by policy type; point color is determined by the local spending change on that corresponding policy. 
    The black line denotes the set of WTP and net costs for which the policy change from the local spending rule is exactly equal to zero. I take $\eta_j = 1$ for all $j$ for all results in this figure. 
    }
    \end{minipage}
\end{figure}

Figures \ref{fig:size_and_mag_B2} and \ref{fig:size_and_mag_Binf} illustrate what the empirical Bayes and sample plug-in local spending rules look like for particular choices of the planner's parameters. These figures plot estimates of net cost against estimates of WTP for sample estimates in panels (a) and (b), NPEB estimates in panels (c) and (d), and PEB estimates under a Gaussian prior in panels (e) and (f). Points are colored according to the sample plug-in local spending change on the corresponding policy in panels (a) and (b), the NPEB local spending change on the corresponding policy in panels (c) and (d), and the PEB local spending change on the corresponding policy in panels (e) and (f). Point shape corresponds to policy type. 
Figure \ref{fig:size_and_mag_B2} corresponds to consideration set $V = \mathcal{B}_2$ and Figure \ref{fig:size_and_mag_Binf} corresponds to consideration set $V = \mathcal{B}_{\infty}$. 
The black line plots the set of benefits and costs such that the local spending change is exactly zero, $WTP = \mu G$, and is meant to visually help distinguish policies with positive and negative local spending changes. I display results for $\mu = 0.5$ in panels (a), (c), and (e), and for $\mu = 3$ in panels (b), (d), and (f).

As the value of $\mu$ increases, all local spending rules make more negative and fewer positive local spending changes. Recall that $\mu$ is the marginal welfare impact from closing the budget. Thus, a larger $\mu$ implies a larger negative welfare impact from closing the budget, inducing the planner to make more negative local spending changes to reduce the size of the budget. 

Results under consideration set $\mathcal{B}_2$ in Figure \ref{fig:size_and_mag_B2} are quite different from those under consideration set $\mathcal{B}_\infty$ in Figure \ref{fig:size_and_mag_Binf}.
Under consideration set $\mathcal{B}_\infty$, for all $\mu$ the local spending rule takes the value $-1$ for all policies to the left of the black line and $1$ for all policies to the right. In contrast, under consideration set $\mathcal{B}_2$ the magnitude of the local spending change increases as the policy lies farther from the line,
while remaining between $-1$ and $1$ with negative changes to the left and positive changes to the right. More generally, under $V = \mathcal{B}_p$ the local spending rule becomes ``spikier'' as $p$ decreases from $\infty$ to $1$, with more homogeneous local spending changes near the black line and larger magnitudes farther away.

Differences between the sample plug-in local spending rule and the empirical Bayes local spending rules are driven by empirical Bayes shrinkage. For clarity I focus on consideration set $V = \mathcal{B}_2$, where these differences are visually more apparent.
Looking across policy types, child education policies are treated similarly by all rules. In contrast, the sample plug-in local spending rule makes several large positive and negative spending changes on adult education policies, while the empirical Bayes local spending rules do not. This difference arises from stronger empirical Bayes shrinkage for adult education policies, leading to greater divergence between the sample plug-in and empirical Bayes rules for adult education policies.

\section{Appendix: Proofs} \label{app:proofs}

Throughout this appendix $K$ denotes an arbitrary positive constant that does not depend on $J$ and may be different every time it is used. 

\subsection{Proof of Lemma \ref{lem:order}}
\begin{proof}
By definition of the dual norm, for any $p \geq 1$ $\max_{v \in \mathcal{B}_p} \langle \nabla w, v \rangle = \left\Vert \nabla w \right\Vert_{\frac{p}{p-1}}$. 

Case: $p=1$

Then 
\begin{align*}
    E \left[ \max_{v \in \mathcal{B}_1} \langle \nabla w, v \rangle \right] &= E\left[\left\Vert \nabla w \right\Vert_{\infty}\right] 
    = E\left[\left\Vert \begin{pmatrix}
        \eta_1 WTP_1 - \mu G_1 \\ \vdots \\ \eta_J WTP_J - \mu G_J
    \end{pmatrix} \right\Vert_{\infty} \right]
    \leq K \sqrt{\log J}
\end{align*}
by Assumptions \ref{ass:compact} and \ref{ass:eta}, using properties of sub-Gaussian variables (see \cite{vershynin2018high} Proposition 2.7.6).

Case: $p = \infty$

Then 
\begin{align*}
    E \left[ \max_{v \in \mathcal{B}_\infty} \langle \nabla w, v \rangle \right] &= E\left[\left\Vert \nabla w \right\Vert_{1}\right] 
    = E\left[\left\Vert \begin{pmatrix}
        \eta_1 WTP_1 - \mu G_1 \\ \vdots \\ \eta_J WTP_J - \mu G_J
    \end{pmatrix} \right\Vert_{1} \right] \\
    &= \sum_{j=1}^J E\left[ | \eta_j WTP_j - \mu G_j| \right]
    \leq KJ
\end{align*}
by Assumptions \ref{ass:compact} and \ref{ass:eta}, using properties of sub-Gaussian variables (see \cite{vershynin2018high} Proposition 2.6.1).

Case: $1 < p < \infty$

Then
\begin{align*}
    E \left[ \max_{v \in \mathcal{B}_p} \langle \nabla w, v \rangle \right] &= E\left[\left\Vert \nabla w \right\Vert_{\frac{p}{p-1}}\right] 
    = E\left[\left\Vert \begin{pmatrix}
        \eta_1 WTP_1 - \mu G_1 \\ \vdots \\ \eta_J WTP_J - \mu G_J
    \end{pmatrix} \right\Vert_{\frac{p}{p-1}} \right] \\
    &\leq \left( E\left[\left\Vert \begin{pmatrix}
        \eta_1 WTP_1 - \mu G_1 \\ \vdots \\ \eta_J WTP_J - \mu G_J
    \end{pmatrix} \right\Vert_{\frac{p}{p-1}}^{\frac{p}{p-1}} \right] \right)^{\frac{p-1}{p}} \\
    &= \left( \sum_{j=1}^J E\left[ |\eta_j WTP_j - \mu G_j|^{\frac{p}{p-1}} \right] \right)^{\frac{p-1}{p}} \\
    &\leq KJ^{\frac{p-1}{p}}\sqrt{\frac{p}{p-1}},
\end{align*}
where the first line follows from Jensen's inequality and the third line follows from Assumptions \ref{ass:compact} and \ref{ass:eta}, using properties of sub-Gaussian variables (see \cite{vershynin2018high} Proposition 2.6.1).

Furthermore, applying H\"older's inequality,
\begin{align*}
    E\left[\left\Vert \nabla w \right\Vert_{\frac{p}{p-1}}\right] &\leq J^{\frac{p-1}{p}} E\left[\left\Vert \nabla w \right\Vert_{\infty} \right] 
    \leq K J^{\frac{p-1}{p}} \sqrt{\log J}.
\end{align*}
Thus $E\left[\left\Vert \nabla w \right\Vert_{\frac{p}{p-1}}\right] \leq K J^{\frac{p-1}{p}} \min\left(\sqrt{\frac{p}{p-1}}, \sqrt{\log J}\right)$.

It can be seen that this bound is sharp by taking $\eta_j = \eta$ to be constant across $j$ and taking $(WTP_j, G_j)$ to be independent Gaussian random vectors (see, for example, \cite{vershynin2018high} Remark 2.7.7).
\end{proof}

\subsection{Proof of Proposition \ref{prop:plug_in}}
\begin{proof}
In what follows all expectations are conditional on $X_j$ and $\Sigma_j$, but the conditioning is omitted for notational simplicity. Throughout the proof I use the shorthand $E_\pi[ \cdot ]$ to denote the posterior expectation, taken conditional on the data $Y_{1:J}$.

First note that for $1 < p < \infty$, $1/N_p = J^{\frac{1-p}{p}} 1/\min\{ \sqrt{\frac{p}{p-1}}, \sqrt{\log J}\} \geq J^{\frac{1-p}{p}} 1/\sqrt{\frac{p}{p-1}}$.

Consider prior distribution $F_0$ that takes on value $(1,1)$ with probability $1$ and does not depend on $X_j$.
Finally suppose that $\eta_j = 1$ and $\mu = 1$ for all $j$.

This means that $\eta_j WTP_j - \mu G_j = WTP_j - G_j$ takes on value $0$ with probability $1$. Note this means that posterior mean $WTP_j^* - G_j^* = 0$ for all values of $\widehat{WTP}_j, \widehat{G}_j$.

Defining $\omega_j^2 = \Sigma_{j,11} - 2\Sigma_{j,12} + \Sigma_{j,22}$, this also means the unconditional distribution of $\eta_j \widehat{WTP}_j - \mu \widehat{G}_j = \widehat{WTP}_j - \widehat{G}_j$ is the Gaussian distribution $N(0, \omega_j^2)$. Note that $\omega_j^2$ is uniformly bounded, $0 < \tilde k_1 \leq \omega_j \leq \tilde k_2 < \infty$ for all $j$, by Assumption \ref{ass:bdd_sig}. Further note that $\widehat{WTP}_j - \widehat{G}_j$ is independent across $j$ by assumption.

Then the first objective, without normalization, is
\begin{align*}
    &\max_{v: \mathcal{Y} \to \mathcal{B}_p} E \left[ \left\vert E_\pi[\langle \nabla w, v(Y_{1:J}) \rangle - \langle \widehat{\nabla w},v(Y_{1:J}) \rangle] \right\vert \right] \\
    &\qquad = E \left[ \max_{v \in \mathcal{B}_p} \left\vert E_\pi[\langle \nabla w, v \rangle - \langle \widehat{\nabla w},v \rangle ] \right\vert \right] 
    = E \left[ \left\Vert E_\pi[\nabla w] - \widehat{\nabla w} \right\Vert_{\frac{p}{p-1}} \right] \\
    &\qquad \geq \left\Vert E \left[ \left\vert E_\pi[\nabla w] - \widehat{\nabla w} \right\vert \right] \right\Vert_{\frac{p}{p-1}} \\
    &\qquad = \left( \sum_{j=1}^J E \left[ \left\vert \widehat{WTP}_j - \widehat{G}_j \right\vert \right]^{\frac{p}{p-1}} \right)^{\frac{p-1}{p}}.
\end{align*}
Here I use the notation $\vert v \vert$ for a given vector $v$ to denote the vector with entries corresponding to the absolute value of the entries of $v$.

This means for $p = \infty$, the normalized first objective is
\begin{align*}
    &\frac{1}{N_\infty} \max_{v: \mathcal{Y} \to \mathcal{B}_\infty} E \left[ \left\vert E_\pi [ \langle \nabla w, v \rangle - \langle \widehat{\nabla w}, v \rangle ] \right\vert \right] 
    \geq \frac{1}{J} \sum_{j=1}^J E[ |\widehat{WTP}_j - \widehat{G}_j|] = \frac{1}{J} \sum_{j=1}^J \omega_j \sqrt{\frac{2}{\pi}} \geq K,
\end{align*}
for some constant $K$ that does not depend on $J$, because $\omega_j^2$ is uniformly bounded.

For $1 < p < \infty$, the normalized first objective is 
\begin{align*}
    &\frac{1}{N_p} \max_{v: \mathcal{Y} \to \mathcal{B}_p} E \left[ \left\vert E_\pi[\langle \nabla w, v(Y_{1:J}) \rangle - \langle \widehat{\nabla w},v(Y_{1:J}) \rangle] \right\vert \right] \\
    &\qquad \geq \frac{1}{\sqrt{\frac{p}{p-1}}} J^\frac{1-p}{p} \left( \sum_{j=1}^J E \left[ |\widehat{WTP}_j - \widehat{G}_j| \right]^{\frac{p}{p-1}} \right)^{\frac{p-1}{p}} \\
    &\qquad \geq K_p J^\frac{1-p}{p} \left( \sum_{j=1}^J \left(\omega_j \sqrt{\frac{2}{\pi}}\right)^{\frac{p}{p-1}} \right)^{\frac{p-1}{p}} \geq K_p
\end{align*}
because $\omega_j^2$ is uniformly bounded, where $K_p > 0$ is a constant that does depend on $p$ but does not depend on $J$.

For $p = 1$, the normalized first objective is 
\begin{align*}
    &\frac{1}{N_1} \max_{v: \mathcal{Y} \to \mathcal{B}_1} E \left[ \left\vert E_\pi[\langle \nabla w, v(Y_{1:J}) \rangle - \langle \widehat{\nabla w},v(Y_{1:J}) \rangle] \right\vert \right] \\
    &\qquad = \frac{1}{\sqrt{\log J}} E \left[ \left\Vert E_\pi[\nabla w] - \widehat{\nabla w} \right\Vert_{\infty} \right] \\
    &\qquad = \frac{1}{\sqrt{\log J}} E \left[ \max_{j=1,\dots,J} |\widehat{WTP}_j - \widehat{G}_j| \right] 
    \geq K
\end{align*}
because $\omega_j^2$ is uniformly bounded, using standard lower bounds for maxima of independent Gaussians (see, for example, \cite{vershynin2018high} Exercise 2.38).

To derive a result for the second objective, I split the cases into $p > 1$ and $p = 1$. For $p > 1$, consider the prior $F_0$ that takes on value $(1,-1)$ with probability $1$ and does not depend on $X_j$. I maintain the assumption that $\eta_j = 1$ for all $j$ and $\mu = 1$. 

This means that $\eta_j WTP_j - \mu G_j = WTP_j - G_j$ takes on value $2$ with probability $1$. Note this means that posterior mean $WTP_j^* - G_j^* = 2$ for all values of $\widehat{WTP}_j, \widehat{G}_j$.
This also means the unconditional distribution of $\eta_j \widehat{WTP}_j - \mu \widehat{G}_j = \widehat{WTP}_j - \widehat{G}_j$ is $N(2, \omega_j^2)$.

Let $p > 1$ be arbitrary. Note that 
\begin{align*}
    &\frac{\sum_{j=1}^J (WTP_j^*-G_j^*) \text{sign}(\widehat{WTP}_j-\widehat{G}_j)\vert\widehat{WTP}_j-\widehat{G}_j\vert^{\frac{1}{p-1}}}{\Vert \widehat{\nabla w} \Vert_{\frac{p}{p-1}}^{\frac{1}{p-1}} } \\
    &\qquad \leq 2\sum_{j=1}^J \mathbbm{1}\{\widehat{WTP}_j-\widehat{G}_j > 0\} \frac{\vert\widehat{WTP}_j-\widehat{G}_j\vert^{\frac{1}{p-1}}}{\Vert \widehat{\nabla w} \Vert_{\frac{p}{p-1}}^{\frac{1}{p-1}} } \\
    &\qquad \leq 2 \left(\sum_{j=1}^J \mathbbm{1}\{\widehat{WTP}_j-\widehat{G}_j > 0\} \right)^{\frac{p-1}{p}} \frac{\left( \sum_{j=1}^J \vert\widehat{WTP}_j-\widehat{G}_j\vert^{\frac{p-1}{p}} \right)^{1/p}}{\Vert \widehat{\nabla w} \Vert_{\frac{p}{p-1}}^{\frac{1}{p-1}} } \\
    &\qquad = 2 \left\Vert \mathbbm{1}\{\widehat{\nabla w} > 0 \} \right\Vert_{\frac{p}{p-1}}
\end{align*}
by H\"older's inequality.

Note too that for any $q \in [1,\infty]$, $\Vert E_\pi[\nabla w] \Vert_{q} = 2J^{1/q}$.

Then
\begin{align*}
    &\max_{v: \mathcal{Y} \to \mathcal{B}_p} E \left[ E_\pi [ \langle \nabla w, v \rangle - \langle \nabla w, \hat v_J \rangle ] \right] \\
    &\qquad =E \left[\left\Vert E_\pi[\nabla w] \right\Vert_{\frac{p}{p-1}} - 
    \frac{\sum_{j=1}^J (WTP_j^*-G_j^*) \text{sign}(\widehat{WTP}_j-\widehat{G}_j)\vert\widehat{WTP}_j-\widehat{G}_j\vert^{\frac{1}{p-1}}}{\Vert \widehat{\nabla w} \Vert_{\frac{p}{p-1}}^{\frac{1}{p-1}} } \right] \\
    &\qquad \geq 2 J^{\frac{p-1}{p}} - 2E \left[\left\Vert
    \mathbbm{1}\{\widehat{\nabla w} > 0\} \right\Vert_{\frac{p}{p-1}} \right].
\end{align*}

For $p = \infty$,
\begin{align*}
    \frac{1}{N_\infty} \max_{v: \mathcal{Y} \to \mathcal{B}_\infty} E \left[ E_\pi [ \langle \nabla w, v \rangle - \langle \nabla w, \hat v_J \rangle ] \right] &\geq 2 - \frac{1}{J} E \left[\left\Vert \mathbbm{1}\{\widehat{\nabla w} > 0\} \right\Vert_{1} \right] \\
    &= 2 - \frac{2}{J} \sum_{j=1}^J Pr \left(\widehat{WTP}_j - \widehat{G}_j > 0\right) \\
    &= 2 - \frac{2}{J} \sum_{j=1}^J \Phi \left(\frac{2}{\omega_j} \right) \geq K
\end{align*}
because $\omega_j$ is uniformly bounded and $\Phi(\cdot)$, the CDF of the standard Gaussian distribution, is strictly less than 1 for all inputs.

For $1 < p < \infty$,
\begin{align*}
    \frac{1}{N_p} \max_{v: \mathcal{Y} \to \mathcal{B}_p} E \left[ E_\pi [ \langle \nabla w, v \rangle - \langle \nabla w, \hat v_J \rangle ] \right] &\geq 2 \sqrt{\frac{p-1}{p}} \left(1 - J^{\frac{1-p}{p}} E \left[\left\Vert \mathbbm{1}\{\widehat{\nabla w} > 0\} \right\Vert_{\frac{p}{p-1}} \right]\right) \\
    &= 2 \sqrt{\frac{p-1}{p}} \left(1 - J^{\frac{1-p}{p}} E \left[ \left(\sum_{j=1}^J \mathbbm{1}\{\widehat{WTP}_j-\widehat{G}_j > 0\} \right)^{\frac{p-1}{p}} \right]\right) \\
    &\geq 2 \sqrt{\frac{p-1}{p}} \left(1 - J^{\frac{1-p}{p}} \left(\sum_{j=1}^J Pr \left( \widehat{WTP}_j-\widehat{G}_j > 0 \right)\right)^{\frac{p-1}{p}}\right) \\
    &\geq K_p,
\end{align*}
where the third line follows from Jensen's and the final line follows because $\omega_j$ is uniformly bounded and $\Phi(\cdot)$ is strictly less than 1 for all inputs.

For $p=1$, consider the prior $F_0 = N(0, \frac{\xi^2}{2} I_2)$ that does not depend on $X_j$.
I maintain the assumption that $\eta_j = 1$ for all $j$ and $\mu = 1$. I assume that $\Sigma_j = \frac{\sigma_j^2}{2} I_2$.
Suppose for $j \in L = \{j: j \text{ is odd}\}$ I have $\sigma_j^2 = \sigma_L^2$ and for $j \in H = \{j: j \text{ is even}\}$ I have $\sigma_j^2 = \sigma_H^2$, with $\sigma_H > \sigma_L$.

This means that $\eta_j WTP_j - \mu G_j = WTP_j - G_j \sim N(0,\xi^2)$. Then the unconditional distribution of $\eta_j \widehat{WTP}_j - \mu \widehat{G}_j = \widehat{WTP}_j - \widehat{G}_j$ is $N(0, \xi^2 + \sigma_j^2)$. Furthermore, posterior mean $\eta_j WTP^*_j - \mu G^*_j = WTP^*_j - G^*_j = \frac{\xi^2}{\xi^2 + \sigma_j^2} (\widehat{WTP}_j - \widehat{G}_j)$ is $N(0, \frac{\xi^4}{\xi^2+\sigma_j^2})$.

Note that 
\begin{align*}
    &\frac{1}{N_1} \max_{v: \mathcal{Y} \to \mathcal{B}_1} E[E_\pi[\langle \nabla w, v \rangle - \langle \nabla w, \hat v_J \rangle]] = \frac{1}{\sqrt{\log J}} \max_{v: \mathcal{Y} \to \mathcal{B}_1} E[E_\pi[\langle \nabla w, v \rangle - \langle \nabla w, \hat v_J \rangle]] \\
    &= \frac{1}{\sqrt{\log J}} E \bigg[ \Vert E_\pi[\nabla w] \Vert_\infty \\
    &\hspace{65pt} - \sum_{j=1}^J (WTP_j^* - G_j^*) \text{sign}(\widehat{WTP}_j - \widehat{G}_j) 1\{j = \text{arg}\max_k |\widehat{WTP}_k - \widehat{G}_k| \} \bigg] \\
    &= \frac{1}{\sqrt{\log J}} E \left[ \left\vert WTP^*_{j^*} - G^*_{j^*} \right\vert - \left\vert WTP^*_{\hat j} - G^*_{\hat j} \right\vert \right],
\end{align*}
where $j^* = \text{arg}\max_j |WTP_j^* - G_j^*|$ and $\hat j = \text{arg}\max_j |\widehat{WTP}_j - \widehat{G}_j|$.

Define
\begin{align*}
    s_g = \sqrt{\xi^2+\sigma_g^2} \qquad b_g = \frac{\xi^2}{\sqrt{\xi^2+\sigma_g^2}}, \qquad g \in \{L,H\}
\end{align*}
and note that $s_H > s_L$ and $b_H < b_L$. For $j \in L$ note that $WTP_j^* - G_j^* = b_L Z_j$ and $\widehat{WTP}_j-\widehat{G}_j = s_L Z_j$, while for $j \in H$ note that $WTP_j^* - G_j^* = b_H Z_j$ and $\widehat{WTP}_j-\widehat{G}_j = s_H Z_j$, where $Z_j$ are i.i.d. standard Gaussian random variables.

Defining $M_L = \max_{j \in L} |Z_j|$ and $M_H = \max_{j \in H} |Z_j|$, note that $j^* \in L$ if and only if $b_L M_L > b_H M_H$, while $\hat j \in L$ if and only if $s_L M_L > s_H M_H$.

Let $\varepsilon$ be small enough so that $s_H(1-\varepsilon) > s_L(1+\varepsilon)$ and $b_L(1-\varepsilon) > b_H(1+\varepsilon)$. Also define $n_J = \lfloor J/2 \rfloor$. Then on the event
\begin{align*}
    A_J = \left\{(1-\varepsilon) \sqrt{2 \log n_J} \leq M_L \leq (1+\varepsilon) \sqrt{2 \log n_J} \right\} \cap \left\{(1-\varepsilon) \sqrt{2 \log n_J} \leq M_H \leq (1+\varepsilon) \sqrt{2 \log n_J} \right\},
\end{align*}
it holds that $s_HM_H \geq s_H (1-\varepsilon)\sqrt{2 \log n_J} > s_L(1+\varepsilon)\sqrt{2 \log n_J} \geq s_L M_L$, so $\hat j \in H$. Meanwhile $|WTP_{j^*}^* - G_{j^*}^*| \geq b_LM_L$. So on event $A_J$,
\begin{align*}
    \left\vert WTP^*_{j^*} - G^*_{j^*} \right\vert - \left\vert WTP^*_{\hat j} - G^*_{\hat j} \right\vert &\geq b_LM_L - b_HM_H \\
    &\geq \left(b_L(1-\varepsilon) - b_H(1+\varepsilon)\right) \sqrt{2 \log n_J}.
\end{align*}
This means
\begin{align*}
    \frac{1}{\sqrt{\log J}} E \left[ \left\vert WTP^*_{j^*} - G^*_{j^*} \right\vert - \left\vert WTP^*_{\hat j} - G^*_{\hat j} \right\vert \right] &\geq \frac{1}{\sqrt{\log J}} K \sqrt{2 \log n_J} Pr(A_J) = K Pr(A_J).
\end{align*}

Note that $Pr(|Z_j| \leq x) = 2\Phi(x) - 1$, so for each $g \in \{L,H\}$, $Pr(M_g \leq x) = (2\Phi(x) - 1)^{|g|}$. Thus
\begin{align*}
    &Pr\left( (1-\varepsilon) \sqrt{2 \log n_J} \leq M_g \leq (1+\varepsilon)\sqrt{2 \log n_J} \right) \\
    &= \left( 2 \Phi\left((1+\varepsilon)\sqrt{2 \log n_J}\right)-1 \right)^{|g|} - \left( 2 \Phi\left((1-\varepsilon)\sqrt{2 \log n_J} \right)-1 \right)^{|g|},
\end{align*}
so $Pr(A_J) > 0$ for each $J \geq 4$. Furthermore $\frac{M_L}{\sqrt{2 \log n_J}} \to 1$ and $\frac{M_H}{\sqrt{2 \log n_J}} \to 1$ by standard results for independent Gaussian random variables (see, for example, \cite{vershynin2018high} Exercise 2.38). Thus there exists some $\underline K$ such that $Pr(A_J) \geq \underline{K}$ for all $J \geq 4$.

Finally, combining results I obtain 
\begin{align*}
    \frac{1}{N_1} \max_{v: \mathcal{Y} \to \mathcal{B}_1} E[E_\pi[\langle \nabla w, v \rangle - \langle \nabla w, \hat v_J \rangle]] \geq \frac{1}{\sqrt{\log J}} K \sqrt{2 \log n_J} = K.
\end{align*}

\end{proof}

\subsection{Proof of Theorem \ref{thm:regret_rates_param}}
Throughout this section I denote the expectation taken as if the prior were $f_\beta$ by $E_\beta[\cdot]$ (and the same for variances and covariances). I use $E[\cdot]$ to mean $E_{\beta_0}[\cdot]$ when unambiguous. Note that throughout I take $X_j$ to be fixed, so all expectations are also conditional on $X_j$ even when not specified.

I first state and prove several other results in order to prove Theorem \ref{thm:regret_rates_param}.

\begin{lemma} \label{lem:param_norm}
Suppose Assumptions \ref{ass:compact}, \ref{ass:eta}, \ref{ass:bdd_sig}, and \ref{ass:param} hold and suppose $J \geq 3$. Define $k \in (0,2]$ by $\frac{1}{k} = \max\{ \frac{1}{2} + \frac{1}{k_s}, \frac{1}{k_m}\}$. Then for any $r \in [1,\infty]$, the following rates hold:
\begin{align*}
    E \left[\left\Vert \widehat{\nabla w}^* - \nabla w^* \right\Vert_r \right] &\lesssim_{\mathcal H} \begin{cases} J^{1/r-1/2} r^{1/k} & r \in [1,\infty) \\ \sqrt{\frac{(\log J)^{2/k}}{J}} & r = \infty \end{cases}.
\end{align*}
\end{lemma}
\begin{proof}
    Let $m_j(y,\beta) = E_\beta[\theta_j | Y_j = y]$. Let $s_{\beta}(\theta_j) = \nabla_\beta \log f_{\beta}(\theta_j \vert X_j)$ denote the prior score. Note that $\nabla_\beta f_\beta(\theta_j \vert X_j) = f_\beta(\theta_j \vert X_j) s_\beta(\theta_j \vert X_j)'$.

    By definition, using $\varphi_{\Sigma_j}(x)$ to denote the density of the multivariate Gaussian with mean zero and variance matrix $\Sigma_j$,
    \begin{align*}
        m_{j}(y,\beta) = \frac{\int \theta \varphi_{\Sigma_j}(y-\theta) f_\beta(\theta \vert X_j) d\theta}{\int \varphi_{\Sigma_j}(y-\theta) f_{\beta}(\theta \vert X_j) d\theta}.
    \end{align*}
    Let
    \begin{align*}
        N_j(y,\beta) &\equiv \int \theta \varphi_{\Sigma_j}(y-\theta) f_\beta(\theta \vert X_j) d\theta, \qquad D_j(y,\beta) \equiv \int \varphi_{\Sigma_j}(y-\theta) f_{\beta}(\theta \vert X_j) d\theta.
    \end{align*}
    By Assumption \ref{ass:param}(2),
    \begin{align*}
        \nabla_\beta N_j(y,\beta) &= \int \theta \varphi_{\Sigma_j}(y-\theta) \nabla_\beta f_\beta(\theta \vert X_j) d\theta = \int \theta \varphi_{\Sigma_j}(y-\theta) f_\beta(\theta \vert X_j) s_\beta(\theta \vert X_j)' d\theta \\
        \nabla_\beta D_j(y,\beta) &= \int \varphi_{\Sigma_j}(y-\theta) \nabla_\beta f_{\beta}(\theta \vert X_j) d\theta = \int \varphi_{\Sigma_j}(y-\theta) f_{\beta}(\theta \vert X_j) s_\beta(\theta \vert X_j)' d\theta.
    \end{align*}

    Note that the posterior density is $\displaystyle \pi_{\beta,\Sigma_j}(\theta|y,X_j) = \frac{\varphi_{\Sigma_j}(y-\theta) f_\beta(\theta \vert X_j)}{D_j(y,\beta)}$. Thus
    \begin{align*}
        \frac{\nabla_\beta N_j(y,\beta)}{D_j(y,\beta)} &= \int \theta s_\beta(\theta \vert X_j)' \pi_{\beta,\Sigma_j}(\theta|y,X_j) d\theta = E_\beta[\theta_j s_\beta(\theta_j \vert X_j)' | X_j, Y_j=y] \\
        \frac{\nabla_\beta D_j(y,\beta)}{D_j(y,\beta)} &= \int s_\beta(\theta \vert X_j)' \pi_{\beta,\Sigma_j}(\theta|y,X_j) d\theta = E_\beta[s_\beta(\theta_j \vert X_j)'|X_j, Y_j = y].
    \end{align*}
    
    Then
    \begin{align*}
        \nabla_\beta m_j(y,\beta) &= \frac{\nabla_\beta N_j(y,\beta)}{D_j(y,\beta)} - \frac{N_j(y,\beta)}{D_j(y,\beta)} \frac{\nabla_\beta D_j(y,\beta)}{D_j(y,\beta)} \\
        &= E_\beta[\theta_j s_\beta(\theta_j \vert X_j)' | X_j, Y_j=y] - E_\beta[\theta_j|X_j, Y_j=y]E_\beta[s_\beta(\theta_j \vert X_j)'|X_j, Y_j = y] \\
        &= Cov_\beta(\theta_j, s_\beta(\theta_j \vert X_j) | X_j, Y_j =y).
    \end{align*}

    For $t \in [0,1]$ define $\beta_t = \beta_0 + t(\widehat\beta - \beta_0)$. Then 
    \begin{align*}
        \widehat\theta_j^* - \theta_j^* &= \left(\int_0^1 \nabla_\beta m_j(Y_j,\beta_t) dt \right) (\widehat\beta - \beta_0).
    \end{align*}
    Define $Z_j = \int_0^1 \left \Vert \nabla_\beta m_j(Y_j,\beta_t) \right\Vert_{op} dt$ and collect $Z = (Z_1, \dots, Z_J)$.
    Then
    \begin{align*}
        \left\Vert \widehat\theta_j^* - \theta_j^* \right\Vert_2 &\leq Z_j \left\Vert \widehat\beta - \beta_0 \right\Vert_2.
    \end{align*}
    Note that because $\eta_j$ and $\mu$ are uniformly bounded, it holds for each $j$ that
    \begin{align*}
        | \widehat{\nabla w}_j^* - \nabla w_j^* | &\lesssim_{\mathcal H} \left\Vert \widehat\theta_j^* - \theta_j^* \right\Vert_2 
        \Rightarrow \left\Vert \widehat{\nabla w}^* - \nabla w^* \right\Vert_r \lesssim_{\mathcal H} \Vert Z \Vert_r \left\Vert \widehat\beta - \beta_0 \right\Vert_2.
    \end{align*}
    Applying Cauchy-Schwarz and Assumption \ref{ass:param}(1),
    \begin{align*}
        E\left[\left\Vert \widehat{\nabla w}^* - \nabla w^* \right\Vert_r \right] &\lesssim_{\mathcal H} \frac{1}{\sqrt{J}} \sqrt{E[\Vert Z \Vert_r^2]}
    \end{align*}

In Lemma \ref{lem:param_moment} below I show that for any $p\geq 1$, $(E[|Z_j|^p])^{1/p} \lesssim_{\mathcal H} p^{1/k}$ for $k \in (0,2]$. Then the proof proceeds with the following three cases:

Case: $r \in [1,2]$:
\begin{align*}
    \sqrt{ E \left[ \left\Vert Z \right\Vert_r^2 \right]} &\leq J^{1/r-1/2} \sqrt{ E \left[ \left\Vert Z \right\Vert_2^2 \right]} = J^{1/r-1/2} \sqrt{ \sum_{j=1}^J E \left[ |Z_j|^2 \right]} \lesssim_{\mathcal H} J^{1/r}
\end{align*}
by H\"older's inequality and assumption.

Case: $r \in (2,\infty)$:
\begin{align*}
    \sqrt{ E \left[ \left\Vert Z \right\Vert_r^2 \right]} &\leq \left(E \left[ \left\Vert Z \right\Vert_r^r \right] \right)^{1/r} = \left(\sum_{j=1}^J E \left[ |Z_j|^r \right] \right)^{1/r} \lesssim_{\mathcal H} r^{1/k} J^{1/r}
\end{align*}
by assumption.

Case: $r = \infty$:
\begin{align*}
    \sqrt{ E \left[ \left\Vert Z \right\Vert_\infty^2 \right]} &= \left(E \left[ \left(\max_{1 \leq j \leq J} |Z_j| \right)^2 \right] \right)^{1/2} \leq \left(E \left[ \left(\max_{1 \leq j \leq J} |Z_j| \right)^{2 \log J} \right] \right)^{1/(2 \log J)} \\
    &\leq \left(\sum_{j=1}^J E \left[ |Z_j|^{2 \log J} \right] \right)^{1/(2 \log J)} \lesssim_{\mathcal H} \left(J (2\log J)^{2\log J/k} \right)^{1/(2 \log J)} \lesssim_{\mathcal H} (\log J)^{1/k}
\end{align*}
for $2 \log J \geq 2 \Rightarrow J \geq e$.
The result follows.
\end{proof}

\begin{lemma} \label{lem:param_moment}
    Suppose Assumptions \ref{ass:compact}, \ref{ass:eta}, \ref{ass:bdd_sig}, and \ref{ass:param} hold. Let $Z_j$ be defined as in the proof of Lemma \ref{lem:param_norm} and let $k$ be defined by $\frac{1}{k} = \max\{\frac{1}{2}+\frac{1}{k_s}, \frac{1}{k_m}\}$. Then $(E[|Z_j|^p])^{1/p} \lesssim_{\mathcal H} p^{1/k}$.
\end{lemma}
\begin{proof}
    Define $G_j(\beta) \equiv \nabla_\beta m_j(Y_j,\beta)$. Note that $Z_j \leq \sup_{\beta \in B} \Vert G_j(\beta) \Vert_{op}$. I first establish a moment bound for $G_j(\beta)$ for fixed $\beta$, then extend to a moment bound on $Z_j$.

    By Cauchy-Schwarz,
\begin{align*}
    \Vert G_j(\beta)\Vert_{op} &= \left\Vert Cov_{\beta}(\theta_j, s_{\beta}(\theta_j \vert X_j) | X_j, Y_j) \right\Vert_{op}
    \leq \left\Vert Var_{\beta}(\theta_j|X_j,Y_j) \right\Vert_{op}^{1/2} \left\Vert Var_{\beta}(s_{\beta}(\theta_j \vert X_j)|X_j,Y_j) \right\Vert_{op}^{1/2}.
\end{align*}

For any random vector $U$, $\Vert Var(U|Y_j,X_j) \Vert_{op} \leq E[\Vert U \Vert_2^2 \vert Y_j,X_j]$. Then by Jensen's inequality,
\begin{align*}
    E_{\beta_0} \left[ \Vert Var_\beta(\theta_j \vert Y_j,X_j) \Vert_{op}^p \right] &\leq E_{\beta_0} \left[ E_{\beta} \Vert \theta_j \Vert_2^{2p} \vert Y_j,X_j] \right].
\end{align*}

Recall that $p_{\beta,j}(y|X_j) = \int \varphi_{\Sigma_j}(y-\theta) f_\beta(\theta \vert X_j) d\theta$ denotes the marginal density of $Y_j$ given $X_j$. 
For fixed $\beta$, change of measure and Cauchy-Schwarz implies
\begin{align*}
    E_{\beta_0} \left[ E_{\beta} \Vert \theta_j \Vert_2^{2p} \vert Y_j,X_j] \right] &= E_{\beta} \left[ \Vert \theta_j \Vert_2^{2p} \frac{p_{\beta_0,j}(Y_j|X_j)}{p_{\beta,j}(Y_j|X_j)}\right] 
    \leq \left( E_\beta[\Vert \theta_j \Vert_2^{4p}] \right)^{1/2} \left( E_{\beta} \left[ \left(\frac{p_{\beta_0,j}(Y_j|X_j)}{p_{\beta,j}(Y_j|X_j)}\right)^2\right] \right)^{1/2}.
\end{align*}
By Assumptions \ref{ass:param}(3) and \ref{ass:param}(4),
\begin{align*}
    \sup_{\beta \in B} \left( E_{\beta_0} \left[ \Vert Var_\beta(\theta_j \vert Y_j,X_j) \Vert_{op}^p \right] \right)^{1/2p} &\lesssim_{\mathcal H} \sqrt{p}.
\end{align*}
The same argument using Assumptions \ref{ass:param}(4) and \ref{ass:param}(5) gives
\begin{align*}
    \sup_{\beta \in B} \left( E_{\beta_0} \left[ \Vert Var_\beta(s_\beta(\theta|X_j) \vert Y_j,X_j) \Vert_{op}^p \right] \right)^{1/2p} &\lesssim_{\mathcal H} p^{1/k_s}.
\end{align*}

Then by Cauchy-Schwarz,
\begin{align*}
    \sup_{\beta \in B} \left( E_{\beta_0}[\Vert G_j(\beta) \Vert_{op}^p] \right)^{1/p} &\lesssim_{\mathcal H} p^{1/2+1/k_s} \lesssim_{\mathcal H} p^{1/k}.
\end{align*}

To extend the moment bound to $Z_j$, note that by Assumption \ref{ass:param}(6), for every $\beta,\beta'\in B$
\begin{align*}
    \left( E_{\beta_0}[\Vert G_j(\beta) - G_j(\beta') \Vert_{op}^p] \right)^{1/p} &\lesssim_{\mathcal H} p^{1/k} \Vert \beta - \beta'\Vert_2.
\end{align*}
Since $B$ is compact, $\log N(\varepsilon, B, \Vert \cdot \Vert_2) \lesssim_{\mathcal H} \log\left( KD_B/\varepsilon \right)$ where $D_B$ is the diameter of $B$.
For $\ell \geq 1$ let $T_\ell$ be a minimal $2^{-\ell} D_B$-cover of $B$ and let $T_0 = \{\beta^\circ\}$ for some fixed $\beta^\circ \in B$. For every $\beta \in B$ let $\pi_\ell(\beta)$ denote a closest element of $T_\ell$ to $\beta$. 

By Assumption \ref{ass:param}(2) and dominated convergence, $G_j(\beta)$ is continuous in $\beta$, so
\begin{align*}
    G_j(\beta) = G_j(\beta^\circ) + \sum_{\ell = 1}^\infty \left( G_j(\pi_\ell(\beta)) - G_j(\pi_{\ell-1}(\beta)) \right).
\end{align*}
Note $\Vert \pi_\ell(\beta) - \pi_{\ell-1}(\beta) \Vert_2 \leq \Vert \pi_\ell(\beta) - \beta \Vert_2 + \Vert \pi_{\ell - 1}(\beta) - \beta \Vert_2 \leq 2^{-\ell} D_B + 2^{-(\ell-1)} D_B = 3 \cdot 2^{-\ell} D_B$. 
Also note that the number of pairs $(\pi_\ell(\beta), \pi_{\ell-1}(\beta))$ is at most $|T_\ell||T_{\ell-1}|$, which satisfies $\log(|T_\ell||T_{\ell-1}|) \lesssim_{\mathcal H} \ell$.

I also require the inequality that if random variables $W_1,\dots,W_N$ satisfy $\Vert W_i \Vert_{L^q} \leq K q^{1/k}$ for every $q \geq 1$ then $\left\Vert \max_{i \in [N]} |W_i| \right\Vert_{L^p} \lesssim K(p+\log N)^{1/k}$.

Then
\begin{align*}
    \left( E_{\beta_0} \left[ \sup_{\beta \in B} \Vert G_j(\pi_\ell(\beta)) - G_j(\pi_{\ell - 1}(\beta)) \Vert_{op}^p \right] \right)^{1/p} &\lesssim_{\mathcal H} 2^{-\ell} D_B(p+\ell)^{1/k}.
\end{align*}
Then the telescoping representation and Minkowski's inequality implies
\begin{align*}
    \left( E_{\beta_0} \left[ \sup_{\beta \in B} \Vert G_j(\beta) \Vert_{op}^p \right] \right)^{1/p} &\lesssim_{\mathcal H} p^{1/k} + D_B \sum_{\ell=1}^\infty 2^{-\ell} (p+\ell)^{1/k} \lesssim_{\mathcal H} p^{1/k}.
\end{align*}

The result follows.
\end{proof}

With these results, I can prove Theorem \ref{thm:regret_rates_param}.

\begin{proof}
For the first objective,
\begin{align*}
    \max_{v: \mathcal{Y} \to V_J} E \left[ \left\vert E_{\beta_0} [ \langle \nabla w, v(Y_{1:J}) \rangle - \langle \widehat{\nabla w}^*, v(Y_{1:J}) \rangle \vert Y_{1:J} ] \right\vert \right] &= E \left[ \max_{v \in V_J} \left\vert \langle \nabla w^*-\widehat{\nabla w}^*, v \rangle  \right\vert \right] \\
    &\leq E \left[ \max_{v \in \mathcal{B}_p} \left\vert \langle \nabla w^*-\widehat{\nabla w}^*, v \rangle  \right\vert \right] \\
    &= E \left[ \left\Vert \widehat{\nabla w}^* - \nabla w^* \right\Vert_{\frac{p}{p-1}} \right].
\end{align*}

For the second objective,
\begin{align*}
    &\max_{v: \mathcal{Y} \to V_J} E \left[E_{\beta_0} [ \langle \nabla w, v(Y_{1:J}) \rangle - \langle \nabla w, \hat v^*_J \rangle \vert Y_{1:J} ] \right] \\
    &\qquad = E \left[\max_{v \in V_J} \langle \nabla w^*, v \rangle - \max_{v \in V_J} \langle \widehat{\nabla w}^*, v \rangle + \max_{v \in V_J} \langle \widehat{\nabla w}^*, v \rangle - \langle \nabla w^*, \hat v^*_J \rangle \right] \\
    &\qquad \leq E \left[\max_{v \in V_J} \langle \nabla w^*- \widehat{\nabla w}^*, v \rangle + \langle \widehat{\nabla w}^*, \hat v^*_J \rangle - \langle \nabla w^*, \hat v^*_J \rangle \right] \\
    &\qquad \leq E \left[\max_{v \in \mathcal{B}_p} \langle \nabla w^* - \widehat{\nabla w}^*, v \rangle + \langle \widehat{\nabla w}^* - \nabla w^*, \hat v^*_J \rangle \right] \\
    &\qquad \leq 2E \left[ \left\Vert \widehat{\nabla w}^* - \nabla w^* \right\Vert_{\frac{p}{p-1}} \right].
\end{align*}

If $p = 1$: From Lemma \ref{lem:param_norm}, both the normalized first objective and the normalized second objective are upper bounded by:
\begin{align*}
    2\frac{1}{\sqrt{\log J}} \sqrt{\frac{(\log J)^{2/k}}{J}} &\lesssim_{\mathcal H} J^{-\frac{1}{2}} (\log J)^{\frac{1}{k} - \frac{1}{2}}.
\end{align*}

If $1 < p < \infty$: From Lemma \ref{lem:param_norm}, both the normalized first objective and the normalized second objective are upper bounded by:
\begin{align*}
    2 \frac{1}{J^{\frac{p-1}{p}} \min\left\{ \sqrt{\frac{p}{p-1}}, \sqrt{\log J} \right\}} J^{\frac{p-1}{p}-\frac{1}{2}} \left(\frac{p}{p-1}\right)^{\frac{1}{k}} &\lesssim_{\mathcal H} J^{-\frac{1}{2}} \left( \frac{p}{p-1}\right)^{\frac{1}{k}} \frac{1}{\sqrt{\min \left\{ \frac{p}{p-1}, \log J \right\}}}
\end{align*}

If $p = \infty$: From Lemma \ref{lem:param_norm}, both the normalized first objective and the normalized second objective are upper bounded by:
\begin{align*}
    2 \frac{1}{J} J^{\frac{1}{2}} &\lesssim_{\mathcal H} J^{-\frac{1}{2}}.
\end{align*}
\end{proof}

\subsection{Proof of Theorem \ref{thm:regret_rates}}

For notational simplicity denote the posterior expectation $E_{F_0, \alpha_0, \Omega_0} \left[ \cdot \vert \widehat{WTP}_{1:J}, \widehat{G}_{1:J} \right]$ by $E_\pi$, where $\pi$ represents expectation under the true posterior. Denote all oracle posterior means with an asterisk and all empirical Bayes posterior mean estimates with a hat and asterisk, so that the oracle posterior mean of $\tau_j$ is $\tau_j^*$ and the empirical Bayes posterior mean estimate for $\tau_j$ is $\widehat{\tau}_j^*$. 
In what follows I denote the type of policy $j$ by $t_j$, so that $X_j = t_j$ and the location and scale parameters associated with policy $j$ are $\alpha_{w,t_j}, \alpha_{g,t_j}$, and $\Omega_{t_j}$.

In order to prove Theorem \ref{thm:regret_rates}, I must first state and prove several other results.
I first state an upper bound on mean squared error regret, which is uniform for a given set of hyperparameters $\mathcal H$. I provide a proof of the result in Supplemental Appendix \ref{app:mse_proof}. This result is analogous to Theorem 1 in \citeappendix{chen2022empirical}, and the proof of the result is an extension of the proof in \citeappendix{chen2022empirical} to multivariate data, using additional results from \citeappendix{saha2020nonparametric}, \citeappendix{soloff2024multivariate}, and \citeappendix{jiang2020general}.

\begin{theorem}\label{thm:mse}
Suppose Assumptions \ref{ass:compact}, \ref{ass:bdd_sig}, \ref{ass:est_var}, and \ref{ass:good_approx} hold. Suppose also that $J \geq \max\{\frac{5}{\underline{k}}, 7\}$. Then
\begin{align*}
    \frac{1}{J} \sum_{j=1}^J E \left[ \left \Vert \begin{pmatrix} WTP_j^* \\ G_j^* \end{pmatrix} - \begin{pmatrix} \widehat{WTP}_j^* \\ \widehat{G}_j^* \end{pmatrix} \right \Vert_2^2 \right] \lesssim_{\mathcal H} \frac{(\log J)^6}{J}.
\end{align*}
\end{theorem}

From the upper bound on mean squared error regret one can obtain an upper bound on the squared norm difference between the true posterior expected and empirical Bayes gradients.
\begin{corollary}\label{cor:gradient}
    Under Assumptions \ref{ass:compact}, \ref{ass:eta}, \ref{ass:bdd_sig}, \ref{ass:est_var}, and \ref{ass:good_approx}, if $J \geq \max\{\frac{5}{\underline{k}}, 7\}$ it holds that 
    \begin{align*}
        E \left[ \left\Vert E_\pi [\nabla w ]-\widehat{\nabla w}^* \right\Vert_{2} \right] &\lesssim_{\mathcal H} (\log J)^3.
    \end{align*}
\end{corollary}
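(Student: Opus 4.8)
The plan is to reduce the corollary to the mean squared error bound of Theorem \ref{thm:mse} by a pointwise comparison of the gradient difference to the posterior-mean estimation error, and then to pass from a squared bound to the stated bound via Jensen's inequality. The key observation is that each component of $E_\pi[\nabla w] - \widehat{\nabla w}^*$ is a fixed linear combination (with coefficients $\eta_j$ and $-\mu$) of the benefit and net cost estimation errors, so its magnitude is controlled by the Euclidean estimation error of the posterior mean pair.

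First I would write out the $j$th component of the gradient difference. Since $(E_\pi[\nabla w])_j = \eta_j WTP_j^* - \mu G_j^*$ and $(\widehat{\nabla w}^*)_j = \eta_j \widehat{WTP}_j^* - \mu \widehat{G}_j^*$, their difference equals $\eta_j(WTP_j^* - \widehat{WTP}_j^*) - \mu(G_j^* - \widehat{G}_j^*)$. Applying the elementary inequality $(x - y)^2 \leq 2x^2 + 2y^2$ together with Assumption \ref{ass:eta}, which gives $|\eta_j| \leq M$ and $\mu < \infty$, each squared component is at most
\begin{equation*}
    2\max(M^2, \mu^2)\left\Vert \begin{pmatrix} WTP_j^* \\ G_j^* \end{pmatrix} - \begin{pmatrix} \widehat{WTP}_j^* \\ \widehat{G}_j^* \end{pmatrix} \right\Vert_2^2.
\end{equation*}
Summing over $j$ and taking expectations then yields
\begin{equation*}
    E\left[\left\Vert E_\pi[\nabla w] - \widehat{\nabla w}^* \right\Vert_2^2\right] \leq 2\max(M^2, \mu^2)\sum_{j=1}^J E\left[\left\Vert \begin{pmatrix} WTP_j^* \\ G_j^* \end{pmatrix} - \begin{pmatrix} \widehat{WTP}_j^* \\ \widehat{G}_j^* \end{pmatrix} \right\Vert_2^2\right].
\end{equation*}

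Next I would invoke Theorem \ref{thm:mse}, which bounds the average of the $J$ summands on the right by $\lesssim_{\mathcal H} (\log J)^6 / J$; multiplying by $J$ shows the sum itself is $\lesssim_{\mathcal H} (\log J)^6$, and hence $E[\Vert E_\pi[\nabla w] - \widehat{\nabla w}^* \Vert_2^2] \lesssim_{\mathcal H} (\log J)^6$. Here the constant $2\max(M^2,\mu^2)$ is absorbed into $\lesssim_{\mathcal H}$ because $M$ and $\mu$ are elements of the hyperparameter vector $\mathcal H$, so the bound remains uniform over the admissible class of data-generating processes. Finally, by Jensen's inequality applied to the concave map $x \mapsto \sqrt{x}$,
\begin{equation*}
    E\left[\left\Vert E_\pi[\nabla w] - \widehat{\nabla w}^* \right\Vert_2\right] \leq \sqrt{E\left[\left\Vert E_\pi[\nabla w] - \widehat{\nabla w}^* \right\Vert_2^2\right]} \lesssim_{\mathcal H} \sqrt{(\log J)^6} = (\log J)^3,
\end{equation*}
which is the claimed bound.

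I do not anticipate any genuine obstacle: every step is a routine inequality, and all the work has effectively been front-loaded into Theorem \ref{thm:mse}. The only points requiring care are bookkeeping ones, namely verifying that the coefficients $\eta_j, \mu$ enter only through the $\mathcal H$-dependent constant so that uniformity is preserved, and that the square-root loss incurred by Jensen's inequality turns the $(\log J)^6$ MSE rate into exactly the $(\log J)^3$ rate in the statement.
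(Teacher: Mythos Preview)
Your proposal is correct and follows essentially the same approach as the paper: both bound the squared $\ell_2$ gradient difference componentwise via $(a-b)^2 \le 2a^2+2b^2$ and Assumption~\ref{ass:eta}, invoke Theorem~\ref{thm:mse}, and then pass to the first moment by Jensen. The only cosmetic difference is ordering---the paper applies Jensen first (after dividing by $\sqrt{J}$) and then bounds the squared norm, whereas you bound the squared norm first and apply Jensen last---but the substance is identical.
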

\begin{proof}
Note
\begin{align*}
    \frac{1}{\sqrt{J}} E \left[ \left\Vert E_\pi [\nabla w ]-\widehat{\nabla w}^* \right\Vert_{2} \right] &\leq \sqrt{ \frac{1}{J} E \left[ \left\Vert E_\pi [\nabla w ]-\widehat{\nabla w}^* \right\Vert_{2}^2 \right] } \\
    &\lesssim_{\mathcal H}  \sqrt{E \left[\frac{1}{J} \sum_{j=1}^J  \left \Vert \begin{pmatrix} WTP_j^* \\ G_j^* \end{pmatrix} - \begin{pmatrix} \widehat{WTP}_j^* \\ \widehat{G}_j^* \end{pmatrix} \right \Vert_2^2 \right]} \\
    &\lesssim_{\mathcal H} \frac{(\log J)^3}{\sqrt{J}},
\end{align*}
where the first line follows from Jensen's inequality, the second line follows from Assumption \ref{ass:eta} and $(a-b)^2 \leq 2a^2+2b^2$, and the third line follows from Theorem \ref{thm:mse}. The result follows.
\end{proof}

With these results, I can prove Theorem \ref{thm:regret_rates}.

\begin{proof}
For the first objective,
\begin{align*}
    \max_{v: \mathcal{Y} \to V_J} E \left[ \left\vert E_\pi [ \langle \nabla w, v(Y_{1:J}) \rangle - \langle \widehat{\nabla w}^*, v(Y_{1:J}) \rangle ] \right\vert \right] &= E \left[ \max_{v \in V_J} \left\vert \langle E_\pi [\nabla w ]-\widehat{\nabla w}^*, v \rangle  \right\vert \right] \\
    &\leq E \left[ \max_{v \in \mathcal{B}_p} \left\vert \langle E_\pi [\nabla w ]-\widehat{\nabla w}^*, v \rangle  \right\vert \right] \\
    &= E \left[ \left\Vert E_\pi [\nabla w ]-\widehat{\nabla w}^* \right\Vert_{\frac{p}{p-1}} \right].
\end{align*}

For the second objective,
\begin{align*}
    &\max_{v: \mathcal{Y} \to V_J} E \left[E_\pi [ \langle \nabla w, v(Y_{1:J}) \rangle - \langle \nabla w, \hat v^*_J \rangle ] \right] \\
    &\qquad = E \left[\max_{v \in V_J} \langle E_\pi [\nabla w ], v \rangle - \max_{v \in V_J} \langle \widehat{\nabla w}^*, v \rangle + \max_{v \in V_J} \langle \widehat{\nabla w}^*, v \rangle - \langle E_\pi [\nabla w ], \hat v^*_J \rangle \right] \\
    &\qquad \leq E \left[\max_{v \in V_J} \langle E_\pi [\nabla w ]- \widehat{\nabla w}^*, v \rangle + \langle \widehat{\nabla w}^*, \hat v^*_J \rangle - \langle E_\pi [\nabla w ], \hat v^*_J \rangle \right] \\
    &\qquad \leq E \left[\max_{v \in \mathcal{B}_p} \langle E_\pi [\nabla w ]- \widehat{\nabla w}^*, v \rangle + \langle \widehat{\nabla w}^* - E_\pi [\nabla w ], \hat v^*_J \rangle \right] \\
    &\qquad \leq 2E \left[ \left\Vert E_\pi [\nabla w ]-\widehat{\nabla w}^* \right\Vert_{\frac{p}{p-1}} \right].
\end{align*}

By H\"older's inequality, if $p \in [1,2)$ then $\left\Vert E_\pi [\nabla w ]-\widehat{\nabla w}^* \right\Vert_{\frac{p}{p-1}} \leq \left\Vert E_\pi [\nabla w ]-\widehat{\nabla w}^* \right\Vert_{2}$ and if $p \in [2,\infty]$ then $\left\Vert E_\pi [\nabla w ]-\widehat{\nabla w}^* \right\Vert_{\frac{p}{p-1}} \leq J^{\frac{1}{2} - \frac{1}{p}} \left\Vert E_\pi [\nabla w ]-\widehat{\nabla w}^* \right\Vert_{2}$.

So if $p \in [1,2)$ then the normalized first objective is
\begin{align*}
    \frac{1}{N_p} \max_{v: \mathcal{Y} \to V_J} E \left[ \left\vert E_\pi [ \langle \nabla w, v(Y_{1:J}) \rangle - \langle \widehat{\nabla w}^*, v(Y_{1:J}) \rangle ] \right\vert \right] &\leq J^{-\frac{p-1}{p}} E \left[ \left\Vert E_\pi [\nabla w ]-\widehat{\nabla w}^* \right\Vert_{\frac{p}{p-1}} \right] \\
    &\leq J^{-\frac{p-1}{p}} E \left[ \left\Vert E_\pi [\nabla w ]-\widehat{\nabla w}^* \right\Vert_{2} \right] \\
    &\lesssim_{\mathcal H} J^{\frac{1}{p} - 1} (\log J)^3 \qquad \text{from Corollary \ref{cor:gradient}},
\end{align*}
while if $p \in [2, \infty]$ the normalized first objective is 
\begin{align*}
    \frac{1}{N_p} \max_{v: \mathcal{Y} \to V_J} E \left[ \left\vert E_\pi [ \langle \nabla w, v(Y_{1:J}) \rangle - \langle \widehat{\nabla w}^*, v(Y_{1:J}) \rangle ] \right\vert \right] &\leq J^{-\frac{p-1}{p}} E \left[ \left\Vert E_\pi [\nabla w ]-\widehat{\nabla w}^* \right\Vert_{\frac{p}{p-1}} \right] \\
    &\leq J^{-\frac{1}{2}} E \left[ \left\Vert E_\pi [\nabla w ]-\widehat{\nabla w}^* \right\Vert_{2} \right] \\
    &\lesssim_{\mathcal H} J^{-\frac{1}{2}} (\log J)^3 \qquad \text{from Corollary \ref{cor:gradient}}.
\end{align*}
Similarly, if $p \in [1,2)$ then the normalized second objective is
\begin{align*}
    \frac{1}{N_p} \max_{v: \mathcal{Y} \to V_J} E \left[ E_\pi [ \langle \nabla w, v(Y_{1:J}) \rangle - \langle \nabla w, \hat v^*_J \rangle ] \right] &\lesssim_{\mathcal H} J^{\frac{1}{p} - 1} (\log J)^3.
\end{align*}
while if $p \in [2, \infty]$ the normalized second objective is 
\begin{align*}
    \frac{1}{N_p} \max_{v: \mathcal{Y} \to V_J} E \left[ E_\pi [ \langle \nabla w, v(Y_{1:J}) \rangle - \langle \nabla w, \hat v^*_J \rangle ] \right] &\lesssim_{\mathcal H} J^{-\frac{1}{2}} (\log J)^3.
\end{align*}

\end{proof}

\section{Appendix: Proof of Theorem \ref{thm:mse}} \label{app:mse_proof}

I restate the theorem for convenience.

\renewcommand{\thetheorem}{\ref{thm:mse}}
\begin{theorem}
Suppose Assumptions \ref{ass:compact}, \ref{ass:bdd_sig}, \ref{ass:est_var}, and \ref{ass:good_approx} hold. Suppose also that $J \geq \max \{\frac{5}{\underline{k}}, 7\}$. Then
\begin{align*}
    \frac{1}{J} \sum_{j=1}^J E \left[ \left \Vert \begin{pmatrix} WTP_j^* \\ G_j^* \end{pmatrix} - \begin{pmatrix} \widehat{WTP}_j^* \\ \widehat{G}_j^* \end{pmatrix} \right \Vert_2^2 \right] \lesssim_{\mathcal H} \frac{(\log J)^6}{J}.
\end{align*}
\end{theorem}
\renewcommand{\thetheorem}{\thesection.\arabic{theorem}}
\setcounter{theorem}{0}

The proof of the theorem will proceed entirely analogously to the proof of Theorem 1 in \citeappendix{chen2022empirical}.

\subsection{Notation}

I first review notation defined in the main text and introduce new notation.

Let $Y_j \equiv (\widehat{WTP}_j, \widehat{G}_j)^T, \theta_j \equiv (WTP_j, G_j)^T$. Then $Y_j \vert \theta_j, X_j, \Sigma_j \overset{\text{indep.}}{\sim} N(\theta_j, \Sigma_j)$ and $\theta_j = a_j(\alpha_0) + B_j(\Omega_0)^{1/2} \tau_j$, where $\tau_j \vert X_j, \Sigma_j \overset{\text{i.i.d.}}{\sim} F_0, a_j(\alpha_0) \equiv a(X_j; \alpha_0)$, and $B_j(\Omega_0) \equiv B(X_j; \Omega_0)$. 

Let $Z_j = B_j(\Omega_0)^{-1/2}(Y_j - a_j(\alpha_0))$ and $\Psi_j = B_j(\Omega_0)^{-1/2} \Sigma_j B_j(\Omega_0)^{-1/2}$. Then $Z_j \vert \tau_j, X_j, \Psi_j \overset{\text{indep.}}{\sim} N(\tau_j, \Psi_j)$ with $\tau_j \vert X_j,\Psi_j \overset{\text{i.i.d.}}{\sim} F_0$.

Let $\widehat\alpha$ and $\widehat\Omega$ be estimators of $\alpha_0$ and $\Omega_0$. 
Throughout the appendix I will occasionally use the shorthand $\chi = (\alpha, \Omega)$.
Let 
$$\Vert \widehat\chi - \chi_0 \Vert_J \equiv \max_{1\leq j \leq J} \max \left\{ \left\Vert a_j(\widehat\alpha) - a_j(\alpha_0) \right\Vert_\infty, \left\Vert B_j(\widehat\Omega)^{1/2} - B_j(\Omega_0)^{1/2} \right\Vert_{op} \right\},$$ 
where $\Vert \cdot \Vert_{op}$ denotes the Schatten $\infty$-norm.
For a given $\widehat\alpha, \widehat\Omega$ define
\begin{align*}
    \widehat Z_j &= \widehat{Z}_j(\widehat\alpha, \widehat\Omega) = B_j(\widehat \Omega)^{-1/2}(Y_j - a_j(\widehat\alpha)) = B_j(\widehat \Omega)^{-1/2}(B_{j}(\Omega_0)^{1/2}Z_j + a_j(\alpha_0) - a_j(\widehat\alpha)), \\
    \widehat\Psi_j &= \widehat\Psi_j(\widehat\alpha, \widehat\Omega) = B_j(\widehat \Omega)^{-1/2}\Sigma_jB_j(\widehat \Omega)^{-1/2}.
\end{align*}
Throughout this appendix I condition on $\Sigma_{1:J}$ and $X_{1:J}$ and thus take them as fixed. 

For any distribution $F$ and $\Psi$ define
\begin{align*}
    \varphi_{\Psi}(x) &= \exp \left(-\frac{1}{2}x^T \Psi^{-1} x \right) \\
    f_{F, \Psi}(x) &= \int \frac{1}{\sqrt{\det(2\pi \Psi)}} \varphi_{\Psi}(x-\tau) dF(\tau),
\end{align*}
and for any $F, \alpha, \Omega$ define
\begin{align*}
    \psi_j(z, \alpha,\Omega, F) &= \log \left(\int \varphi_{\widehat{\Psi}_j(\alpha,\Omega)} (\hat Z_j(\alpha,\Omega)-\tau) dF(\tau) \right).
\end{align*}
Denote the posterior mean of $\theta_j$ under any $F, \alpha, \Omega$ by
\begin{align*}
    \widehat\theta^*_{j, F, \alpha, \Omega} &= a_j(\alpha) + B_j(\Omega)^{1/2} \underbrace{E_{F, \widehat{\Psi}_j(\alpha,\Omega)}[\tau \vert \widehat{Z}_j(\alpha,\Omega)]}_{\widehat{\tau}^*_{j, F, \alpha, \Omega}},
\end{align*}
where $E_{F,\Psi}[h(\tau,Z)\vert z]$ is the posterior mean of $h(\tau, Z)$ when $Z = z$ under the model $\tau \sim F, Z \vert \tau \sim N(\tau, \Psi)$:
\begin{align*}
    E_{F,\Psi}[h(\tau,Z)\vert z] &= \frac{\int h(\tau,z) \varphi_\Psi(z-\tau) dF(\tau)}{\int \varphi_\Psi(z-\tau) dF(\tau)}.
\end{align*}
By Tweedie's formula
\begin{align*}
    E_{F, \Psi_j}[\tau_j \vert \widehat{Z}_j] &= \widehat{Z}_j + \Psi_j \frac{\nabla f_{F, \Psi_j}(\widehat{Z}_j)}{f_{F, \Psi_j}(\widehat{Z}_j)} \\
    \Rightarrow \widehat{\theta}^*_{j, F, \alpha, \Omega} &= Y_j + B_j(\Omega)^{1/2}\widehat{\Psi}_j(\alpha,\Omega) \frac{\nabla f_{F, \widehat{\Psi}_j(\alpha,\Omega)}(\widehat{Z}_j)}{f_{F, \widehat{\Psi}_j(\alpha,\Omega)}(\widehat{Z}_j)}.
\end{align*}
Denote by $\widehat{\theta}_{j,\widehat{F}_J, \widehat\alpha, \widehat\Omega}^*$ the empirical Bayes posterior mean of $\theta_j$ and by $\widehat{\theta}_{j,F_0, \alpha_0, \Omega_0}^* \equiv \theta_j^*$ the oracle posterior mean of $\theta_j$. 
I will collect $\theta_1, \dots, \theta_J$ into a $J \times 2$ matrix $\theta$. Similarly, $\widehat{\theta}^*_{F, \alpha, \Omega}$ collects $\widehat{\theta}^*_{1, F, \alpha, \Omega}, \dots, \widehat{\theta}^*_{J, F, \alpha, \Omega}$, and analogously for $\tau$ and $\widehat\tau^*_{F, \alpha, \Omega}$.

For some $\rho > 0$ define the regularized posterior mean as
\begin{align*}
    \widehat\theta_{j,F,\alpha,\Omega,\rho}^* = Y_j + B_j(\Omega)^{1/2} \widehat\Psi_j(\alpha,\Omega) \frac{\nabla f_{F,\widehat\Psi_j(\alpha,\Omega)}(\widehat{Z}_j(\alpha,\Omega))}{\max(f_{F,\widehat\Psi_j(\alpha,\Omega)}(\widehat{Z}_j(\alpha,\Omega)), \frac{\rho}{\sqrt{\det{\widehat\Psi_j(\alpha,\Omega)}}}) }
\end{align*}
and $\theta^*_{j, \rho} = \widehat\theta_{j,F_0,\alpha_0,\Omega_0,\rho}^*$. Similarly, define
\begin{align*}
    \widehat\tau_{j,F,\alpha,\Omega,\rho}^* = \widehat{Z}_j(\alpha,\Omega) + \widehat\Psi_j(\alpha,\Omega) \frac{\nabla f_{F,\widehat\Psi_j(\alpha,\Omega)}(\widehat{Z}_j(\alpha,\Omega))}{\max(f_{F,\widehat\Psi_j(\alpha,\Omega)}(\widehat{Z}_j(\alpha,\Omega)), \frac{\rho}{\sqrt{\det{\widehat\Psi_j(\alpha,\Omega)}}}) }
\end{align*}
and $\tau^*_{j, \rho} = \widehat\tau_{j,F_0,\alpha_0,\Omega_0,\rho}^*$.

Define
\begin{align*}
    \varphi_+(\rho) &= \sqrt{\log \frac{1}{(2\pi \rho)^2}}, \qquad \rho \in (0, (2\pi)^{-1})
\end{align*}
and observe that $\varphi_+(\rho) \lesssim \sqrt{\log(1/\rho)}$.

I will choose regularization parameter
\begin{align}\label{eq:oa3.5}
    \rho_J &= \min\left(\frac{1}{J^4} e^{-C_{\mathcal H,\rho}M_J^2 \Delta_J} , \frac{1}{2\pi e} \right),
\end{align}
where constant $C_{\mathcal H,\rho}$ will be chosen to satisfy Lemma \ref{lem:oa3.1}.

\subsection{Proof of main result}

Define the event 
\begin{align*}
    A_J &\equiv \left\{ \Vert \widehat\chi - \chi_0 \Vert_J \leq \Delta_J, \bar Z_J \equiv \max_{j \in [J]}( \max(\Vert Z_j \Vert_2, 1)) \leq M_J \right\}
\end{align*}
for constants $\Delta_J, M_J$ to be chosen. To prove the main result, I consider events $A_J$ and $A_J^C$ separately. Lemma \ref{lem:oa3.2} controls MSE regret on $A_J^C$, while Theorem \ref{thm:a.1} controls MSE regret on $A_J$.
While many of the results are true on $A_J$ for a broad class of $\Delta_J, M_J$, the ones I consider in this proof to obtain the rate of interest are
\begin{align} \label{eq:oa3.4}
    \Delta_J &= C_{\mathcal H} J^{-1/2}(\log J)^{1/2}, \qquad 
    M_J = (C_{\mathcal H}+1)(C_{2,\mathcal H}^{-1} \log J)^{1/2},
\end{align}
for constants $C_{\mathcal H}$ to be chosen and $C_{2,\mathcal H}$ determined by Theorem SM6.1.

\begin{lemma} \label{lem:oa3.2}
Under Assumptions \ref{ass:compact}, \ref{ass:bdd_sig}, \ref{ass:est_var}, and \ref{ass:good_approx}, suppose $\Delta_J$ and $M_J$ are of the form \eqref{eq:oa3.4} such that $Pr(\bar Z_J > M_J) \leq J^{-2}$. Then I can decompose
\begin{align*}
    \frac{1}{J} \sum_{j=1}^J E[\Vert \widehat{\theta}_{j,\widehat{F}_J, \widehat\alpha, \widehat\Omega}^* - \theta_j^* \Vert_2^2 \mathbbm{1}(\Vert \widehat\chi - \chi_0 \Vert_J > \Delta_J)] &\lesssim_{\mathcal H} Pr(\Vert \widehat\chi - \chi_0 \Vert_J > \Delta_J)^{1/2} (\log J) \\
    \frac{1}{J} \sum_{j=1}^J E[\Vert \widehat{\theta}_{j,\widehat{F}_J, \widehat\alpha, \widehat\Omega}^* - \theta_j^* \Vert_2^2 \mathbbm{1}(\bar Z_J > M_J)] &\lesssim_{\mathcal H} \frac{1}{J} (\log J).
\end{align*}
\end{lemma}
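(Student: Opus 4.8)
The plan is to reduce both displays to a single, unconditional fourth‑moment bound on the estimation error and then split the indicator by Cauchy–Schwarz. Concretely, I would first establish
\[
\max_{j\in[J]} E\Big[\big\|\widehat{\theta}^*_{j,\widehat{F}_J,\widehat\alpha,\widehat\Omega}-\theta_j^*\big\|_2^4\Big]\lesssim_{\mathcal H}(\log J)^2,
\]
where the expectation is over all randomness, so the bound applies on every event. Given this, the first display follows from $E[\|\cdot\|_2^2\,\mathbbm{1}(\|\widehat\chi-\chi_0\|_\infty>\Delta_J)]\le (E\|\cdot\|_2^4)^{1/2}\,Pr(\|\widehat\chi-\chi_0\|_\infty>\Delta_J)^{1/2}$, averaged over $j$; and the second from the identical split together with the hypothesis $Pr(\bar Z_J>M_J)\le J^{-2}$, which contributes the extra factor $(J^{-2})^{1/2}=J^{-1}$ and hence the $\tfrac1J(\log J)$ rate.

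To prove the moment bound I would decompose
\[
\widehat{\theta}^*_{j,\widehat{F}_J,\widehat\alpha,\widehat\Omega}-\theta_j^* = (\widehat\alpha_{t_j}-\alpha_{t_j}) + \widehat\Omega_{t_j}^{1/2}\widehat\tau_j^* - \Omega_{t_j}^{1/2}\tau_j^*,
\]
and bound the three pieces. The first is the location estimation error; since $\widehat\alpha_t$ is a sample average over $n_t\asymp J$ (Assumption \ref{ass:est_var}(2)) conditionally sub‑Gaussian $Y_i$'s, $E[\|\widehat\alpha_{t_j}-\alpha_{t_j}\|_2^4]\lesssim_{\mathcal H}J^{-2}$. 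The third piece is deterministic: the oracle posterior mean $\tau_j^*=E_{F_0,\Psi_j}[\tau\mid Z_j]$ is a posterior average of points in $\operatorname{supp}(F_0)\subseteq[\underbar s_w,\bar s_w]\times[\underbar s_g,\bar s_g]$ by Assumption \ref{ass:compact}, so $\|\tau_j^*\|_2\le C_{S_0}$, and $\Omega_{t_j}\preceq\overline c I_2$ (Assumption \ref{ass:bdd_var}) gives $\|\Omega_{t_j}^{1/2}\tau_j^*\|_2\le\sqrt{\overline c}\,C_{S_0}$. The key point is that the (possibly large) location $\alpha_{t_j}$ enters only through the small, centered error $\widehat\alpha_{t_j}-\alpha_{t_j}$, so the bound will be uniform over data‑generating processes regardless of $\|\alpha_{t_j}\|$.

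The substantive work is the middle piece $\widehat\Omega_{t_j}^{1/2}\widehat\tau_j^*$, where $\widehat\tau_j^*=E_{\widehat{F}_J,\widehat\Psi_j}[\tau\mid\widehat Z_j]$ is a posterior average of points in $\operatorname{supp}(\widehat F_J)$, the support of an arbitrary estimated prior. Here I would invoke the structural property of the NPMLE that its support is contained in the convex hull of the residualized observations $\{\widehat Z_i\}_{i=1}^J$ (a localization property of the NPMLE; cf.\ \cite{soloff2024multivariate}), which yields $\|\widehat\tau_j^*\|_2\le\max_{i\in[J]}\|\widehat Z_i\|_2$ and hence, with Assumption \ref{ass:est_var}(1), $\|\widehat\Omega_{t_j}^{1/2}\widehat\tau_j^*\|_2\le\sqrt{\overline c}\,\max_i\|\widehat Z_i\|_2$. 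Since $\widehat Z_i=\widehat\Omega_{t_i}^{-1/2}(Y_i-\widehat\alpha_{t_i})$ with $\widehat\Omega_{t_i}^{-1/2}$ uniformly bounded (Assumption \ref{ass:est_var}(1)) and $Y_i-\widehat\alpha_{t_i}=(Y_i-\alpha_{t_i})-(\widehat\alpha_{t_i}-\alpha_{t_i})$ is free of $\alpha_{t_i}$ (the location again cancels), $\max_i\|\widehat Z_i\|_2$ is a maximum over $J$ of terms that are bounded plus the norm of a mean‑zero conditionally Gaussian vector whose covariance $\Sigma_i=\Omega_{t_i}^{1/2}\Psi_i\Omega_{t_i}^{1/2}$ is uniformly bounded (Assumptions \ref{ass:bdd_var} and \ref{ass:est_var}(3)). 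A standard maximal inequality for sub‑Gaussian variables with uniformly bounded parameters then gives $E[(\max_i\|\widehat Z_i\|_2)^4]\lesssim_{\mathcal H}(\log J)^2$, and combining the three pieces via $(a+b+c)^4\le 27(a^4+b^4+c^4)$ delivers the fourth‑moment bound.

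I expect the main obstacle to be exactly this middle‑piece bound: one must legitimately control $\widehat\tau_j^*$ by the \emph{data range} rather than by the support of an arbitrary prior, which requires the NPMLE‑support containment (or a bounded enlargement of it that is equally sufficient), and one must verify the cancellation of $\alpha_{t_j}$ so that the fourth‑moment estimate is genuinely uniform over the hyperparameter class $\mathcal H$ — in particular over priors with arbitrarily large location parameters. The remaining steps (the location‑error moments, the deterministic oracle bound, and the Gaussian maximal inequality) are routine given Assumptions \ref{ass:compact}, \ref{ass:bdd_var}, and \ref{ass:est_var}.
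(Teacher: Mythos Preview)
Your proposal is correct and follows essentially the same route as the paper. The paper likewise applies Cauchy--Schwarz to split off the indicator, then invokes (i) the NPMLE support--containment property (stated there as the hypothesis that $\widehat F_J$ is supported in $[-\bar M_J,\bar M_J]^2$ with $\bar M_J=\max_j\max(\|\widehat Z_j\|_2,1)$) to obtain a pointwise bound $\|\widehat\theta^*_{j,\widehat F_J,\widehat\alpha,\widehat\Omega}-\theta_j^*\|_2\lesssim_{\mathcal H}\bar Z_J$, and (ii) the sub-Gaussian maximal inequality $E[\bar Z_J^4]\lesssim_{\mathcal H}(\log J)^2$. The only cosmetic differences are that the paper writes the error via the Tweedie score difference $\widehat\Omega_{t_j}^{1/2}(\widehat\tau_j^*-\widehat Z_j)-\Omega_{t_j}^{1/2}(\tau_j^*-Z_j)$ (so $Y_j$ cancels directly and no separate location-error moment is needed) and applies Cauchy--Schwarz to the average $\frac{1}{J}\sum_j\|\cdot\|_2^2$ rather than term-by-term; both variants deliver the same $(\log J)\,Pr(A)^{1/2}$ bound.
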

The proof of Lemma \ref{lem:oa3.2} is deferred to Supplemental Appendix \ref{app:lem:oa3.2}. 

\begin{theorem} \label{thm:a.1}
    Suppose Assumptions \ref{ass:compact}, \ref{ass:bdd_sig}, \ref{ass:est_var}, and \ref{ass:good_approx} hold. Fix some $C_1 > 0$, then there exists a constant $C_{\mathcal H,2}$ such that for $\Delta_J = C_{1} J^{-1/2}(\log J)^{1/2}$, $M_J = C_{\mathcal H,2} (\log J)^{1/2}$, and corresponding $A_J$, 
    \begin{align*}
        \frac{1}{J} \sum_{j=1}^J E \left[ \Vert \widehat{\theta}_{j,\widehat{F}_J, \widehat\alpha, \widehat\Omega}^* - \theta_j^* \Vert_2^2 \mathbbm{1}(A_J) \right] &\lesssim_{\mathcal H} J^{-1}(\log J)^{6}.
    \end{align*}
\end{theorem}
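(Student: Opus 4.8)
The plan is to reproduce the architecture of the proof of Theorem 1 in \cite{chen2022empirical}, adapted to the bivariate heteroscedastic location-scale model, by bounding the regret on the good event $A_J$ through a split into a regularization error, a location-scale plug-in error, and a genuine prior-estimation (NPMLE) error. Throughout I would work with the regularized posterior means at the level $\rho=\rho_J$ of \eqref{eq:oa3.5} and, for each $j$, apply the triangle inequality
\begin{align*}
\Vert \widehat\theta^*_{j,\widehat F_J,\widehat\alpha,\widehat\Omega} - \theta^*_j\Vert_2
&\le \Vert \theta^*_j - \theta^*_{j,\rho_J}\Vert_2 + \Vert \theta^*_{j,\rho_J} - \widehat\theta^*_{j,F_0,\widehat\alpha,\widehat\Omega,\rho_J}\Vert_2 \\
&\quad + \Vert \widehat\theta^*_{j,F_0,\widehat\alpha,\widehat\Omega,\rho_J} - \widehat\theta^*_{j,\widehat F_J,\widehat\alpha,\widehat\Omega,\rho_J}\Vert_2 + \Vert \widehat\theta^*_{j,\widehat F_J,\widehat\alpha,\widehat\Omega,\rho_J} - \widehat\theta^*_{j,\widehat F_J,\widehat\alpha,\widehat\Omega}\Vert_2,
\end{align*}
squaring, averaging, and using $(\sum_{i=1}^4 a_i)^2\le 4\sum_i a_i^2$. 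The two regularization terms (first and last) I would dispatch first. On $A_J$ the true residuals satisfy $\Vert Z_j\Vert_2\le M_J\sim\sqrt{\log J}$, and since $F_0$ is compactly supported (Assumption \ref{ass:compact}) and $\Psi_j$ has eigenvalues bounded by Assumption \ref{ass:est_var}, the oracle marginal density $f_{F_0,\Psi_j}(Z_j)$ admits a Gaussian lower bound; by coordinating the constants in \eqref{eq:oa3.4} and \eqref{eq:oa3.5} this lower bound exceeds the effective threshold, so the regularization does not bind for the oracle and $\theta^*_{j,\rho_J}=\theta^*_j$ on $A_J$. For the last term the regularization binds only on a rare event where $f_{\widehat F_J,\widehat\Psi_j}$ dips below threshold, and there the discrepancy is controlled because the Tweedie correction $\widehat\Psi_j\nabla f/\max(f,\rho)$ has norm $\lesssim\varphi_+(\rho_J)\sim\sqrt{\log J}$; both contributions are $\lesssim_{\mathcal H}\mathrm{poly}(\log J)/J$.

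The location-scale plug-in term $\Vert\theta^*_{j,\rho_J}-\widehat\theta^*_{j,F_0,\widehat\alpha,\widehat\Omega,\rho_J}\Vert_2$ holds the prior fixed at $F_0$ and moves only $(\alpha_0,\Omega_0)\mapsto(\widehat\alpha,\widehat\Omega)$. On $A_J$ we have $\Vert\widehat\chi-\chi_0\Vert_\infty\le\Delta_J$, so a first-order expansion of the map $(\alpha,\Omega)\mapsto\widehat\theta^*_{j,F_0,\alpha,\Omega,\rho_J}$ yields a bound $\lesssim_{\mathcal H}L_J^2\Delta_J^2$. The crux of this step is to show that the effective Lipschitz constant $L_J$ grows only polynomially in $M_J$ and $\varphi_+(\rho_J)$ rather than in $1/\rho_J$; this follows from Gaussian-derivative bounds on $\nabla f$ and $\nabla^2 f$ at arguments of norm $\lesssim_{\mathcal H} M_J$, combined with the perturbation of $\widehat Z_j$ and $\widehat\Psi_j$ induced by $(\widehat\alpha,\widehat\Omega)$ and the uniform eigenvalue control of Assumptions \ref{ass:bdd_var} and \ref{ass:est_var}. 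Since $\Delta_J^2\sim J^{-1}\log J$, this term contributes at most $(\log J)^{a}/J$ for a fixed power $a$.

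The remaining and most delicate term, $\tfrac{1}{J}\sum_j\Vert\widehat\theta^*_{j,F_0,\widehat\alpha,\widehat\Omega,\rho_J}-\widehat\theta^*_{j,\widehat F_J,\widehat\alpha,\widehat\Omega,\rho_J}\Vert_2^2$, isolates the error from estimating the prior by $\widehat F_J$ with all parameters fixed at their estimates. Here I would first invoke Assumption \ref{ass:good_approx}, the approximate-NPMLE property, together with a basic-inequality/empirical-process argument to obtain an average compound-density guarantee of squared-Hellinger (or $\chi^2$-affinity) type, $\tfrac{1}{J}\sum_j H^2\!\big(f_{\widehat F_J,\widehat\Psi_j},f_{F_0,\widehat\Psi_j}\big)\lesssim_{\mathcal H}(\log J)^{b}/J$, using the bracketing-entropy bounds for bivariate Gaussian location mixtures from \cite{saha2020nonparametric}, \cite{soloff2024multivariate}, and \cite{jiang2020general}. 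I would then transport this density closeness to the posterior means: Tweedie's formula writes each regularized posterior mean through $\nabla f/\max(f,\rho)$, and a pointwise lemma bounds the difference of two such expressions by $\vert f_{\widehat F_J}-f_{F_0}\vert$ and $\Vert\nabla f_{\widehat F_J}-\nabla f_{F_0}\Vert$, with the regularization supplying the denominator control; integrating against the marginal and using $\Vert Z_j\Vert_2\le M_J$ on $A_J$ converts the density bound into the desired mean-squared bound, at the cost of further $\mathrm{poly}(\log J)$ and $\varphi_+(\rho_J)$ factors.

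Finally I would combine the four pieces, substitute $\Delta_J,M_J$ from \eqref{eq:oa3.4} and $\rho_J$ from \eqref{eq:oa3.5} (noting $M_J^2\Delta_J\to 0$, so $\log(1/\rho_J)\sim\log J$ and each $\varphi_+(\rho_J)\sim\sqrt{\log J}$), with $M_J$ chosen so that the sub-Gaussianity of the bounded residuals renders $A_J$ overwhelmingly likely (the complement being handled separately through Lemma \ref{lem:oa3.2}), and track the accumulated logarithmic powers to the stated exponent $6$. The main obstacle is this last term: transporting the likelihood/Hellinger guarantee for the bivariate compound mixture into a bound on posterior means that is uniform over the hyperparameter class $\mathcal H$ while retaining only poly-logarithmic dependence on $1/\rho_J$. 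This is exactly where the multivariate entropy control and the regularized Tweedie analysis must be executed with care, and where the final power of $\log J$ is pinned down.
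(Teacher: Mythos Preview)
Your four-term decomposition differs from the paper's in the order of operations. The paper moves $(\widehat F_J,\widehat\alpha,\widehat\Omega)\to(\widehat F_J,\alpha_0,\Omega_0)$ first (location--scale perturbation at the \emph{estimated} prior), then regularizes, then compares $\widehat F_J$ against $F_0$ at the \emph{true} location--scale parameters, and finally removes the regularization. You instead regularize first, compare priors at the \emph{estimated} parameters $(\widehat\alpha,\widehat\Omega)$, and do the location--scale perturbation at the \emph{true} prior $F_0$.

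This reordering creates a genuine gap in your third term. You propose to obtain $\tfrac{1}{J}\sum_j H^2\big(f_{\widehat F_J,\widehat\Psi_j},f_{F_0,\widehat\Psi_j}\big)\lesssim_{\mathcal H}(\log J)^b/J$ via a basic inequality plus empirical process, but the transformed observations $\widehat Z_j$ are \emph{not} i.i.d.\ draws from $f_{F_0,\widehat\Psi_j}$: each $\widehat Z_j$ depends on all the data through $(\widehat\alpha,\widehat\Omega)$, so the standard concentration/entropy machinery does not apply directly. The paper's choice to do the $\widehat F_J$-versus-$F_0$ comparison at $(\alpha_0,\Omega_0)$ is precisely to restore i.i.d.\ structure: at the true parameters, $Z_j$ really is distributed as $f_{F_0,\Psi_j}$ and the empirical-process argument (Lemma \ref{lem:oa3.5}) goes through. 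The price is that $\widehat F_J$ is \emph{not} the NPMLE of the likelihood evaluated at $(\alpha_0,\Omega_0)$; Theorem \ref{thm:sm6.1} and Corollary \ref{cor:oa3.1} are devoted entirely to showing it is an \emph{approximate} maximizer there, via Taylor expansion of $\psi_j$ in $(\alpha,\Omega^{1/2})$ and separate control of the resulting $U_{1k},U_{2k},U_{3k},R_1,R_2$ terms. This transfer is the heart of the argument, and your proposal treats the corresponding step as routine.

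Two smaller points. First, your claim that the oracle regularization does not bind on $A_J$ requires $f_{F_0,\Psi_j}(Z_j)\ge\rho_J/\sqrt{\det\Psi_j}$ whenever $\Vert Z_j\Vert_2\le M_J$; the Gaussian lower bound scales like $J^{-cM_J^2}$ with $c$ depending on $\underline k$, and since $C_{\mathcal H,2}$ must also be large enough for the $A_J^C$ bound of Lemma \ref{lem:oa3.2}, these constraints need not be compatible across all $\mathcal H$. The paper sidesteps this by bounding the expectation directly (Lemma \ref{lem:oa3.4}) via $\rho_J^{1/4}\lesssim J^{-1}$. Second, for your fourth term the approximate-NPMLE property already gives, deterministically on $A_J$, the lower bound $f_{\widehat F_J,\widehat\Psi_j}(\widehat Z_j)\ge J^{-4}/\sqrt{\det\widehat\Psi_j}\ge\rho_J/\sqrt{\det\widehat\Psi_j}$ (see the proof of Lemma \ref{lem:oa3.1}), so the regularization never binds and this term is exactly zero; no rare-event argument is needed.
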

The proof of Theorem \ref{thm:a.1} is deferred to Supplemental Appendix \ref{app:thm:a.1}.

Combining these two results, I obtain the result of Theorem \ref{thm:mse}:
\begin{proof}[Proof of Theorem \ref{thm:mse}]
    Let $\Delta_J = C_{1,\mathcal H} J^{-1/2}(\log J)^{1/2}$ and $M_J = C\sqrt{\log J}$, where $C_{1,\mathcal H}$ is the constant in Assumption \ref{ass:est_var}(4) and $C$ is chosen in application of Theorem \ref{thm:a.1}. Then
    \begin{align*}
        \frac{1}{J} \sum_{j=1}^J E[\Vert \widehat{\theta}_{j,\widehat{F}_J, \widehat\alpha, \widehat\Omega}^* - \theta_j^* \Vert_2^2] &\leq \frac{1}{J} \sum_{j=1}^J E[\Vert \widehat{\theta}_{j,\widehat{F}_J, \widehat\alpha, \widehat\Omega}^* - \theta_j^* \Vert_2^2 \mathbbm{1}(A_J)] \\
        &\qquad + \frac{1}{J} \sum_{j=1}^J E[\Vert \widehat{\theta}_{j,\widehat{F}_J, \widehat\alpha, \widehat\Omega}^* - \theta_j^* \Vert_2^2 \mathbbm{1}(\Vert \widehat\chi - \chi_0 \Vert_J > \Delta_J)] \\
        &\qquad + \frac{1}{J} \sum_{j=1}^J E[\Vert \widehat{\theta}_{j,\widehat{F}_J, \widehat\alpha, \widehat\Omega}^* - \theta_j^* \Vert_2^2 \mathbbm{1}(\bar Z_J > M_J)] \\
        &\lesssim_{\mathcal H} J^{-1}(\log J)^{6} + J^{-1}(\log J) \\
        &\qquad \text{Theorem \ref{thm:a.1}, Lemma \ref{lem:oa3.2}, Assumption \ref{ass:est_var}(4)} \\
        &\lesssim_{\mathcal H} J^{-1}(\log J)^{6}.
\end{align*}
\end{proof}

\subsection{Proof of Lemma \ref{lem:oa3.2}} \label{app:lem:oa3.2}

\begin{proof}[Proof of Lemma \ref{lem:oa3.2}]
Observe that for any event $A$ on the data $Z_{1:J}$, by Cauchy-Schwarz
\begin{align*}
    E \left[ \frac{1}{J} \sum_{j=1}^J \left\Vert\widehat\theta_{j, \widehat{F}_J, \widehat\alpha, \widehat\Omega}^* - \theta_j^* \right\Vert_2^2 \mathbbm{1}(A) \right] &\leq E \left[ \left(\frac{1}{J} \sum_{j=1}^J \left\Vert\widehat\theta_{j, \widehat{F}_J, \widehat\alpha, \widehat\Omega}^* - \theta_j^* \right\Vert_2^2 \right)^2 \right]^{1/2} Pr(A)^{1/2}.
\end{align*}
Apply Lemma \ref{lem:oa3.6} to get
\begin{align*}
    \left(\frac{1}{J} \sum_{j=1}^J \left\Vert\widehat\theta_{j, \widehat{F}_J, \widehat\alpha, \widehat\Omega}^* - \theta_j^* \right\Vert_2^2 \right)^2 &\lesssim_{\mathcal H} \bar Z_J^4,
\end{align*}
since $\Vert \widehat\chi - \chi_0 \Vert_J \lesssim_{\mathcal H} 1$ under Assumption \ref{ass:est_var}.

Apply Lemma \ref{lem:oa3.7} to get $E[\bar Z_J^4] \lesssim_{\mathcal H} (\log J)^2$. This proves both claims.
\end{proof}

\begin{lemma} \label{lem:oa3.6}
    Assume that $\widehat F_J$ is supported within $[-C\bar M_J, C\bar M_J]^2$ for some constant $C$ where $\bar M_J = \max_j( \max (\Vert \widehat{Z}_j(\widehat\alpha, \widehat\Omega) \Vert_2, 1))$. Suppose $\Vert \widehat\chi - \chi_0 \Vert_J \lesssim_{\mathcal H} 1$, and Assumptions \ref{ass:compact},  \ref{ass:bdd_sig}, and \ref{ass:est_var} hold. Then letting $\bar Z_J = \max(\max_j \Vert Z_j \Vert_2, 1)$,
\begin{align*}
    \Vert \widehat\theta_{j, \widehat F, \widehat\alpha, \widehat\Omega}^* - \theta_j^* \Vert_2 \lesssim_{\mathcal H} \bar Z_J.
\end{align*}
\end{lemma}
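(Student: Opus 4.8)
The plan is to bound the gap between the two posterior means by exploiting a single structural fact---that the mean of any posterior distribution lies in the closed convex hull of the support of its prior---which bypasses the Tweedie density-ratio representation entirely. Writing each posterior mean in its location-scale form $\widehat\theta_{j,F,\alpha,\Omega}=\alpha_{t_j}+\Omega_{t_j}^{1/2}\widehat\tau^*_{j,F,\alpha,\Omega}$ with $\widehat\tau^*_{j,F,\alpha,\Omega}=E_{F,\widehat\Psi_j(\alpha,\Omega)}[\tau\mid\widehat Z_j(\alpha,\Omega)]$, I would decompose
\[
\widehat\theta_{j,\widehat F,\widehat\alpha,\widehat\Omega}-\theta_j^* = (\widehat\alpha_{t_j}-\alpha_{t_j}) + \widehat\Omega_{t_j}^{1/2}\,\widehat\tau^*_{j,\widehat F,\widehat\alpha,\widehat\Omega} - \Omega_{t_j}^{1/2}\,\widehat\tau^*_{j,F_0,\alpha_0,\Omega_0}
\]
and control the three terms by the triangle inequality.

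First I would bound the two $\tau$-posterior-mean factors. The oracle factor $\widehat\tau^*_{j,F_0,\alpha_0,\Omega_0}=E_{F_0,\Psi_j}[\tau\mid Z_j]$ is the mean of a distribution absolutely continuous with respect to $F_0$, hence supported on $S_0$; so it lies in the convex hull of $S_0\subseteq[\underbar s_w,\bar s_w]\times[\underbar s_g,\bar s_g]$, giving $\Vert\widehat\tau^*_{j,F_0,\alpha_0,\Omega_0}\Vert_2\lesssim_{\mathcal H}1$ by Assumption \ref{ass:compact}. Identically, the empirical Bayes factor is the mean of a distribution supported in the box $[-\bar M_J,\bar M_J]^2$ (since $\widehat F$ is), so $\Vert\widehat\tau^*_{j,\widehat F,\widehat\alpha,\widehat\Omega}\Vert_2\le\sqrt2\,\bar M_J$. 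For the scale and location terms, Assumptions \ref{ass:bdd_var} and \ref{ass:est_var} give $\Vert\Omega_{t_j}^{1/2}\Vert_{op},\Vert\widehat\Omega_{t_j}^{1/2}\Vert_{op}\le\sqrt{\overline{c}}$, and the hypothesis $\Vert\widehat\chi-\chi_0\Vert_\infty\lesssim_{\mathcal H}1$ gives $\Vert\widehat\alpha_{t_j}-\alpha_{t_j}\Vert_2\lesssim_{\mathcal H}1$. Combining these with $\bar M_J\ge1$ yields
\[
\Vert\widehat\theta_{j,\widehat F,\widehat\alpha,\widehat\Omega}-\theta_j^*\Vert_2\lesssim_{\mathcal H}1+\bar M_J+1\lesssim_{\mathcal H}\bar M_J.
\]

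The last step, and the only one requiring genuine computation, is to replace $\bar M_J$ by $\bar Z_J$. Here I would use the explicit relation $\widehat Z_j(\widehat\alpha,\widehat\Omega)=\widehat\Omega_{t_j}^{-1/2}\big(\Omega_{t_j}^{1/2}Z_j+\alpha_{t_j}-\widehat\alpha_{t_j}\big)$, bound $\Vert\widehat\Omega_{t_j}^{-1/2}\Vert_{op}\le\underline{c}^{-1/2}$ by Assumption \ref{ass:est_var}, and feed in the same scale and location bounds to obtain $\Vert\widehat Z_j\Vert_2\lesssim_{\mathcal H}\Vert Z_j\Vert_2+1\lesssim_{\mathcal H}\bar Z_J$ uniformly in $j$, so that $\bar M_J=\max_j\max(\Vert\widehat Z_j\Vert_2,1)\lesssim_{\mathcal H}\bar Z_J$; chaining with the previous display proves the claim. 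The main obstacle is conceptual rather than technical: recognizing at the outset that convex-hull containment of posterior means is exactly what delivers a dimension-free bound independent of the (potentially large) density-ratio term, after which every remaining estimate is a routine application of the eigenvalue bounds and the assumed estimation accuracy of $\widehat\alpha$ and $\widehat\Omega$.
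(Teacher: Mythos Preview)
Your proof is correct and takes essentially the same approach as the paper: both arguments bound each posterior mean of $\tau$ by the convex hull of its prior's support (compact $S_0$ for $F_0$, the box $[-\bar M_J,\bar M_J]^2$ for $\widehat F$), then reduce $\bar M_J$ to $\bar Z_J$ via the identity $\widehat Z_j=\widehat\Omega_{t_j}^{-1/2}\bigl(\Omega_{t_j}^{1/2}Z_j+\alpha_{t_j}-\widehat\alpha_{t_j}\bigr)$ together with the eigenvalue bounds. The paper's invocation of Tweedie's formula is only to rewrite the gap as $\widehat\Omega_{t_j}^{1/2}E_{\widehat F,\widehat\Psi}[\tau-\widehat Z_j\mid\widehat Z_j]-\Omega_{t_j}^{1/2}E_{F_0,\Psi_0}[\tau-Z_j\mid Z_j]$ (canceling the location terms through $Y_j$ rather than keeping $\widehat\alpha_{t_j}-\alpha_{t_j}$ separate as you do), and its bound then follows from exactly your bounded-support argument, not from any density-ratio control.
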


\begin{proof}
    By Tweedie's formula, 
    \begin{align*}
        \Vert \widehat\theta_{j, \widehat F, \widehat\alpha, \widehat\Omega} - \theta_j^* \Vert_2 &= \left\Vert B_j(\widehat\Omega)^{1/2} \widehat\Psi_j \frac{\nabla f_{\widehat{F}_J,\widehat\Psi_j}(\widehat{Z}_j(\widehat\alpha,\widehat\Omega))}{f_{\widehat{F}_J,\widehat\Psi_j}(\widehat{Z}_j(\widehat\alpha,\widehat\Omega)) } - B_j(\Omega_0)^{1/2} \Psi_j \frac{\nabla f_{F_0,\Psi_j}(\widehat{Z}_j(\alpha_0,\Omega_0))}{f_{F_0,\Psi_j}(\widehat{Z}_j(\alpha_0,\Omega_0)) } \right\Vert_2 \\
        &= \left\Vert B_j(\widehat\Omega)^{1/2} E_{\widehat{F}_J,\widehat\Psi_j} \left[ \tau_j - \widehat{Z}_j \vert \widehat{Z}_j \right] - B_j(\Omega_0)^{1/2} E_{F_0, \Psi_j}\left[\tau_j - Z_j \vert Z_j \right] \right\Vert_2 \\
        &\lesssim_{\mathcal H} \bar M_J + \bar Z_J \lesssim_{\mathcal H} \max_j \max(\Vert\widehat{Z}_j\Vert_2, \Vert Z_j \Vert_2, 1)
    \end{align*}
    by the boundedness of $\widehat F_J$ and by Lemma \ref{lem:sm6.14}. 
    Note that $\Vert\widehat{Z}_j\Vert_2 = \Vert B_j(\widehat{\Omega})^{-1/2} B_j(\Omega_0)^{1/2} Z_j + B_j(\widehat{\Omega})^{-1/2} \left(a_j(\alpha_0) - a_j(\widehat\alpha) \right) \Vert_2 \lesssim_{\mathcal H} \max(\Vert Z_j \Vert_2, 1) \lesssim_{\mathcal H} \bar Z_J$. The result follows.
\end{proof}

\begin{lemma}\label{lem:oa3.7}
    Let $\bar Z_J = \max(\max_j \Vert Z_j\Vert_2, 1)$. Under Assumptions \ref{ass:compact} and \ref{ass:est_var}, for $t>1$
\begin{align*}
    Pr(\bar Z_J > t) \leq 2J \exp(-C_{\mathcal H} t^2) \quad \text{and} \quad E[\bar Z_J^p] \lesssim_{p,\mathcal H} (\log J)^{p/2}.
\end{align*}
Moreover, if $M_J = (C_{\mathcal{H}}+1)(C_{2,\mathcal H}^{-1} \log J)^{1/2}$ then for all sufficiently large choices of $C_{\mathcal H}$, $Pr(\bar Z_J > M_J) \leq J^{-2}$.
\end{lemma}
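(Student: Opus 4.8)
The plan is to reduce all three claims to a single Gaussian tail bound for $\|Z_j\|_2$ and then assemble them by a union bound, the layer-cake formula, and a direct substitution. The structural fact I would exploit is that, conditional on $X_j$ and $\Sigma_j$, one can write $Z_j = \tau_j + \Psi_j^{1/2}\varepsilon_j$ with $\varepsilon_j \sim N(0,I_2)$ independent of $\tau_j \sim F_0$, where $\tau_j$ lies in the compact support $S_0$ of Assumption \ref{ass:compact} and $\Psi_j \preceq \overline{k} I_2$ by Assumption \ref{ass:est_var}(3). Writing $R$ for the Euclidean radius of $S_0$ (a quantity fixed by the hyperparameters $\mathcal H$), this gives $\|Z_j\|_2 \le R + \sqrt{\overline{k}}\,\|\varepsilon_j\|_2$. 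Since $\|\varepsilon_j\|_2^2 \sim \chi^2_2$, which is exactly exponential with rate $\tfrac12$, the tail is explicit: $Pr(\|\varepsilon_j\|_2 > s) = e^{-s^2/2}$, so that $Pr(\|Z_j\|_2 > t) \le \exp\!\big(-(t-R)_+^2/(2\overline{k})\big)$ for every $t$.

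First I would establish the tail bound. For $t > 1$ the event $\{\bar{Z}_J > t\}$ equals $\{\max_j \|Z_j\|_2 > t\}$, so a union bound gives $Pr(\bar{Z}_J > t) \le J \max_j Pr(\|Z_j\|_2 > t)$, and it remains to show the per-coordinate probability is at most $2\exp(-C_{\mathcal H} t^2)$. For $t \ge 2R$ one has $(t-R)^2 \ge t^2/4$, so the per-coordinate bound is at most $\exp(-t^2/(8\overline{k}))$; for $1 < t < 2R$ the target $2\exp(-C_{\mathcal H}t^2)$ already exceeds $1$ once $C_{\mathcal H}$ is small enough that $4 C_{\mathcal H} R^2 \le \log 2$. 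Taking $C_{\mathcal H} = \min\{1/(8\overline{k}),\ (\log 2)/(4R^2)\}$, which depends only on $\mathcal H$, yields $Pr(\|Z_j\|_2 > t) \le 2\exp(-C_{\mathcal H}t^2)$ uniformly in $t > 1$, and hence the first claim.

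Next I would obtain the moment bound from $E[\bar{Z}_J^p] = \int_0^\infty p t^{p-1} Pr(\bar{Z}_J > t)\,dt$, splitting the integral at $t_* = (C_{\mathcal H}^{-1}\log(2J))^{1/2}$, the level at which the tail bound equals $1$. On $[0,t_*]$ I bound $Pr(\bar{Z}_J > t) \le 1$, giving $t_*^p \lesssim_{p,\mathcal H} (\log J)^{p/2}$; on $[t_*,\infty)$ I insert $Pr(\bar{Z}_J > t) \le 2J e^{-C_{\mathcal H}t^2}$ and use the routine estimate $\int_{t_*}^\infty p t^{p-1} e^{-C_{\mathcal H}t^2}\,dt \lesssim_{p,\mathcal H} (1 + t_*^{p-2}) e^{-C_{\mathcal H}t_*^2}$. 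Because $J e^{-C_{\mathcal H}t_*^2} = \tfrac12$ by the choice of $t_*$, this second piece collapses to $1 + t_*^{p-2} \lesssim_{p,\mathcal H} (\log J)^{p/2}$, which proves the second claim. Finally, the $M_J$ statement follows by plugging $t = M_J$ into the tail bound: $Pr(\bar{Z}_J > M_J) \le 2J e^{-C_{\mathcal H}M_J^2} \le J^{-2}$ as soon as $C_{\mathcal H} M_J^2 \ge 3\log J + \log 2$, which holds once the constant multiple in $M_J = c\sqrt{\log J}$ is taken large enough relative to $C_{\mathcal H}$---exactly the ``sufficiently large'' choice in the statement.

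The one delicate point---the main obstacle, such as it is---is bookkeeping rather than analysis: a single constant $C_{\mathcal H}$ in the tail bound must simultaneously work in the small-$t$ regime (where the bound must be made vacuous) and in the large-$t$ regime (where the shift $R$ is absorbed into the exponent), and that same $C_{\mathcal H}$ must then be threaded through the moment split and the $M_J$ substitution so that every constant depends only on $\mathcal H$, never on $J$. There is no genuine analytic difficulty, because the noise is exactly Gaussian with uniformly bounded covariance and the prior has bounded support; the $\chi^2_2$ identity makes the one-dimensional tail completely explicit and spares any appeal to general sub-Gaussian maximal inequalities.
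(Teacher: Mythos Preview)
Your argument is correct and complete. It differs from the paper's proof in two places, and the differences are worth noting. For the tail bound, the paper invokes the abstract fact that $\|Z_j\|_2^2$ is subexponential; you instead exploit the explicit structure $Z_j=\tau_j+\Psi_j^{1/2}\varepsilon_j$ with $\|\varepsilon_j\|_2^2\sim\chi^2_2$ exactly exponential, which is more elementary and makes the constants concrete at the price of being specific to the bivariate Gaussian-plus-bounded model. For the moment bound, the paper uses the standard maximal-moment trick $E[\max_j W_j^p]\le\bigl(\sum_j E[W_j^{pc}]\bigr)^{1/c}$ with $c=\log J$ and subexponential moment control, whereas you use the layer-cake formula split at the threshold where the union-bound tail becomes nontrivial; both are textbook routes to $(\log J)^{p/2}$ and yours is arguably cleaner because it reuses the first claim directly rather than calling on separate moment bounds. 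One small remark: you appeal to $\Psi_j\preceq\overline{k}I_2$ from Assumption~\ref{ass:est_var}(3), while the lemma as stated only lists Assumption~\ref{ass:compact}; the paper's proof has the same silent dependence (a uniform subexponential constant requires a uniform bound on $\Psi_j$), so this is a gap in the lemma's hypothesis list rather than in either proof.
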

\begin{proof}
    The first claim is immediate under a union bound and noting that each $\Vert Z_j \Vert_2^2$ is a subexponential random variable, so $Pr(\Vert Z_j \Vert_2^2 > t) \leq 2\exp(-C_{\mathcal H} t) \Rightarrow Pr(\Vert Z_j \Vert_2 > t) \leq 2\exp(-C_{\mathcal H} t^2)$.
    
    The second claim follows from the observation that
    \begin{align*}
        E[\max_j(\max(\Vert Z_j \Vert_2,1))^p] &\leq \left( \sum_{j=1}^J E[(\max(\Vert Z_j \Vert_2,1))^{pc}] \right)^{1/c} \\
        &= \left( \sum_{j=1}^J E[(\max(\Vert Z_j \Vert_2^2,1))^{pc/2}] \right)^{1/c} \\
        &\leq J^{1/c} C_{\mathcal H}^{p} (pc)^{p/2}.
    \end{align*}
    where the last inequality follows from $\Vert Z_j \Vert_2^2$ being a subexponential random variable. Choose $c = \log J$ for $J^{1/\log J} = e$ to finish the proof. The moreover part follows exactly as in the proof of Lemma OA3.7 in \citeappendix{chen2022empirical}. 
\end{proof}

\subsection{Proof of Theorem \ref{thm:a.1}} \label{app:thm:a.1}

\begin{proof}[Proof of Theorem \ref{thm:a.1}]
    Choose $M_J$ to be of the form \eqref{eq:oa3.4}. By triangle inequality
    \begin{align*}
        &\Vert \widehat \theta_{\widehat{F}_J, \widehat\alpha, \widehat\Omega}^* - \theta^* \Vert_F \\
        &\leq \Vert \widehat \theta_{\widehat{F}_J, \widehat\alpha, \widehat\Omega}^* - \theta_{\widehat{F}_J, \alpha_0,\Omega_0}^* \Vert_F + \Vert \widehat \theta_{\widehat{F}_J, \alpha_0, \Omega_0}^* - \theta^*_{\widehat{F}_J, \alpha_0, \Omega_0, \rho_J} \Vert_F + \Vert \widehat \theta_{\widehat{F}_J, \alpha_0, \Omega_0, \rho_J}^* - \theta_{\rho_J}^* \Vert_F + \Vert \theta_{\rho_J}^* - \theta^* \Vert_F.
    \end{align*}
    Define
    \begin{align*}
        \xi_1 &\equiv \frac{\mathbbm{1}(A_J)}{J} \Vert \widehat \theta_{\widehat{F}_J, \widehat\alpha, \widehat\Omega}^* - \theta_{\widehat{F}_J, \alpha_0,\Omega_0}^* \Vert_F^2 \\
        \xi_2 &\equiv \frac{\mathbbm{1}(A_J)}{J} \Vert \widehat \theta_{\widehat{F}_J, \alpha_0, \Omega_0}^* - \theta^*_{\widehat{F}_J, \alpha_0, \Omega_0, \rho_J} \Vert_F^2 \\
        \xi_3 &\equiv \frac{\mathbbm{1}(A_J)}{J} \Vert \widehat \theta_{\widehat{F}_J, \alpha_0, \Omega_0, \rho_J}^* - \theta_{\rho_J}^* \Vert_F^2 \\
        \xi_4 &\equiv \frac{\mathbbm{1}(A_J)}{J} \Vert \theta_{\rho_J}^* - \theta^* \Vert_F^2.
    \end{align*}
    Then
\begin{align*}
    \frac{1}{J} E \left[ \Vert \widehat \theta_{\widehat{F}_J, \widehat\alpha, \widehat\Omega}^* - \theta^* \Vert_F^2 \mathbbm{1}(A_J) \right] \leq 4\left( E\xi_1 + E\xi_2 + E\xi_3 + E\xi_4 \right).
\end{align*}
By Lemma \ref{lem:oa3.3}, $\xi_1 \lesssim_{\mathcal H} M_J^2(\log J)^2 \Delta_J^2$ and thus $E\xi_1 \lesssim_{\mathcal H} M_J^2(\log J)^2 \Delta_J^2$.

By Lemma \ref{lem:oa3.1}, the truncation is not binding for the choice of $\rho_J$ in the lemma, so $\xi_2 = 0$.

By Lemma \ref{lem:oa3.5}, $E\xi_3 \lesssim_{\mathcal H} (\log J)^3 \delta_J^2$ for $\delta_J = J^{-1/2}(\log J)^{3/2}$, as in Corollary \ref{cor:oa3.1}.

By Lemma \ref{lem:oa3.4}, $E\xi_4 \lesssim_{\mathcal H} \frac{1}{J}$.

Thus the $E[\xi_3]$ rate is the dominating rate and the result follows from plugging in for $\delta_J$.
\end{proof}

\begin{lemma}\label{lem:oa3.3}
    Under the assumptions of Theorem \ref{thm:a.1}, $\xi_1 \lesssim_{\mathcal H} M_J^2 (\log J)^2 \Delta_J^2$.
\end{lemma}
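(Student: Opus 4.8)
The plan is to establish a Lipschitz bound for the posterior-mean map in the location-scale parameters, with a Lipschitz constant that grows only polynomially in $M_J$ and $\log J$ on the event $A_J$, and then to square and average over $j$. Writing $\chi = (\alpha, \Omega^{1/2})$ and $g_j(\chi) \equiv \widehat\theta^*_{j, \widehat F_J, \chi}$, the claim reduces to showing that, on $A_J$,
\[
\left\Vert g_j(\widehat\chi) - g_j(\chi_0) \right\Vert_2 \lesssim_{\mathcal H} M_J (\log J) \left\Vert \widehat\chi - \chi_0 \right\Vert_\infty
\]
for each $j$; since $\Vert\widehat\chi - \chi_0\Vert_\infty \le \Delta_J$ on $A_J$, squaring and averaging over $j$ in the definition of $\xi_1$ immediately yields $\xi_1 \lesssim_{\mathcal H} M_J^2 (\log J)^2 \Delta_J^2$.

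First I would parametrize the line segment $\chi(s) = (1-s)\chi_0 + s\widehat\chi$, $s\in[0,1]$, and apply the fundamental theorem of calculus, so that $\Vert g_j(\widehat\chi) - g_j(\chi_0)\Vert_2 \le \sup_{s\in[0,1]}\Vert D_\chi g_j(\chi(s))\Vert \cdot \Vert\widehat\chi - \chi_0\Vert_\infty$. A load-bearing preliminary is that the intermediate parameters remain admissible: the constraint set $\{A : \sqrt{\underline c}\,I_2 \preceq A \preceq \sqrt{\overline c}\,I_2\}$ coming from Assumptions \ref{ass:bdd_var} and \ref{ass:est_var}(1) is convex and contains both $\Omega_0^{1/2}$ and $\widehat\Omega^{1/2}$, so each interpolant $\Omega(s)^{1/2}$, and hence $\Psi_j(s) = \Omega(s)^{-1/2}\Sigma_j\Omega(s)^{-1/2}$ (using Assumption \ref{ass:est_var}(3)), has eigenvalues bounded away from $0$ and $\infty$ uniformly in $s$. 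This lets me treat all matrix inverses and square roots appearing in the derivative as bounded operators along the whole segment.

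The core of the argument is to differentiate $g_j(\chi) = \alpha_{t_j} + \Omega_{t_j}^{1/2} m_j(\chi)$, where $m_j(\chi) = E_{\widehat F_J, \widehat\Psi_j(\chi)}[\tau \mid \widehat Z_j(\chi)]$ is the posterior-mean residual, via Tweedie's formula and the chain rule through $\widehat Z_j(\chi) = \Omega_{t_j}^{-1/2}(Y_j - \alpha_{t_j})$ and $\widehat\Psi_j(\chi) = \Omega_{t_j}^{-1/2}\Sigma_j\Omega_{t_j}^{-1/2}$. I would bound the resulting pieces on $A_J$ as follows. By Lemma \ref{lem:oa3.6}, $\widehat F_J$ is supported in a ball of radius $\bar M_J \lesssim_{\mathcal H} M_J$, so $m_j$ is bounded by $M_J$ and the posterior covariance $\mathrm{Cov}_{\widehat F_J}(\tau\mid\widehat Z_j)$ is bounded by $\bar M_J^2 \lesssim_{\mathcal H} \log J$; the Jacobian of the posterior mean in its argument, which by Tweedie equals $\widehat\Psi_j^{-1}\mathrm{Cov}_{\widehat F_J}(\tau\mid\widehat Z_j)$, is therefore $\lesssim_{\mathcal H}\log J$. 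The sensitivity $\partial\widehat Z_j/\partial\chi$ contributes a factor $\lesssim_{\mathcal H} M_J$, since it involves $Y_j - \alpha_{t_j} = \Omega_{t_j}^{1/2}Z_j$, of order $M_J$ on $A_J$, together with the bounded derivative of a matrix inverse; the product-rule term $D_\chi(\Omega_{t_j}^{1/2})\,m_j$ and the terms coming from the dependence of the posterior mean on $\widehat\Psi_j$ are of comparable or lower order. Multiplying the $\log J$ from the argument-Jacobian by the $M_J$ from $\partial\widehat Z_j/\partial\chi$ gives the per-$j$ Lipschitz constant $\lesssim_{\mathcal H} M_J(\log J)$, as needed.

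The main obstacle is precisely this derivative of the posterior-mean residual with respect to $\chi$: controlling the Jacobian of $E_{\widehat F_J,\widehat\Psi_j}[\tau\mid\cdot]$ jointly in its evaluation point $\widehat Z_j$ and in the covariance $\widehat\Psi_j$, and carefully tracking which factors scale with $M_J$ (data and prior-support magnitude) versus $\log J$ (posterior spread / score derivative). The uniform eigenvalue control on the interpolated scale matrices is what makes these derivative bounds hold uniformly over the segment, so verifying admissibility of $\chi(s)$ on $A_J$ is the essential first step. The whole argument parallels the location-scale Lipschitz estimate in \cite{chen2022empirical}, with the additional bookkeeping arising from the matrix-valued scale parameter in the multivariate model.
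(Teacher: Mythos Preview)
Your approach is correct and essentially the same as the paper's: both use a mean-value argument along the segment between $\chi_0$ and $\widehat\chi$, bounding the derivative of the posterior-mean map uniformly over admissible intermediate scale parameters to obtain the per-$j$ Lipschitz constant $\lesssim_{\mathcal H} M_J\log J$. The only execution difference is that the paper identifies the Tweedie map with $-\Sigma_j\,\partial\psi_j/\partial\alpha_{t_j}$ and so reduces directly to the prepackaged Hessian bounds of Lemma~\ref{lem:sm6.11} (which rest on the density lower bound of Lemma~\ref{lem:oa3.1} and the score inequalities of Lemma~\ref{lem:sm6.8}), whereas you differentiate the posterior mean through Tweedie and the chain rule and bound the posterior covariance by the support radius $\bar M_J^2\asymp\log J$; since $M_J^2\asymp\log J$ these two routes to the derivative bound coincide.
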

\begin{proof}
Using Taylor's theorem and the equivalence of norms in fixed dimensions I can write
\begin{align*}
    &\left\Vert \widehat\theta^*_{j, \widehat{F}_J, \widehat\alpha, \widehat\Omega} - \theta^*_{j, \widehat{F}_J, \alpha_0, \Omega_0} \right\Vert_2 \\
    &\leq \left\Vert \Sigma_j \right\Vert_{op} \left\Vert B_j(\widehat\Omega)^{-1/2} \frac{\nabla f_{\widehat{F}_J, \widehat{\Psi}_j}(\widehat{Z}_j)}{f_{\widehat{F}_J, \widehat{\Psi}_j}(\widehat{Z}_j)} - B_j(\Omega_0)^{-1/2} \frac{\nabla f_{\widehat{F}_J, \Psi_j}(Z_j)}{f_{\widehat{F}_J, \Psi_j}(Z_j)} \right\Vert_2 \\
    &= \left\Vert \Sigma_j \right\Vert_{op} \left\Vert \frac{\partial \psi_j}{\partial a_j}\bigg\vert_{\widehat{F}_J, \widehat\alpha, \widehat\Omega} - \frac{\partial \psi_j}{\partial a_j} \bigg\vert_{\widehat{F}_J, \alpha_0, \Omega_0} \right\Vert_2 \\
    &\lesssim_{\mathcal H} \sup_{(\tilde\alpha, \tilde C) \in \mathcal{L}_j} \left\Vert \frac{\partial^2 \psi_j}{\partial a_j \partial a_j^T} \bigg\vert_{\widehat{F}_J, \tilde \alpha, \tilde C^2} \right\Vert_F \left\Vert a_j(\widehat\alpha) - a_j(\alpha_0) \right\Vert_\infty + \sup_{(\tilde\alpha, \tilde C) \in \mathcal{L}_j} \left\Vert \frac{\partial^2 \psi_j}{\partial a_j \partial \text{vec}(B_j^{1/2})^T} \bigg\vert_{\widehat{F}_J, \tilde \alpha, \tilde C^2} \right\Vert_F \left\Vert B_j(\widehat\Omega)^{1/2} - B_j(\Omega_0)^{1/2} \right\Vert_{op},
\end{align*}
where $\mathcal L_j$ denotes the line segment between $(a_j(\widehat\alpha), B_j(\widehat\Omega)^{1/2})$ and $(a_j(\alpha_0), B_j(\Omega_0)^{1/2})$. Using the bounds on derivatives obtained in Lemma \ref{lem:sm6.11}, 
\begin{align*}
    \mathbbm{1}(A_J) \left\Vert \widehat\theta^*_{j, \widehat{F}_J, \widehat\alpha, \widehat\Omega} - \theta^*_{j, \widehat{F}_J, \alpha_0, \Omega_0} \right\Vert_2 &\lesssim_{\mathcal H} M_J (\log J) \Delta_J \\
    \Rightarrow \xi_1 &\lesssim_{\mathcal H} M_J^2 (\log J)^2 \Delta_J^2.
\end{align*}
\end{proof}

\begin{lemma}\label{lem:oa3.1}
    Suppose $\bar Z_J = \max_{j \in [J]} \max(\Vert Z_j\Vert_2,1) \leq M_J, \Vert \widehat\chi - \chi_0 \Vert_J \leq \Delta_J$. Let $\widehat{F}_J$ satisfy Assumption \ref{ass:good_approx} and $\widehat\chi$ satisfy Assumption \ref{ass:est_var}. Then for $\Delta_J, M_J$ of the form \eqref{eq:oa3.4},
\begin{enumerate}
    \item $\max( \Vert \widehat Z_j \Vert_2, 1) \lesssim_{\mathcal H} M_J$
    \item There exists $C_{\mathcal H,\rho}$ such that with $\rho_J = \min\left(\frac{1}{J^4} \exp(-C_{\mathcal H,\rho} M_J^2 \Delta_J), \frac{1}{2\pi e} \right)$,  $f_{\widehat{F}_J, \Psi_j}(Z_j) \geq \frac{\rho_J}{\sqrt{\det(\Psi_j)}}$.
    \item The above $\rho_J$ satisfies $\log(1/\rho_J) \asymp_{\mathcal H} \log J, \varphi_+(\rho_J) \asymp \sqrt{\log(1/\rho_J)} \asymp_{\mathcal H} \sqrt{\log J}$, and $\rho_J \lesssim J^{-4}$.
\end{enumerate}
\end{lemma}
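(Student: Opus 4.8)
The plan is to handle the three numbered claims in turn, with the second (the density lower bound) as the substance and the other two as essentially routine. For the first claim, I would write $\widehat Z_j = \widehat\Omega_{t_j}^{-1/2}(\Omega_{t_j}^{1/2}Z_j + \alpha_{t_j} - \widehat\alpha_{t_j})$ and bound by the triangle inequality, $\|\widehat Z_j\|_2 \leq \|\widehat\Omega_{t_j}^{-1/2}\|_{op}\big(\|\Omega_{t_j}^{1/2}\|_{op}\|Z_j\|_2 + \|\widehat\alpha_{t_j}-\alpha_{t_j}\|_2\big)$. Under Assumptions \ref{ass:bdd_var} and \ref{ass:est_var}(1) the operator norms are bounded by constants depending only on $\mathcal H$, while on the hypothesized event $\|Z_j\|_2 \leq \bar Z_J \leq M_J$ and $\|\widehat\alpha_{t_j}-\alpha_{t_j}\|_2 \leq \Delta_J$. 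Since $\Delta_J = O(J^{-1/2}(\log J)^{1/2})$ is of smaller order than $M_J\asymp(\log J)^{1/2}$, this gives $\max(\|\widehat Z_j\|_2,1)\lesssim_{\mathcal H} M_J$.

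For the second claim I would first obtain a per-observation lower bound on the fitted unnormalized marginal density $g_j \equiv \int \varphi_{\widehat\Psi_j}(\widehat Z_j - \tau)\,d\widehat F_J(\tau)$. For the exact NPMLE the first-order (dual) condition reads $\tfrac1J\sum_i \varphi_{\widehat\Psi_i}(\widehat Z_i - z)/g_i \leq 1$ for every $z$; evaluating at $z=\widehat Z_j$ and retaining only the $i=j$ term (all terms are nonnegative and $\varphi_{\widehat\Psi_j}(0)=1$) yields $g_j\geq 1/J$. Because $\widehat F_J$ is only a $\kappa_J$-approximate maximizer (Assumption \ref{ass:good_approx}), I would run this perturbation argument with $F_\varepsilon = (1-\varepsilon)\widehat F_J + \varepsilon\,\delta_{\widehat Z_j}$ and show the stationarity inequality survives up to an error of order $\kappa_J\asymp(\log J)/J$, still giving $g_j\gtrsim_{\mathcal H} 1/J$. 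I then transfer this from $(\widehat Z_j,\widehat\Psi_j)$ to $(Z_j,\Psi_j)$: by the first claim and the maintained support property of Lemma \ref{lem:oa3.6} (so $\widehat F_J$ is supported in a ball of radius $\lesssim_{\mathcal H} M_J$), every $\tau$ in the support has $\|Z_j-\tau\|_2,\|\widehat Z_j-\tau\|_2\lesssim_{\mathcal H} M_J$, while $\|\widehat Z_j - Z_j\|_2\lesssim_{\mathcal H}\Delta_J M_J$ and $\|\widehat\Psi_j^{-1}-\Psi_j^{-1}\|_{op}\lesssim_{\mathcal H}\Delta_J$ under Assumptions \ref{ass:bdd_var} and \ref{ass:est_var}. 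Writing $u = Z_j-\tau$, $\widehat u = \widehat Z_j-\tau$ and expanding $u^T\Psi_j^{-1}u - \widehat u^T\widehat\Psi_j^{-1}\widehat u = (u-\widehat u)^T\Psi_j^{-1}(u+\widehat u) + \widehat u^T(\Psi_j^{-1}-\widehat\Psi_j^{-1})\widehat u$, each term is $\lesssim_{\mathcal H}\Delta_J M_J^2$ in absolute value, so $\varphi_{\Psi_j}(Z_j-\tau)\geq\varphi_{\widehat\Psi_j}(\widehat Z_j-\tau)\exp(-C_{\mathcal H}\Delta_J M_J^2)$ uniformly over the support. Integrating against $\widehat F_J$ and dividing by $\sqrt{\det(2\pi\Psi_j)}=2\pi\sqrt{\det\Psi_j}$ gives $f_{\widehat F_J,\Psi_j}(Z_j)\geq \tfrac{1}{2\pi\sqrt{\det\Psi_j}}\,e^{-C_{\mathcal H}\Delta_J M_J^2}\,g_j$; taking $C_{\mathcal H,\rho}\geq C_{\mathcal H}$ and using $g_j\gtrsim_{\mathcal H}1/J\geq J^{-4}$ for large $J$ yields the desired $f_{\widehat F_J,\Psi_j}(Z_j)\geq\rho_J/\sqrt{\det\Psi_j}$ with $\rho_J$ as in \eqref{eq:oa3.5}.

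The third claim is arithmetic on the rates \eqref{eq:oa3.4}: since $M_J^2\Delta_J = O(J^{-1/2}(\log J)^{3/2})\to 0$, the exponential factor in $\rho_J$ tends to $1$, so $\rho_J\asymp_{\mathcal H}J^{-4}$ for $J$ large; hence $\rho_J\lesssim J^{-4}$, $\log(1/\rho_J)\asymp_{\mathcal H}\log J$, and $\varphi_+(\rho_J)=\sqrt{\log\tfrac{1}{(2\pi\rho_J)^2}}\asymp\sqrt{\log(1/\rho_J)}\asymp_{\mathcal H}\sqrt{\log J}$. The main obstacle is the per-observation density bound of the second claim: the clean dual-certificate argument is exact only for the exact NPMLE, and the delicate point is to verify it persists, up to a $(\log J)/J$-order slack, for the $\kappa_J$-approximate maximizer. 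The generous choice $\rho_J\asymp J^{-4}$ rather than $J^{-1}$ is precisely what supplies the margin so that this approximation, together with the $\exp(-C_{\mathcal H}\Delta_J M_J^2)$ transfer factor, never makes the regularization bind.
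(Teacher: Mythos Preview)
Your overall structure matches the paper's proof: claim (1) via triangle inequality on $\widehat Z_j$, claim (2) via a perturbation/comparison bound at the fitted data followed by a quadratic-form transfer from $(\widehat Z_j,\widehat\Psi_j)$ to $(Z_j,\Psi_j)$, and claim (3) by direct arithmetic on the rates. The transfer step is carried out exactly as in the paper.

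There is, however, a concrete error in your handling of the approximate-maximizer slack. The dual/KKT inequality $\tfrac1J\sum_i\varphi_{\widehat\Psi_i}(\widehat Z_i-z)/g_i\leq 1$ is a \emph{first-order} condition, and it does not survive for a $\kappa_J$-approximate maximizer ``up to an additive $O(\kappa_J)$ error'' as you suggest. What does survive is the \emph{value comparison}: for $F_\varepsilon=(1-\varepsilon)\widehat F_J+\varepsilon\,\delta_{\widehat Z_j}$, Assumption~\ref{ass:good_approx} gives
\[
\prod_i g_i \;\geq\; e^{-J\kappa_J}\prod_i \int\varphi_{\widehat\Psi_i}(\widehat Z_i-\tau)\,dF_\varepsilon(\tau)
\;\geq\; e^{-J\kappa_J}(1-\varepsilon)^{J-1}\,\varepsilon\,\prod_{i\neq j} g_i,
\]
and taking $\varepsilon=1/J$ with $\kappa_J=\tfrac{3}{J}\log\!\big(J/(2\pi e)^{1/3}\big)$ yields $g_j\geq 2\pi/J^4$, \emph{not} $g_j\gtrsim 1/J$. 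The slack enters multiplicatively as $e^{-J\kappa_J}\asymp J^{-3}$, not additively as $\kappa_J$. This is precisely why $\rho_J$ carries the $J^{-4}$ factor: it is not ``generous margin'' over a $1/J$ bound, it is exactly what the comparison argument delivers. Your downstream use of $g_j\geq J^{-4}$ is what is actually justified, so the proof is easily repaired once you drop the incorrect intermediate claim $g_j\gtrsim 1/J$ and obtain the $J^{-4}$ bound directly from the value comparison, as the paper does following \cite{jiang2020general}.
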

\begin{proof}
    For claim (1), recall that 
    \begin{align*}
        \widehat{Z}_j &= B_j(\widehat{\Omega})^{-1/2} B_j(\Omega_0)^{1/2} Z_j + B_j(\widehat{\Omega})^{-1/2} \left( a_j(\alpha_0) - a_j(\widehat\alpha) \right) \\
        \Rightarrow \max(\Vert \widehat{Z}_j \Vert_2, 1) &\leq \Vert B_j(\widehat{\Omega})^{-1/2} B_j(\Omega_0)^{1/2} Z_j \Vert_2 + \Vert B_j(\widehat{\Omega})^{-1/2} \left( a_j(\alpha_0) - a_j(\widehat\alpha) \right) \Vert_2 \\
        &\leq \Vert B_j(\widehat{\Omega})^{-1/2} \Vert_{op} \Vert B_j(\Omega_0)^{1/2} \Vert_{op} \Vert Z_j \Vert_2 + \Vert B_j(\widehat{\Omega})^{-1/2} \Vert_{op} \Vert a_j(\alpha_0) - a_j(\widehat\alpha) \Vert_2 \\
        &\lesssim_{\mathcal H} M_J + \Delta_J \lesssim_{\mathcal H} M_J.
    \end{align*}

    For claim (2), I can follow the proof of Theorem 5 in \citeappendix{jiang2020general} but for a multivariate distribution for a random vector in $\R^2$:
    Let $\widehat{F}_{J,j} = (1-\varepsilon) \widehat{F}_J + \varepsilon \delta_{\widehat Z_j}$. Then $f_{\widehat{F}_{J,j}, \widehat\Psi_i} (\widehat Z_i) \geq (1-\varepsilon) f_{\widehat{F}_{J}, \widehat\Psi_i} (\widehat Z_i)$ and $f_{\widehat{F}_{J,j}, \widehat\Psi_j} (\widehat Z_j) \geq \frac{\varepsilon}{\sqrt{\det(2\pi \widehat\Psi_j)}}$. So by Assumption \ref{ass:good_approx},
    \begin{align*}
        \prod_{i=1}^J f_{\widehat{F}_{J}, \widehat\Psi_i} (\widehat Z_i) &\geq \exp(-J\kappa_J) \prod_{i=1}^J f_{\widehat{F}_{J,j}, \widehat\Psi_i} (\widehat Z_i) \geq \exp(-J\kappa_J) (1-\varepsilon)^{J-1} \frac{\varepsilon}{\sqrt{\det(2\pi \widehat\Psi_j)}} \prod_{i \neq j} f_{\widehat{F}_{J,i}, \widehat\Psi_i} (\widehat Z_i).
    \end{align*}
    Thus taking $\varepsilon = 1/J$ and canceling terms, $f_{\widehat{F}_{J}, \widehat\Psi_j} (\widehat Z_j) \geq \frac{\exp(-J\kappa_J)}{2\pi e J \sqrt{\det(\widehat\Psi_j)}}$. Plugging in $\kappa_J = \frac{3}{J} \log \left( \frac{J}{(2\pi e)^{1/3}} \right)$ gives
    \begin{align*}
        f_{\widehat{F}_J, \widehat\Psi_j}(\widehat{Z}_j) &\geq \frac{1}{J^4 \sqrt{\det(\widehat\Psi_j)}},
    \end{align*}
    that is,
    \begin{align*}
        \int \frac{1}{2\pi} \exp \left(-\frac{1}{2}(\widehat{Z}_j-\tau)^T \widehat\Psi_j^{-1} (\widehat{Z}_j-\tau) \right) d\widehat{F}_J(\tau) &\geq \frac{1}{J^4}.
    \end{align*}
    Note that
    \begin{align*}
        (\widehat{Z}_j-\tau)^T \widehat\Psi_j^{-1} (\widehat{Z}_j-\tau) &= \left( \Sigma_j^{-1/2} B_j(\widehat\Omega)^{1/2}(\widehat{Z}_j - \tau) \right)^T \left( \Sigma_j^{-1/2} B_j(\widehat\Omega)^{1/2}(\widehat{Z}_j - \tau) \right)
    \end{align*}
    and one can verify that
    \begin{align*}
        &\Sigma_j^{-1/2} B_j(\widehat\Omega)^{1/2}(\widehat{Z}_j - \tau) \\
        &= \Sigma_j^{-1/2} B_j(\Omega_0)^{1/2}(Z_j - \tau) + \Sigma_j^{-1/2}(a_j(\alpha_0) - a_j(\widehat\alpha)) + \Sigma_j^{-1/2}(B_j(\Omega_0)^{1/2} - B_j(\widehat\Omega)^{1/2})\tau.
    \end{align*}
    Let
    \begin{align*}
        \xi(\tau) &\equiv \Sigma_j^{-1/2}(a_j(\alpha_0) - a_j(\widehat\alpha)) + \Sigma_j^{-1/2}(B_j(\Omega_0)^{1/2} - B_j(\widehat\Omega)^{1/2})\tau
    \end{align*}
    and note that $\Vert \xi(\tau) \Vert_2 \lesssim_{\mathcal H} \Delta_J M_J$ over the support of $\tau$ under $\widehat{F}_J$.
    Then 
    \begin{align*}
        &\exp \left(-\frac{1}{2}(\widehat{Z}_j-\tau)^T \widehat\Psi_j^{-1} (\widehat{Z}_j-\tau) \right) \\
        &= \exp \left(-\frac{1}{2}\left( \Sigma_j^{-1/2} B_j(\Omega_0)^{1/2}(Z_j - \tau) + \xi(\tau) \right)^T \left( \Sigma_j^{-1/2} B_j(\Omega_0)^{1/2}(Z_j - \tau) + \xi(\tau) \right) \right) \\
        &= \exp \left( -\frac{1}{2} (Z_j - \tau)^T \Psi_j^{-1} (Z_j - \tau) \right) \times \\
        &\qquad \exp \left(-\frac{1}{2} \xi(\tau)^T \xi(\tau) - \xi(\tau)^T \Sigma_j^{-1/2} B_j(\Omega_0)^{1/2}(Z_j - \tau) \right) \\
        &\leq \exp \left( -\frac{1}{2} (Z_j - \tau)^T \Psi_j^{-1} (Z_j - \tau) \right) \times \\
        &\qquad \exp \left( C_{\mathcal H, \rho} \Delta_J M_J \left\Vert \Sigma_j^{-1/2} B_j(\Omega_0)^{1/2}(Z_j - \tau) \right\Vert_2 \right) \\
        &\leq \exp \left( -\frac{1}{2} (Z_j - \tau)^T \Psi_j^{-1} (Z_j - \tau) \right) \exp \left( C_{\mathcal H, \rho} \Delta_J M_J^2 \right)
    \end{align*}
    where $C_{\mathcal H, \rho}$ is defined by optimizing the quadratic expression over $\Vert \xi(\tau)\Vert_2 \lesssim_{\mathcal H} \Delta_J M_J$ and the final line follows because $\left\Vert \Sigma_j^{-1/2} B_j(\Omega_0)^{1/2}(Z_j - \tau) \right\Vert_2 \lesssim_{\mathcal H} M_J$.
    Thus 
    \begin{align*}
        \int \frac{1}{2\pi} \exp \left(-\frac{1}{2}(Z_j-\tau)^T \Psi_j^{-1} (Z_j-\tau) \right) d\widehat{F}_J(\tau) &\geq \frac{1}{J^4} e^{-C_{\mathcal H, \rho} \Delta_J M_J^2} \\
        \Rightarrow f_{\widehat{F}_J, \Psi_j}(Z_j) &\geq \frac{1}{\sqrt{\det(\Psi_j)}} \frac{1}{J^4} e^{-C_{\mathcal H, \rho} \Delta_J M_J^2}.
    \end{align*}

    For claim (3), I calculate $\log(1/\rho_J) = \max(4 \log J + C_{\mathcal H, \rho}M_J^2\Delta_J, \log(2\pi e)) \asymp_{\mathcal H} \log J$, noting $M_J^2 \Delta_J \lesssim_{\mathcal H} J^{-1/2}(\log J)^{3/2} \lesssim_{\mathcal H} 1$.
\end{proof}

\begin{lemma}\label{lem:oa3.4}
    Under the assumptions of Theorem \ref{thm:a.1}, in the proof of Theorem \ref{thm:a.1} $E\xi_4 \lesssim_{\mathcal H} \frac{1}{J}$.
\end{lemma}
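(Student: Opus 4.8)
The plan is to exploit the fact that $\theta^*_{j,\rho_J}$ and $\theta^*_j$ differ only when the regularizing truncation is active, i.e.\ on the rare event where the oracle marginal density falls below its threshold, $\{f_{F_0,\Psi_j}(Z_j) < \rho_J/\sqrt{\det\Psi_j}\}$. Since $\Vert\theta^*_{\rho_J}-\theta^*\Vert_F^2 = \sum_{j=1}^J \Vert\theta^*_{j,\rho_J}-\theta^*_j\Vert_2^2$, it suffices to control each summand on $A_J$. Writing $\theta^*_{j,\rho_J}-\theta^*_j = \Omega_{t_j}^{1/2}(\tau^*_{j,\rho_J}-\tau^*_j)$ and using $\Vert\Omega_{t_j}^{1/2}\Vert_{op}\leq\sqrt{\overline c}$ from Assumption \ref{ass:bdd_var}, I reduce the problem to bounding $\Vert\tau^*_{j,\rho_J}-\tau^*_j\Vert_2$ on the binding event and bounding the probability of that event.

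First I would bound the magnitude of the difference on $A_J$. The unregularized mean $\tau^*_j = E_{F_0,\Psi_j}[\tau\vert Z_j]$ is a convex combination of points in the support $S_0$, so $\Vert\tau^*_j\Vert_2 \lesssim_{\mathcal H} 1$ by Assumption \ref{ass:compact}. For the regularized mean, when the truncation binds Tweedie's formula gives $\tau^*_{j,\rho_J} = Z_j + (\sqrt{\det\Psi_j}/\rho_J)\,\Psi_j\nabla f_{F_0,\Psi_j}(Z_j)$, and I estimate $\Vert\Psi_j\nabla f_{F_0,\Psi_j}(Z_j)\Vert_2 \leq (\sup_{\tau\in S_0}\Vert\tau-Z_j\Vert_2)\,f_{F_0,\Psi_j}(Z_j) \lesssim_{\mathcal H} M_J\, f_{F_0,\Psi_j}(Z_j)$ on $A_J$, using $\Vert Z_j\Vert_2\leq M_J$ together with compact support. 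The crucial cancellation is that the binding condition forces $f_{F_0,\Psi_j}(Z_j)\sqrt{\det\Psi_j}/\rho_J < 1$, which annihilates the apparent $1/\rho_J$ blow-up and yields $\Vert\tau^*_{j,\rho_J}-Z_j\Vert_2 \lesssim_{\mathcal H} M_J$, hence $\Vert\theta^*_{j,\rho_J}-\theta^*_j\Vert_2^2 \lesssim_{\mathcal H} M_J^2$ on $A_J$.

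Second I would bound the probability of the binding event by a ``small-density'' argument. Because $Z_j$ has marginal density $f_{F_0,\Psi_j}$ and on $A_J$ it is confined to the radius-$M_J$ ball, integrating the density over the low-density region gives $Pr\big(\Vert Z_j\Vert_2\leq M_J,\ f_{F_0,\Psi_j}(Z_j)<\rho_J/\sqrt{\det\Psi_j}\big) \leq (\rho_J/\sqrt{\det\Psi_j})\cdot\pi M_J^2 \lesssim_{\mathcal H} \rho_J M_J^2$, where I use that $\det\Psi_j\geq\underline k^2$ (Assumption \ref{ass:est_var}(3)) so that $1/\sqrt{\det\Psi_j}\lesssim_{\mathcal H}1$, and that the area of a disk of radius $M_J$ in $\R^2$ is $\pi M_J^2$.

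Combining the two bounds gives $E[\mathbbm{1}(A_J)\Vert\theta^*_{j,\rho_J}-\theta^*_j\Vert_2^2] \lesssim_{\mathcal H} M_J^2\cdot\rho_J M_J^2 = \rho_J M_J^4$, and averaging over $j$ yields $E\xi_4 \lesssim_{\mathcal H} \rho_J M_J^4$. Plugging in $\rho_J\lesssim J^{-4}$ and $M_J\asymp_{\mathcal H}\sqrt{\log J}$ from Lemma \ref{lem:oa3.1} gives $E\xi_4 \lesssim_{\mathcal H} J^{-4}(\log J)^2 \lesssim \tfrac1J$ for large $J$, as required. The main obstacle is the first step: showing that the regularized mean stays $O(M_J)$ despite the explicit $1/\rho_J$ factor, which hinges on recognizing that the truncation is active precisely when $f_{F_0,\Psi_j}(Z_j)\sqrt{\det\Psi_j}<\rho_J$, so that the division by $\rho_J$ is harmless.
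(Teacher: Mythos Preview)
Your argument is correct but takes a genuinely different route from the paper. The paper drops the indicator $\mathbbm{1}(A_J)$ altogether and bounds the unrestricted expectation $E\bigl[\Vert\theta^*_{j,\rho_J}-\theta^*_j\Vert_2^2\bigr]$ via Cauchy--Schwarz: it factors the difference as $\Omega_{t_j}^{1/2}\Psi_j\,\nabla f_{F_0,\Psi_j}(Z_j)/f_{F_0,\Psi_j}(Z_j)$ times $\bigl(1-f_{F_0,\Psi_j}(Z_j)/\max(\cdot,\rho_J/\sqrt{\det\Psi_j})\bigr)$, controls the fourth moment of the first factor by Jensen and $E[\Vert\tau_j-Z_j\Vert_2^4]\lesssim_{\mathcal H}1$, and bounds $Pr\bigl(f_{F_0,\Psi_j}(Z_j)<\rho_J/\sqrt{\det\Psi_j}\bigr)$ by a Chebyshev-type small-density estimate (Lemma~\ref{lem:sm6.9}), ending with the rate $\rho_J^{1/4}\lesssim_{\mathcal H}1/J$. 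You instead exploit the $\mathbbm{1}(A_J)$ factor directly: the restriction $\Vert Z_j\Vert_2\leq M_J$ gives an elementary pointwise bound $\Vert\theta^*_{j,\rho_J}-\theta^*_j\Vert_2^2\lesssim_{\mathcal H}M_J^2$ (via the cancellation you identify), and the same restriction lets you bound the binding probability by a crude volume argument, $\rho_J M_J^2$, yielding the much sharper $\rho_J M_J^4\lesssim_{\mathcal H}J^{-4}(\log J)^2$. Your approach is shorter and avoids both Cauchy--Schwarz and Lemma~\ref{lem:sm6.9}; the paper's approach has the minor advantage of bounding the unrestricted expectation, which could be reused off $A_J$ if needed elsewhere.
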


\begin{proof}
    Note that
    \begin{align*}
        E \left[ \Vert \theta^*_{j, \rho_J} - \theta^*_j \Vert_2^2 \right] &= E \left[ \left\Vert B_j(\Omega_0)^{1/2} \Psi_j \frac{\nabla f_{F_0,\Psi_j}(Z_j)}{\max(f_{F_0,\Psi_j}(Z_j), \frac{\rho_J}{\sqrt{\det{\Psi_j}}}) } - B_j(\Omega_0)^{1/2} \Psi_j \frac{\nabla f_{F_0,\Psi_j}(Z_j)}{f_{F_0,\Psi_j}(Z_j)} \right\Vert_2^2 \right] \\
        &\leq \Vert B_j(\Omega_0) \Vert_{op} E \left[ \left\Vert \Psi_j \frac{\nabla f_{F_0,\Psi_j}(Z_j)}{f_{F_0,\Psi_j}(Z_j)} \right\Vert_2^4 \right]^{1/2} E \left[ \left(1 - \frac{f_{F_0,\Psi_j}(Z_j)}{\max(f_{F_0,\Psi_j}(Z_j), \frac{\rho_J}{\sqrt{\det{\Psi_j}}}) } \right)^4 \right]^{1/2} \\
        &= \Vert B_j(\Omega_0) \Vert_{op} E \left[ \Vert E_{F_0, \alpha_0, \Omega_0} \left[ \tau_j - Z_j \vert Z_j \right] \Vert_2^4 \right]^{1/2} E \left[ \left(1 - \frac{f_{F_0,\Psi_j}(Z_j)}{\max(f_{F_0,\Psi_j}(Z_j), \frac{\rho_J}{\sqrt{\det{\Psi_j}}}) } \right)^4 \right]^{1/2} \\
        &\leq \Vert B_j(\Omega_0) \Vert_{op} E \left[ \Vert \tau_j - Z_j \Vert_2^4 \right]^{1/2} Pr \left( f_{F_0, \Psi_j}(Z_j) < \frac{\rho_J}{\sqrt{\det{\Psi_j}}} \right)^{1/2} \\
        &\lesssim_{\mathcal H} \Vert B_j(\Omega_0) \Vert_{op} E \left[ \Vert \tau_j - Z_j \Vert_2^4 \right]^{1/2} \rho_J^{1/4} \left(Var(Z_j^w) + Var(Z_j^g) \right)^{1/4} \\
        &\lesssim_{\mathcal H} \rho_J^{1/4} \lesssim_{\mathcal H} \frac{1}{J},
    \end{align*}
    where the second line follows from submultiplicativity and Cauchy-Schwarz, the third line from Tweedie's formula, the fourth line from Jensen's inequality, the fifth line from Lemma \ref{lem:sm6.9}, and the final line from Lemma \ref{lem:oa3.1}.
\end{proof}

\begin{corollary} \label{cor:oa3.1}
    Assume Assumptions \ref{ass:compact}, \ref{ass:bdd_sig}, \ref{ass:est_var}, and \ref{ass:good_approx} hold. Suppose $\Delta_J, M_J$ take the form \eqref{eq:oa3.4}. Define the rate function 
    \begin{align*}
        \delta_J = J^{-1/2}(\log J)^{3/2}.
    \end{align*}
    Then there exists some constant $B_{\mathcal H}$, depending solely on $C_{\mathcal H}^*$ in Corollary \ref{cor:sm6.1} and $\mathcal H$ such that
    \begin{align*}
        Pr \left( A_J, \bar h(f_{\hat F_J, \cdot}, f_{F_0, \cdot}) > B_{\mathcal H} \delta_J \right) \leq \left( \frac{\log\log J}{\log 2} + 10 \right)\frac{1}{J}.
    \end{align*} 
\end{corollary}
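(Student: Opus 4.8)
The plan is to reduce the statement to a bound on a localized empirical process and then resolve it by a peeling (slicing) argument that invokes the deviation bound of Corollary~\ref{cor:sm6.1}. Throughout I would work on the event $A_J$, where $\Vert\widehat\chi-\chi_0\Vert_\infty\le\Delta_J$ and $\bar Z_J\le M_J$; the role of $A_J$ is to confine both the true residuals and the support of the NPMLE $\widehat F_J$ to a ball of radius of order $M_J\asymp_{\mathcal H}\sqrt{\log J}$, which renders the relevant class of mixing distributions a compact, finite-entropy family and supplies the polylog factors in $\delta_J$. Since the leading work is done by Corollary~\ref{cor:sm6.1}, the present corollary is essentially the peeling wrapper that localizes and unionizes that deviation bound.

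First I would extract the basic inequality from the approximate-optimality condition of Assumption~\ref{ass:good_approx}: because $\widehat F_J$ nearly maximizes the transformed log-likelihood, $\frac1J\sum_j[\psi_j(\widehat F_J)-\psi_j(F_0)]\ge-\kappa_J$, so $\frac1J\sum_j\log\frac{f_{\widehat F_J,\Psi_j}(Z_j)}{f_{F_0,\Psi_j}(Z_j)}\ge-\kappa_J$. Using $\tfrac12\log x\le\sqrt x-1$ together with the Hellinger-affinity identity $E_j\bigl[\sqrt{f_{\widehat F_J,\Psi_j}/f_{F_0,\Psi_j}}\,\bigr]=1-h^2(f_{\widehat F_J,\Psi_j},f_{F_0,\Psi_j})$ (the expectation being over $Z_j$), I would split the average square-root ratio into a centered empirical process $\mathbb G_J(\widehat F_J)$ and its mean $-\bar h^2$, yielding the key reduction
\[
\bar h^2(f_{\widehat F_J,\cdot},f_{F_0,\cdot})\ \le\ \tfrac12\kappa_J+\mathbb G_J(\widehat F_J).
\]
Because $\kappa_J\asymp_{\mathcal H}(\log J)/J$ is of strictly smaller order than $\delta_J^2=(\log J)^3/J$, the tolerance term is absorbed into the target and the task becomes showing that $\mathbb G_J(\widehat F_J)$ cannot be large whenever $\bar h$ is.

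The core step is the peeling. On the dyadic shell $\{A_J,\ 2^s B_{\mathcal H}\delta_J<\bar h\le2^{s+1}B_{\mathcal H}\delta_J\}$ the reduction forces $\mathbb G_J(\widehat F_J)\gtrsim_{\mathcal H} 4^{s}B_{\mathcal H}^2\delta_J^2$, while $\widehat F_J$ lies in the localized ball $\{F:\bar h(f_{F,\cdot},f_{F_0,\cdot})\le2^{s+1}B_{\mathcal H}\delta_J\}$. Corollary~\ref{cor:sm6.1}, through its constant $C_{\mathcal H}^*$, controls the supremum of $\mathbb G_J$ over such a ball by the local entropy of bivariate Gaussian location mixtures with support radius $M_J$; since $\delta_J$ is (up to the polylog factors tracked by $M_J$) precisely the fixed point at which this fluctuation balances the quadratic signal $r^2$, taking $B_{\mathcal H}$ a suitable multiple of $C_{\mathcal H}^*$ makes the signal dominate on every shell. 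The delicate balancing occurs only within a polylogarithmic band of radii around the fixed point, which spans $O(\log\log J)$ dyadic shells, each failing with probability $O(1/J)$, whereas shells well above the fixed point have overwhelmingly small probability; summing over levels delivers the stated bound $(\tfrac{\log\log J}{\log2}+10)\tfrac1J$. A technical bridge of the kind already used in Lemma~\ref{lem:oa3.1} is needed here, since the basic inequality is stated in the residualized coordinates $(\widehat Z_j,\widehat\Psi_j)$ while $\bar h$ compares densities at the true $(Z_j,\Psi_j)$; on $A_J$ one transfers between them at the multiplicative cost $\exp(C_{\mathcal H}\Delta_J M_J^2)\asymp_{\mathcal H}1$, as $\Delta_J M_J^2\lesssim_{\mathcal H}J^{-1/2}(\log J)^{3/2}\to0$.

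The main obstacle is the content hidden inside Corollary~\ref{cor:sm6.1}: controlling the localized empirical process uniformly over the infinite-dimensional class of mixing distributions. This requires a local bracketing/metric-entropy bound for bivariate heteroscedastic Gaussian location mixtures supported in a ball of radius $M_J$, together with a maximal inequality (of Talagrand--Bousquet type) that converts the expected supremum into a high-probability deviation. It is precisely this step that extends the univariate argument of \cite{chen2022empirical} to two dimensions using the multivariate entropy estimates of \cite{soloff2024multivariate}, and where the $M_J\asymp_{\mathcal H}\sqrt{\log J}$ support radius produces the $(\log J)^{3/2}$ power in $\delta_J$. Given that engine, the peeling bookkeeping described above is comparatively routine, and the resulting Corollary feeds directly into the control of $\xi_3$ in the proof of Theorem~\ref{thm:a.1}.
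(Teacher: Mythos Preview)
Your high-level picture (peeling over Hellinger shells, with the $\log\log J$ factor arising from the number of active shells) is correct, but you have the two engine results swapped, and this leaves a genuine gap.

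Corollary~\ref{cor:sm6.1} does \emph{not} control a localized empirical process. Its content is the perturbation analysis you dismiss as a ``technical bridge'': it shows that although $\widehat F_J$ is computed at the estimated coordinates $(\widehat Z_j,\widehat\Psi_j)$, its suboptimality $\text{Sub}_J(\widehat F_J)$ at the \emph{true} coordinates $(Z_j,\Psi_j)$ is bounded by $C_{\mathcal H}^*[\delta_J\,\bar h(f_{\widehat F_J,\cdot},f_{F_0,\cdot})+\delta_J^2]$ with probability at least $1-9/J$. This is a Taylor-expansion calculation (the $U_{1k},U_{2k},U_{3k},R_1,R_2$ terms of Theorem~\ref{thm:sm6.1}), not a maximal inequality, and the $\bar h$-dependent term it produces is essential---it is what allows the subsequent peeling to close self-consistently. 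Treating the bridge as a mere multiplicative $\exp(C\Delta_J M_J^2)$ factor loses exactly this structure.

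The piece you actually need for the peeling, and which you do not cite, is Theorem~\ref{thm:sm7.1}. That result does the work you attribute to Corollary~\ref{cor:sm6.1}: it slices the set $\{H:\bar h(f_{H,\cdot},f_{F_0,\cdot})\ge tB\lambda_J^{1-\epsilon}\}$ into $O(\log_2(1/\epsilon))$ shells, covers each shell by a finite net (using the bivariate Gaussian-mixture entropy bounds from \cite{soloff2024multivariate} and \cite{saha2020nonparametric}), and applies Markov's inequality to the square-root likelihood ratio---not a Talagrand--Bousquet inequality for $\mathbb G_J$---to show that no $H$ with suboptimality $\le C^*(\gamma_J^2+\bar h\lambda_J)$ can simultaneously sit in a far shell. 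The paper's architecture is therefore: Corollary~\ref{cor:sm6.1} places $\widehat F_J$ in the near-optimal set $A(\gamma_J,\lambda_J)$ with failure probability $9/J$, and Theorem~\ref{thm:sm7.1} (with $t=1$, $\epsilon\asymp 1/\log J$) shows $A\cap B$ is empty with failure probability $(\log_2(1/\epsilon)+1)/J$; the union bound gives the stated $(\tfrac{\log\log J}{\log 2}+10)/J$. Your basic-inequality route could in principle be made to work, but it would require proving a separate uniform deviation bound for your $\mathbb G_J$ over Hellinger balls, which is not supplied by any result you invoke.
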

The proof of Corollary \ref{cor:oa3.1} is deferred to Supplemental Appendix \ref{app:cor:oa3.1}.

\begin{lemma} \label{lem:oa3.5}
    Under the assumptions of Theorem \ref{thm:a.1}, in the proof of Theorem \ref{thm:a.1}, $E\xi_3 \lesssim_{\mathcal H} (\log J)^3 \delta_J^2$, for $\delta_J = J^{-1/2}(\log J)^{3/2}$, as in Corollary \ref{cor:oa3.1}.
\end{lemma}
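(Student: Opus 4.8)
The term $\xi_3$ isolates the error from replacing the true prior $F_0$ by the NPMLE $\widehat F_J$, holding the location--scale parameters fixed at their true values $\alpha_0,\Omega_0$ and regularizing at level $\rho_J$. The plan is to bound $\xi_3$, on the event $A_J$, by the averaged squared Hellinger distance between the implied marginal mixture densities, up to polylogarithmic factors coming from the regularization and from the truncation of $\|Z_j\|_2$, and then to invoke the high-probability Hellinger rate of Corollary \ref{cor:oa3.1}. Writing $\bar h$ as shorthand for $\bar h(f_{\widehat F_J,\cdot},f_{F_0,\cdot})$, the decomposition I would use is
\begin{align*}
    E\xi_3 = E\left[\xi_3\,\mathbbm{1}\!\left(\bar h \le B_{\mathcal H}\delta_J\right)\right] + E\left[\xi_3\,\mathbbm{1}\!\left(\bar h > B_{\mathcal H}\delta_J\right)\right],
\end{align*}
with $B_{\mathcal H}$ and $\delta_J=J^{-1/2}(\log J)^{3/2}$ as in Corollary \ref{cor:oa3.1}.

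The main analytic step is a deterministic, per-policy bound valid on $A_J$. Using the regularized Tweedie representation I would express $\widehat\theta^*_{j,\widehat F_J,\alpha_0,\Omega_0,\rho_J}-\theta^*_{j,\rho_J}$ through the difference of the regularized scores $\Omega_{t_j}^{1/2}\Psi_j\,\nabla f_{\widehat F_J,\Psi_j}/\max(f_{\widehat F_J,\Psi_j},\rho_J/\sqrt{\det\Psi_j})$ and its $F_0$ analogue. On $A_J$ two features are available: by Lemma \ref{lem:oa3.1} the regularization is chosen so that the true mixture density is bounded below by $\rho_J/\sqrt{\det\Psi_j}$, so the denominators never vanish and contribute a factor of order $\varphi_+(\rho_J)^2\asymp_{\mathcal H}\log J$; and $\|Z_j\|_2\le M_J\asymp_{\mathcal H}\sqrt{\log J}$, which confines the relevant points to a compact region. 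The Gaussian smoothing estimates of the supplement (Lemmas \ref{lem:sm6.9} and \ref{lem:sm6.11}) then convert the suprema over $\{\|z\|_2\le M_J\}$ of the density and gradient differences into the Hellinger distance $h(f_{\widehat F_J,\Psi_j},f_{F_0,\Psi_j})$, with the $M_J$-dependence of the conversion supplying the remaining powers of $\log J$. Averaging over $j$ and identifying $\tfrac1J\sum_j h^2(f_{\widehat F_J,\Psi_j},f_{F_0,\Psi_j})$ with $\bar h^2$ yields
\begin{align*}
    \xi_3 \;\lesssim_{\mathcal H}\; (\log J)^3\,\bar h^2\!\left(f_{\widehat F_J,\cdot},f_{F_0,\cdot}\right)\quad\text{on }A_J .
\end{align*}

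Given this bound the two pieces of the decomposition are immediate. On $\{\bar h\le B_{\mathcal H}\delta_J\}$ the displayed inequality gives $\xi_3\lesssim_{\mathcal H}(\log J)^3\delta_J^2$ deterministically, which is the claimed order. On the complementary event intersected with $A_J$, I would use the crude bound $\xi_3\lesssim_{\mathcal H} M_J^2\lesssim_{\mathcal H}\log J$ — valid because on $A_J$ every regularized posterior mean stays within order $\bar Z_J\le M_J$ of the data, exactly as in Lemma \ref{lem:oa3.6}, the regularization only enlarging the denominator — together with the tail bound $Pr(A_J,\ \bar h> B_{\mathcal H}\delta_J)\lesssim (\log\log J)/J$ from Corollary \ref{cor:oa3.1}. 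This contributes at most $(\log J)(\log\log J)/J$, negligible relative to the target $(\log J)^3\delta_J^2=(\log J)^6/J$. Combining the two pieces gives $E\xi_3\lesssim_{\mathcal H}(\log J)^3\delta_J^2$.

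The principal obstacle is the analytic per-policy step: converting the difference of the two regularized Bayes rules into a Hellinger distance while keeping the polylogarithmic loss to exactly the factors supplied by $\varphi_+(\rho_J)\asymp_{\mathcal H}\sqrt{\log J}$ and $M_J\asymp_{\mathcal H}\sqrt{\log J}$. This is where the regularization-dependent lower bound of Lemma \ref{lem:oa3.1} and the Gaussian kernel estimates are essential; once that bound is in hand, the remainder is bookkeeping through the truncation event $A_J$ and the NPMLE Hellinger rate of Corollary \ref{cor:oa3.1}.
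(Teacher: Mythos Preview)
Your decomposition into $\{\bar h\le B_{\mathcal H}\delta_J\}$ and its complement, and your treatment of the complement via the crude $\xi_3\lesssim_{\mathcal H}\log J$ bound combined with Corollary~\ref{cor:oa3.1}, match the paper's handling of $\zeta_1$. The gap is in the main piece: the deterministic inequality
\[
\xi_3\;\lesssim_{\mathcal H}\;(\log J)^3\,\bar h^2(f_{\widehat F_J,\cdot},f_{F_0,\cdot})\quad\text{on }A_J
\]
that you assert does not hold. The Hellinger distance is an \emph{integrated} quantity; there is no pointwise inequality that converts the difference of regularized Bayes rules at a specific $Z_j$ into $h(f_{\widehat F_J,\Psi_j},f_{F_0,\Psi_j})$. (Lemmas~\ref{lem:sm6.9} and~\ref{lem:sm6.11} do not do this: one bounds the mass of a sublevel set of a density, the other bounds second derivatives of $\psi_j$.) The correct conversion, Lemma~E.1 of \cite{saha2020nonparametric}, is a bound on the \emph{expectation} of $\|\tau^{(F)}_{\rho_J}-\tau^*_{\rho_J}\|_F^2$ for a \emph{fixed} prior $F$, with the expectation taken over $Z_{1:J}\sim f_{F_0,\cdot}$.

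This is fatal because $\widehat F_J$ is a function of the same $Z_{1:J}$ you would need to integrate over. You cannot apply the fixed-prior Hellinger bound directly to the random $\widehat F_J$. The paper resolves this by a covering argument: it builds a finite $\omega$-net $F_1,\dots,F_N$ of priors in the metric $d_{M_J,\rho_J}$, projects $\widehat F_J$ onto this net (incurring error $\zeta_2\lesssim\omega$), replaces the data-dependent prior by a maximum over the $N$ fixed priors, controls the fluctuation of that maximum around its mean by a sub-Gaussian concentration bound that pays $\log N$ ($\zeta_3$), and finally applies the fixed-prior expected Hellinger bound to each $F_i$ ($\zeta_4$). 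Choosing $\omega$ and bounding $\log N$ via the metric-entropy estimates of \cite{soloff2024multivariate} gives the $(\log J)^3\delta_J^2$ rate. Your ``analytic per-policy step'' would need to be replaced by this uniformity machinery; the bookkeeping you describe is not enough.
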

\begin{proof}
    Note that 
\begin{align*}
    \Vert \widehat \theta_{\widehat{F}_J, \alpha_0, \Omega_0, \rho_J}^* - \theta_{\rho_J}^* \Vert_F &= \Vert B_j(\Omega_0)^{1/2} (\widehat \tau_{\widehat{F}_J, \alpha_0, \Omega_0, \rho_J}^* - \tau_{\rho_J}^*) \Vert_F \\
    &\leq \Vert B_j(\Omega_0)^{1/2} \Vert_{op} \Vert \widehat \tau_{\widehat{F}_J, \alpha_0, \Omega_0, \rho_J}^* - \tau_{\rho_J}^* \Vert_F.
\end{align*}
Thus to control $\xi_3$ I will control the object $\frac{\mathbbm{1}(A_J)}{J} \Vert \widehat \tau_{\widehat{F}_J, \alpha_0, \Omega_0, \rho_J}^* - \tau_{\rho_J}^* \Vert_F^2$.

Let $B_J = \{\bar h(f_{\widehat{F}_J, \cdot}, f_{F_0, \cdot}) < B_{\mathcal H} \delta_J\}$ for constant $B_{\mathcal H}$ in Corollary \ref{cor:oa3.1}.
Let $F_1, \dots, F_N$ be a set of prior distributions that is a minimal $\omega$-covering of $\{F: \bar{h}(f_{F, \cdot}, f_{F_0, \cdot}) \leq B_{\mathcal H} \delta_J\}$ in the metric
\begin{align*}
    d_{M_J, \rho_J}(H_1, H_2) = \max_{j \in [J]} \sup_{z: \Vert z \Vert_2 \leq M_J} \left\Vert \frac{\Psi_j \nabla f_{H_1, \Psi_j}(z)}{\max \left(f_{H_1, \Psi_j}(z), \frac{\rho}{\sqrt{\det(\Psi_j)}} \right)} - \frac{\Psi_j \nabla f_{H_2, \Psi_j}(z)}{\max\left(f_{H_2, \Psi_j}(z), \frac{\rho}{\sqrt{\det(\Psi_j)}} \right)} \right\Vert_2,
\end{align*}
where $N \leq N(\omega/2, \mathcal{P}(\R^2), d_{M_J, \rho_J})$ by monotonicity relation of covering numbers, as in \citeappendix{chen2022empirical}.
Let $\tau_{\rho_J}^{(i)}$ be the posterior mean vector corresponding to prior $F_i$ with conditional moments $\alpha_0, \Omega_0$ and regularization parameter $\rho_J$.
Then
\begin{align*}
    \frac{\mathbbm{1}(A_J)}{J} \Vert \widehat \tau_{\widehat{F}_J, \alpha_0, \Omega_0, \rho_J}^* - \tau_{\rho_J}^* \Vert_F^2 &\leq \frac{4}{J} \left( \zeta_1^2 + \zeta_2^2 + \zeta_3^2 + \zeta_4^2 \right), \\
    \zeta_1^2 &\equiv \Vert \widehat \tau_{\widehat{F}_J, \alpha_0, \Omega_0, \rho_J}^* - \tau_{\rho_J}^* \Vert_F^2 \mathbbm{1}(A_J \cap B_J^C) \\
    \zeta_2^2 &\equiv \left( \Vert \widehat \tau_{\widehat{F}_J, \alpha_0, \Omega_0, \rho_J}^* - \tau_{\rho_J}^* \Vert_F - \max_{i \in [N]} \Vert \tau_{\rho_J}^{(i)} - \tau_{\rho_J}^* \Vert_F \right)^2_+ \mathbbm{1}(A_J \cap B_J) \\
    \zeta_3^2 &\equiv \max_{i \in [N]} \left(\Vert \tau_{\rho_J}^{(i)} - \tau_{\rho_J}^* \Vert_F - E [ \Vert \tau_{\rho_J}^{(i)} - \tau_{\rho_J}^* \Vert_F ] \right)^2_+ \\
    \zeta_4^2 &\equiv \max_{i \in [N]} \left(E [ \Vert \tau_{\rho_J}^{(i)} - \tau_{\rho_J}^* \Vert_F ] \right)^2.
\end{align*}

I will show that $\frac{1}{J} E[\zeta_1^2] \lesssim_{\mathcal H} \frac{\log J \log \log J}{J}$, $\frac{1}{J} E[\zeta_2^2+\zeta_3^2] \lesssim_{\mathcal H} \frac{(\log J)^4}{J}$, and $\frac{1}{J} E[\zeta_4^2] \lesssim_{\mathcal H} (\log J)^3 \delta_J^2$.
By definition of $\delta_J$, the dominating rate is $(\log J)^3 \delta_J^2 \lesssim_{\mathcal H} \frac{(\log J)^6}{J}$.
Thus $E[\xi_3] \lesssim_{\mathcal H} (\log J)^3 \delta_J^2$.

\textbf{To control $\zeta_1$}:
From section D.3.1 in \citeappendix{soloff2024multivariate},
\begin{align*}
    \frac{1}{J} E \zeta_1^2 &\lesssim_{\mathcal H} \varphi_+(\rho_J)^2 Pr(A_J \cap B_J^C) \lesssim_{\mathcal H} (\log J) Pr(A_J \cap B_J^C).
\end{align*}
By Corollary \ref{cor:oa3.1}, $Pr(A_J \cap B_J^C) \leq \left( \frac{\log \log J}{\log 2} + 10 \right) \frac{1}{J}$ and hence $\frac{1}{J} E \zeta_1^2 \lesssim_{\mathcal H} \frac{\log J \log\log J}{J}$.

\textbf{To control $\zeta_2$ and $\zeta_3$}:
As in section OA3.2.2 in \citeappendix{chen2022empirical} and section D.3.2 in \citeappendix{soloff2024multivariate}, on $A_J \cap B_J$ I can write
\begin{align*}
    &\frac{1}{J} \zeta_2^2 \\
    &\leq \mathbbm{1}(A_J \cap B_J) \min_{i \in [N]} \frac{1}{J} \sum_{j=1}^J \mathbbm{1}(\Vert Z_j \Vert_2 \leq M_J) \left\Vert \frac{\Psi_j \nabla f_{\widehat{F}_J, \Psi_j}(Z_j)}{\max(f_{\widehat{F}_J, \Psi_j}(Z_j), \frac{\rho}{\sqrt{\det(\Psi_j)}})} - \frac{\Psi_j \nabla f_{F_i, \Psi_j}(Z_j)}{\max(f_{F_i, \Psi_j}(Z_j), \frac{\rho}{\sqrt{\det(\Psi_j)}} )} \right\Vert_2^2 \\
    &\leq \omega^2.
\end{align*}

Section D.3.3 of \citeappendix{soloff2024multivariate} gives us
\begin{align*}
    E \zeta_3^2 &\lesssim_{\mathcal H} (\varphi_+(\rho_J))^2 \log(eN) \lesssim_{\mathcal H} \log J \log N
\end{align*}
using Lemma \ref{lem:oa3.1}.
Following section D.3.5 of \citeappendix{soloff2024multivariate}, I will choose $\omega = 2 \left(\bar k^{3/2} \varphi_+(\rho_J) + \bar k^2 \right) \frac{1}{J}$.
Note that by section D.3.5 of \citeappendix{soloff2024multivariate} and because $J \geq \frac{5}{\underline k}$, I can bound the metric entropy
\begin{align*}
    \log N \left( \left(\bar k^{3/2} \varphi_+(\rho_J) + \bar k^2 \right) \frac{1}{J}, \mathcal{P}(\mathbb R^2), d_{M_J, \rho_J} \right) &\lesssim_{\mathcal H} (\log J)^2 M_J^2.
\end{align*}
Also $\frac{1}{J} \zeta_2^2 \lesssim_{\mathcal H} J^{-1} \sqrt{\log J}$.
Thus $\frac{1}{J}E[\zeta_2^2+\zeta_3^2] \lesssim_{\mathcal H} \frac{(\log J)^3 M_J^2}{J} \lesssim_{\mathcal H} \frac{(\log J)^4}{J}$.

\textbf{To control $\zeta_4$}:
As in Section D.3.4 of \citeappendix{soloff2024multivariate}, using Lemma E.1 of \citeappendix{saha2020nonparametric} I can write
\begin{align*}
    \left( E \left\Vert \tau_{\rho_J}^{(i)} - \tau_{\rho_J}^* \right\Vert_F \right)^2 &\lesssim_{\mathcal H} \sum_{j=1}^J \max \left\{ (\varphi_+(\rho_J))^6, \left\vert \log h\left(f_{F_0, \Psi_j}, f_{F^{(i)}, \Psi_j} \right) \right\vert \right\} h^2\left(f_{F_0, \Psi_j}, f_{F^{(i)}, \Psi_j} \right).
\end{align*}
Then following the exact same argument in section OA3.2.4 of \citeappendix{chen2022empirical}, $\frac{1}{J} E \zeta_4^2 \lesssim_{\mathcal H} (\log J)^3 \delta_J^2 \lesssim_{\mathcal H} (\log J)^6 J^{-1}$.
\end{proof}

\subsection{Proof of Corollary \ref{cor:oa3.1}} \label{app:cor:oa3.1}

\subsubsection{Derivative computations} \label{app:sec:sm6.1}
I first compute derivatives of $\psi_j$ with respect to $a_j(\alpha)$ and $B_j(\Omega)^{1/2}$, which will be useful in later proofs.
Let $\nabla^2 f_{F, \Psi}(z)$ denote the Hessian matrix of $f_{F, \Psi}$ evaluated at $z$. Since all derivatives are evaluated at $F, \alpha, \Omega$, I denote $\hat{z} = \widehat{Z}_j(\alpha, \Omega)$ and $\hat \Psi_j = B_j(\Omega)^{-1/2}\Sigma_j B_j(\Omega)^{-1/2}$.
Then
\begin{small}
\begin{align*}
    \frac{\nabla f_{F, \Psi}(z)}{f_{F, \Psi}(z)} &= \Psi^{-1} E[\tau - Z \vert z] \\
    \frac{\nabla^2 f_{F, \Psi}(z)}{f_{F, \Psi}(z)} &= \Psi^{-1} E[(\tau-Z)(\tau-Z)^T \vert z] \Psi^{-1} - \Psi^{-1}
\end{align*}
\vspace{-25pt}
\begin{align*}
    \frac{\partial \psi_j}{\partial a_j^T} \bigg\vert_{F, \alpha, \Omega} &= -B_j(\Omega)^{-1/2} \frac{\nabla f_{F, \hat\Psi_j}(\hat z)}{f_{F, \hat\Psi_j}(\hat z)} \\
    &= \Sigma_j^{-1} B_j(\Omega)^{1/2} E[Z_j - \tau_j \vert \hat z]
\end{align*}
\vspace{-25pt}
\begin{align*}
    \frac{\partial^2 \psi_j}{\partial a_j \partial a_j^T} \bigg\vert_{F, \alpha, \Omega} &= B_j(\Omega)^{-1/2} \left( \frac{\nabla^2 f_{F, \hat\Psi_j}(\hat z)}{f_{F, \hat\Psi_j}(\hat z)} - \frac{\nabla f_{F, \hat\Psi_j}(\hat z) \nabla f_{F, \hat\Psi_j}(\hat z)^T}{f^2_{F, \hat\Psi_j}(\hat z)} \right) B_j(\Omega)^{-1/2}  \\
    &= B_j(\Omega)^{-1/2} \bigg( \hat\Psi_j^{-1} E[(\tau_j-Z_j)(\tau_j-Z_j)^T | \hat z] \hat\Psi_j^{-1} \\
    &\qquad - \hat\Psi_j^{-1} - \hat\Psi_j^{-1} E[\tau_j - Z_j|\hat z] E[\tau_j - Z_j|\hat z]^T \hat\Psi_j^{-1} \bigg) B_j(\Omega)^{-1/2}  
\end{align*}
\vspace{-25pt}
\begin{align*}
    \frac{\partial \psi_j}{\partial B_j^{1/2}} \bigg\vert_{F, \alpha, \Omega} &= \frac{\Sigma_j^{-1}B_j(\Omega)^{1/2}}{\sqrt{\det(2\pi \hat\Psi_j)}f_{F,\hat\Psi_j}(\hat z)} \int \varphi_{\hat\Psi_j}(\widehat{Z}_j - \tau) (\widehat{Z}_j - \tau)\tau^T dF(\tau) \\
    &= \Sigma_j^{-1}B_j(\Omega)^{1/2} E[(Z_j - \tau_j)\tau_j^T \vert \hat z] \\
    \Rightarrow \frac{\partial \psi_j}{\partial \text{vec}(B_j^{1/2})} \bigg\vert_{F, \alpha, \Omega} &= \frac{(I_2 \otimes \Sigma_j^{-1}B_j(\Omega)^{1/2})}{\sqrt{\det(2\pi \hat\Psi_j)}f_{F,\hat\Psi_j}(\hat z)} \underbrace{\int \varphi_{\hat\Psi_j}(\widehat{Z}_j - \tau) \text{vec}\left( (\widehat{Z}_j - \tau)\tau^T \right) dF(\tau)}_{Q_j(\widehat{Z}_j, F, \hat\Psi_j)} \\
    &= (I_2 \otimes \Sigma_j^{-1}B_j(\Omega)^{1/2}) E \left[\text{vec}\left((Z_j-\tau_j)\tau_j^T \right) \vert \hat z \right]
\end{align*}
\vspace{-25pt}

\begin{align*}
    \frac{\partial^2 \psi_j}{\partial \text{vec}(B_j^{1/2}) \partial a_j^T} \bigg\vert_{F, \alpha, \Omega} &= \frac{I_2 \otimes \Sigma_j^{-1}B_j(\Omega)^{1/2}}{\sqrt{\det(2\pi \hat\Psi_j)} f_{F,\hat\Psi_j}(\hat z)} \int \varphi_{\hat\Psi_j}(\widehat{Z}_j - \tau) \bigg\{\text{vec}\left((\widehat{Z}_j - \tau)\tau^T \right) (\widehat{Z}_j - \tau)^T \hat\Psi_j^{-1} \\[-15pt]
    &\hspace{200pt} - \tau \otimes I_2 \bigg\} B_j(\Omega)^{-1/2} dF(\tau)  \\
    &\qquad + (I_2 \otimes \Sigma_j^{-1}B_j(\Omega)^{1/2}) \frac{Q_j(Z_j, F, \hat\Psi_j)}{\sqrt{\det(2\pi \hat\Psi_j)}f_{F,\hat\Psi_j}(\hat z)} \frac{(\nabla f_{F,\hat\Psi_j}(\hat z))^T}{f_{F,\hat\Psi_j}(\hat z)} B_j(\Omega)^{-1/2} \\
    &= (I_2 \otimes \Sigma_j^{-1}B_j(\Omega)^{1/2}) E \left[\text{vec}\left((Z_j - \tau)\tau^T \right) (Z_j - \tau)^T \hat\Psi_j^{-1} - \tau \otimes I_2 \vert \hat z \right] B_j(\Omega)^{-1/2}  \\
    &\qquad + (I_2 \otimes \Sigma_j^{-1}B_j(\Omega)^{1/2}) E \left[\text{vec}\left((Z_j-\tau_j)\tau_j^T \right) \vert \hat z \right] E\left[(\tau_j - Z_j)^T \vert \hat z \right] \hat\Psi_j^{-1} B_j(\Omega)^{-1/2} 
\end{align*}
\end{small}

While calculation of the derivative $\displaystyle \frac{\partial^2 \psi_j}{\partial \text{vec}(B_j^{1/2}) \partial \text{vec}(B_j^{1/2})^T} \bigg\vert_{F, \alpha, \Omega}$ is tricky, one can verify that it is the weighted sum of posterior means 
$E\left[ \text{vec}\left((Z_j-\tau_j)\tau_j^T \right) \text{vec}\left((Z_j-\tau_j)\tau_j^T \right)^T \vert \hat z \right]$, \\ 
$E \left[\text{vec}\left((Z_j-\tau_j)\tau_j^T \right) \vert \hat z \right] E \left[\text{vec}\left((Z_j-\tau_j)\tau_j^T \right) \vert \hat z \right]^T$, 
$E[Z\tau^T \vert \hat z],$ 
and $E[\text{vec}(Z-\tau)\tau^T \vert \hat z]$, with weights that are simple functions of $\Sigma_j$ and $B_j(\Omega)$.

\subsubsection{Preliminary results}

Throughout this subsection I use the following high-level assumption on rates $\Delta_J, M_J$, which is exactly Assumption SM6.1 in \citeappendix{chen2022empirical}. Note that the assumption is satisfied for the choice \eqref{eq:oa3.4}.

\begin{assumption}\label{ass:sm6.1}
    Assume that 1) $\frac{1}{\sqrt{J}} \lesssim_{\mathcal H} \Delta_J \lesssim_{\mathcal H} M_J^{-3} \lesssim_{\mathcal H} 1$, and 2) $\sqrt{\log J} \lesssim_{\mathcal H} M_J$.
\end{assumption}

Much of this subsection will be focused on proving the following result.
\begin{theorem}\label{thm:sm6.1}
Under the assumptions of Theorem \ref{thm:a.1} and Assumption \ref{ass:sm6.1}, there exist constants $C_{1,\mathcal H}, C_{2,\mathcal H} > 0$ such that the following tail bound holds: Let
\begin{align*}
    \epsilon_J &= M_J \sqrt{\log J} \Delta_J \frac{1}{J} \sum_{j=1}^J h \left(f_{\widehat{F}_J, \Psi_j}, f_{F_0, \Psi_j} \right) + \Delta_J \log J e^{-C_{2,\mathcal H}M_J^2} + \Delta_J^2 M_J^2 \log J + \frac{M_J^2(\log J)^{3/2} \Delta_J}{\sqrt{J}}.
\end{align*}
Then
\begin{align*}
    Pr \left(\bar Z_J \leq M_J, \Vert \widehat\chi - \chi_0 \Vert_J \leq \Delta_J, \text{Sub}_J(\widehat{F}_J) >  C_{1,\mathcal H} \epsilon_J \right) &\leq \frac{9}{J}.
\end{align*}
\end{theorem}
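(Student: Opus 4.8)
The plan is to convert the approximate-optimality property of $\widehat{F}_J$ into a bound on the suboptimality functional, where $\text{Sub}_J(F)$ measures the gap in the \emph{true-parameter} average log-likelihood between $F_0$ and $F$, i.e. $\text{Sub}_J(F) = \frac{1}{J}\sum_{j=1}^J \big[\psi_j(Z_j,\alpha_0,\Omega_0,F_0) - \psi_j(Z_j,\alpha_0,\Omega_0,F)\big]$, following the convention of \cite{chen2022empirical}. The central device is the decomposition
\begin{align*}
    \text{Sub}_J(\widehat{F}_J) &= \underbrace{\tfrac{1}{J}\sum_j\big[\psi_j(\alpha_0,\Omega_0,F_0)-\psi_j(\widehat\alpha,\widehat\Omega,F_0)\big]}_{\text{(I): transfer at }F_0} + \underbrace{\tfrac{1}{J}\sum_j\big[\psi_j(\widehat\alpha,\widehat\Omega,F_0)-\psi_j(\widehat\alpha,\widehat\Omega,\widehat{F}_J)\big]}_{\text{(II): estimated-parameter suboptimality}} \\
    &\quad + \underbrace{\tfrac{1}{J}\sum_j\big[\psi_j(\widehat\alpha,\widehat\Omega,\widehat{F}_J)-\psi_j(\alpha_0,\Omega_0,\widehat{F}_J)\big]}_{\text{(III): transfer at }\widehat{F}_J},
\end{align*}
where term (II) is at most $\kappa_J \asymp_{\mathcal H} J^{-1}\log J$ directly by Assumption \ref{ass:good_approx}, and the work is to control the two parameter-transfer terms (I) and (III), each of which measures the change in $\frac{1}{J}\sum_j\psi_j(\cdot,F)$ as $\chi$ moves from $\chi_0$ to $\widehat\chi$ at a fixed prior.

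First I would Taylor-expand $\psi_j$ in $\chi$ around $\chi_0$ to second order for each of (I) and (III). On the event $\{\bar Z_J \le M_J,\ \Vert\widehat\chi-\chi_0\Vert_\infty\le\Delta_J\}$, the second-order remainder is bounded using the Hessian expressions computed in Appendix \ref{app:sec:sm6.1} together with the derivative bounds of Lemma \ref{lem:sm6.11}, which grow polynomially in $M_J$ and $\sqrt{\log J}$; combined with the factor $\Delta_J^2$ this produces the $\Delta_J^2 M_J^2\log J$ contribution to $\epsilon_J$. The first-order term factors as $\big\langle \frac{1}{J}\sum_j \nabla_\chi\psi_j(\chi_0,F),\ \widehat\chi-\chi_0\big\rangle$, so it is at most $\Delta_J$ times the norm of the averaged score $\frac{1}{J}\sum_j\nabla_\chi\psi_j(\chi_0,F)$, whose summands have the explicit posterior-mean form from Appendix \ref{app:sec:sm6.1} and are bounded by $O_{\mathcal H}(M_J\sqrt{\log J})$ once $\widehat{Z}_j$ is truncated at $M_J$ (using claim (1) of Lemma \ref{lem:oa3.1} and $\varphi_+(\rho_J)\asymp\sqrt{\log J}$).

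The two first-order scores are then handled differently according to the prior. For term (I) with $F=F_0$, the conditional mean of $\nabla_\chi\psi_j(\chi_0,F_0)$ vanishes (the score is mean-zero under the correctly specified model), so the average concentrates around zero at rate $J^{-1/2}$; a Bernstein/subexponential bound applied to the truncated summands yields the $M_J^2(\log J)^{3/2}\Delta_J/\sqrt{J}$ term, while the contribution of the untruncated region $\{\Vert Z_j\Vert_2>M_J\}$ is controlled by the subexponential tail of Lemma \ref{lem:oa3.7}, giving the $\Delta_J\log J\, e^{-C_{2,\mathcal H}M_J^2}$ term. For term (III) with $F=\widehat{F}_J$ the conditional mean of the score no longer vanishes, since $\widehat{F}_J\ne F_0$; I would bound this nonzero bias by the per-policy Hellinger distance $h(f_{\widehat{F}_J,\Psi_j},f_{F_0,\Psi_j})$ (relating the difference of posterior-mean/score integrals to a Hellinger distance via Cauchy–Schwarz against the Gaussian kernel), which, multiplied by the score magnitude $M_J\sqrt{\log J}$ and the step size $\Delta_J$, produces precisely the cross term $M_J\sqrt{\log J}\,\Delta_J\cdot\frac{1}{J}\sum_j h(f_{\widehat{F}_J,\Psi_j},f_{F_0,\Psi_j})$.

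\textbf{The main obstacle} is the score term (III) for the \emph{random, data-dependent} prior $\widehat{F}_J$: a pointwise concentration bound cannot be applied directly, so the centered score must be controlled uniformly over the relevant class of priors. I would address this with a covering/chaining argument in the metric $d_{M_J,\rho_J}$ already used in Lemma \ref{lem:oa3.5}, invoking the metric-entropy bounds of \cite{soloff2024multivariate} and the localization machinery of \cite{saha2020nonparametric}, so that the stochastic fluctuation of the centered score is bounded uniformly while its bias is tied to $h(f_{\widehat{F}_J,\cdot},f_{F_0,\cdot})$. Collecting the four contributions and taking a union bound over the finitely many concentration events (each of failure probability $\lesssim 1/J$, summing to $9/J$) then delivers $\text{Sub}_J(\widehat{F}_J)\le C_{1,\mathcal H}\epsilon_J$ on $\{\bar Z_J\le M_J,\ \Vert\widehat\chi-\chi_0\Vert_\infty\le\Delta_J\}$ with probability at least $1-9/J$, which is the claim. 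Assumption \ref{ass:sm6.1} (satisfied by the choice \eqref{eq:oa3.4}) is what guarantees the second-order and tail terms are genuinely lower order, so that $\epsilon_J$ is dominated by the Hellinger cross term, as needed downstream in Corollary \ref{cor:oa3.1}.
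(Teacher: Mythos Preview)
Your decomposition and overall strategy match the paper's proof essentially term for term: the three-way split (I)+(II)+(III), the second-order Taylor expansion of the two transfer terms, the deterministic Hessian bound via Lemma~\ref{lem:sm6.11} for the remainder, the Hellinger control of the bias of the score at $\widehat{F}_J$, and the covering argument over priors for the centered fluctuation at $\widehat{F}_J$ are exactly what the paper does (its $U_{1k},U_{2k},U_{3k},R_1,R_2$).

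Two small sharpenings. First, you identify the need for uniform control over the \emph{random prior} $\widehat{F}_J$, but the inner product $\langle \frac{1}{J}\sum_j\nabla_\chi\psi_j,\,\widehat\chi-\chi_0\rangle$ also involves the \emph{random, data-dependent} $\widehat\chi$, which is correlated with the $Z_j$'s; the paper therefore chains over $\chi$ in a neighborhood $S$ of $\chi_0$ as well (for both the $F_0$ and the $\widehat{F}_J$ first-order terms), and this is where the extra $\sqrt{\log J}$ factor in the fluctuation rate originates. A plain Bernstein bound at fixed $\chi$ is not enough for your term~(I). Second, your attribution of the $M_J^2(\log J)^{3/2}\Delta_J/\sqrt{J}$ and $\Delta_J\log J\,e^{-C_{2,\mathcal H}M_J^2}$ contributions to the $F_0$ term is slightly off: in the paper these dominating rates come from the $\widehat{F}_J$ fluctuation term ($U_{2k}$), where the covering over priors introduces the additional $\log J$ on top of the simpler $F_0$ rate. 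Neither point changes the architecture of your argument.
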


Plugging in the rates \eqref{eq:oa3.4}, I obtain the following corollary:
\begin{corollary}\label{cor:sm6.1}
    Under the assumptions of Theorem \ref{thm:a.1}, suppose $\Delta_J, M_J$ are of the form \eqref{eq:oa3.4}. Then there exists a constant $C_{\mathcal H}^*$ such that the following tail bound holds: Let
\begin{align*}
    \varepsilon_J &= J^{-1/2} (\log J)^{3/2} \bar h \left( f_{\widehat{F}_J, \cdot}, f_{F_0, \cdot}\right) + J^{-1}(\log J)^3,
\end{align*}
then 
\begin{align*}
    Pr\left(A_J, \text{Sub}_J(\widehat{F}_J) > C_H^* \varepsilon_J \right) &\leq \frac{9}{J}.
\end{align*}
\end{corollary}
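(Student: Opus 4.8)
The plan is to derive Corollary \ref{cor:sm6.1} by specializing Theorem \ref{thm:sm6.1} to the explicit rates in \eqref{eq:oa3.4} and simplifying the four summands that make up $\epsilon_J$. First I would note that the event $\{\bar Z_J \leq M_J,\ \Vert \widehat\chi - \chi_0 \Vert_\infty \leq \Delta_J\}$ appearing in Theorem \ref{thm:sm6.1} is exactly the event $A_J$ as defined before Lemma \ref{lem:oa3.2}, so the tail probability $9/J$ transfers verbatim; it then remains only to show $\epsilon_J \lesssim_{\mathcal H} \varepsilon_J$ under \eqref{eq:oa3.4}, which permits replacing $C_{1,\mathcal H}\epsilon_J$ by $C_{\mathcal H}^*\varepsilon_J$ for a suitable $C_{\mathcal H}^*$ absorbing $C_{1,\mathcal H}$ and the implicit constants. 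Before invoking the theorem I would verify that \eqref{eq:oa3.4} meets the hypotheses of Assumption \ref{ass:sm6.1}: with $\Delta_J \asymp_{\mathcal H} J^{-1/2}(\log J)^{1/2}$ and $M_J \asymp_{\mathcal H} (\log J)^{1/2}$, one has $J^{-1/2} \lesssim_{\mathcal H} \Delta_J \lesssim_{\mathcal H} (\log J)^{-3/2} \asymp M_J^{-3} \lesssim_{\mathcal H} 1$ and $\sqrt{\log J} \lesssim_{\mathcal H} M_J$ for all large $J$, as the text already remarks.

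The heart of the argument is bounding each of the four terms in $\epsilon_J$. Substituting the rates, the leading Hellinger-dependent term is $M_J \sqrt{\log J}\,\Delta_J \cdot \frac{1}{J}\sum_{j=1}^J h(f_{\widehat F_J,\Psi_j}, f_{F_0,\Psi_j}) \asymp_{\mathcal H} J^{-1/2}(\log J)^{3/2}\cdot \frac{1}{J}\sum_{j=1}^J h(f_{\widehat F_J,\Psi_j}, f_{F_0,\Psi_j})$, and passing from the average of the pointwise Hellinger distances to the aggregate $\bar h(f_{\widehat F_J,\cdot}, f_{F_0,\cdot})$ by Cauchy--Schwarz (Jensen) gives $\frac{1}{J}\sum_{j} h \leq \bar h$, so this term is $\lesssim_{\mathcal H} J^{-1/2}(\log J)^{3/2}\,\bar h(f_{\widehat F_J,\cdot}, f_{F_0,\cdot})$, matching the first piece of $\varepsilon_J$. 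For the remaining three terms I would show each is $\lesssim_{\mathcal H} J^{-1}(\log J)^3$: the term $\Delta_J^2 M_J^2 \log J$ is exactly of order $J^{-1}(\log J)^3$, the term $M_J^2(\log J)^{3/2}\Delta_J/\sqrt{J}$ is likewise $J^{-1}(\log J)^3$, and the exponential term $\Delta_J \log J\, e^{-C_{2,\mathcal H}M_J^2}$ is strictly smaller.

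The one computation worth flagging is the exponential term, since it is the only place the constant $C_{2,\mathcal H}$ from Theorem \ref{thm:sm6.1} enters. By construction $M_J = (C_{\mathcal H}+1)(C_{2,\mathcal H}^{-1}\log J)^{1/2}$, so $C_{2,\mathcal H} M_J^2 = (C_{\mathcal H}+1)^2 \log J$ and hence $e^{-C_{2,\mathcal H}M_J^2} = J^{-(C_{\mathcal H}+1)^2} \leq J^{-1}$ for every $C_{\mathcal H} > 0$; the term is then at most $J^{-1/2}(\log J)^{3/2}\cdot J^{-1} \lesssim_{\mathcal H} J^{-1}(\log J)^3$. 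Collecting the four bounds yields $\epsilon_J \lesssim_{\mathcal H} J^{-1/2}(\log J)^{3/2}\,\bar h(f_{\widehat F_J,\cdot}, f_{F_0,\cdot}) + J^{-1}(\log J)^3 = \varepsilon_J$, and choosing $C_{\mathcal H}^*$ accordingly completes the proof. There is no genuine obstacle here: once Theorem \ref{thm:sm6.1} is in hand the result is a deterministic rate-bookkeeping exercise. The only subtleties are tracking the $\mathcal H$-dependence of the constants—in particular checking that the coupling of $C_{2,\mathcal H}$ into $M_J$ forces the exponential term to decay at least polynomially in $J$—and correctly applying the norm inequality that converts the average pointwise Hellinger distance into the aggregate $\bar h$.
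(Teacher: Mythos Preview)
Your proposal is correct and follows the same approach as the paper: the paper's proof is a one-line remark that the result ``follows exactly as the proof of Corollary SM6.1 of \cite{chen2022empirical} but plugging in the rates for $\Delta_J$ and $M_J$ from \eqref{eq:oa3.4},'' and you have simply carried out that substitution and rate-bookkeeping in full. Your verification of Assumption \ref{ass:sm6.1}, the Cauchy--Schwarz passage from the average Hellinger distance to $\bar h$, and the handling of the exponential term via the built-in relation $C_{2,\mathcal H}M_J^2 = (C_{\mathcal H}+1)^2\log J$ are all correct and are exactly the computations the paper leaves implicit.
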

The proof follows exactly as the proof of Corollary SM6.1 of \citeappendix{chen2022empirical} but plugging in the rates for $\Delta_J$ and $M_J$ from \eqref{eq:oa3.4}.

\begin{proof}[Proof of Theorem \ref{thm:sm6.1}]
    As in section SM6.2.1 of \citeappendix{chen2022empirical}, if I construct random variables $a_J$ and $b_J$ such that on the event $A_J$,
    \begin{align*}
        \left\vert \frac{1}{J} \sum_{j=1}^J \psi_j(Z_j, \widehat\alpha, \widehat\Omega, \widehat{F}_J) - \frac{1}{J} \sum_{j=1}^J \psi_j(Z_j, \alpha_0, \Omega_0, \widehat{F}_J) \right\vert &\leq a_J, \\
        \left\vert \frac{1}{J} \sum_{j=1}^J \psi_j(Z_j, \widehat\alpha, \widehat\Omega, F_0) - \frac{1}{J} \sum_{j=1}^J \psi_j(Z_j, \alpha_0, \Omega_0, F_0) \right\vert &\leq b_J,
    \end{align*}
    then to prove the theorem it suffices to show that $Pr(\mathbbm{1}(A_J)(a_J + b_J + \kappa_J) \gtrsim_{\mathcal H} \epsilon_J) \leq \frac{9}{J}$.

    Let $\Delta_{\alpha,j} = a_j(\widehat{\alpha}) - a_j(\alpha_0)$, $\Delta_{\Omega,j} = \text{vec}(B_j(\widehat{\Omega})^{1/2}) - \text{vec}(B_j(\Omega_0)^{1/2})$, and $\Delta_j \equiv (\Delta_{\alpha,j}^T, \Delta_{\Omega,j}^T)^T$. I can take a second-order Taylor expansion of $\psi_j(Z_j, \widehat\alpha, \widehat\Omega, \widehat{F}_J) - \psi_j(Z_j, \alpha_0, \Omega_0, \widehat{F}_J)$ around $(a_j(\alpha), \text{vec}(B_j(\Omega)^{1/2}))$:
    \begin{align*}
        &\psi_j(Z_j, \widehat\alpha, \widehat\Omega, \widehat{F}_J) - \psi_j(Z_j, \alpha_0, \Omega_0, \widehat{F}_J) \\
        &= \frac{\partial \psi_j}{\partial a_j^T} \bigg\vert_{\widehat{F}_J, \chi_0} \Delta_{\alpha,j} + \frac{\partial \psi_j}{\partial \text{vec}(B_j^{1/2})^T} \bigg\vert_{\widehat{F}_J, \chi_0} \Delta_{\Omega,j} + \underbrace{\frac{1}{2} \Delta_j^T H_j(\tilde \alpha, \tilde \Omega, \widehat{F}_J) \Delta_j}_{R_{1j}},
    \end{align*}
    where $H_j(\tilde \alpha, \tilde \Omega, \widehat{F}_J)$ is the Hessian matrix with respect to $(a_j(\alpha), \text{vec}(B_j(\Omega))^{1/2})$.

    Truncate the denominators of the first derivatives by Lemma \ref{lem:oa3.1} for the choice of $\rho_J$ in \eqref{eq:oa3.5}, which does not bind, so that
    \begin{align*}
        D_{\alpha,j}(Z_j, \widehat{F}_J, \chi_0, \rho_J) &\equiv -B_j(\Omega_0)^{-1/2} \frac{\nabla f_{\widehat{F}_J, \Psi_j}(Z_j)}{\max\left(f_{\widehat{F}_J, \Psi_j}(Z_j), \frac{\rho_J}{\sqrt{\det(\Psi_j)}} \right)} = \frac{\partial \psi_j}{\partial a_j} \bigg\vert_{\widehat{F}_J, \chi_0} \\
        D_{\Omega,j}(Z_j, \widehat{F}_J, \chi_0, \rho_J) &\equiv \frac{(I_2 \otimes \Sigma_j^{-1}B_j(\Omega_0)^{1/2}) Q_j(Z_j, \widehat{F}_J, \Psi_j)}{\sqrt{\det(2\pi \Psi_j)} \max\left(f_{\widehat{F}_J,\Psi_j}(Z_j), \frac{\rho_J}{\sqrt{\det(\Psi_j)}} \right)} = \frac{\partial \psi_j}{\partial \text{vec}(B_j^{1/2})} \bigg\vert_{\widehat{F}_J, \chi_0}.
    \end{align*}
    Defining
    \begin{align*}
        \overline D_{k,j}(\widehat{F}_J, \chi_0, \rho_J) &= \int D_{k,j}(z, \widehat{F}_J, \chi_0, \rho_J) f_{F_0, \Psi_j}(z) dz \qquad \text{for } k \in \{\alpha, \Omega\},
    \end{align*}
    as in section SM6.2.2 of \citeappendix{chen2022empirical} I define for each $k \in \{\alpha, \Omega\}$
    \begin{align*}
        U_{1k} &= \frac{1}{J} \sum_{j=1}^J \overline{D}_{k,j}(\widehat{F}_J, \chi_0, \rho_J)^T \Delta_{k,j} \\
        U_{2k} &= \frac{1}{J} \sum_{j=1}^J \left[D_{k,j}(Z_j,\widehat{F}_J, \chi_0, \rho_J) - \overline{D}_{k,j}(\widehat{F}_J, \chi_0, \rho_J) \right]^T \Delta_{k,j} \\
        R_1 &= \frac{1}{J} \sum_{j=1}^J R_{1j}
    \end{align*}
    and let 
    \begin{align*}
        a_J &= |R_1| + \sum_{k \in \{\alpha,\Omega\}} |U_{1k}| + |U_{2k}|.
    \end{align*}

    Similarly take a Taylor expansion
    \begin{align*}
        \psi_j(Z_j, \widehat\alpha, \widehat\Omega, F_0) - \psi_j(Z_j, \alpha_0, \Omega_0, F_0) &= \frac{\partial \psi_j}{\partial a_j^T} \bigg\vert_{F_0, \chi_0} \Delta_{\alpha,j} + \frac{\partial \psi_j}{\partial \text{vec}(B_j^{1/2})^T} \bigg\vert_{F_0, \chi_0} \Delta_{\Omega,j} \\
        &\qquad + \underbrace{\frac{1}{2} \Delta_j^T H_j(\tilde \alpha, \tilde \Omega, F_0) \Delta_j}_{R_{2j}} \\
        &= \sum_{k \in \{\alpha,\Omega\}} D_{k,j}(Z_j, F_0, \chi_0, 0)^T \Delta_{k,j} + R_{2j} \\
        &\equiv U_{3\alpha j} + U_{3 \Omega j} + R_{2j}.
    \end{align*}
    Defining $U_{3k} = \frac{1}{J} \sum_{j=1}^J U_{3kj}$ for $k \in \{\alpha,\Omega\}$ and $R_2 = \frac{1}{J} \sum_{j=1}^J R_{2j}$, let
    \begin{align*}
        b_J &= |R_2| + \sum_{k \in \{\alpha,\Omega\}} |U_{3k}|.
    \end{align*}
    Note that 
    \begin{align*}
        a_J + b_J + \kappa_J &\leq \kappa_J + |R_1| + |R_2| + \sum_{k \in \{\alpha,\Omega\}} |U_{1k}| + |U_{2k}| + |U_{3k}|.
    \end{align*}
    I now bound each term individually, following \citeappendix{chen2022empirical}.

    \textbf{Bounding $U_{1\alpha}$}:
    
    I will follow the proof of Lemma SM6.1 in \citeappendix{chen2022empirical} to show that 
    \begin{align*}
        |U_{1\alpha}| &\equiv \left\vert \frac{1}{J}\sum_{j=1}^J \overline{D}_{\alpha,j}(\widehat{F}_J, \chi_0, \rho_J)^T \Delta_{\alpha,j} \right\vert \lesssim_{\mathcal H} \Delta_J \left[ \frac{\sqrt{\log J}}{J} \sum_{j=1}^J h\left( f_{F_0, \Psi_j}, f_{\widehat{F}_J, \Psi_j} \right) + \frac{M_J^{1/2}}{J} \right].
    \end{align*}
    Note that
    \begin{align}
        \left\Vert \overline D_{\alpha,j}(\widehat{F}_J, \chi_0, \rho_J) \right\Vert_2 &\lesssim_{\mathcal H} \left\Vert \int \frac{\nabla f_{\widehat{F}_J, \Psi_j}}{\max \left(f_{\widehat{F}_J, \Psi_j}, \frac{\rho_J}{\sqrt{\det(\Psi_j)}} \right)} f_{F_0, \Psi_j}(z) dz \right\Vert_2 \nonumber \\
        &\leq \left\Vert \int \frac{\nabla f_{\widehat{F}_J, \Psi_j}}{\max \left(f_{\widehat{F}_J, \Psi_j}, \frac{\rho_J}{\sqrt{\det(\Psi_j)}} \right)} \left[f_{F_0, \Psi_j}(z) - f_{\widehat{F}_J, \Psi_j}(z) \right] dz \right\Vert_2 \label{eq:sm6.29} \\
        &\qquad + \left\Vert \int \frac{\nabla f_{\widehat{F}_J, \Psi_j}}{\max \left(f_{\widehat{F}_J, \Psi_j}, \frac{\rho_J}{\sqrt{\det(\Psi_j)}} \right)} f_{\widehat{F}_J, \Psi_j}(z) dz \right\Vert_2. \label{eq:sm6.30}
    \end{align}
    Following section SM6.3.1 of \citeappendix{chen2022empirical}, 
    \begin{align*}
        [\eqref{eq:sm6.29}]^2 &\lesssim h^2\left(f_{F_0, \Psi_j}, f_{\widehat{F}_J, \Psi_j} \right) \int \frac{ \left\Vert \nabla f_{\widehat{F}_J, \Psi_j} \right\Vert_2^2 }{\left(\max \left(f_{\widehat{F}_J, \Psi_j}, \frac{\rho_J}{\sqrt{\det(\Psi_j)}} \right) \right)^2} \left( f_{F_0,\Psi_j}(z) + f_{\widehat{F}_J, \Psi_j}(z) \right) dz.
    \end{align*}
    By Lemmas \ref{lem:oa3.1} and \ref{lem:sm6.8},
    \begin{align*}
        \frac{ \left\Vert \nabla f_{\widehat{F}_J, \Psi_j} \right\Vert_2^2 }{\left(\max \left(f_{\widehat{F}_J, \Psi_j}, \frac{\rho_J}{\sqrt{\det(\Psi_j)}} \right) \right)^2} &\lesssim \left\Vert \Psi_j^{-1} \right\Vert_{F} \varphi_+^2(\rho_J) \lesssim_{\mathcal H} \log J \\
        \Rightarrow \eqref{eq:sm6.29} &\lesssim_{\mathcal H} h\left(f_{F_0, \Psi_j}, f_{\widehat{F}_J, \Psi_j} \right) \sqrt{\log J}.
    \end{align*}
    As in section SM6.3.2 of \citeappendix{chen2022empirical}, by Cauchy-Schwarz
    \begin{align*}
        \eqref{eq:sm6.30} &\leq \int \left\Vert \frac{\nabla f_{\widehat{F}_J, \Psi_j}}{f_{\widehat{F}_J, \Psi_j}} \right\Vert_2 \mathbbm{1}\left( f_{\widehat{F}_J, \Psi_j}(z) \leq \frac{\rho_J}{\sqrt{\det(\Psi_j)}} \right) f_{\widehat{F}_J, \Psi_j}(z) dz \\
        &\leq \sqrt{E_{Z \sim f_{\widehat{F}_J, \Psi_j}} \left[ \left\Vert \Psi_j^{-1} E_{\widehat{F}_J, \Psi_j}\left[\tau - Z \vert Z \right] \right\Vert_2^2 \right]} \sqrt{Pr_{f_{\widehat{F}_J, \Psi_j}} \left(f_{\widehat{F}_J, \Psi_j}(Z) \leq \frac{\rho_J}{\sqrt{\det(\Psi_j)}} \right) }.
    \end{align*}
    By Jensen's inequality and law of iterated expectations, the first term is
    \begin{align*}
        \sqrt{E_{Z \sim f_{\widehat{F}_J, \Psi_j}} \left[ \left\Vert \Psi_j^{-1} E_{\widehat{F}_J, \Psi_j}\left[\tau - Z \vert Z \right] \right\Vert_2^2 \right]} &\leq \left\Vert \Psi_j^{-1} \right\Vert_F \sqrt{E_{\tau \sim \widehat{F}_J, Z \sim N(\tau,\Psi_j)} \left[ \left\Vert \tau - Z \right\Vert_2^2 \vert Z \right]} \\
        &= \left\Vert \Psi_j^{-1} \right\Vert_F \sqrt{\text{tr}(\Psi_j)}.
    \end{align*}
    By Lemma \ref{lem:sm6.9}, the second term is bounded by a constant times $\rho_J^{1/4}\left(\text{tr}\left(Var_{Z \sim f_{\widehat{F}_J, \Psi_j}}(Z) \right) \right)^{1/4}$ and $\text{tr}\left(Var_{Z \sim f_{\widehat{F}_J, \Psi_j}}(Z) \right) \lesssim_{\mathcal H} M_J^2$, so by Lemma \ref{lem:oa3.1}, $\eqref{eq:sm6.30} \lesssim_{\mathcal H} \rho_J^{1/4} M_J^{1/2} \lesssim_{\mathcal H} M_J^{1/2} J^{-1}$.
    The result follows by using these results to bound $|U_{1\alpha}|$.

    \textbf{Bounding $U_{1\Omega}$}:
    
    I will follow the proof of Lemma SM6.2 in \citeappendix{chen2022empirical} to show that 
    \begin{align*}
        |U_{1\Omega}| &\lesssim_{\mathcal H} \Delta_J \left[ \frac{M_J \sqrt{\log J}}{J} \sum_{j=1}^J h\left( f_{\widehat{F}_J, \Psi_j}, f_{F_0, \Psi_j} \right) + \frac{M_J^{3/2}}{J} \right].
    \end{align*}
    As in the proof to bound $U_{1\alpha}$, decompose
    \begin{align}
        \left\Vert \overline D_{\Omega,j}(\widehat{F}_J, \chi_0, \rho_J) \right\Vert_2 &\lesssim_{\mathcal H} \left\Vert \int \frac{Q_j(z, \widehat{F}_J,\Psi_j)}{\max \left(f_{\widehat{F}_J, \Psi_j}, \frac{\rho_J}{\sqrt{\det(\Psi_j)}} \right)} \left[f_{F_0, \Psi_j}(z) - f_{\widehat{F}_J, \Psi_j}(z) \right] dz \right\Vert_2 \label{eq:sm6.33} \\
        &\qquad + \left\Vert \int \frac{Q_j(z, \widehat{F}_J,\Psi_j)}{\max \left(f_{\widehat{F}_J, \Psi_j}, \frac{\rho_J}{\sqrt{\det(\Psi_j)}} \right)} f_{\widehat{F}_J, \Psi_j}(z) dz \right\Vert_2. \label{eq:sm6.34}
    \end{align}
    Following section SM6.4.1 of \citeappendix{chen2022empirical}, from Lemma \ref{lem:sm6.10}
    \begin{align*}
        [\eqref{eq:sm6.33}]^2 &\lesssim h^2\left(f_{F_0, \Psi_j}, f_{\widehat{F}_J, \Psi_j} \right) \int \frac{ \left\Vert Q_j(z, \widehat{F}_J,\Psi_j) \right\Vert_2^2 }{\left(\max \left(f_{\widehat{F}_J, \Psi_j}, \frac{\rho_J}{\sqrt{\det(\Psi_j)}} \right) \right)^2} \left( f_{F_0,\Psi_j}(z) + f_{\widehat{F}_J, \Psi_j}(z) \right) dz \\
        &\lesssim_{\mathcal H} M_J^2 h^2\left(f_{F_0, \Psi_j}, f_{\widehat{F}_J, \Psi_j} \right) \log J \\
        \Rightarrow \eqref{eq:sm6.33} &\lesssim_{\mathcal H} M_J h\left(f_{F_0, \Psi_j}, f_{\widehat{F}_J, \Psi_j} \right) \sqrt{\log J}.
    \end{align*}
    
    As in section SM6.4.2 of \citeappendix{chen2022empirical}, by Cauchy-Schwarz
    \begin{align*}
        \eqref{eq:sm6.34} &\leq \sqrt{E_{Z \sim f_{\widehat{F}_J, \Psi_j}} \left[ \left\Vert E_{\widehat{F}_J, \Psi_j} \left[(Z-\tau)\tau^T \vert Z \right] \right\Vert_F^2 \right]} \sqrt{Pr_{f_{\widehat{F}_J, \Psi_j}} \left(f_{\widehat{F}_J, \Psi_j}(Z) \leq \frac{\rho_J}{\sqrt{\det(\Psi_j)}} \right) } \\
        &\lesssim_{\mathcal H} M_J \rho_J^{1/4} M_J^{1/2} \lesssim_{\mathcal H} M_J^{3/2} J^{-1}.
    \end{align*}
    
    \textbf{Bounding $U_{2\alpha}, U_{2\Omega}$}:

    I will follow the proof of Lemma SM6.3 in \citeappendix{chen2022empirical} to show that for $k \in \{\alpha,\Omega\}$,
    \begin{align*}
        Pr \left(\Vert \widehat\chi - \chi_0 \Vert_J \leq \Delta_J, \overline{Z}_J \leq M_J, |U_{2k}| \gtrsim_{\mathcal H} r_J \right) \leq \frac{2}{J}
    \end{align*}
    for $r_J = \Delta_J e^{-C_{\mathcal H}M_J^2}\log J + \frac{M_J^2(\log J)^{3/2}}{\sqrt{J}} \Delta_J$.

    I will choose some $\overline{U}_{2k}$ such that if $\Vert \widehat\chi - \chi_0 \Vert_J \leq \Delta_J$ and $\overline{Z}_J \leq M_J$ then $|U_{2k}| \leq \overline{U}_{2k}$. Thus a bound on $Pr(\overline{U}_{2k} > t)$ suffices.

    Define
    \begin{align*}
        D_{k,j,M_J}(Z_j, \widehat{F}_J, \chi_0, \rho_J) &= D_{k,j}(Z_j, \widehat{F}_J, \chi_0, \rho_J) \mathbbm{1}(\Vert Z_j \Vert_2 \leq M_J) \\
        \overline{D}_{k,j,M_J}(\widehat{F}_J, \chi_0, \rho_J) &= \int D_{k,j,M_J}(z, \widehat{F}_J, \chi_0, \rho_J) f_{F_0, \Psi_j}(z) dz.
    \end{align*}
    On $\overline{Z}_J \leq M_J$ note that
    \begin{align}
        |U_{2k}| &\leq \left\vert \frac{1}{J} \sum_{j=1}^J \left\{ D_{k,j,M_J}(Z_j, \widehat{F}_J, \chi_0, \rho_J) - \overline{D}_{k,j,M_J}(\widehat{F}_J, \chi_0, \rho_J) \right\}^T \Delta_{k,j} \right\vert \label{eq:sm6.36} \\
        &\qquad + \left\vert \frac{1}{J} \sum_{j=1}^J \left\{ \overline{D}_{k,j}(\widehat{F}_J, \chi_0, \rho_J) - \overline{D}_{k,j,M_J}(\widehat{F}_J, \chi_0, \rho_J)\right\}^T \Delta_{k,j} \right\vert. \label{eq:sm6.37}
    \end{align}
    By Lemmas \ref{lem:sm6.10} and \ref{lem:oa3.1}, uniformly over all $F$,
    \begin{align}
        \left\Vert D_{k,j}(z, F, \chi_0, \rho_J) \right\Vert_2 &\lesssim_{\mathcal H} \Vert z \Vert_2 \sqrt{\log J} + \log J. \label{eq:sm6.38}
    \end{align}
    Thus
    \begin{align*}
        \eqref{eq:sm6.37} &\lesssim_{\mathcal H} \Delta_J \left( \sqrt{\log J} \max_{j \in [J]} \int_{\Vert z \Vert_2 > M_J} \Vert z \Vert_2 f_{F_0, \Psi_j}(z) dz + \log J \max_{j \in [J]} Pr_{F_0, \Psi_j}(\Vert Z_j \Vert_2 > M_J) \right).
    \end{align*}
    By Cauchy-Schwarz,
    \begin{align*}
        \int_{\Vert z \Vert_2 > M_J} \Vert z \Vert_2 f_{F_0, \Psi_j}(z) dz &\leq \sqrt{E[\Vert Z_j \Vert_2^2]Pr(\Vert Z_j \Vert_2 > M_J)} \lesssim_{\mathcal H} \sqrt{Pr(\Vert Z_j \Vert_2 > M_J)}.
    \end{align*}
    Because each $Z_j$ is such that $Z_j \vert \tau_j \sim N(\tau_j, \Psi_j)$ and $\tau_j \sim F_0$ which is mean zero, each $Z_j$ is sub-Gaussian so that $Pr_{F_0, \Psi_j}(\Vert Z_j \Vert_2 > M_J) \leq \exp(-C_{\mathcal H} M_J^2)$ for some constant $C_{\mathcal H}$.
    Thus $\eqref{eq:sm6.37} \lesssim_{\mathcal H} \Delta_J e^{-C_{\mathcal H} M_J^2} \log J$.

    To bound \eqref{eq:sm6.36}, let $F_1, \dots, F_N$ be a minimal $\omega$-covering of distributions on $\R^2$, $\mathcal P(\R^2)$, under the pseudometric
    \begin{align} \label{eq:d_k_infty_M}
        d_{k,\infty,M_J}(F_1, F_2) &= \max_{j \in [J]} \sup_{\Vert z \Vert_2 \leq M_J} \left\Vert D_{k,j}(z, F_1, \chi_0, \rho_J) - D_{k,j}(z, F_2, \chi_0, \rho_J) \right\Vert_2,
    \end{align}
    taking $N = N(\omega, \mathcal P(\R^2), d_{k,\infty,M_J})$. Project $\widehat{F}_J$ to the $\omega$-covering to obtain
    \begin{align*}
        \eqref{eq:sm6.36} &\leq 2\omega \Delta_J + \max_{i \in [N]} \left\vert \frac{1}{J} \sum_{j=1}^J \left\{ D_{k,j,M_J}(Z_j, F_i, \chi_0, \rho_J) - \overline{D}_{k,j,M_J}(F_i, \chi_0, \rho_J) \right\}^T \Delta_{k,j} \right\vert.
    \end{align*}
    Define
    \begin{align*}
        v_{i,j}(\chi) &\equiv \left\{ D_{k,j,M_J}(Z_j, F_i, \chi_0, \rho_J) - \overline{D}_{k,j,M_J}(F_i, \chi_0, \rho_J) \right\}^T \Delta_{k,j}(\chi), \qquad V_{J,i}(\chi) \equiv \frac{1}{J} \sum_{j=1}^J v_{i,j}(\chi)
    \end{align*}
    for $\Delta_{\alpha,j}(\tilde\chi) = a_j(\tilde\alpha) - a_j(\alpha_0)$ and $\Delta_{\Omega,j}(\tilde\chi) = \text{vec}(B_j(\tilde\Omega)^{1/2}) - \text{vec}(B_j(\Omega_{0})^{1/2})$. 
    Then it follows that $\eqref{eq:sm6.36} \lesssim \omega \Delta_J + \max_{i \in [N]} \sup_{\chi \in S} \vert V_{J,i}(\chi) \vert$ where 
    \begin{align}
        S &= \left\{ \chi = (\alpha,\Omega): \Vert \chi - \chi_0 \Vert_J \leq \Delta_J, \left( a(\cdot;\alpha), B(\cdot,\Omega)^{1/2} \right) \in \mathcal C \right\} \label{eq:sm6.40}
    \end{align}
    for $\mathcal C$ as defined in Assumption \ref{ass:est_var}.
    Thus for some $\omega$ to be chosen, let
    \begin{align*}
        \overline{U}_{2k} = C_{\mathcal H} \left\{ \Delta_J (\log J) e^{-C_{\mathcal H} M_J^2} + \omega \Delta_J + \max_{i \in [N]} \sup_{\chi \in S} \vert V_{J,i}(\chi) \vert \right\}.
    \end{align*}
    To bound $Pr(\overline{U}_{2k} > t)$ I first look at the empirical process $\max_{i \in [N]} \sup_{\chi \in S} \vert V_{J,i}(\chi) \vert$. 
    For fixed $\chi, \chi_1, \chi_2 \in S$, because the $v_{i,j}(\chi)$ are independent across $j$ and bounded, and because
    \begin{align*}
        \left\Vert D_{k,j,M_J}(Z_j, F_i,\chi_0, \rho_J) \right\Vert_2 \lesssim_{\mathcal H} M_J \sqrt{\log J}
    \end{align*}
    as shown in \eqref{eq:sm6.38}, it follows that 
    \begin{align*}
        \left\Vert V_{J,i}(\chi) \right\Vert_{\psi_2} &\lesssim_{\mathcal H} \frac{M_J \sqrt{\log J}}{\sqrt{J}} \Delta_J, \\
        \left\Vert V_{J,i}(\chi_1) - V_{J,i}(\chi_2) \right\Vert_{\psi_2} &\lesssim_{\mathcal H} \frac{M_J \sqrt{\log J}}{\sqrt{J}} \left\Vert \chi_1 - \chi_2 \right\Vert_J.
    \end{align*}
    Thus $\chi \mapsto V_{J,i(\chi)}$ is a process with sub-Gaussian increments under $\Vert \cdot \Vert_J$. By the same chaining argument as in the proof of Lemma SM6.3 of \citeappendix{chen2022empirical}, for every $u > 0$,
    \begin{align*}
        \sup_{\chi \in S} \vert V_{J,i}(\chi) \vert \lesssim_{\mathcal H} \frac{M_J \sqrt{\log J}}{\sqrt{J}} \left[(1+u) \Delta_J + \int_0^\infty \sqrt{\log N(\varepsilon, S, \Vert \cdot \Vert_J)} d \varepsilon \right]
    \end{align*}
    with probability at least $1-2e^{-u^2}$.

    I know $N(\varepsilon,S,\Vert \cdot \Vert_J) \leq N(\varepsilon, \mathcal C, \Vert \cdot \Vert_\infty)$ and $S$ has diameter at most $2 \Delta_J$ under $\Vert \cdot \Vert_J$. Thus
    \begin{align*}
        \int_0^\infty \sqrt{\log N(\varepsilon, S, \Vert \cdot \Vert_J)} d \varepsilon &\lesssim_{\mathcal H} \int_0^{2\Delta_J} \sqrt{\log \left(\frac{C_{\mathcal C}}{\varepsilon} \right)} d \varepsilon \lesssim_{\mathcal H} \Delta_J \sqrt{\log \left(\frac{C_{\mathcal C}}{\Delta_J} \right)}.
    \end{align*}
    So by union bound,
    \begin{align*}
        Pr \left(\max_{i \in [N]} \sup_{\chi \in S} \vert V_{J,i}(\chi) \vert \gtrsim_{\mathcal H} \frac{M_J \sqrt{\log J}}{\sqrt{J}}\left[ (1+u) \Delta_J + \Delta_J \sqrt{\log \left(\frac{C_{\mathcal C}}{\Delta_J} \right)} \right] \right) &\leq 2Ne^{-u^2},
    \end{align*}
    choosing $u = \sqrt{\log N} + \sqrt{\log J}$ so that the right hand side is bounded by $2/J$.
    Taking $\omega = M_J \frac{1+\sqrt{\log(1/\rho_J)}}{\rho_J}\frac{\rho_J}{\sqrt{J}} \geq \frac{1+\sqrt{\log(1/\rho_J)}}{\rho_J}\frac{\rho_J}{\sqrt{J}}$, by Lemma \ref{prop:sm6.2}, Lemma \ref{lem:oa3.1}, and Assumption \ref{ass:sm6.1},
    \begin{align*}
        \log N(\omega, \mathcal{P}(\R^2), d_{\alpha, \infty, M_J}) &\lesssim_{\mathcal H} (\log J)^2 M_J^2 \\
        \log N(\omega, \mathcal{P}(\R^2), d_{\Omega, \infty, M_J}) &\lesssim_{\mathcal H} (\log J)^2 M_J^2.
    \end{align*}
    Note that this means $\omega \lesssim_{\mathcal H} \frac{1}{\sqrt{J}} (\log J)^{1/2} M_J$
    and $(1+u) \lesssim_{\mathcal H} M_J \log J$. In order to do this I require $\frac{\rho_J}{\sqrt{J}} \leq \min(1/e, 4/(2\pi \underline{k}), e^{-1/(2\bar k)})$. By Lemma \ref{lem:oa3.1} this holds because $\frac{\rho_J}{\sqrt{J}} \leq \frac{1}{2\pi e \sqrt{J}}$ and I can trivially decrease $\underline{k}$ and increase $\bar k$ until the conditions hold.

    Then since $V_{J,i}(\chi)$ is the only random expression in $\overline{U}_{2k}$,
    \begin{align*}
        Pr \left(\overline{U}_{2k} \gtrsim_{\mathcal H} \Delta_J e^{-C_{\mathcal H}M_J^2}\log J + \frac{M_J^2(\log J)^{3/2}}{\sqrt{J}} \Delta_J \right) \leq \frac{2}{J},
    \end{align*}
    using that $\Delta_J \gtrsim_{\mathcal H} \frac{1}{\sqrt{J}}$ so $\sqrt{\log\left( \frac{C_{\mathcal C}}{\Delta_J} \right)} \lesssim_{\mathcal H} \sqrt{\log J}$.

    \textbf{Bounding $U_{3\alpha}, U_{3\Omega}$}:
    
    I will follow the proof of Lemma SM6.4 in \citeappendix{chen2022empirical} to show that for $k \in \{\alpha,\Omega\}$,
    \begin{align*}
        Pr \left(\Vert \widehat\chi - \chi_0 \Vert_J \leq \Delta_J, \overline{Z}_J \leq M_J, |U_{3k}| \gtrsim_{\mathcal H} \Delta_J \left\{ e^{-C_{\mathcal H}M_J^2} + \frac{M_J^2}{\sqrt{J}} \left(1 + \sqrt{\log J} \right) \right\} \right) \leq \frac{2}{J}.
    \end{align*}
    On the event $\overline{Z}_J \leq M_J$,
    \begin{align*}
        U_{3k} &= \underbrace{\frac{1}{J} \sum_{j=1}^J \left\{ D_{k,j,M_J}(Z_j, F_0, \chi_0, 0) - \overline{D}_{k,j,M_J}(F_0, \chi_0, 0) \right\}^T \Delta_{k,j}}_{V_J(\widehat\chi)} + \frac{1}{J} \sum_{j=1}^J \overline{D}_{k,j,M_J}(F_0, \chi_0, 0)^T \Delta_{k,j}.
    \end{align*}
    Because $E_{F_0,\Psi_j}[D_{k,j}(Z_j,F_0,\chi_0,0)] = 0$, it follows that \begin{align*}
        \overline{D}_{k,j,M_J}(F_0, \chi_0,0) = -E[D_{k,j}(Z_j,F_0,\chi_0,0) \mathbbm{1}(\Vert Z_j\Vert_2 > M_J)].
    \end{align*}
    Then by Cauchy-Schwarz,
    \begin{align*}
        \Vert \overline{D}_{k,j,M_J}(F_0, \chi_0, 0) \Vert_2 &\leq Pr(\Vert Z_j \Vert_2 > M_J)^{1/2} \left( E[\Vert D_{k,j}(Z_j,F_0,\chi_0,0)\Vert_2^2] \right)^{1/2},
    \end{align*}
    and since 
    \begin{align*}
        \Vert D_{k,j}(Z_j,F_0,\chi_0,0)\Vert_2 &\lesssim_{\mathcal H} T_{k,j}(Z_j)
    \end{align*}
    where $T_{\alpha,j}(z) = \frac{\Vert \nabla f_{F_0,\Psi_j}(z) \Vert_2}{ f_{F_0,\Psi_j}(z)}$ and $T_{\Omega,j}(z) = \frac{\Vert Q_j(z,F_0,\Psi_j) \Vert_2}{ f_{F_0,\Psi_j}(z)}$, it follows that
    \begin{align*}
        \Vert \overline{D}_{k,j,M_J}(F_0, \chi_0, 0) \Vert_2 &\lesssim_{\mathcal H} Pr(\Vert Z_j \Vert_2 > M_J)^{1/2} \left( E[T_{k,j}^2(Z_j)] \right)^{1/2}.
    \end{align*}
    Because both $T_{k,j}$ are of the form $\Vert E[ f(\tau,Z)| Z]\Vert_2$, by Jensen's inequality $E[T_k^2] \leq E[\Vert f(\tau,Z_j) \Vert_2^2] \lesssim_{\mathcal H} 1$. Then because $Z_j$ is sub-Gaussian,
    \begin{align*}
        \Vert \overline{D}_{k,j,M_J}(F_0, \chi_0, 0) \Vert_2 &\lesssim_{\mathcal H} e^{-C_{\mathcal H}M_J^2}.
    \end{align*}

    Because of the truncation to $\Vert z \Vert_2 \leq M_J$,
    \begin{align*}
        \Vert D_{k,j,M_J}(Z_j, F_0,\chi_0,0) - \overline{D}_{k,j,M_J}(F_0, \chi_0, 0) \Vert_2 \lesssim_{\mathcal H} M_J^2
    \end{align*}
    so for fixed $\chi, \chi_1, \chi_2 \in S$ as defined in \eqref{eq:sm6.40},
    \begin{align*}
        \left\Vert V_J(\chi_1) - V_J(\chi_2) \right\Vert_{\psi_2} &\lesssim_{\mathcal H} \frac{M_J^2}{\sqrt{J}} \Vert \chi_1 - \chi_2 \Vert_J, \\
        \Vert V_J(\chi) \Vert_{\psi_2} &\lesssim_{\mathcal H} \frac{\Delta_J M_J^2}{\sqrt{J}}.
    \end{align*}
    
    Then by the same chaining argument as for bounding $U_{2k}$ but taking $u = \sqrt{\log J}$, 
    \begin{align*}
        \sup_{\chi \in S} \vert V_J(\chi) \vert &\lesssim_{\mathcal H} \frac{M_J^2}{\sqrt{J}} \left[ \left( 1+\sqrt{\log J}\right) \Delta_J + \Delta_J \sqrt{\log \left(\frac{C_{\mathcal C}}{\Delta_J} \right)} \right] \\
        &\lesssim_{\mathcal H} \frac{M_J^2}{\sqrt{J}} \left( 1+\sqrt{\log J}\right) \Delta_J
    \end{align*}
    with probability at least $1-2/J$, because $\Delta_J \gtrsim_{\mathcal H} \frac{1}{\sqrt{J}}$. Thus letting
    \begin{align*}
        \overline{U}_{3k} = C_{\mathcal H} \left( \sup_{\chi \in S} \vert V_J(\chi) \vert + \Delta_J e^{-C_{\mathcal H}M_J^2} \right),
    \end{align*}
    the tail bound for $\overline{U}_{3k}$ gives the result.

    \textbf{Bounding $R_1, R_2$}:
    
    I follow the proofs of Lemmas SM6.5 and SM6.6 in \citeappendix{chen2022empirical}.
    
    To bound $R_1$, it can be shown that each $R_{1j}$ can be upper bounded by a constant times
    \begin{align*}
        \max\bigg( &\Vert \Delta_{\alpha,j}\Vert_\infty^2 \left\Vert \frac{\partial^2 \psi_j}{\partial a_j \partial a_j^T} \bigg\vert_{\widehat{F}_J, \tilde \alpha, \tilde\Omega} \right\Vert_F, 
        \Vert \Delta_{\alpha,j}\Vert_\infty \Vert B_j(\widehat{\Omega})^{1/2} - B_j(\Omega_0)^{1/2} \Vert_{op} \left\Vert \frac{\partial^2 \psi_j}{\partial a_j \partial \text{vec}(B_j^{1/2})^T} \bigg\vert_{\widehat{F}_J, \tilde \alpha, \tilde\Omega} \right\Vert_F, \\ 
        &\qquad \Vert B_j(\widehat{\Omega})^{1/2} - B_j(\Omega_0)^{1/2} \Vert_{op}^2 \left\Vert \frac{\partial^2 \psi_j}{\partial \text{vec}(B_j^{1/2}) \partial \text{vec}(B_j^{1/2})^T} \bigg\vert_{\widehat{F}_J, \tilde \alpha, \tilde\Omega} \right\Vert_F \bigg) 
    \end{align*}
    
    By assumption and Lemma \ref{lem:sm6.11}, it follows that $|R_{1j}| \lesssim_{\mathcal H} \Delta_J^2M_J^2 \log J \Rightarrow |R_{1}| \lesssim_{\mathcal H} \Delta_J^2M_J^2 \log J$.

    To bound $R_2$, I will show that
    \begin{align*}
        Pr \left(\Vert \widehat\chi - \chi_0 \Vert_J \leq \Delta_J, \overline{Z}_J \leq M_J, |R_2| \gtrsim_{\mathcal H} \Delta_J^2 \right) \leq \frac{1}{J}.
    \end{align*}
    By the same logic as above, $\mathbbm{1}(A_J) |R_2| \lesssim_{\mathcal H} \Delta_J^2 \frac{1}{J} \sum_{j=1}^J \mathbbm{1}(A_J) D_j$, where
    \begin{align*}
        D_j &\equiv \max\bigg( \left\Vert \frac{\partial^2 \psi_j}{\partial a_j \partial a_j^T} \bigg\vert_{F_0, \tilde \alpha, \tilde\Omega} \right\Vert_F, 
        \left\Vert \frac{\partial^2 \psi_j}{\partial a_j \partial \text{vec}(B_j^{1/2})^T} \bigg\vert_{F_0, \tilde \alpha, \tilde\Omega} \right\Vert_F, \left\Vert \frac{\partial^2 \psi_j}{\partial \text{vec}(B_j^{1/2}) \partial \text{vec}(B_j^{1/2})^T} \bigg\vert_{F_0, \tilde \alpha, \tilde\Omega} \right\Vert_F \bigg) 
    \end{align*}
    By the derivative calculations in Section \ref{app:sec:sm6.1}, these derivatives are functions of posterior moments under $F_0$, evaluated at transformed observation $\widehat{Z}_j(\tilde \alpha, \tilde\Omega)$. Then Lemma \ref{lem:sm6.14} implies that on $A_J$, 
    \begin{align*}
        \mathbbm{1}(A_J) D_j \lesssim_{\mathcal H} \mathbbm{1}(A_J) \max(\Vert \widehat{Z}_j(\tilde \alpha, \tilde\Omega) \Vert_2, 1)^4 \lesssim_{\mathcal H} \mathbbm{1}(A_J) \max(\Vert Z_j \Vert_2, 1)^4.
    \end{align*}
    By Chebyshev's inequality, there exists some $C_{\mathcal H}$ such that
    \begin{align*}
        Pr \left(\frac{1}{J} \sum_{j=1}^J \max(\Vert Z_j \Vert_2, 1)^4 \geq C_{\mathcal H} \right) &\leq \frac{1}{J}
    \end{align*}
    because $Var(\frac{1}{J} \sum_{j=1}^J \max(\Vert Z_j \Vert_2, 1)^4) \lesssim_{\mathcal H} \frac{1}{J}$. Thus $Pr \left(A_J, |R_2| \gtrsim_{\mathcal H} \Delta_J^2 \right) \leq \frac{1}{J}$.

    \vspace{10pt}

    To conclude the proof of the theorem, I apply a union bound (as in Lemma SM6.13 in \citeappendix{chen2022empirical}) to the above rates to obtain the result, following Appendix SM6 of \citeappendix{chen2022empirical}. In the rate $\epsilon_J$, the first term comes from $U_{1\Omega}$, the second and fourth terms from $U_{2k}$, and the third term from $R_1$. The other rates derived are dominated. The leading terms in $\epsilon_J$ dominate $\kappa_J$.
\end{proof}

\subsubsection{Proof of Corollary \ref{cor:oa3.1}}

I first state a result, which is a multivariate analogue of Theorem SM7.1 in \citeappendix{chen2022empirical}, that will be used in the proof of the corollary.

\begin{theorem}\label{thm:sm7.1}
    Suppose $J \geq 7$.
    Let $\tau_j \vert \Psi_{1:J} \sim F_0$, where $F_0$ satisfies Assumption \ref{ass:compact}.
    Fix positive sequences $\gamma_J, \lambda_J \to 0$ with $\gamma_J, \lambda_J \leq 1$, constants $\epsilon, C^* > 0$. Consider the set of distributions that approximately maximize the likelihood
    \begin{align*}
        A(\gamma_J, \lambda_J) &= \{H \in \mathcal{P}(\R^2) : \text{Sub}_J(H) \leq C^*(\gamma_J^2 + \bar{h}(f_{H,\cdot}, f_{F_0,\cdot})\lambda_J)\}
    \end{align*}
    and consider the set of distributions that are far from $F_0$ in $\bar h$
    \begin{align*}
        B(t,\lambda_J,\epsilon) &= \{H \in \mathcal{P}(\R^2): \bar{h}(f_{H,\cdot}, f_{F_0, \cdot}) \geq t B \lambda_J^{1-\epsilon}\}
    \end{align*}
    for some constant $B$ to be chosen. Assume that for some $C_\lambda$,
    \begin{align*}
        \lambda_J^2 &\geq \gamma_J^2 \geq \frac{C_\lambda}{J}(\log J)^{3}.
    \end{align*}
    Then the probability that $A \cap B$ is nonempty is bounded for $t > 1$, that is, there exists a choice of $B$ that depends on $\mathcal H, C^*,$ and $C_\lambda$ such that
    \begin{align*}
        Pr \left( A(\gamma_J,\lambda_J) \cap B(t,\lambda_J,\epsilon) \neq \emptyset \right) &\leq (\log_2(1/\epsilon)+1)J^{-t^2}.
    \end{align*}
\end{theorem}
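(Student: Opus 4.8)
The plan is to follow the proof of Theorem SM7.1 in \cite{chen2022empirical}, adapting each step from the univariate location-scale model to the two-dimensional model, where the only genuinely new input is a metric-entropy bound for Gaussian location mixtures on $\R^2$. The starting point is a basic inequality: if $H \in A(\gamma_J,\lambda_J)$, then since $F_0$ is itself a feasible competitor in the likelihood, the near-optimality of $H$ forces $\frac1J\sum_{j=1}^J \log\frac{f_{H,\Psi_j}(Z_j)}{f_{F_0,\Psi_j}(Z_j)} \geq -\text{Sub}_J(H) \geq -C^*\big(\gamma_J^2 + \bar h(f_{H,\cdot},f_{F_0,\cdot})\lambda_J\big)$. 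The task is to show that this lower bound on the log-likelihood ratio is incompatible, with high probability, with $\bar h$ being as large as $tB\lambda_J^{1-\epsilon}$, since a large Hellinger separation should drive the likelihood ratio sharply negative.

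First I would record the one-sided concentration coming from the Hellinger affinity. For a fixed $H$, independence of the $Z_j$ together with $E_{F_0}\big[\sqrt{f_{H,\Psi_j}(Z_j)/f_{F_0,\Psi_j}(Z_j)}\big] = 1 - h^2(f_{H,\Psi_j},f_{F_0,\Psi_j})$ gives $E_{F_0}\big[\exp\big(\tfrac12\sum_j \log\tfrac{f_{H,\Psi_j}(Z_j)}{f_{F_0,\Psi_j}(Z_j)}\big)\big] \leq \exp(-J\bar h^2)$, so Markov's inequality yields a sub-Gaussian-type tail for the event that the averaged log-likelihood ratio exceeds $-\bar h^2$ plus a fluctuation. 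To upgrade this fixed-$H$ bound to a uniform statement over all near-maximizers at a given Hellinger scale $r$, I would cover the localized class $\{H : \bar h(f_{H,\cdot},f_{F_0,\cdot}) \leq r\}$ in a suitable pseudometric and control the fluctuation by a chaining/bracketing integral. This is where the multivariate adaptation enters: I would invoke the bracketing-entropy bounds for two-dimensional Gaussian location mixtures from \cite{soloff2024multivariate} and \cite{saha2020nonparametric}, which in $d=2$ carry extra logarithmic factors relative to the univariate case. The hypothesis $\gamma_J^2 \geq \frac{C_\lambda}{J}(\log J)^3$ is precisely the critical-radius condition ensuring that this entropy integral is dominated by the budget $\gamma_J^2 + \bar h\lambda_J$; the power $(\log J)^3$ is exactly what the $\R^2$ entropy forces.

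The core of the argument is then an iterative refinement of the bound on $\bar h$, which is where the factor $\log_2(1/\epsilon)+1$ originates. Starting from the trivial bound $\bar h \lesssim 1$, I would use the concentration-plus-complexity estimate to show that, on an event of probability at least $1-J^{-t^2}$, any $H \in A(\gamma_J,\lambda_J)$ with $\bar h \leq \lambda_J^{a}$ in fact satisfies $\bar h \leq \lambda_J^{a'}$ for an improved exponent $a'$ obtained by balancing $\bar h^2$ against the budget $\gamma_J^2 + \bar h\lambda_J$ at the current scale. The recursion roughly replaces the gap $1-a$ by $(1-a)/2$, so after $m \leq \log_2(1/\epsilon)+1$ steps the exponent exceeds $1-\epsilon$, yielding $\bar h \lesssim_{\mathcal H} B\lambda_J^{1-\epsilon}$ for a suitable $B$ depending on $\mathcal H$, $C^*$, and $C_\lambda$. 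Taking $t>1$ makes each step fail with probability at most $J^{-t^2}$, and a union bound over the at most $\log_2(1/\epsilon)+1$ steps gives the stated bound; the resulting radius bound contradicts membership in $B(t,\lambda_J,\epsilon)$, so $A \cap B$ is empty off the bad event.

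The main obstacle is the second step: re-deriving the localized chaining bound with the correct two-dimensional metric entropy and verifying that the $(\log J)^3$ in the critical-radius condition is both necessary and sufficient to absorb the multivariate complexity. In the univariate model of \cite{chen2022empirical} the Gaussian-mixture entropy contributes fewer logarithmic powers, so the bookkeeping of which term dominates at the critical scale must be redone carefully; by contrast the affinity computation, the concentration inequality, and the iteration-counting argument transfer essentially unchanged.
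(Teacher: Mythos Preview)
Your high-level plan matches the paper's: the dyadic slicing of $B(t,\lambda_J,\epsilon)$ into $K=\lceil\log_2(1/\epsilon)\rceil$ shells $B_k=\{H:\bar h\in(tB\lambda_J^{1-2^{-k}},tB\lambda_J^{1-2^{-k+1}}]\}$ followed by a union bound is exactly what the paper does, and your identification of the $(\log J)^3$ as the two-dimensional entropy price is correct. The route diverges at the uniform-over-$H$ step. You propose to control the fluctuation of the log-likelihood ratio via a chaining or bracketing integral over localized Hellinger balls; the paper instead takes a single-scale $\omega$-net (with $\omega=J^{-2}$) of $\mathcal P(\R^2)$ in the sup-norm $\|\cdot\|_{\infty,M}$ over $\{\|z\|\le M\}$, upper-bounds $f_{H,\Psi_j}(z)$ pointwise by the nearest net density plus a decaying correction $v(z)$ that handles $\|z\|>M$, applies Markov's inequality to $\prod_j\sqrt{f_{H,\Psi_j}(Z_j)/f_{F_0,\Psi_j}(Z_j)}$ to invoke the Hellinger affinity directly at each net point, and then union-bounds over the net. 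This avoids chaining entirely but requires a separate tail argument for the contribution from $\{j:\|Z_j\|>M\}$: a second Markov step with $M\asymp\sqrt{\log J}$ and a power $q\asymp\log J/\log\log J$, where Assumption~\ref{ass:compact} enters via the moment bound $\mu_q^q\le K^q$. Your proposal does not mention this truncation, which is essential to the paper's sup-norm route; a Hellinger-bracketing approach in the style of Wong--Shen could plausibly sidestep it, but that would be a genuinely different argument from the one the paper gives, and you would need to verify that the two-dimensional bracketing numbers (not just sup-norm covering numbers) are available at the required rate.
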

\begin{proof}
    The proof closely follows the proof of Theorem SM7.1 in \citeappendix{chen2022empirical}.
    Decompose $B(t,\lambda_J,\epsilon) \subseteq \cup_{k=1}^K B_k(t,\lambda_J)$ where for some $B > 1$ to be chosen and $K = \lceil |\log_2(1/\epsilon)| \rceil$,
    \begin{align*}
        B_k &= \left\{H: \bar{h}(f_{H,\cdot}, f_{F_0, \cdot}) \in \left( tB\lambda_J^{1-2^{-k}}, tB\lambda_J^{1-2^{-k+1}} \right] \right\}.
    \end{align*}
    If $Pr(A(\gamma_J, \lambda_J) \cap B_k(t,\lambda_J) \neq \emptyset) \leq J^{-t^2}$ the result follows from a union bound.

    Let $\mu_{J,k} = B \lambda_J^{1-2^{-k+1}}$, so that $B_k = \left\{H: \bar{h}(f_{H,\cdot}, f_{F_0, \cdot}) \in \left( t\mu_{J,k+1}, t\mu_{J,k} \right] \right\}$. Fix a $k \in [K]$. 
    
    For $\omega = \frac{1}{J^2}$ consider an $\omega$-net for $\mathcal{P}(\R^2)$ under $\Vert \cdot \Vert_{\infty,M}$ (recall the definition of $\Vert \cdot \Vert_{\infty,M}$ from \eqref{eq:inf_M_norm}). Letting $N = N(\omega, \mathbb{F}, \Vert \cdot \Vert_{\infty,M})$ for $\mathbb{F}$ the space of $f_{F,\cdot}$ induced by $F \in \mathcal{P}(\R^2)$, let $H_1, \dots, H_N$ denote the distributions making up the $\omega$-net. And for each $i \in [N]$ let $H_{k,i}$ be a distribution, if it exists, with $\Vert f_{H_{k,i}} - f_{H_i} \Vert_{\infty,M} \leq \omega$ and $\bar{h}(f_{H_{k,i}, \cdot}, f_{F_0, \cdot}) \geq t \mu_{J,k+1}$. Finally let $I_k$ collect the indices $i$ for which $H_{k,i}$ exists.

    For any fixed distribution $H \in B_k(t,\lambda_J)$ there exists some $H_i$ in the covering such that $\Vert f_{H} - f_{H_i} \Vert_{\infty,M} \leq \omega$. Furthermore $H$ serves as witness that $H_{k,i}$ exists with $\Vert f_H - f_{H_{k,i}} \Vert_{\infty,M} \leq 2\omega$.

    Note that an upper bound for $f_{H,\Psi_j}(z)$ is given by
    \begin{align*}
        f_{H,\Psi_j}(z) &\leq \begin{cases} f_{H_{k,i}, \Psi_j}(z) + 2\omega & \Vert z \Vert_2 \leq M \\ \frac{1}{\sqrt{\det(2\pi \Psi_j)}} & \Vert z \Vert_2 > M. \end{cases}
    \end{align*}
    Defining $v(z) = \omega \mathbbm{1}(\Vert z \Vert_2 \leq M) + \omega \left(\frac{M}{\Vert z \Vert_2} \right)^{3} \mathbbm{1}(\Vert z \Vert_2 > M)$, 
    \begin{align*}
        f_{H,\Psi_j}(z) &\leq \begin{cases} f_{H_{k,i}, \Psi_j}(z) + 2v(z) & \Vert z \Vert_2 \leq M \\ \frac{f_{H_{k,i},\Psi_j}(z) + 2v(z)}{\sqrt{\det(2\pi \Psi_j)} 2v(z)} & \Vert z \Vert_2 > M. \end{cases}
    \end{align*}
    This means the likelihood ratio between $F_0$ and $H$ is upper bounded:
    \begin{align*}
        \prod_{j=1}^J \frac{f_{H,\Psi_j}(Z_j)}{f_{F_0,\Psi_j}(Z_j)} &\leq \left( \max_{i \in I_k} \prod_{j=1}^J \frac{f_{H_{k,i},\Psi_j}(Z_j) + 2v(Z_j)}{f_{F_0,\Psi_j}(Z_j)} \right) \prod_{j: \Vert Z_j \Vert_2 > M} \frac{1}{\sqrt{\det(2\pi \Psi_j)} 2v(Z_j)}.
    \end{align*}
    If $H \in A(\gamma_J,\lambda_J)$ the likelihood ratio is also lower bounded as in the proof of Theorem SM7.1 in \citeappendix{chen2022empirical}:
    \begin{align*}
        \prod_{j=1}^J \frac{f_{H,\Psi_j}(Z_j)}{f_{F_0,\Psi_j}(Z_j)} &\geq \exp \left(-J C^*(t^2 \lambda_J^2 + t^2 \mu_{J,k}\lambda_J) \right),
    \end{align*}
    so it follows that, choosing some $a > 1$,
    \begin{small}
    \begin{align}
        &Pr \left(A(\gamma_J,\lambda_J) \cap B_k(t,\lambda_J) \neq \emptyset\right) \nonumber \\
        &\leq Pr \left( \left( \max_{i \in I_k} \prod_{j=1}^J \frac{f_{H_{k,i},\Psi_j}(Z_j) + 2v(Z_j)}{f_{F_0,\Psi_j}(Z_j)} \right) \prod_{j: \Vert Z_j \Vert_2 > M} \frac{1}{\sqrt{\det(2\pi \Psi_j)} 2v(Z_j)} \geq \exp \left(-J C^*(t^2 \lambda_J^2 + t^2 \mu_{J,k}\lambda_J) \right) \right) \nonumber \\
        &\leq Pr\left( \max_{i \in I_k} \prod_{j=1}^J \frac{f_{H_{k,i},\Psi_j}(Z_j) + 2v(Z_j)}{f_{F_0,\Psi_j}(Z_j)} \geq e^{-Jt^2aC^*(\gamma_J^2+\mu_{J,k}\lambda_J)} \right) \label{eq:sm7.4} \\
        &\qquad + Pr \left( \prod_{j: \Vert Z_j \Vert_2 > M} \frac{1}{\sqrt{\det(2\pi \Psi_j)} 2v(Z_j)} \geq e^{Jt^2(a-1)C^*(\gamma_J^2+\mu_{J,k}\lambda_J)} \right) \label{eq:sm7.5}.
    \end{align}
    \end{small}

    By union bound, Markov's inequality, and independence over $j$,
    \begin{align*}
        \eqref{eq:sm7.4} &\leq \sum_{i \in I_k} e^{Jt^2aC^*(\gamma_J^2+\mu_{J,k}\lambda_J)/2} \prod_{j=1}^J E \left[\sqrt{\frac{f_{H_{k,i},\Psi_j}(Z_j) + 2v(Z_j)}{f_{F_0,\Psi_j}(Z_j)}} \right].
    \end{align*}
    Note
    \begin{align*}
        E \left[\sqrt{\frac{f_{H_{k,i},\Psi_j}(Z_j) + 2v(Z_j)}{f_{F_0,\Psi_j}(Z_j)}} \right] &= \int \sqrt{f_{H_{k,i},\Psi_j}(z) + 2v(z)} \sqrt{f_{F_0,\Psi_j}(z)} dz \\
        &\leq 1 - h^2(f_{H_{k,i},\Psi_j}, f_{F_0,\Psi_j}) + \int \sqrt{2v(z)f_{F_0,\Psi_j}(z)}dz \\
        &\leq 1 - h^2(f_{H_{k,i},\Psi_j}, f_{F_0,\Psi_j}) + \sqrt {2\int v(z)dz} \qquad \text{Jensen's} \\
        &= 1 - h^2(f_{H_{k,i},\Psi_j}, f_{F_0,\Psi_j}) + M\sqrt{6\pi \omega} \\
        \Rightarrow \prod_{j=1}^J E \left[\sqrt{\frac{f_{H_{k,i},\Psi_j}(Z_j) + 2v(Z_j)}{f_{F_0,\Psi_j}(Z_j)}} \right] &\leq \exp\left(-J \bar{h}^2(f_{H_{k,i},\cdot}, f_{F_0, \cdot}) + J M \sqrt{6\pi \omega} \right)
    \end{align*}
    using $\prod_{j=1}^J t_j \leq \exp(\sum_{j=1}^J (t_j-1))$ for $t_j > 0$.
    Then
    \begin{align*}
        \eqref{eq:sm7.4} &\leq \sum_{i \in I_k} \exp\left(\frac{Jt^2aC^*}{2}(\gamma_J^2+\mu_{J,k}\lambda_J) -J \bar{h}^2(f_{H_{k,i},\cdot}, f_{F_0, \cdot}) + JM \sqrt{6\pi \omega} \right) \\
        &\leq \exp\left(\frac{Jt^2aC^*}{2}(\gamma_J^2+\mu_{J,k}\lambda_J) -J t^2 \mu_{J,k+1}^2 + M\sqrt{6\pi} + C(\log J)^3 \max \left(1, \frac{M}{\sqrt{\log J}}, \frac{M^2}{\log J} \right) \right) 
    \end{align*}
    because $\bar{h}(f_{H_{k,i},\cdot}, f_{F_0,\cdot}) \geq t \mu_{J,k+1}, |I_k| \leq N$, $\omega = \frac{1}{J^2}$, and $\log N \lesssim_{\mathcal H} (\log(1/\omega))^3 \max\left(1, \frac{M}{\sqrt{\log(1/\omega)}}, \frac{M^2}{\log(1/\omega)} \right)$ by Suppl. Lemma 5 of \citeappendix{soloff2024multivariate} and Suppl. Lemma F.6 of \citeappendix{saha2020nonparametric}.

    By Markov's inequality, taking $x \mapsto x^{1/(2\log J)}$,
    \begin{align*}
        \eqref{eq:sm7.5} &\leq E \left[ \prod_{j=1}^J \left( \frac{1}{(\det(2\pi \Psi_j))^{1/6}} \frac{\Vert Z_j \Vert_2}{(2\omega)^{1/3}M} \right)^{\frac{3}{2\log J} \mathbbm{1}(\Vert Z_j \Vert_2 > M)} \right] \exp \left( -\frac{J(a-1)t^2 C^*(\gamma_J^2 + \mu_{J,k} \lambda_J)}{2\log J} \right).
    \end{align*}
    Define
    \begin{align*}
        a_j = \frac{1}{(\det(2\pi \Psi_j))^{1/6}(2\omega)^{1/3}M} \leq \frac{C_{\underline{k}}J^{2/3}}{M}, \qquad \lambda = \frac{3}{2\log J}
    \end{align*}
    for some constant $C_{\underline{k}}$ depending only on $\underline{k},$
    then as in the proof of Theorem 7 in \citeappendix{soloff2024multivariate}, using Suppl. Lemma 2 in \citeappendix{soloff2024multivariate},
    \begin{align*}
        E \left[ \prod_{j=1}^J \left( \frac{1}{(\det(2\pi \Psi_j))^{1/6}} \frac{\Vert Z_j \Vert_2}{(2\omega)^{1/3}M} \right)^{\frac{3}{2\log J} \mathbbm{1}(\Vert Z_j \Vert_2 > M)} \right] &= E \left[ \left\{\prod_j (a_j \Vert Z_j \Vert_2)^{\mathbbm{1}(\Vert Z_j \Vert_2 > M) } \right\}^{\lambda} \right] \\
        &\leq \exp \left(\sum_{j=1}^J a_j^\lambda E \left[ \Vert Z_j \Vert_2^\lambda \mathbbm{1}(\Vert Z_j \Vert_2 > M)\right] \right) \\
        &\leq \exp \left(\sum_{j=1}^J a_j^\lambda M^{\lambda} \left(C e^{-M^2/(8\bar k)} + \left( \frac{2\mu_q}{M} \right)^q \right)\right) \\
        &\leq \exp \left(C_{\underline{k}}e \left(C + \frac{J2^q\mu_q^q}{M^q} \right) \right)
    \end{align*}
    for some constant $C$, taking $M \geq \sqrt{8 \bar{k} \log J}$ and $\frac{3}{2\min(1,q)} \leq \log J$ so that $\lambda \in (0, \min(1,q))$ and defining $\mu_q \equiv (E[\Vert \tau \Vert_2^q])^{1/q}$ for $\tau \sim F_0$.
    Thus
    \begin{align*}
        \eqref{eq:sm7.5} &\leq \exp \left(C_{\underline{k}}C e + C_{\underline{k}} e\frac{J2^q\mu_q^q}{M^q} -\frac{J(a-1)t^2 C^*(\gamma_J^2 + \mu_{J,k} \lambda_J)}{2\log J} \right).
    \end{align*}
    Note that under Assumption \ref{ass:compact}, $\mu_q \leq S_0 \sqrt{q}$ for all $q$, where $S_0 \geq 1$ without loss of generality. 
    So taking $a=2, q = c_m \log J, M = 2eS_0 \sqrt{c_m \log J}$ where constant $c_m \geq 2$ is chosen sufficiently large that $M \geq \sqrt{8 \overline{k} \log J}$, then $\left( \frac{2\mu_q}{M} \right)^q \leq e^{-q} = J^{-c_m}$, so $J(2\mu_q/M)^q \leq 1/J$. Using $\lambda_J^2 \geq \gamma_J^2 \geq \frac{C_\lambda}{J}(\log J)^{3}$,
    \begin{align*}
        \eqref{eq:sm7.5} &\leq \exp \left(C_{\underline{k}}C e + \frac{C_{\underline{k}} e}{J} - t^2 \frac{C^*C_\lambda(1 + B)}{2}(\log J)^2 \right) \\
        \eqref{eq:sm7.4} &\leq \exp\left(-t^2 (\log J)^3 \left(C_\lambda\left(-C^* - C^*B + B^2 \right) - (C+5)4e^2S_0^2c_m \right) \right) .
    \end{align*}
    There exists large enough $B$ such that $\eqref{eq:sm7.4} \leq 0.5\exp(-t^2 \log J)$ and $\eqref{eq:sm7.5} \leq 0.5\exp(-t^2 \log J)$, so $\eqref{eq:sm7.4} + \eqref{eq:sm7.5} \leq J^{-t^2}$, which concludes the proof.
\end{proof}

Finally, the proof of Corollary \ref{cor:oa3.1} follows as in Appendix SM7 of \citeappendix{chen2022empirical}, which uses Corollary \ref{cor:sm6.1} and Theorem \ref{thm:sm7.1}, but replacing the constants $\alpha, \beta,$ and $-p/(2p+1)$ with $2, \frac{1}{2},$ and $-\frac{1}{2}$ respectively to match the rates $\Delta_J, M_J, \delta_J$ chosen here.

\subsection{Auxiliary lemmas}

\begin{lemma}\label{lem:sm6.8}
    Fix a probability measure $F$ on $\R^2$ and any $z \in \R^2$. Then
\begin{align*}
    \left(\frac{\left\Vert \nabla f_{F, \Psi_j}(z) \right\Vert_2}{f_{F, \Psi_j}(z)} \right)^2 \leq \left\Vert \Psi_j^{-1} \right\Vert_{F} \log \left(\frac{1}{(2\pi)^2 \det(\Psi_j) f^2_{F, \Psi_j}(z)} \right)
\end{align*}
and 
\begin{align*}
    \left\Vert \Psi_j + \Psi_j \frac{\nabla^2 f_{F, \Psi_j}(z)}{f_{F, \Psi_j}(z)} \Psi_j \right\Vert_F \leq \left\Vert \Psi_j \right\Vert_{F} \log \left( \frac{1}{(2\pi)^2\det(\Psi_j) f_{F,  \Psi_j}^2(z)} \right).
\end{align*}
Also for every $z \in \R^2$ and all $\rho \in (0, 1/2\pi \sqrt{e})$,
\begin{align*}
    \frac{\left\Vert \nabla f_{F, \Psi_j}(z) \right\Vert_2}{\max \left( f_{F, \Psi_j}(z), \frac{\rho}{\sqrt{\det(\Psi_j)}} \right)} \leq \left\Vert \Psi_j^{-1} \right\Vert_{F}^{1/2} \varphi_+(\rho)
\end{align*}
while for every $z \in \R^2$ and all $\rho \in (0, 1/2\pi e)$,
\begin{align*}
    \left(\frac{\left\Vert \nabla f_{F, \Psi_j}(z) \right\Vert_2}{f_{F, \Psi_j}(z)} \right)^2 \frac{f_{F, \Psi_j}(z)}{\max \left( f_{F, \Psi_j}(z), \frac{\rho}{\sqrt{\det(\Psi_j)}} \right)} \leq \left\Vert \Psi_j^{-1} \right\Vert_{F} \varphi_+^2(\rho)
\end{align*}
and 
\begin{align*}
    \left\Vert \Psi_j + \Psi_j \frac{\nabla^2 f_{F, \Psi_j}(z)}{f_{F, \Psi_j}(z)} \Psi_j \right\Vert_F \frac{f_{F, \Psi_j}(z)}{\max \left( f_{F, \Psi_j}(z), \frac{\rho}{\sqrt{\det(\Psi_j)}} \right)} \leq \left\Vert \Psi_j \right\Vert_{F} \varphi_+^2(\rho).
\end{align*}
\end{lemma}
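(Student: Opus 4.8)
The plan is to reduce everything to two scalar bounds for a \emph{standardized} Gaussian location mixture and then carry the covariance factors through matrix--norm inequalities. Writing $\Psi = \Psi_j$ and whitening via $\tau\mapsto\Psi^{-1/2}\tau$, $z\mapsto\Psi^{-1/2}z$, one checks directly that
\begin{equation*}
    f_{F,\Psi}(z) = \frac{1}{\sqrt{\det\Psi}}\, g\!\left(\Psi^{-1/2}z\right), \qquad g(w) \equiv \frac{1}{2\pi}\int \exp\!\left(-\tfrac12\Vert w - \tau\Vert_2^2\right)dG(\tau),
\end{equation*}
where $g$ is the density of an $N(0,I_2)$ location mixture whose mixing law $G$ is the image of $F$ under $\Psi^{-1/2}$. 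The chain rule gives $\nabla f_{F,\Psi}/f_{F,\Psi} = \Psi^{-1/2}(\nabla g/g)$ at $w=\Psi^{-1/2}z$, and the Hessian formula of Appendix~\ref{app:sec:sm6.1} yields the clean identity $\Psi + \Psi\tfrac{\nabla^2 f_{F,\Psi}}{f_{F,\Psi}}\Psi = E[(\tau-z)(\tau-z)^T\mid z] = \Psi^{1/2}M\,\Psi^{1/2}$, where $M\equiv E_G[(\tau-w)(\tau-w)^T\mid w]\succeq 0$ is the posterior second moment for the standardized mixture. Finally $\det(\Psi)f_{F,\Psi}(z)^2 = g(w)^2$, so the logarithmic factor appearing throughout the lemma equals $\log\frac{1}{(2\pi)^2 g(w)^2} = 2\log\frac{1}{2\pi g(w)}$, which is nonnegative since $g\le(2\pi)^{-1}$.

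The heart of the argument is the pair of standardized bounds
\begin{equation*}
    \left\Vert \frac{\nabla g(w)}{g(w)}\right\Vert_2^2 \le 2\log\frac{1}{2\pi g(w)}, \qquad \mathrm{tr}(M) \le 2\log\frac{1}{2\pi g(w)}.
\end{equation*}
Both follow from an exponential--tilting (Gibbs variational) computation. Introduce the posterior measure $d\pi(\tau)\propto \exp(-\tfrac12\Vert w-\tau\Vert_2^2)dG(\tau)$ with normalizer $2\pi g(w)$, so $\nabla g/g = E_\pi[\tau-w]$ and $\mathrm{tr}(M)=E_\pi[\Vert\tau-w\Vert_2^2]$. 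For a unit vector $u$ and $s>0$, Jensen's inequality and completing the square give $E_\pi[u^T(\tau-w)]\le\frac1s\log E_\pi[e^{su^T(\tau-w)}] = \frac s2 + \frac1s\log\frac{g(w+su)}{g(w)}\le\frac s2 + \frac1s\log\frac{1}{2\pi g(w)}$, using $g\le(2\pi)^{-1}$; optimizing at $s=\sqrt{2\log\frac{1}{2\pi g(w)}}$ and taking the supremum over $u$ gives the score bound. For the trace, $E_\pi[e^{s\Vert\tau-w\Vert_2^2}] = \frac{1}{2\pi g(w)}\int e^{-(\frac12-s)\Vert\tau-w\Vert_2^2}dG(\tau)\le\frac{1}{2\pi g(w)}$ for every $s\in(0,\tfrac12)$, since the integrand is at most one; Jensen and $s\uparrow\tfrac12$ finish it. This is the step I expect to be the main obstacle, as it is where the Gaussian structure is genuinely exploited; the rest is bookkeeping.

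With these in hand the two unregularized inequalities are immediate. For the first, $\big(\Vert\nabla f_{F,\Psi}\Vert_2/f_{F,\Psi}\big)^2 = \Vert\Psi^{-1/2}\nabla g/g\Vert_2^2 \le \Vert\Psi^{-1/2}\Vert_{op}^2\Vert\nabla g/g\Vert_2^2 \le \Vert\Psi^{-1}\Vert_F\cdot2\log\frac{1}{2\pi g}$, using $\Vert\Psi^{-1/2}\Vert_{op}^2 = \Vert\Psi^{-1}\Vert_{op}\le\Vert\Psi^{-1}\Vert_F$. For the second, $\Vert\Psi^{1/2}M\Psi^{1/2}\Vert_F\le\Vert\Psi^{1/2}\Vert_{op}^2\Vert M\Vert_F = \Vert\Psi\Vert_{op}\Vert M\Vert_F\le\Vert\Psi\Vert_F\,\mathrm{tr}(M)$, since $\Vert M\Vert_F\le\mathrm{tr}(M)$ for $M\succeq0$; the trace bound then delivers the claim.

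For the three regularized inequalities I would split on whether $f_{F,\Psi}(z)\ge\rho/\sqrt{\det\Psi}$, i.e.\ $g(w)\ge\rho$. When $g(w)\ge\rho$ the maximum equals $f_{F,\Psi}(z)$, the ratio $f/\max(\cdot)$ is one, and replacing $g$ by its lower bound $\rho$ in $2\log\frac{1}{2\pi g}$ turns the unregularized bounds into the stated $\varphi_+(\rho)^2 = 2\log\frac{1}{2\pi\rho}$ form. When $g(w)<\rho$ the denominator is $\rho/\sqrt{\det\Psi}$, and after the same norm manipulations the claims reduce to the one-variable statements $\frac{g}{\rho}\sqrt{2\log\frac{1}{2\pi g}}\le\sqrt{2\log\frac{1}{2\pi\rho}}$ (gradient inequality) and $\frac{g}{\rho}\log\frac{1}{2\pi g}\le\log\frac{1}{2\pi\rho}$ (the two squared inequalities). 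These hold because $g\mapsto g\sqrt{2\log\frac{1}{2\pi g}}$ is increasing on $(0,\tfrac{1}{2\pi\sqrt e})$ and $g\mapsto g\log\frac{1}{2\pi g}$ is increasing on $(0,\tfrac{1}{2\pi e})$, which is exactly the origin of the thresholds $\rho<\tfrac{1}{2\pi\sqrt e}$ and $\rho<\tfrac{1}{2\pi e}$ in the statement; verifying these monotonicities by differentiation completes the proof.
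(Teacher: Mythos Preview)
Your proof is correct and follows essentially the same architecture as the paper's: whiten via $\Psi_j^{-1/2}$ to reduce to a standard $N(0,I_2)$ location mixture, establish the key inequalities $\|\nabla g/g\|_2^2\le 2\log\frac{1}{2\pi g}$ and $\mathrm{tr}(M)\le 2\log\frac{1}{2\pi g}$ in that standardized setting, transfer back with matrix-norm bounds, and handle the regularized claims by the same case split on $g(w)\gtrless\rho$ together with the monotonicity of $g\mapsto g\sqrt{2\log\frac{1}{2\pi g}}$ and $g\mapsto g\log\frac{1}{2\pi g}$ on the intervals dictated by the thresholds.

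The only substantive difference is that the paper does not prove the standardized bounds itself but imports them from Lemma~F.1 of \cite{saha2020nonparametric} (their (F.1)--(F.3)), whereas you give a short, self-contained derivation via exponential tilting: Jensen plus completing the square for the score, and the Gibbs-variational bound $E_\pi[e^{s\|\tau-w\|_2^2}]\le\frac{1}{2\pi g(w)}$ for $s<1/2$ for the second moment. Your route is more transparent and removes the dependence on an external reference; the paper's route is shorter on the page. Either way the remaining matrix-norm and monotonicity steps are identical.
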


\begin{proof}
    This lemma extends Suppl. Lemma F.1 of \citeappendix{saha2020nonparametric} to a heteroscedastic setting, using the approach of \citeappendix{soloff2024multivariate}.

    As in section D.2 of \citeappendix{soloff2024multivariate}, for any fixed $j$ let $F_j$ denote the distribution of $\xi_j = \Psi_j^{-1/2} \tau_j$ where $\tau_j \sim F$. Then for $\check{z}_j = \Psi_j^{-1/2} z$ one can verify
    \begin{align*}
        f_{F, \Psi_j}(z) &= \frac{1}{\sqrt{\det(\Psi_j)}} f_{F_j, I_2}(\check{z}_j) \\
        \nabla f_{F, \Psi_j}(z) &= \frac{1}{\sqrt{\det(\Psi_j)}} \Psi_j^{-1/2} \nabla f_{F_j, I_2}(\check{z}_j) \\
        \nabla^2 f_{F, \Psi_j}(z) &= \frac{1}{\sqrt{\det(\Psi_j)}} \Psi_j^{-1/2} \nabla^2 f_{F_j, I_2}(\check{z}_j) \Psi_j^{-1/2}.
    \end{align*}
    Then using (F.1) in Suppl. Lemma F.1 of \citeappendix{saha2020nonparametric} 
    \begin{align*}
        \left(\frac{\left\Vert \nabla f_{F, \Psi_j}(z) \right\Vert_2}{f_{F, \Psi_j}(z)}\right)^2 &= \left(\frac{\left\Vert \Psi_j^{-1/2} \nabla f_{F_j, I_2}(\check{z}_j) \right\Vert_2}{f_{F_j, I_2}(\check{z}_j)}\right)^2 \leq \left\Vert \Psi_j^{-1} \right\Vert_{F} \log \left(\frac{1}{(2\pi)^2 \det(\Psi_j) f^2_{F, \Psi_j}(z)} \right).
    \end{align*}
    By inspection of the proof, I can replace the trace with a Frobenius norm in equation (F.1) in Suppl. Lemma F.1 of \citeappendix{saha2020nonparametric} to obtain 
    \begin{align*}
        \left\Vert \Psi_j + \Psi_j \frac{\nabla^2 f_{F, \Psi_j}(z)}{f_{F, \Psi_j}(z)} \Psi_j \right\Vert_F &= \left\Vert \Psi_j^{1/2} \left(I_2 + \frac{\nabla^2 f_{F_j, I_2}(\check{z}_j)}{f_{F_j, I_2}(\check{z}_j)} \right) \Psi_j^{1/2} \right\Vert_F \\
        &\leq \left\Vert \Psi_j \right\Vert_{F} \log \left( \frac{1}{(2\pi)^2\det(\Psi_j) f_{F,  \Psi_j}^2(z)} \right).
    \end{align*}
    Similarly, from (F.2) in Suppl. Lemma F.1 of \citeappendix{saha2020nonparametric}
    \begin{align*}
        \frac{\left\Vert \nabla f_{F, \Psi_j}(z) \right\Vert_2}{\max \left( f_{F, \Psi_j}(z), \frac{\rho}{\sqrt{\det(\Psi_j)}} \right)} &= \frac{\left\Vert \Psi_j^{-1/2} \nabla f_{F_j, I_2}(\check{z}_j) \right\Vert_2}{\max \left( f_{F_j, I_2}(\check{z}_j), \rho \right)}
        \leq \left\Vert \Psi_j^{-1} \right\Vert_{F}^{1/2} \varphi_+(\rho)
    \end{align*}
    and from (F.3) in Suppl. Lemma F.1 of \citeappendix{saha2020nonparametric}
    \begin{align*}
        \left(\frac{\left\Vert \nabla f_{F, \Psi_j}(z) \right\Vert_2}{f_{F, \Psi_j}(z)} \right)^2 \frac{f_{F, \Psi_j}(z)}{\max \left( f_{F, \Psi_j}(z), \frac{\rho}{\sqrt{\det(\Psi_j)}} \right)} &= \left(\frac{\left\Vert \Psi_j^{-1/2} \nabla f_{F_j, I_2}(\check{z}_j) \right\Vert_2}{f_{F_j, I_2}(\check{z}_j)}\right)^2 \frac{f_{F_j, I_2}(\check{z}_j)}{\max \left( f_{F_j, I_2}(\check{z}_j), \rho \right)} \\
        &\leq \left\Vert \Psi_j^{-1} \right\Vert_{F} \varphi_+^2(\rho).
    \end{align*}
    Finally I follow the proof of Lemma SM6.8 in \citeappendix{chen2022empirical} and look at cases:

    1) $f_{F, \Psi_j}(z) \leq \frac{\rho}{\sqrt{\det(\Psi_j)}}$. Then because $t \log(1/(2\pi t)^2)$ is increasing over $t \in (0, 1/2\pi e)$, using the result from above
    \begin{align*}
        &\left\Vert \Psi_j + \Psi_j \frac{\nabla^2 f_{F, \Psi_j}(z)}{f_{F, \Psi_j}(z)} \Psi_j \right\Vert_F \sqrt{\det(\Psi_j)} f_{F, \Psi_j}(z) \\
        &\leq \left\Vert \Psi_j \right\Vert_{F} \sqrt{\det(\Psi_j)} f_{F, \Psi_j}(z) \log \left( \frac{1}{(2\pi)^2\det(\Psi_j) f_{F, \Psi_j}^2(z)} \right) \\
        &\leq \left\Vert \Psi_j \right\Vert_{F} \rho \log \left( \frac{1}{(2\pi \rho)^2} \right) = \left\Vert \Psi_j \right\Vert_{F} \rho \varphi_+^2(\rho).
    \end{align*}
    The result follows from dividing by $\max \left( \sqrt{\det(\Psi_j)} f_{F, \Psi_j}(z), \rho \right) = \rho$.
    
    2) $f_{F, \Psi_j}(z) > \frac{\rho}{\sqrt{\det(\Psi_j)}}$. Then because $\log(1/(2\pi t)^2)$ is decreasing in $t$, using the result from above
    \begin{align*}
        \left\Vert \Psi_j + \Psi_j \frac{\nabla^2 f_{F, \Psi_j}(z)}{f_{F, \Psi_j}(z)} \Psi_j \right\Vert_F &\leq \left\Vert \Psi_j \right\Vert_{F} \log \left( \frac{1}{(2\pi)^2\det(\Psi_j) f_{F,  \Psi_j}^2(z)} \right) \\
        &\leq \left\Vert \Psi_j \right\Vert_{F} \log \left( \frac{1}{(2\pi\rho)^2} \right) = \left\Vert \Psi_j \right\Vert_{F} \varphi_+^2(\rho).
    \end{align*}
\end{proof}

\begin{lemma}\label{lem:sm6.9}
    Let $f$ be a density for random vector $Z \in \R^n$. Then for any $M,t>0$,
\begin{align*}
    \int_{\R^n} \mathbbm{1}(f(z) \leq t) f(z) dz &\leq (2M)^nt + \frac{\sum_{i=1}^n Var(Z_i)}{M^2}.
\end{align*}
In particular, for $n=2$, choosing $M = t^{-1/4}\left( Var(Z_1)+Var(Z_2) \right)^{1/4}$ gives
\begin{align*}
    \int_{\R^2} \mathbbm{1}(f(z) \leq t) f(z) dz &\leq 5t^{1/2}\left( Var(Z_1)+Var(Z_2) \right)^{1/2}.
\end{align*}
\end{lemma}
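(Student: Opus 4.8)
The plan is to split the integral into the contribution from a bounded box and the contribution from its complement, trading off the two pieces through the free parameter $M$. The only point that requires genuine care is that the box must be centered at the mean $\mu = E[Z]$ rather than at the origin: this is what allows Chebyshev's inequality to control the tail in terms of the variances $Var(Z_i)$ (with an uncentered box one would only recover the second moments $E[Z_i^2]$), and translating the box leaves its Lebesgue volume unchanged.

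Concretely, let $\mu = E[Z]$ and write $\R^n = B \cup B^c$, where $B = \{z : \Vert z - \mu \Vert_\infty \leq M\}$ is the cube of side $2M$ centered at $\mu$, which has volume $(2M)^n$. First I would bound the integrand pointwise by $\mathbbm{1}(f(z) \leq t) f(z) \leq t$, valid for every $z$, so that
\begin{align*}
\int_B \mathbbm{1}(f(z) \leq t) f(z)\, dz \leq t \cdot \mathrm{vol}(B) = (2M)^n t.
\end{align*}
On the complement I would instead bound the indicator by one, which gives
\begin{align*}
\int_{B^c} \mathbbm{1}(f(z) \leq t) f(z)\, dz \leq \int_{B^c} f(z)\, dz = Pr(\Vert Z - \mu \Vert_\infty > M).
\end{align*}

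Next I would control this tail. A union bound yields $Pr(\Vert Z - \mu \Vert_\infty > M) \leq \sum_{i=1}^n Pr(|Z_i - \mu_i| > M)$, and Chebyshev's inequality applied coordinatewise gives $Pr(|Z_i - \mu_i| > M) \leq Var(Z_i)/M^2$. Summing over $i$ and adding the box contribution produces the first claimed inequality,
\begin{align*}
\int_{\R^n} \mathbbm{1}(f(z) \leq t) f(z)\, dz \leq (2M)^n t + \frac{\sum_{i=1}^n Var(Z_i)}{M^2}.
\end{align*}

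Finally, for the $n = 2$ specialization I would substitute the stated choice $M = t^{-1/4}\left(Var(Z_1) + Var(Z_2)\right)^{1/4}$, which is exactly the value balancing the two terms up to the constant $4$ arising from $(2M)^2 = 4M^2$. Writing $V = Var(Z_1) + Var(Z_2)$, the box term becomes $4 M^2 t = 4 t^{1/2} V^{1/2}$ and the tail term becomes $V/M^2 = t^{1/2} V^{1/2}$, so their sum is $5 t^{1/2} V^{1/2}$, as required. The argument is entirely elementary; the one substantive modeling decision is the mean-centering of the box, so I would state that step explicitly rather than take the cube at the origin.
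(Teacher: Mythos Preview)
Your proof is correct and follows essentially the same approach as the paper: split into a bounded region and its complement, bound the first piece by $t$ times the volume, and control the second by a Chebyshev tail. The only cosmetic difference is that the paper centers by assuming $E[Z]=0$ without loss of generality and uses the Euclidean ball $\{\Vert z\Vert_2 < M\}$ (whose volume is at most $(2M)^n$) together with the multivariate Chebyshev bound $Pr(\Vert Z\Vert_2 > M)\le \sum_i Var(Z_i)/M^2$, whereas you use the $L^\infty$ cube and a union bound plus coordinatewise Chebyshev; both routes yield the identical inequality.
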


\begin{proof}
    As in the proof in \citeappendix{chen2022empirical}, assume without loss of generality that $E_f[Z] = 0$.
    \begin{align*}
        \int_{\R^n} \mathbbm{1}(f(z) \leq t) f(z) dz &\leq \int_{\R^n} \mathbbm{1}(f(z) \leq t, \Vert z \Vert_2 < M) f(z) dz + \int_{\R^n} \mathbbm{1}(f(z) \leq t, \Vert z \Vert_2 \geq M ) f(z) dz \\
        &\leq \int_{\Vert z \Vert_2 < M} t dz + Pr(\Vert Z \Vert_2 > M) \\
        &\leq (2M)^n t + \frac{\sum_{i=1}^n Var(Z_i)}{M^2} \qquad \text{multivariate Chebyshev}.
    \end{align*}
\end{proof}

\begin{lemma}\label{lem:sm6.10}
    Recall that $Q_j(z, F, \Psi_j) = \int \varphi_{\Psi_j}(z - \tau) \text{vec}((z-\tau)\tau^T) dF(\tau)$. For any $F, z,$ and $\rho_J \in (0, e^{-1}/2\pi)$, 
\begin{align*}
    \frac{\left\Vert Q_j(z, F, \Psi_j)\right\Vert_2}{\max\left(f_{F, \Psi_j}(z), \frac{\rho_J}{\sqrt{\det(\Psi_j)}} \right)} &\leq \sqrt{\det(2\pi \Psi_j)} \left\Vert \Psi_j \right\Vert_F \left( \Vert z \Vert_2 \left\Vert \Psi_j^{-1} \right\Vert_F^{1/2} \varphi_+(\rho_J) + \varphi_+^2(\rho_J) \right).
\end{align*}
Then under the choices of $\rho_J$ and $M_J$ in Lemma \ref{lem:oa3.1} and Assumption \ref{ass:sm6.1} respectively, uniformly over $F, j \in [J]$, and $\Vert z \Vert_2 \leq M_J$,
\begin{align*}
    \frac{\left\Vert Q_j(z, F, \Psi_j)\right\Vert_2}{\max\left(f_{F, \Psi_j}(z), \frac{\rho_J}{\sqrt{\det(\Psi_j)}} \right)} &\lesssim_{\mathcal H} M_J \sqrt{\log J}.
\end{align*}
\end{lemma}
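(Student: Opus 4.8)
The plan is to reduce the statement to the truncated gradient and Hessian estimates already recorded in Lemma \ref{lem:sm6.8}, by re-expressing $Q_j$ through $\nabla f_{F,\Psi_j}$ and $\nabla^2 f_{F,\Psi_j}$. First I would use the identity $(z-\tau)\tau^T = (z-\tau)z^T - (z-\tau)(z-\tau)^T$ and vectorize, writing $\text{vec}((z-\tau)z^T) = z\otimes(z-\tau) = (z\otimes I_2)(z-\tau)$ and $\text{vec}((z-\tau)(z-\tau)^T) = (z-\tau)\otimes(z-\tau)$, so that $Q_j = (z\otimes I_2)\int\varphi_{\Psi_j}(z-\tau)(z-\tau)\,dF(\tau) - \int\varphi_{\Psi_j}(z-\tau)\,\text{vec}((z-\tau)(z-\tau)^T)\,dF(\tau)$.

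Next I would convert both integrals into derivatives of the mixture density via $\nabla_z\varphi_{\Psi_j}(z-\tau) = -\Psi_j^{-1}(z-\tau)\varphi_{\Psi_j}(z-\tau)$ and $\Psi_j\nabla_z^2\varphi_{\Psi_j}(z-\tau)\Psi_j = [(z-\tau)(z-\tau)^T - \Psi_j]\varphi_{\Psi_j}(z-\tau)$. Integrating against $dF$ and differentiating under the integral sign gives, up to the Gaussian normalizing constant $\det(2\pi\Psi_j)^{1/2}$, a first-moment term equal to $-(z\otimes I_2)\Psi_j\nabla f_{F,\Psi_j}(z)$ and a second-moment term equal to $\text{vec}\big(\Psi_j f_{F,\Psi_j}(z) + \Psi_j\nabla^2 f_{F,\Psi_j}(z)\Psi_j\big)$. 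Taking $\ell_2$ norms, using $\|z\otimes I_2\|_{op} = \|z\|_2$ and $\|\text{vec}(A)\|_2 = \|A\|_F$, and dividing by $\max(f_{F,\Psi_j}(z),\rho_J/\sqrt{\det(\Psi_j)})$ isolates precisely the two quantities bounded in Lemma \ref{lem:sm6.8}: the truncated gradient ratio $\|\nabla f_{F,\Psi_j}(z)\|_2/\max(\cdots)\le\|\Psi_j^{-1}\|_F^{1/2}\varphi_+(\rho_J)$, which carries the $\|z\|_2$ factor, and the truncated curvature term, using $\Psi_j f + \Psi_j\nabla^2 f\,\Psi_j = f\big(\Psi_j + \Psi_j\tfrac{\nabla^2 f}{f}\Psi_j\big)$ together with $\big\|\Psi_j + \Psi_j\tfrac{\nabla^2 f_{F,\Psi_j}(z)}{f_{F,\Psi_j}(z)}\Psi_j\big\|_F\,\tfrac{f_{F,\Psi_j}(z)}{\max(\cdots)}\le\|\Psi_j\|_F\varphi_+^2(\rho_J)$. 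Summing the two pieces and bounding $\|\Psi_j\|_{op}\le\|\Psi_j\|_F$ yields the first (finite-sample) inequality. An equivalent route, mirroring the proof of Lemma \ref{lem:sm6.8} and of Lemma SM6.2 in \cite{chen2022empirical}, is to whiten via $\xi = \Psi_j^{-1/2}\tau$, $\check z = \Psi_j^{-1/2}z$, so that $Q_j = (\Psi_j^{1/2}\otimes\Psi_j^{1/2})\,Q_j(\check z, F_j, I_2)$ reduces everything to the isotropic case and the Saha--Guntuboyina estimates (Suppl. Lemma F.1 of \cite{saha2020nonparametric}, as imported by \cite{soloff2024multivariate}) apply directly.

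For the second (asymptotic) bound I would specialize the first inequality. Under the choice of $\rho_J$ in Lemma \ref{lem:oa3.1}, claim (3) of that lemma gives $\varphi_+(\rho_J)\asymp_{\mathcal H}\sqrt{\log J}$ and hence $\varphi_+^2(\rho_J)\lesssim_{\mathcal H}\log J$; Assumption \ref{ass:est_var}(3) gives $\|\Psi_j\|_F\lesssim_{\mathcal H}1$ and $\|\Psi_j^{-1}\|_F^{1/2}\lesssim_{\mathcal H}1$; and on the event $\bar Z_J\le M_J$ (with claim (1) of Lemma \ref{lem:oa3.1}) the argument obeys $\|z\|_2\lesssim_{\mathcal H}M_J$. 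Substituting yields $\|Q_j\|_2/\max(\cdots)\lesssim_{\mathcal H}M_J\sqrt{\log J}+\log J$, and the second term is absorbed using $\sqrt{\log J}\lesssim_{\mathcal H}M_J$ from Assumption \ref{ass:sm6.1}, giving $\lesssim_{\mathcal H}M_J\sqrt{\log J}$.

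I expect the only delicate step to be the moment-to-derivative identities combined with the Kronecker/vectorization bookkeeping and, in particular, the careful tracking of the Gaussian normalizing constant $\det(2\pi\Psi_j)^{1/2}$, which does not appear in the stated bound and must be controlled as an $\mathcal H$-constant through the uniform conditioning bounds $\underline{k}I_2\preceq\Psi_j\preceq\overline{k}I_2$ of Assumption \ref{ass:est_var}(3). Once $Q_j$ is matched termwise to the gradient and curvature quantities of Lemma \ref{lem:sm6.8}, the remainder is a direct invocation of Lemma \ref{lem:sm6.8} and Lemma \ref{lem:oa3.1}.
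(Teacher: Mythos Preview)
Your proposal is correct and matches the paper's proof essentially line for line: the paper uses the same decomposition $(z-\tau)\tau^T=(z-\tau)z^T-(z-\tau)(z-\tau)^T$ to write $\|Q_j\|_2\le f_{F,\Psi_j}(z)\|E_{F,\Psi_j}[(z-\tau)|z]\|_2\|z\|_2+f_{F,\Psi_j}(z)\|E_{F,\Psi_j}[(z-\tau)(z-\tau)^T|z]\|_F$, then applies the gradient and curvature bounds of Lemma~\ref{lem:sm6.8} term by term; the second claim follows exactly as you describe from Lemma~\ref{lem:oa3.1}(3), the eigenvalue bounds on $\Psi_j$, and Assumption~\ref{ass:sm6.1}. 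Your caution about the normalizing constant $\sqrt{\det(2\pi\Psi_j)}$ is well placed---the paper's proof silently absorbs it, which is harmless for the $\lesssim_{\mathcal H}$ conclusion but does affect the sharp inequality as stated.
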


\begin{proof}
    Note 
    \begin{align*}
        \left\Vert Q_j(z, F, \Psi_j) \right\Vert_2 &\leq \sqrt{\det(2\pi \Psi_j)} f_{F, \Psi_j}(z) \left\Vert E_{F, \Psi_j} [(z-\tau) | z] \right\Vert_2 \left\Vert z \right\Vert_2 \\
        &\qquad + \sqrt{\det(2\pi \Psi_j)} f_{F, \Psi_j}(z) \left\Vert E_{F, \Psi_j} [(z-\tau)(z-\tau)^T | z] \right\Vert_F.
    \end{align*}
    Then from Lemma \ref{lem:sm6.8},
    \begin{align*}
        \frac{f_{F, \Psi_j}(z)}{\max\left(f_{F, \Psi_j}(z), \frac{\rho_J}{\sqrt{\det(\Psi_j)}} \right)} \left\Vert E_{F, \Psi_j} [(z-\tau) | z] \right\Vert_2 &\leq \left\Vert \Psi_j \right\Vert_F \left\Vert \Psi_j^{-1} \right\Vert_F^{1/2} \varphi_+(\rho_J)
    \end{align*}
    and
    \begin{align*}
        \frac{f_{F, \Psi_j}(z)}{\max\left(f_{F, \Psi_j}(z), \frac{\rho_J}{\sqrt{\det(\Psi_j)}} \right)} \left\Vert E_{F, \Psi_j} [(z-\tau)(z-\tau)^T | z] \right\Vert_F &\leq \left\Vert \Psi_j \right\Vert_F \varphi_+^2(\rho_J).
    \end{align*}
    Thus
    \begin{align*}
        \frac{\left\Vert Q_j(z, F, \Psi_j)\right\Vert_2}{\max\left(f_{F, \Psi_j}(z), \frac{\rho_J}{\sqrt{\det(\Psi_j)}} \right)} &\leq \sqrt{\det(2\pi \Psi_j)} \left\Vert \Psi_j \right\Vert_F \left( \Vert z \Vert_2 \left\Vert \Psi_j^{-1} \right\Vert_F^{1/2} \varphi_+(\rho_J) + \varphi_+^2(\rho_J) \right).
    \end{align*}
\end{proof}

\begin{lemma}\label{lem:sm6.11}
    Under the assumptions in Lemma \ref{lem:oa3.1} and Assumption \ref{ass:good_approx}, suppose $(a_j(\tilde \alpha), B_j(\tilde \Omega)^{1/2})$ are on the line segment between $(a_j(\widehat \alpha), B_j(\widehat \Omega)^{1/2})$ and $(a_j(\alpha_0), B_j(\Omega_0)^{1/2})$, and define $\tilde\Psi_j, \tilde Z_j$ accordingly. Then, second derivatives evaluated at $\widehat{F}_J, \tilde \alpha, \tilde \Omega, \tilde Z_j$ satisfy 
    \begin{align*}
        \left\Vert \frac{\partial^2 \psi_j}{\partial a_j \partial a_j^T} \bigg\vert_{\widehat{F}_J, \tilde\alpha, \tilde\Omega} \right\Vert_{F} &\lesssim_{\mathcal H} \log J \\
        \left\Vert \frac{\partial^2 \psi_j}{\partial \text{vec}(B_j^{1/2}) \partial a_j^T} \bigg\vert_{\widehat{F}_J, \tilde\alpha, \tilde\Omega} \right\Vert_{F} &\lesssim_{\mathcal H} M_J \log J \\
        \left\Vert \frac{\partial^2 \psi_j}{\partial \text{vec}(B_j^{1/2}) \partial \text{vec}(B_j^{1/2})^T} \bigg\vert_{\widehat{F}_J, \tilde\alpha, \tilde\Omega} \right\Vert_{F} &\lesssim_{\mathcal H} M_J^2 \log J.
    \end{align*}
\end{lemma}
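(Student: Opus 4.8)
The plan is to feed the explicit closed forms for the three second derivatives computed in Section \ref{app:sec:sm6.1} into a common template. Each derivative, evaluated at $(\widehat F_J,\tilde\alpha,\tilde\Omega)$ with $\hat z=\tilde Z_j$ and $\hat\Psi_j=\tilde\Psi_j$, is a finite sum of terms of the form (coefficient matrix) $\times$ (density-ratio or posterior-moment factor), where the coefficient matrices are built only from $\Sigma_j$ and $\tilde\Omega_{t_j}$ (e.g.\ $\tilde\Omega_{t_j}^{-1/2}$, $\Sigma_j^{-1}\tilde\Omega_{t_j}^{1/2}$, and their Kronecker products with $I_2$). First I would observe that all such coefficient matrices have operator and Frobenius norms bounded by constants depending only on $\mathcal H$: because $(\tilde\alpha_{t_j},\text{vec}(\tilde\Omega_{t_j}^{1/2}))$ lies on the segment between $(\widehat\alpha_{t_j},\text{vec}(\widehat\Omega_{t_j}^{1/2}))$ and $(\alpha_{t_j},\text{vec}(\Omega_{t_j}^{1/2}))$, the eigenvalues of $\tilde\Omega_{t_j}$ stay between $\underline c$ and $\overline c$ by Assumptions \ref{ass:bdd_var} and \ref{ass:est_var}(1), and those of $\Sigma_j=\tilde\Omega_{t_j}^{1/2}\tilde\Psi_j\tilde\Omega_{t_j}^{1/2}$ stay bounded by Assumption \ref{ass:est_var}(3). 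This reduces everything to bounding the density-ratio/posterior-moment factors and counting their powers of $M_J$ and $\log J$.

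The factors are controlled by the auxiliary lemmas, and the key preliminary step is to certify that the regularization is inactive at the evaluation point. Since $\tilde\chi$ is within $\Delta_J$ of $\chi_0$, the argument for claim (1) of Lemma \ref{lem:oa3.1} gives $\max(\Vert\tilde Z_j\Vert_2,1)\lesssim_{\mathcal H} M_J$, and rerunning the perturbation step in the proof of claim (2) of Lemma \ref{lem:oa3.1} from $\tilde\chi$ (the bound $\Vert\xi(\tau)\Vert_2\lesssim_{\mathcal H}\Delta_J M_J$ still holds, now with $\tilde\chi$ in place of $\widehat\chi$) yields $f_{\widehat F_J,\tilde\Psi_j}(\tilde Z_j)\geq\rho_J/\sqrt{\det\tilde\Psi_j}$. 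On this event the untruncated ratios in the derivative formulas may be replaced by the regularized bounds of Lemma \ref{lem:sm6.8} and Lemma \ref{lem:sm6.10}, so that a location-direction score/curvature factor contributes $\varphi_+(\rho_J)$ and $\varphi_+^2(\rho_J)$ respectively, while a scale-direction factor (governed by the quadratic statistic $Q_j$) contributes $M_J\sqrt{\log J}\asymp_{\mathcal H} M_J\varphi_+(\rho_J)$. Posterior moments of $\tau$ are bounded directly by the support radius of $\widehat F_J$, which is $\lesssim_{\mathcal H} M_J$ since the NPMLE is supported within the range of the $\widehat Z_j$'s; and Lemma \ref{lem:oa3.1}(3) supplies $\varphi_+^2(\rho_J)\asymp_{\mathcal H}\log J$.

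With these ingredients the power counting is as follows. For $\partial^2\psi_j/\partial\alpha_{t_j}\partial\alpha_{t_j}^T$ I would write $\nabla^2 f/f=\Psi^{-1}\bigl(\Psi+\Psi(\nabla^2 f/f)\Psi\bigr)\Psi^{-1}-\Psi^{-1}$ and apply the \emph{data-free} curvature bound $\Vert\Psi+\Psi(\nabla^2 f/f)\Psi\Vert_F\leq\Vert\Psi\Vert_F\varphi_+^2(\rho_J)$ of Lemma \ref{lem:sm6.8}, together with $\Vert\nabla f/f\Vert_2^2\lesssim_{\mathcal H}\varphi_+^2(\rho_J)$; this avoids the spurious factor $M_J^2$ that a crude bounded-support bound on the posterior covariance would produce and delivers $\lesssim_{\mathcal H}\log J$. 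The cross derivative is a sum of products of one scale-direction factor ($\lesssim_{\mathcal H} M_J\sqrt{\log J}$ via Lemma \ref{lem:sm6.10}) and one location-direction factor ($\lesssim_{\mathcal H}\sqrt{\log J}$), hence $\lesssim_{\mathcal H} M_J\log J$; and the scale-scale derivative, being a weighted sum of posterior moments carrying two scale-directions, accumulates $M_J^2\varphi_+^2(\rho_J)\lesssim_{\mathcal H} M_J^2\log J$.

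The main obstacle I expect is the scale-scale block: its closed form is only described implicitly in Section \ref{app:sec:sm6.1} as a weighted sum of four posterior-moment matrices (involving $\text{vec}((Z-\tau)\tau^T)$ and $Z\tau^T$), so organizing it into the coefficient-times-factor template and verifying that the powers combine to exactly $M_J^2\log J$---using the data-free $\varphi_+^2$ bound precisely where the location structure lets the data cancel, and the $Q_j$-type $M_J$ bounds only where scale differentiation forces genuine data dependence---is the delicate bookkeeping. A smaller but essential technical point, used uniformly across all three blocks, is the transfer of the density lower bound of Lemma \ref{lem:oa3.1}(2) from $\chi_0$ to the segment point $\tilde\chi$ noted above.
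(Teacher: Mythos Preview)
Your proposal is correct and follows essentially the same route as the paper's proof: transfer the density lower bound of Lemma \ref{lem:oa3.1}(2) from $\widehat\chi$ to the segment point $\tilde\chi$ so that $|\log(\sqrt{\det\tilde\Psi_j}\,f_{\widehat F_J,\tilde\Psi_j}(\tilde Z_j))|\lesssim_{\mathcal H}\log J$, invoke the data-free bounds of Lemma \ref{lem:sm6.8} for the $\alpha\alpha$ block, and combine them with the $M_J$ support bound on $\widehat F_J$ for the scale blocks via the decomposition $(Z-\tau)\tau^T=(Z-\tau)Z^T-(Z-\tau)(Z-\tau)^T$ (which is exactly what underlies Lemma \ref{lem:sm6.10}). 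The only cosmetic difference is that the paper bounds the posterior moments $E[\text{vec}((Z-\tau)\tau^T)\mid\tilde Z_j]$ and their higher-order analogues directly rather than citing Lemma \ref{lem:sm6.10}, but the underlying computation is identical.
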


\begin{proof}
    Note that as in the proof of Lemma \ref{lem:oa3.1}, 
    $\widehat{Z}_j = B_j(\widehat\Omega)^{-1/2}(B_j(\tilde \Omega)^{1/2}\tilde Z_j + a_j(\tilde \alpha) - a_j(\widehat\alpha))$, where $\Vert B_j(\tilde \Omega)^{1/2} - B_j(\widehat\Omega)^{1/2} \Vert_\infty \leq \Delta_J$ and $\Vert a_j(\tilde \alpha) - a_j(\widehat\alpha) \Vert_\infty \leq \Delta_J$. Thus $\Vert \tilde Z_j \Vert_2 \lesssim_{\mathcal H} M_J$. Furthermore by the same argument as in Lemma \ref{lem:oa3.1}, $f_{\widehat{F}_J, \tilde \Psi_j}(\tilde Z_j) \sqrt{\det(\tilde \Psi_j)} \geq \frac{1}{J^4} e^{-C_H \Delta_J M_J^2}$. Thus as in \citeappendix{chen2022empirical}, $|\log(f_{\widehat{F}_J, \tilde \Psi_j}(\tilde Z_j) \sqrt{\det(\tilde \Psi_j)})| \lesssim_{\mathcal H} \log J$.

    Using Lemma \ref{lem:sm6.8} and properties of logarithms
    \begin{align*}
        \left\Vert E_{\widehat{F}_J, \tilde \Psi_j} [\tau_j - Z_j \vert \tilde Z_j] \right\Vert_2 &= \left\Vert \tilde\Psi_j \frac{\nabla f_{\widehat{F}_J, \tilde \Psi_j}(\tilde Z_j)}{f_{\widehat{F}_J, \tilde \Psi_j}(\tilde Z_j)} \right\Vert_2 \\
        &\lesssim_{\mathcal H} \sqrt{\log \left(\frac{1}{f_{\widehat{F}_J, \tilde \Psi_j}(\tilde Z_j)} \right)} \lesssim_{\mathcal H} \sqrt{\log J}
    \end{align*}
    and 
    \begin{align*}
        \left\Vert E_{\widehat{F}_J, \tilde \Psi_j} [(\tau_j - Z_j)(\tau_j-Z_j)^T \vert \tilde Z_j] \right\Vert_F &= \left\Vert \tilde\Psi_j + \tilde\Psi_j \frac{\nabla^2 f_{\widehat{F}_J, \tilde \Psi_j}}{f_{\widehat{F}_J, \tilde \Psi_j}} \tilde \Psi_j \right\Vert_F \\
        &\lesssim_{\mathcal H} \log \left(\frac{1}{f_{\widehat{F}_J, \tilde \Psi_j}(\tilde Z_j)} \right) \lesssim_{\mathcal H} \log J.
    \end{align*}
    And note that because $\Vert \tilde{Z}_j \Vert_2 \lesssim_{\mathcal H} M_J$ and $\Vert \tau \Vert_2 \lesssim_{\mathcal H} M_J$ under the support of $\widehat{F}_J$,
    \begin{align*}
        \left\Vert E_{\widehat{F}_J, \tilde \Psi_j}\left[ \text{vec}\left( (Z_j - \tau_j) \tau_j^T \right) \vert \tilde Z_j \right] \right\Vert_2 &\leq \left\Vert E_{\widehat{F}_J, \tilde \Psi_j}[(\tau_j - Z_j)(\tau_j - Z_j)^T \vert \tilde Z_j] \right\Vert_F \\
        &\qquad + \left\Vert E_{\widehat{F}_J, \tilde \Psi_j}[\tau_j - Z_j \vert \tilde Z_j] \right\Vert_2 \left\Vert \tilde Z_j \right\Vert_2 \\
        &\lesssim_{\mathcal H} \log J + M_J \sqrt{\log J} \lesssim_{\mathcal H} M_J \sqrt{\log J} \\
        \left\Vert E_{\widehat{F}_J, \tilde \Psi_j}\left[ \text{vec}\left( (Z_j - \tau_j) \tau_j^T \right)(Z_j - \tau)^T \vert \tilde Z_j \right] \right\Vert_F &= \left\Vert E_{\widehat{F}_J, \tilde \Psi_j}\left[ ( \tau_j \otimes I_2 )(Z_j - \tau_j)(Z_j - \tau)^T \vert \tilde Z_j \right] \right\Vert_F \\
        &\leq E_{\widehat{F}_J, \tilde \Psi_j} [ \Vert \tau \Vert_2 \Vert (Z_j - \tau_j)(Z_j - \tau)^T \Vert_F \vert \tilde Z_j] \lesssim_{\mathcal H} M_J \log J.
    \end{align*}
    Similarly, one can check
    \begin{align*}
        &\left\Vert E_{\widehat{F}_J, \tilde \Psi_j}[ \text{vec}\left((Z_j-\tau_j)\tau_j^T \right) \text{vec}\left((Z_j-\tau_j)\tau_j^T \right)^T \vert \tilde Z_j] \right\Vert_F \\
        &= \left\Vert E_{\widehat{F}_J, \tilde \Psi_j}[ (\tau_j \otimes I_2)(Z_j-\tau_j)(Z_j-\tau_j)^T (\tau_j^T \otimes I_2) \vert \tilde Z_j] \right\Vert_F \\
        &\leq E_{\widehat{F}_J, \tilde \Psi_j} [ \Vert \tau \Vert_2^2 \Vert (Z_j-\tau_j)(Z_j-\tau_j)^T \Vert_F \vert \tilde Z_j] \lesssim_{\mathcal H} M_J^2 \log J.
    \end{align*}

    Then plugging the above results into the derivative expressions derived in Section \ref{app:sec:sm6.1}, 
    \begin{align*}
        \left\Vert \frac{\partial^2 \psi_j}{\partial a_j \partial a_j^T} \bigg\vert_{\widehat{F}_J, \tilde\alpha, \tilde\Omega} \right\Vert_F &\lesssim_{\mathcal H} \log J \\
        \left\Vert \frac{\partial^2 \psi_j}{\partial a_j \partial \text{vec}(B_j^{1/2})^T} \bigg\vert_{\widehat{F}_J, \tilde\alpha, \tilde\Omega} \right\Vert_F &\lesssim_{\mathcal H} M_J \log J \\
        \left\Vert \frac{\partial^2 \psi_j}{\partial \text{vec}(B_j^{1/2}) \partial \text{vec}(B_j^{1/2})^T} \bigg\vert_{\widehat{F}_J, \tilde\alpha, \tilde\Omega} \right\Vert_F &\lesssim_{\mathcal H} M_J^2 \log J.
    \end{align*}
    where the final line follows because the derivative is a sum of the above derived terms times functions of $\Omega$ and $\Sigma$.
\end{proof}

\begin{lemma}\label{lem:sm6.14}
    Suppose $\tau \sim F_0$, $Z \vert \tau \sim N_2(\tau, \Psi)$, and $\underline \psi I_2 \preceq \Psi \preceq \overline \psi I_2$. Suppose also that Assumptions \ref{ass:compact} and \ref{ass:est_var} hold. Then for every $p > 0$, $E_{F_0,\Psi}[\Vert \tau \Vert_2^p \vert Z = z] \lesssim_{p,\mathcal H} \left(\max\{\Vert z \Vert_2, 1\}\right)^p$.
\end{lemma}
\begin{proof}
    Define $\Vert x \Vert_{\Psi^{-1}}^2 \equiv x^T \Psi^{-1} x$ and let $M = \max\{\Vert z \Vert_{\Psi^{-1}}, M_0\}$, where $M_0$ is a sufficiently large constant such that, uniformly over $\Psi$, $F_0(\Vert \tau \Vert_{\Psi^{-1}} \leq M_0) \geq \frac{3}{4}$. $M_0$ exists by Markov's inequality, noting $E[\Vert \tau \Vert_{\Psi^{-1}}^2] \leq \underline \psi^{-1} E[\Vert \tau \Vert_2^2] \lesssim_{\mathcal H} 1$.

    I can write 
    \begin{align*}
        &\int \Vert \tau \Vert_2^p \exp \left(-\frac{1}{2} \Vert z-\tau\Vert_{\Psi^{-1}}^2 \right) dF_0(\tau) \\
        &\leq \left(3 \sqrt{\overline\psi} M \right)^p \int \exp \left(-\frac{1}{2} \Vert z-\tau\Vert_{\Psi^{-1}}^2 \right) dF_0(\tau) + e^{-2M^2} \int_{\Vert \tau \Vert_{\Psi^{-1}} > 3M} \Vert \tau \Vert_2^p dF_0(\tau) \\
        &\leq \left(3 \sqrt{\overline\psi} M \right)^p \int \exp \left(-\frac{1}{2} \Vert z-\tau\Vert_{\Psi^{-1}}^2 \right) dF_0(\tau) + e^{-2M^2} E_{F_0}[\Vert \tau \Vert_2^p],
    \end{align*}
    using that $\Vert \tau \Vert_{\Psi^{-1}} > 3M$ implies $\Vert z - \tau\Vert_{\Psi^{-1}} \geq \Vert \tau \Vert_{\Psi^{-1}} - \Vert z \Vert_{\Psi^{-1}} > 2M$.
    
    Furthermore, since $\Vert z \Vert_{\Psi^{-1}} \leq M$,
    \begin{align*}
        \int \exp \left(-\frac{1}{2} \Vert z-\tau\Vert_{\Psi^{-1}}^2 \right) dF_0(\tau) &\geq \int_{\Vert \tau \Vert_{\Psi^{-1}} \leq M} \exp \left(-\frac{1}{2} \Vert z-\tau\Vert_{\Psi^{-1}}^2 \right) dF_0(\tau) \\
        &\geq e^{-2M^2} F_0(\Vert \tau \Vert_{\Psi^{-1}} \leq M) \geq \frac{3}{4} e^{-2M^2},
    \end{align*}
    using that $\Vert \tau \Vert_{\Psi^{-1}} \leq M$ implies $\Vert z - \tau\Vert_{\Psi^{-1}} \leq \Vert z \Vert_{\Psi^{-1}} + \Vert \tau \Vert_{\Psi^{-1}} \leq 2M$.
    Thus
    \begin{align*}
        E_{F_0,\Psi}[\Vert \tau \Vert_2^p \vert Z = z] &\leq \left(3 \sqrt{\overline\psi} M \right)^p + \frac{4}{3} E_{F_0}[\Vert \tau \Vert_2^p].
    \end{align*}
    Assumption \ref{ass:compact} implies $E_{F_0}[\Vert \tau \Vert_2^p] \lesssim_{p, \mathcal H} 1$. Furthermore, $M \leq \max\{\underline{\psi}^{-1/2} \Vert z \Vert_2, M_0\} \lesssim_{\mathcal H} \max\{\Vert z \Vert_2, 1\}$. Thus
    \begin{align*}
        E_{F_0,\Psi}[\Vert \tau \Vert_2^p \vert Z = z] &\lesssim_{p,\mathcal H} \left( \max\{\Vert z \Vert_2, 1\} \right)^p.
    \end{align*}
\end{proof}

\begin{lemma}\label{prop:sm6.2}
    Recalling $d_{\alpha,\infty,M}$ and $d_{\Omega,\infty,M}$ from \eqref{eq:d_k_infty_M}, the following bounds hold for $\eta \leq \min \left(1/e, 4/(2\pi \underline{k}), e^{-1/(2\bar k)} \right)$ under the assumptions of Theorem \ref{thm:sm6.1}:
    \begin{small}
    \begin{align*}
        \log N\left(\frac{1+\sqrt{\log(1/\rho_J)}}{\rho_J} \eta, \mathcal{P}(\R^2), d_{\alpha,\infty,M} \right) &\lesssim_{\mathcal H} (\log(1/\eta))^3 \max \left\{1, \frac{M}{\sqrt{\log(1/\eta)}}, \frac{M^2}{\log(1/\eta)} \right\} \\
        \log N\left(\frac{1+M\sqrt{\log(1/\rho_J)}+\log(1/\rho_J)}{\rho_J} \eta, \mathcal{P}(\R^2), d_{\Omega,\infty,M} \right) &\lesssim_{\mathcal H} (\log(1/\eta))^3 \max \left\{1, \frac{M}{\sqrt{\log(1/\eta)}}, \frac{M^2}{\log(1/\eta)} \right\}.
    \end{align*}
    \end{small}
\end{lemma}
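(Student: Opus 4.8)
The plan is to reduce the two covering numbers in the ``score'' pseudometrics $d_{\alpha,\infty,M}$ and $d_{\Omega,\infty,M}$ of \eqref{eq:d_k_infty_M} to the covering number of $\mathcal{P}(\R^2)$ in the density pseudonorm $\Vert\cdot\Vert_{\infty,M}$ from \eqref{eq:inf_M_norm}, for which the metric entropy bound of Suppl.\ Lemma 5 in \cite{soloff2024multivariate} and Suppl.\ Lemma F.6 in \cite{saha2020nonparametric} already gives $\log N(\eta,\mathcal{P}(\R^2),\Vert\cdot\Vert_{\infty,M}) \lesssim_{\mathcal H} (\log(1/\eta))^3 \max(1, M/\sqrt{\log(1/\eta)}, M^2/\log(1/\eta))$. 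The heart of the argument is a Lipschitz estimate: if $\Vert f_{F_1,\cdot} - f_{F_2,\cdot}\Vert_{\infty,M} \leq \eta$, then $d_{\alpha,\infty,M}(F_1,F_2) \lesssim_{\mathcal H} \frac{1+\sqrt{\log(1/\rho_J)}}{\rho_J}\,\eta$ and $d_{\Omega,\infty,M}(F_1,F_2) \lesssim_{\mathcal H} \frac{1+M\sqrt{\log(1/\rho_J)}+\log(1/\rho_J)}{\rho_J}\,\eta$. Granting this, the monotonicity of covering numbers turns an $\eta$-net in $\Vert\cdot\Vert_{\infty,M}$ into a net of the stated inflated radius in each $d_{k,\infty,M}$, and substituting the Soloff--Saha entropy bound gives both displayed inequalities.

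To establish the Lipschitz estimate I would fix $j$ and $z$ with $\Vert z\Vert_2 \leq M$, write each $D_{k,j}$ as a ratio with the regularized denominator $\max(f_{F,\Psi_j}(z),\rho_J/\sqrt{\det\Psi_j})$, and add and subtract to split $D_{k,j}(z,F_1,\chi_0,\rho_J) - D_{k,j}(z,F_2,\chi_0,\rho_J)$ into (i) a \emph{numerator-difference} term, in which the numerators ($\nabla f_{F,\Psi_j}$ for $k=\alpha$, $Q_j(z,F,\Psi_j)$ for $k=\Omega$) vary while the denominator is frozen at $F_1$, and (ii) a \emph{denominator-difference} term, in which the numerator is frozen at $F_2$ and only the two denominators differ. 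In term (i) the frozen denominator exceeds $\rho_J/\sqrt{\det\Psi_j} \gtrsim_{\mathcal H} \rho_J$, so the term is at most $\rho_J^{-1}$ times the change in the numerator; since both numerators are integrals of $F$ against smooth Gaussian-based kernels and Gaussian mixtures are infinitely differentiable, a $\Vert\cdot\Vert_{\infty,M}$ bound on the density over a marginally enlarged ball dominates these numerator differences uniformly on $\Vert z\Vert_2\leq M$, contributing the ``$1$'' in both Lipschitz constants. In term (ii) I would bound the frozen numerator-over-denominator ratio with the auxiliary lemmas already proved: Lemma \ref{lem:sm6.8} gives $\Vert\nabla f_{F_2,\Psi_j}\Vert_2/\max(f_{F_2,\Psi_j},\rho_J/\sqrt{\det\Psi_j}) \lesssim_{\mathcal H} \varphi_+(\rho_J) \asymp \sqrt{\log(1/\rho_J)}$, and Lemma \ref{lem:sm6.10} gives $\Vert Q_j\Vert_2/\max(f_{F_2,\Psi_j},\rho_J/\sqrt{\det\Psi_j}) \lesssim_{\mathcal H} M\sqrt{\log(1/\rho_J)}+\log(1/\rho_J)$ on $\Vert z\Vert_2\leq M$; combining these with $|\max(f_{F_1},\rho)-\max(f_{F_2},\rho)| \leq |f_{F_1}-f_{F_2}| \leq \eta$ and the lower bound $\gtrsim_{\mathcal H}\rho_J$ on both denominators yields the $\frac{\sqrt{\log(1/\rho_J)}}{\rho_J}\eta$ and $\frac{M\sqrt{\log(1/\rho_J)}+\log(1/\rho_J)}{\rho_J}\eta$ contributions, respectively. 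Summing (i) and (ii), then taking the max over $j$ and the sup over $\Vert z\Vert_2\leq M$, delivers the two Lipschitz constants.

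The main obstacle I anticipate is term (i): verifying that $\Vert\cdot\Vert_{\infty,M}$ really controls the numerator differences with the correct dependence on $M$ --- no $M$ for the gradient (so the $\alpha$-constant stays $M$-free) and the $Q_j$ difference handled through the smoothness of the mixture rather than a crude $\Vert\tau\Vert_2\lesssim M$ bound (which would pollute the claimed ``$1$'' with a spurious $M$). This step rests on the analyticity of Gaussian location mixtures, so that sup-control of the density on the ball propagates to its gradient and to the $\tau$-weighted kernel integral $Q_j$. The remaining work is heteroscedastic bookkeeping: confirming that the factors $\Omega_{t_j}^{\pm1/2}$, $\Sigma_j^{-1}$, $I_2\otimes\Sigma_j^{-1}\Omega_{t_j}^{1/2}$, and $\det\Psi_j$ appearing in $D_{k,j}$ all remain within $\mathcal H$-dependent constants under Assumptions \ref{ass:bdd_var} and \ref{ass:est_var} --- precisely the two-dimensional, heteroscedastic replacement for the scalar computation in \cite{chen2022empirical}. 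Once the Lipschitz estimate is in hand no further probabilistic argument is needed; the conclusion is a direct invocation of the Soloff--Saha entropy bound.
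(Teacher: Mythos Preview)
Your numerator/denominator split and your treatment of term (ii) via Lemmas \ref{lem:sm6.8} and \ref{lem:sm6.10} match the paper exactly. The gap is precisely where you flag it: term (i). The claim that a $\Vert\cdot\Vert_{\infty,M}$ bound on the density (even on a fixed enlargement of the ball) controls the gradient difference and the $Q_j$ difference with an $M$-free constant is not correct, and ``analyticity'' does not rescue it. A second-difference example already shows the problem: with $F_1=\delta_0$ and $F_2=\tfrac12\delta_{-\epsilon}+\tfrac12\delta_{\epsilon}$ one has $f_{F_1,I}-f_{F_2,I}\approx -\tfrac{\epsilon^2}{2}\varphi''$ and $\nabla f_{F_1,I}-\nabla f_{F_2,I}\approx -\tfrac{\epsilon^2}{2}\varphi'''$, so the ratio of the gradient difference to the density difference behaves like $\varphi'''/\varphi''$, which grows linearly in $\Vert z\Vert_2$ on $\Vert z\Vert_2\leq M$. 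Cauchy-type estimates would require analytic continuation into $\mathbb{C}^2$, where the Gaussian kernel is unbounded, so they do not give a uniform constant either. Thus your route yields at best a Lipschitz constant $\frac{M+\sqrt{\log(1/\rho_J)}}{\rho_J}$ for $d_{\alpha,\infty,M}$, not the claimed $\frac{1+\sqrt{\log(1/\rho_J)}}{\rho_J}$, and the $Q_j$ case is worse.

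The paper does not attempt this transfer. Instead it introduces two further pseudonorms, $\Vert\cdot\Vert_{\nabla,M}$ for the gradient and $\Vert\cdot\Vert_{\nabla^2,M}$ for the second-moment kernel $K_{F,\Psi_j}(z)=\int(z-\tau)(z-\tau)^T\varphi_{\Psi_j}(z-\tau)\,dF(\tau)$, and shows
\begin{align*}
d_{\alpha,\infty,M}(F_1,F_2) &\lesssim_{\mathcal H} \tfrac{1}{\rho_J}\Vert f_{F_1,\cdot}-f_{F_2,\cdot}\Vert_{\nabla,M}+\tfrac{\sqrt{\log(1/\rho_J)}}{\rho_J}\Vert f_{F_1,\cdot}-f_{F_2,\cdot}\Vert_{\infty,M},\\
d_{\Omega,\infty,M}(F_1,F_2) &\lesssim_{\mathcal H} \tfrac{1}{\rho_J}\Vert K_{F_1,\cdot}-K_{F_2,\cdot}\Vert_{\nabla^2,M}+\tfrac{M}{\rho_J}\Vert f_{F_1,\cdot}-f_{F_2,\cdot}\Vert_{\nabla,M}+\tfrac{M\sqrt{\log(1/\rho_J)}+\log(1/\rho_J)}{\rho_J}\Vert f_{F_1,\cdot}-f_{F_2,\cdot}\Vert_{\infty,M},
\end{align*}
after splitting $Q_j(z,F,\Psi_j)$ into a $K$ piece and $\Vert z\Vert_2$ times a gradient piece. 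It then invokes \emph{separate} metric-entropy bounds for each norm: Suppl.\ Lemmas 5 and 6 of \cite{soloff2024multivariate} for $\Vert\cdot\Vert_{\infty,M}$ and $\Vert\cdot\Vert_{\nabla,M}$, and a new Lemma \ref{prop:sm6.1} (proved by redoing the Carath\'eodory moment-matching construction for the $K$ kernel) for $\Vert\cdot\Vert_{\nabla^2,M}$. The reason the same polylogarithmic entropy rate appears in all three is that the \emph{same} discrete moment-matched approximant controls the density, its gradient, and $K$ simultaneously; but this is established by repeating the construction for each kernel, not by a Lipschitz inequality between the norms.
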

\begin{proof}
    Fix some $\Vert z \Vert_2 \leq M$. Let $T_{\alpha,j} = \nabla f_{F,\Psi_j}(z)$ and $T_{\Omega,j} = Q_j(z,F,\Psi_j)$.
    As in the proof of Proposition SM6.2 in \citeappendix{chen2022empirical},
    \begin{align*}
        &\Vert D_{k,j}(z, F_1, \chi_0, \rho_J) - D_{k,j}(z, F_2, \chi_0, \rho_J) \Vert_2 \\
        &\lesssim_{\mathcal H} \frac{1}{\rho_J} \Vert T_{k,j}(z,F_1,\chi_0) - T_{k,j}(z,F_2,\chi_0) \Vert_2 \\
        &\qquad + \frac{\Vert T_{k,j}(z,F_2,\chi_0) \Vert_2}{\rho_J \max(f_{F_2,\Psi_j}(z), \rho_J/\sqrt{\det(\Psi_j)})} \vert f_{F_1,\Psi_j}(z) - f_{F_2,\Psi_j}(z) \vert.
    \end{align*}
    Then by Lemmas \ref{lem:sm6.8} and \ref{lem:sm6.10},
    \begin{align*}
        \Vert D_{\alpha,j}(z, F_1, \chi_0, \rho_J) - D_{\alpha,j}(z, F_2, \chi_0, \rho_J) \Vert_2 &\lesssim_{\mathcal H} \frac{1}{\rho_J} \Vert \nabla f_{F_1, \Psi_j}(z) - \nabla f_{F_2, \Psi_j}(z) \Vert_2 \\
        &\qquad + \frac{\sqrt{\log (1/\rho_J)}}{\rho_J} \vert f_{F_1,\Psi_j}(z) - f_{F_2,\Psi_j}(z) \vert \\
        \Vert D_{\Omega,j}(z, F_1, \chi_0, \rho_J) - D_{\Omega,j}(z, F_2, \chi_0, \rho_J) \Vert_2 &\lesssim_{\mathcal H} \frac{1}{\rho_J} \Vert Q_j(z,F_1,\Psi_j) - Q_j(z,F_2,\Psi_j) \Vert_2 \\
        &\qquad + \frac{M\sqrt{\log(1/\rho_J)} + \log(1/\rho_J)}{\rho_J} \vert f_{F_1,\Psi_j}(z) - f_{F_2,\Psi_j}(z) \vert.
    \end{align*}
    Define 
    \begin{align}
        K_{F,\Psi_j}(z) \equiv \int (z-\tau)(z-\tau)^T \varphi_{\Psi_j}(z-\tau) dF(\tau) \label{eq:k_function}.
    \end{align}
    Note that I can write
    \begin{align*}
        Q_j(z,F,\Psi_j) &= \text{vec} \left(-\sqrt{\det(2\pi \Psi_j)}\Psi_j \nabla f_{F,\Psi_j}(z)z' - K_{F,\Psi_j}(z) \right).
    \end{align*}
    Then it follows that
    \begin{align*}
        \Vert Q_j(z,F_1,\Psi_j) - Q_j(z,F_2,\Psi_j) \Vert_2 &\leq \sqrt{\det(2\pi \Psi_j)}\left\Vert \Psi_j \right\Vert_{op} \left\Vert z \right\Vert_2 \left\Vert \nabla f_{F_1,\Psi_j}(z) - \nabla f_{F_2,\Psi_j}(z) \right\Vert_2 \\
        &\qquad + \left\Vert K_{F_1,\Psi_j}(z) - K_{F_2,\Psi_j}(z) \right\Vert_F \\
        &\lesssim_{\mathcal H} M \left\Vert \nabla f_{F_1,\Psi_j}(z) - \nabla f_{F_2,\Psi_j}(z) \right\Vert_2 + \left\Vert K_{F_1,\Psi_j}(z) - K_{F_2,\Psi_j}(z) \right\Vert_F.
    \end{align*}
    Similar to Appendix C of \citeappendix{soloff2024multivariate} define
    \begin{align}
        \Vert f_{F_1, \cdot} - f_{F_2, \cdot} \Vert_{\infty,M} &\equiv \max_{j \in [J]} \sup_{\Vert z \Vert_2 \leq M} \vert f_{F_1,\Psi_j}(z) - f_{F_2,\Psi_j}(z) \vert \label{eq:inf_M_norm} \\
        \Vert f_{F_1, \cdot} - f_{F_2, \cdot} \Vert_{\nabla,M} &\equiv \max_{j \in [J]} \sup_{\Vert z \Vert_2 \leq M} \Vert \nabla f_{F_1,\Psi_j}(z) - \nabla f_{F_2,\Psi_j}(z) \Vert_2 \label{eq:nabla_M_norm} \\
        \Vert K_{F_1, \cdot} - K_{F_2, \cdot} \Vert_{\nabla^2,M} &\equiv \max_{j \in [J]} \sup_{\Vert z \Vert_2 \leq M} \Vert K_{F_1,\Psi_j}(z) - K_{F_2, \Psi_j}(z) \Vert_F. \label{eq:nabla_sq_M_norm}
    \end{align}
    Then
    \begin{align*}
        d_{\alpha,\infty,M}(F_1,F_2) &\lesssim_{\mathcal H} \frac{1}{\rho_J} \Vert f_{F_1, \cdot} - f_{F_2, \cdot} \Vert_{\nabla,M} + \frac{\sqrt{\log (1/\rho_J)}}{\rho_J} \Vert f_{F_1, \cdot} - f_{F_2, \cdot} \Vert_{\infty,M} \\
        d_{\Omega,\infty,M}(F_1,F_2) &\lesssim_{\mathcal H} \frac{1}{\rho_J} \Vert K_{F_1, \cdot} - K_{F_2, \cdot} \Vert_{\nabla^2,M} + \frac{M}{\rho_J} \Vert f_{F_1, \cdot} - f_{F_2, \cdot} \Vert_{\nabla,M} \\
        &\qquad + \frac{M\sqrt{\log (1/\rho_J)} + \log(1/\rho_J)}{\rho_J} \Vert f_{F_1, \cdot} - f_{F_2, \cdot} \Vert_{\infty,M}
    \end{align*}

    Let $\mathbb{F}$ be the space of functions $f_{F,\cdot}$ induced by the space of distributions $F \in \mathcal{P}(\R^2)$.
    By Suppl. Lemma 5 of \citeappendix{soloff2024multivariate}, Suppl. Lemma 6 of \citeappendix{soloff2024multivariate}, and Suppl. Lemma F.6 of \citeappendix{saha2020nonparametric}, if $\eta \leq \min \left(1/e, 4/(2\pi \underline{k}) \right)$
    \begin{align*}
        \log N(\eta, \mathbb{F}, \Vert \cdot \Vert_{\infty,M}) &\lesssim_{\mathcal H} \log(1/\eta)^3 \max \left(1, \frac{M}{\sqrt{\log(1/\eta)}}, \frac{M^2}{\log(1/\eta)} \right) \\
        \log N(\eta, \mathbb{F}, \Vert \cdot \Vert_{\nabla,M}) &\lesssim_{\mathcal H} \log(1/\eta)^3 \max \left(1, \frac{M}{\sqrt{\log(1/\eta)}}, \frac{M^2}{\log(1/\eta)} \right).
    \end{align*}
    Let $\mathbb{K}$ be the space of functions $K_{F,\cdot}$ induced by the space of distributions $F \in \mathcal{P}(\R^2)$. By Lemma \ref{prop:sm6.1} if $\eta \leq e^{-1/(2\bar k)}$
    \begin{align*}
        \log N(\eta, \mathbb{K}, \Vert \cdot \Vert_{\nabla^2,M}) &\lesssim_{\mathcal H} (\log 1/\eta)^3 \max \left(1, \frac{M}{\sqrt{\log(1/\eta)}}, \frac{M^2}{\log(1/\eta)} \right).
    \end{align*}
    Thus
    \begin{small}
    \begin{align*}
        \log N\left(\frac{1+\sqrt{\log(1/\rho_J)}}{\rho_J} \eta, \mathcal{P}(\R^2), d_{\alpha,\infty,M} \right) &\lesssim_{\mathcal H} \log(1/\eta)^3 \max \left(1, \frac{M}{\sqrt{\log(1/\eta)}}, \frac{M^2}{\log(1/\eta)} \right) \\
        \log N\left(\frac{1+M\sqrt{\log(1/\rho_J)}+\log(1/\rho_J)}{\rho_J} \eta, \mathcal{P}(\R^2), d_{\Omega,\infty,M} \right) &\lesssim_{\mathcal H} \log(1/\eta)^3 \max \left(1, \frac{M}{\sqrt{\log(1/\eta)}}, \frac{M^2}{\log(1/\eta)} \right).
    \end{align*}
    \end{small}
\end{proof}

\begin{lemma}\label{prop:sm6.1}
    Recall $\Vert \cdot \Vert_{\nabla^2, M}$ from \eqref{eq:nabla_sq_M_norm} and $\mathbb{K}$ the space of $K_{F,\Psi_j}$ induced by distributions $F \in \mathcal{P}(\R^2)$, for $K_{F,\Psi_j}$ defined by \eqref{eq:k_function}.
    Then for $\eta \leq e^{-1/(2\bar k)}$,  under the assumptions of Theorem \ref{thm:sm6.1}, $\log N(\eta, \mathbb{K}, \Vert \cdot \Vert_{\nabla^2, M}) \lesssim_{\mathcal H} (\log (1/\eta))^3 \max \left\{1, \frac{M}{\sqrt{\log(1/\eta)}}, \frac{M^2}{\log(1/\eta)} \right\}.$
\end{lemma}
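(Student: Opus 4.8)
The plan is to reduce the matrix-valued class $\mathbb{K}$ to the density and derivative classes already controlled in \cite{soloff2024multivariate} and \cite{saha2020nonparametric}, and then extend their analytic-approximation entropy bounds by one derivative order. First I would record the exact differential identity linking $K_{F,\Psi_j}$ to the Hessian of the smoothed density. Differentiating $\varphi_{\Psi_j}(z-\tau) = \exp(-\tfrac12 (z-\tau)^T\Psi_j^{-1}(z-\tau))$ twice in $z$ and integrating against $dF$ gives
\begin{align*}
K_{F,\Psi_j}(z) &= \sqrt{\det(2\pi\Psi_j)}\left[\Psi_j\,\nabla^2 f_{F,\Psi_j}(z)\,\Psi_j + \Psi_j\, f_{F,\Psi_j}(z)\right],
\end{align*}
so that, entrywise, $K_{F,\Psi_j}$ is a fixed linear combination of the entries of $\nabla^2 f_{F,\Psi_j}$ and of $f_{F,\Psi_j}$, with coefficients that are smooth functions of $\Psi_j$ only. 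Under Assumption \ref{ass:est_var}(3), $\underline{k}I_2\preceq\Psi_j\preceq\overline{k}I_2$ uniformly in $j$, so these coefficients (and $\sqrt{\det(2\pi\Psi_j)}$) are bounded above and below by $\mathcal H$-constants. Consequently a joint $\eta$-cover of the density class $\mathbb{F}$ in $\Vert\cdot\Vert_{\infty,M}$ and of the Hessian class $\{\nabla^2 f_{F,\cdot}\}$ in the analogue of $\Vert\cdot\Vert_{\nabla^2,M}$ yields an $\mathcal H$-constant-rescaled cover of $\mathbb{K}$ in $\Vert\cdot\Vert_{\nabla^2,M}$. The density piece is already handled by Suppl. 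Lemma 5 of \cite{soloff2024multivariate} and Suppl. Lemma F.6 of \cite{saha2020nonparametric}, so the task collapses to bounding the metric entropy of the Hessian class.

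Second I would establish the Hessian entropy bound by mirroring the gradient argument (Suppl. Lemma 6 of \cite{soloff2024multivariate}, Suppl. Lemma F.6 of \cite{saha2020nonparametric}). Each entry of $\nabla^2 f_{F,\Psi_j}$ has the form $\int p(z-\tau)\,\varphi_{\Psi_j}(z-\tau)\,dF(\tau)$ with $p$ a fixed quadratic polynomial, i.e.\ a Gaussian-smoothed functional of the same type as $f_{F,\Psi_j}$ and $\nabla f_{F,\Psi_j}$ but with one additional polynomial weight of degree two. The underlying machinery approximates such functionals on $\{\Vert z\Vert_2\le M\}$ by polynomials in $z$ whose effective degree scales like $\log(1/\eta)+M^2/\log(1/\eta)$, using the rapid decay of the Hermite/Taylor coefficients of Gaussian convolutions; the number of such coefficients governs the entropy. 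The extra quadratic factor multiplies the integrand by a polynomial of degree two, which raises the approximating degree by a bounded amount and rescales the tail estimates by $\mathcal H$-constants, but does not change the cubic dependence on $\log(1/\eta)$ nor the $\max(1,M/\sqrt{\log(1/\eta)},M^2/\log(1/\eta))$ factor. The restriction $\eta\le e^{-1/(2\overline{k})}$ is exactly what keeps $\log(1/\eta)$ large enough relative to the eigenvalue bound $\overline{k}$ for these coefficient-decay estimates to apply, precisely as in the cited density and gradient bounds.

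I expect the main obstacle to be this second step: re-deriving the coefficient-decay and polynomial-approximation estimates one derivative beyond what is cited, and verifying carefully that the additional degree-two weight $(z-\tau)(z-\tau)^T$ leaves the exponent of $\log(1/\eta)$ cubic and the $M$-dependence unchanged. Concretely, one must track how differentiating the Gaussian kernel twice interacts with the analyticity bounds underlying Suppl. Lemma F.6 of \cite{saha2020nonparametric}, ensuring the extra polynomial growth in $z$ is absorbed into the stated $\max(1,M/\sqrt{\log(1/\eta)},M^2/\log(1/\eta))$ factor rather than producing a worse power of $M$ or of $\log(1/\eta)$. Once that order-preservation is confirmed, combining the Hessian bound with the density bound through the identity above and the $\mathcal H$-uniform control of $\Psi_j$ delivers the claimed entropy bound for $\mathbb{K}$.
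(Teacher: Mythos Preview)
Your approach is correct and lands on the same core technical step as the paper, but the organization differs slightly. You first reduce $K_{F,\Psi_j}$ to $f_{F,\Psi_j}$ and $\nabla^2 f_{F,\Psi_j}$ via the identity $K_{F,\Psi_j}=\sqrt{\det(2\pi\Psi_j)}[\Psi_j\nabla^2 f_{F,\Psi_j}\Psi_j+\Psi_j f_{F,\Psi_j}]$ and then argue that the Hessian class has the same entropy order as the gradient class. The paper skips the identity and works with $K_{F,\Psi_j}$ directly: it reruns the Carath\'eodory moment-matching construction from \cite{saha2020nonparametric} and \cite{soloff2024multivariate}, with the single modification that because $(z-\tau)(z-\tau)^T P_j(z-\tau)$ has degree $2m+2$, the discrete approximating measure $H$ must match moments up to order $2m+2$ rather than $2m$; the remainder, tail, atom-discretization, and weight-discretization bounds then go through with constants absorbing the quadratic factor. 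In substance this is exactly your ``raise the approximating degree by a bounded amount'' observation, so your proposed second step and the paper's argument coincide. Your reduction via the Hessian identity is clean but buys nothing over attacking $K$ directly, since the Hessian-class entropy bound you still owe is equivalent to what the paper proves for $K$.
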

\begin{proof}
    The proof follows the proof of Suppl. Lemmas 3-6 in \citeappendix{soloff2024multivariate}. Define the second-derivative metric
    \begin{align*}
        \left\Vert f_{F_1,\cdot} - f_{F_2,\cdot} \right\Vert_{\nabla^2_f, M} &\equiv \max_{j \in [J]} \sup_{\left\Vert z \right\Vert_2 \leq M} \left\Vert \nabla^2 f_{F_1,\Psi_j}(z) - \nabla^2 f_{F_2,\Psi_j}(z) \right\Vert_F.
    \end{align*}
    Note that
    \begin{align*}
        K_{F,\Psi_j}(z) = \sqrt{\det(2\pi \Psi_j)} \left(\Psi_j \nabla^2 f_{F,\Psi_j}(z) \Psi_j + \Psi_j f_{F,\Psi_j}(z) \right),
    \end{align*}
    so there exists some constant $C_{\mathcal H}$ such that 
    \begin{align*}
        \left\Vert K_{F_1,\cdot} - K_{F_2,\cdot} \right\Vert_{\nabla^2,M} &\leq C_{\mathcal H} \left(\left\Vert f_{F_1,\cdot} - f_{F_2,\cdot} \right\Vert_{\nabla^2_f,M} + \left\Vert f_{F_1,\cdot} - f_{F_2,\cdot} \right\Vert_{\infty,M}\right).
    \end{align*} 
    
    Fix a distribution $F \in \mathcal{P}(\R^2)$ and fix some $a \geq 1$ to be chosen. Define the sets $S_M \equiv \{z \in \R^2: \Vert z \Vert_2 \leq M\}$ and $S_M^a \equiv \{z: d(z,S_M) \leq a\}$, and let $L = N(a, S_M^a, \Vert \cdot \Vert_2)$. Using the multivariate moment-matching construction of Suppl. Lemmas 3 and 4 of \citeappendix{soloff2024multivariate}, with the degree of the matched moments increased by two, there exists a discrete distribution $H$ supported on $S_M^a$ with at most
    \begin{align*}
        l \lesssim_{\mathcal H} a^4 L + 1
    \end{align*}
    atoms such that 
    \begin{align*}
        \left\Vert f_{F,\cdot} - f_{H,\cdot}\right\Vert_{\infty,M} \lesssim_{\mathcal H} e^{-a^2/(2\bar k)}
    \end{align*}
    and 
    \begin{align*}
        \left\Vert f_{F,\cdot} - f_{H,\cdot}\right\Vert_{\nabla_f^2,M} \lesssim_{\mathcal H} (1+a^2)e^{-a^2/(2\bar k)},
    \end{align*}
    where the second bound follows because
    \begin{align*}
        \nabla^2 \varphi_{\Psi_j}(z) = \left( \Psi_j^{-1} zz^T \Psi_j^{-1} - \Psi_j^{-1} \right) \varphi_{\Psi_j}(z)
    \end{align*}
    so it is sufficient to match multivariate moments with two additional moments relative to Suppl. Lemma 3 of \citeappendix{soloff2024multivariate}, and uniformly over $j$,
    \begin{align*}
        \sup_{\Vert z \Vert_2 \geq a} \left\Vert \nabla^2 \varphi_{\Psi_j}(z) \right\Vert_F &\lesssim_{\mathcal H} (1+a^2) e^{-a^2/(2\bar k)}
    \end{align*}
    so the same argument used in Suppl. Lemmas 3 and 4 of \citeappendix{soloff2024multivariate} applies.

    Let $\mathcal C$ be a minimal $\alpha$-net of $S^a_M$ and let $H'$ approximate each atom of $H$ with its closest element from $\mathcal C$, so that $H = \sum_i w_i \delta_{a_i}$ and $H' = \sum_i w_i \delta_{b_i}$ where $w_i$ are convex weights. Since the first and third derivatives of the Gaussian density are uniformly bounded under eigenvalue bounds on $\Psi_j$, the mean value theorem gives
    \begin{align*}
        \left\Vert f_{H,\cdot} - f_{H',\cdot} \right\Vert_{\nabla^2_f,M} + \left\Vert f_{H,\cdot} - f_{H',\cdot} \right\Vert_{\infty,M} &\lesssim_{\mathcal H} \alpha.
    \end{align*}
    
    Let $\Delta_{l-1}$ be the $(l-1)$-simplex of probability vectors in $l$ dimensions and let $\mathcal D$ be a minimal $\beta$-net of $\Delta_{l-1}$ in the $\Vert \cdot \Vert_1$ norm. Let $H''$ be the distribution that approximates the weights $w$ by their closest element $v \in \mathcal D$, so $H'' = \sum_i v_i \delta_{b_i}$. Then because the Gaussian density and its Hessian are uniformly bounded,
    \begin{align*}
        \left\Vert f_{H',\cdot} - f_{H'',\cdot} \right\Vert_{\nabla^2_f,M} + \left\Vert f_{H',\cdot} - f_{H'',\cdot} \right\Vert_{\infty,M} &\lesssim_{\mathcal H} \beta.
    \end{align*}
    
    So combining the previous bounds,
    \begin{align*}
        \left\Vert f_{F,\cdot} - f_{H'',\cdot} \right\Vert_{\nabla^2_f,M} + \left\Vert f_{F,\cdot} - f_{H'',\cdot} \right\Vert_{\infty,M} &\lesssim_{\mathcal H} (1+a^2)e^{-a^2/(2\bar k)} + \alpha + \beta.
    \end{align*}
    For any $\eta \leq e^{-1/(2\bar k)}$ choose $\delta = c_0 \eta^2$, $\alpha = \beta = \delta$ and $a = \sqrt{2\bar k \log(1/\alpha)}$, where $c_0 \in (0,1]$ is a sufficiently small constant to be chosen later. Note that $a \geq 1$. Furthermore because $0 < \eta < 1$,
    \begin{align*}
        (1+a^2)e^{-a^2/(2\bar k)} + \alpha + \beta &= c_0 \eta^2 \left(3 + 2\bar k \log(1/c_0 \eta^2) \right) = c_0 \eta^2 \left(3 + 2\bar k \log(1/c_0) + 4\bar k \log(1/\eta) \right) \\
        &\leq c_0 \eta \left(3 + 2\bar k \log(1/c_0) + \frac{4\bar k}{e} \right).
    \end{align*}
    I can choose $c_0$ sufficiently small that
    \begin{align*}
        \left\Vert f_{F,\cdot} - f_{H'',\cdot} \right\Vert_{\nabla^2_f,M} + \left\Vert f_{F,\cdot} - f_{H'',\cdot} \right\Vert_{\infty,M} &\leq C_{\mathcal H}^{-1} \eta
    \end{align*}
    for the constant $C_{\mathcal H}$ defined above, so that
    \begin{align*}
        \left\Vert K_{F,\cdot} - K_{H'',\cdot} \right\Vert_{\nabla^2,M} &\leq \eta.
    \end{align*}

    Following the argument of Suppl. Lemma 5 in \citeappendix{soloff2024multivariate} and the proof of Theorem 4.1 in \citeappendix{saha2020nonparametric}, the number of possible $H''$ is bounded above, up to a constant that depends on $\mathcal H$, by $\left(1 + \frac{2}{\beta}\right)^l\left(1+\frac{N(\alpha, S_M^a, \Vert \cdot \Vert_2)}{l}\right)^l \equiv A^l$. 
    Thus $\log N(\eta, \mathbb{K}, \Vert \cdot \Vert_{\nabla^2, M}) \lesssim_{\mathcal H} l \log A$. 
    
    Note that $\log A \lesssim_{\mathcal H} \log \left[\left(1+\frac{2}{\beta}\right) \left(1 + \frac{a}{\alpha} \right)^2\right] \lesssim_{\mathcal H} \log(1/\delta)$. Finally note that $L = N(a,S_M^a, \Vert \cdot \Vert_2) \lesssim_{\mathcal H} \left(1+\frac{M}{a}\right)^2$. Together with the expressions for $l$ and $L$,
    \begin{align*}
        \log N(\eta, \mathbb{K}, \Vert \cdot \Vert_{\nabla^2, M}) &\lesssim_{\mathcal H} a^4\left(1+\frac{M}{a}\right)^2 \log(1/\delta) \\
        &\lesssim_{\mathcal H} (\log 1/\delta)^3 \max \left(1, \frac{M}{\sqrt{\log(1/\delta)}}, \frac{M^2}{\log(1/\delta)} \right).
    \end{align*}
    Since $\log(1/\delta) = 2\log(1/\eta) + \log(1/c_0) \asymp_{\mathcal H} \log(1/\eta)$,
    \begin{align*}
        \log N(\eta, \mathbb{K}, \Vert \cdot \Vert_{\nabla^2, M}) &\lesssim_{\mathcal H} (\log 1/\eta)^3 \max \left(1, \frac{M}{\sqrt{\log(1/\eta)}}, \frac{M^2}{\log(1/\eta)} \right).
    \end{align*}
    
\end{proof}

\FloatBarrier

\singlespacing
{
\bibliographystyleappendix{chicago}
\bibliographyappendix{ref}
}

\end{document}